\newtheorem{Theorem}{Theorem}
\newtheorem{Definition}{Definition}
\newtheorem{Lemma}{Lemma}
\newtheorem{Corollary}[Lemma]{Corollary}
 \newcommand{\lab}[1]{\label{#1}}                
\newenvironment{proof}{\noindent\textbf{Proof: }\ignorespaces}
  {\hspace*{\fill}$\Box$\medskip}
\definecolor{Gray}{gray}{0.9}
\newcommand{\MAX}{{\mbox{\rm MAX}}}
\newcommand{\MAXS}{\MAX(S)}
\newcommand{\MAXSN}{\MAX(S_n)}
\newcommand{\bfd}{{\mathbf D}}
\newcommand\Ball[1] {{\mathbf B}_{#1}}
\newcommand\Ballpq[2]{\Ball {#1} + \delta \Ball {#2}}
\newcommand\Ballpqd[3]{\Ball {#1} + {#3}\Ball {#2}}
\def \Ballp{{\Ball p}}
\def \BarBallp{{\bar{\mathbf B}}_{p}} 
\def \Area{{\rm Area}}
\newcommand{\EXP}[1]{\mathbf{E}\left[#1\right]}
\newcommand{\PR}[1]{\mathrm{Pr}\left[#1\right]}
\newcommand{\Strip}{\mathrm{Strip}}
\newcommand{\Striph}{\overline{\mathrm{Strip}}}
\newcommand{\Stripp}{\mathrm{Strip}'}
\newcommand{\Stripph}{\overline{\mathrm{Strip}'}}
\newcommand{\Wedge}{\mathrm{W}}
\newcommand{\MN}{{M_n}}
\newcommand{\EMN}{\EXP \MN}
\begin{document}
\title{Smoothed Analysis of the Expected Number of Maximal Points in Two Dimensions}

\author{Josep D{\'i}az%
         \thanks{\protect\raggedright
                Department of Computer Science,
                Universitat Polit{\`e}cnica de Catalunya. 
                Email: \texttt{diaz@cs.upc.edu}\,
                Supported by TIN2017-86727-C2-1-R.}
                \and
                Mordecai Golin%
                 \thanks{\protect\raggedright
                  Department of Computer Science ,
                 Hong Kong University of Science and Technology.
                Email: \texttt{golin@cse.ust.hk}}}

\date{}

\maketitle

\begin{abstract}
The {\em Maximal} points in a set $S$ are those that aren't {\em dominated} by any other point in $S.$  Such points  arise in multiple application settings in which they are  called by a variety of different  names, e.g.,  maxima,  Pareto optimums, skylines. Because of their ubiquity, there is a large literature on the {\em expected} number of maxima in a set $S$ of $n$ points chosen IID from some distribution.  Most such results assume that the underlying distribution is uniform over some spatial region and  strongly use this uniformity in their analysis.

This work was initially  motivated by the question  of how this expected number changes if the input distribution is perturbed by random noise.  More specifically, let $\Ball p$ denote the uniform distribution from the $2$-d  unit $L_p$ ball,  $\delta \Ball q $ denote the $2$-d  $L_q$-ball, of radius $\delta$ and  $\Ballpq p q$ be the convolution of the two distributions, i.e., a point $v\in \Ball p$ is reported with an error chosen from $\delta \Ball q.$ The question is how   the expected number of maxima change as a function of $\delta$.  Although the original motivation is for small $\delta$ the problem is well defined for any $\delta$ and our analysis treats the general case.

More specifically,  we study, as a function of $n,\delta$,  the expected number of maximal points   when the $n$ points in $S$ are chosen IID from  distributions of the type $\Ballpq p q$ where $p,q \in \{1,2,\infty\}$ for $\delta > 0$ and  also of the type $\Ballpq \infty  q$ where  $q \in [1,\infty)$ for $\delta > 0$.
\end{abstract}

\newpage
\section{Introduction}
\label{sec: Intro}

Let $S$ be a set of  $2$-dimensional points. The {\em largest}  points in $S$ are its  {\em maximal points}  of $S$ and are a well-studied object. More formally\footnote{We restrict our definition to  $S \subset \Re^2$  because that is what this paper addresses;  the concept of maxima generalize naturally to
$S \subset \Re^d$  for $d >2$ and have been well-studied there as well.  We discuss this in more detail in the Conclusions and Extensions section.}

\begin{Definition}
\label{def: Dom1}
For $u \in \Re$ let $u.x$ ($u.y$) denote the $x$ ($y$) coordinate of $u.$  
For $u,v \in \Re^2$,  $u$ is {\em dominated } by $v$ if $u \not=v$, $u.x \le  v.x$ and $u.y  \le v.y$.
If $S \subset \Re^2$ then
$$\MAXS = \{u \in S \,:\, \mbox{$u$ is not dominated by any point in $S \setminus \{u\}$}\}.$$
$\MAXS$ are the {\em maximal points}  of $S.$  
\end{Definition}

The problems of finding and estimating  the number of maximal points  of a set  in $\Re^2$, appear very often in many fields under  different denominations, {\em maximal vectors}, {\em skylines}, {\em Pareto frontier/points} and others, see, e.g., 
\cite{KungLP75,BorzsonyiKS01,Bentley78,Devroye93,geilen2007algebra}, and  for a more exhaustive history of the problems and further references, Sections 1 and 2 in \cite{ChenHT12}.

\begin{figure}[t]
\label{MAX}  
\centerline{
\includegraphics[width = 3.3in]{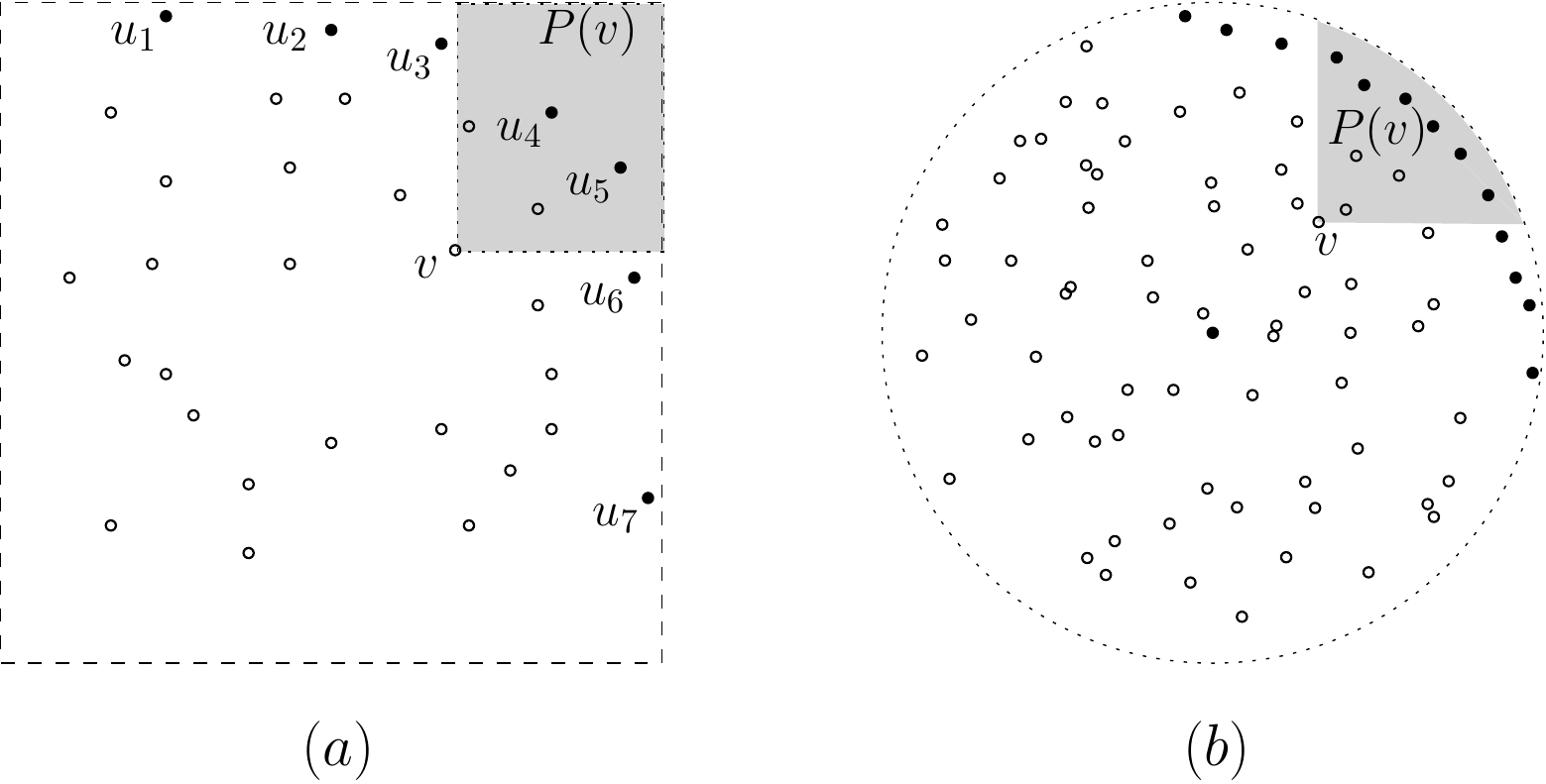}}
\caption{  The diagram shows $\MAXSN$ for two point sets $S_n$.  In both (a) and (b) the circles -- both empty and filled -- denote the points in $S_n$  and the filled circles are $\MAXSN$.  If the points are considered as being drawn from region $D$   $P(v)$ denotes the region in $D$ that dominates $v$.  In (a) $D$ is the dotted square and in (b)  $D$ is the dotted circle.}
\end{figure}

Recall that the $L_p$ metric for points $u,v$ in the $d$-dimensional space is defined by 
$$
||u-v||_p = 
\left\{
\begin{array}{lll}
\left(\sum_{i=1}^d |u_i - v_i|^p\right)^{1/p}	& \mbox{for real $p \ge 0$},\\[0.1in]
\max_i\left(|u_i - v_i|\right)					& \mbox {if $p = \infty$}.
\end{array}	
\right.
$$

Let $S_n$ denote a set of $n$ points chosen  Independently Identically Distributed (IID)  from some 2-D  distribution  $\bfd$ and  $\MN = |\MAXSN|$   be  the  random variable counting the number of maximal points  in $S_n$.
Because maxima are so ubiquitous,  understanding the expected number of maxima has been important in many areas and many  properties of $\MN$ have been studied.

More specifically, if $\bfd$ is the uniform distribution  drawn from an $L_p$ ball with $p \ge 1$ then, it is well known \cite{Devroye93,BaiDHT05,Dwyer1990,Buchta1989}  that
\begin{itemize}
\item  If $p = \infty$, then $\EXP{M_n}= H_n\sim \ln n.$ \\
The same result holds if the points are drawn from some distribution $\bfd = ({\bf X}, {\bf Y})$ where ${\bf X}$ and ${\bf Y}$ are ANY two 1-dimensional distributions that are independent of each other.
\item  If $p >1$, then
$$\lim_{n \to \infty}\frac{\EXP{M_n}}{\sqrt{n}} = C_p$$
where $C_p$ is a constant dependent only upon $p$. 
\item Similar results to the above,  i.e., that $\EMN=O\sqrt n)$, derived using similar  techniques,   are known if $\bfd$ is a {\em uniform} distribution from ANY convex region \cite{devroye1986lecture}.
\end{itemize}
It is also known \cite{Ivanin1975,Ivanin1975a} that if the $n$ points are chosen IID from a $2$-D Gaussian distribution then $\EXP{M_n}\sim \ln n.$ There are also generalizations of these results  (both the $\Ballp$ ones and the Gaussian one) to higher dimensions. See \cite{Dwyer1990} for a  a table containing most known results.

Surprisingly, given the importance of the problem, not much else is known.  The motivation for this work is to extend the family of distributions for which $\EMN$ can be derived.

Consider  a point $u$  that  is originally generated from some uniform distribution over a unit  $L_p$ ball but, has some $\delta$ error in the $L_q$ metric when measured or reported.   The actual reported point can be equivalently considered  as being chosen from a new distribution which we denote by  $\Ballpq p q$ (the next section provides formal definitions).
 Note that the support of this distribution is the Minkowksi sum of the two balls.

\medskip

\begin{wrapfigure}{r}{0pt}
\includegraphics[width=3.7cm]{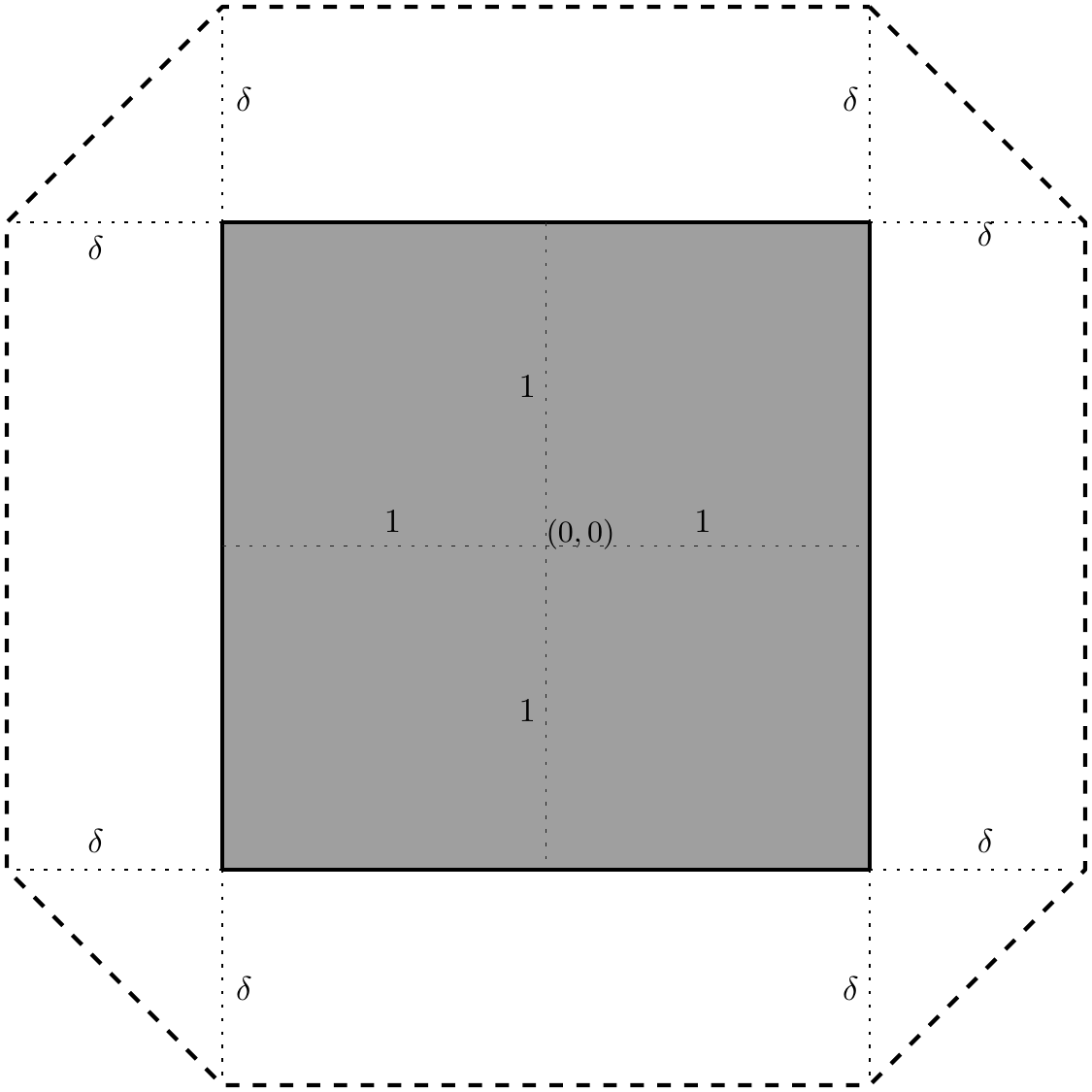}
\caption{$\Ball \infty + \delta \Ball 1$.}
\label{B1 and B inf}
\end{wrapfigure}
As an  example,   Figure \ref{B1 and B inf}   shows the support of $\Ball \infty + \delta \Ball 1$.  In the diagram, the shaded inner square  is the unit $\infty$-ball. A point $u_1$ chosen from that square is  then perturbed by the addition of another point $u_2$ , drawn uniformly from the $B_1$ ball with  radius $\delta$.
The support of this convoluted distribution  is the interior of the dotted region in the figure.

Note that the distribution is 
 NOT uniform in this support.  Towards the centre  the density is uniform but  it decreases approaching the  boundary of the support where it becomes zero.  Note too that the rate of decrease differs in different parts of the support. It is this non-uniformity that will cause complications in calculating $\EXP{M_n}.$

Although the problem described above was for small $\delta$ it is well defined for all $\delta  > 0$ which is what we analyze in this paper.
More specifically,  the  motivation for  the present work is twofold:
\begin{itemize}
\item  Explain  how  $\EXP{M_n}$ changes when the distribution is perturbed and 
\item  Increase   the families of distributions for which  $\EXP{M_n}$  is understood. 
\end{itemize}

The idea of analyzing how quantities  change under perturbations is  {\em smoothed analysis} \cite{spielman2004smoothed,spielman2009smoothed}. In the classic setting, smoothed  analysis of the number of maxima would mean analyzing how, given a {\em fixed} set $S_n,$   $\EXP{M_n}$ would change under small perturbations (as a function of the original set $S_n$). This was the approach in \cite{Damerow2004,Damerow2006}  (see also similar work for convex hulls in \cite{Devillers2016}). This paper differs in that it is the {\em Distribution} that is being smoothed (or convoluted) and not the point set. This paper also differs from recent work \cite{suri2013most,agrawal2017most} on the {\em most-likely} skyline and convex hull problems in that  those papers assume that each point has  a given probability distribution and they are attempting  to find the subset of points that has the highest probability of being the skyline (or convex hull).
\section{Definitions and Results}\label{sec: DefsRes}

\begin{Definition}
$p \in [1,\infty)$  or $p \ge 1$ will denote that $p$ is  a real number $\ge 1.$ $p \in [1,\infty]$ will denote that $p\in [1,\infty)$ OR  $p = \infty.$
\end{Definition}

\begin{Definition}
\label{def: Dist1}
Let $\bfd$
 be a distribution over $\Re^2$.
\begin{itemize}
\item 
If $\delta  \ge 0$, the distribution $\delta \bfd $ is generated by choosing a point $u$ using $\bfd$ and then returning the point $(\delta \cdot u.x, \delta \cdot  u.y).$

\item Let $\bfd_1,  \bfd_2$ be two distributions over $\Re^2$. $\bfd_1 + \bfd_2$ is the {\em convolution}of $\bfd_1, \bfd_1.$  It is generated by choosing a point $u_1$ from
$\bfd_1$ and a point  $u_2$ from  $\bfd_2$ and returning $(u_1.x+u_2.x, u_1.y+u_2.y).$

\item     A  set of  $S_n = \{u_1,\ldots ,u_n\}$ is {\em Chosen from  $\bfd$ } if the $u_i$ are IID with each $u$ being generated using distribution $\bfd$.
\end{itemize}
\end{Definition}

\begin{Definition}
\label{def:Mink}
Let $D \subseteq \Re^2$ be a  set and $ \delta \ge 0.$ \\ Then
$\delta D = \{(\delta \cdot u.x,\,  \delta \cdot  u.y)\,:\, u \in D\}$.

The {\em Minkowski sum} of sets $D_1$ and $D_2$ is
$$D_1 + D_2 = \{ (u_1.x+u_2.x, u_1.y+u_2.y) \,:\, u_1 \in D_1,\, u_2 \in D_2 \}.$$
If $u \in  \Re^2$, let $u + D$ will  denote  the set $\{u\}+D.$
\end{Definition}

\begin{Definition}
\label{def: Dist1}
Let $u \in \Re^2$, $r > 0$ and $p \in [1,\infty],$  
\begin{itemize}
\item The $L_p$ ball of radius $r$ around $u$ is
$$B_p(u,r) =\left \{(x,y) \,:\,  |x-u.x|^p + |y-u.y|^p \le r^p\right\}.$$
\item The $L_\infty$  ball of radius $r$ around $u$ is
$$B_\infty(u,r) = \left\{(x,y) \,:\,  \max (|x-u.x|,\, |y-u.y|) \le r \right\}.$$
\end{itemize}
 Let  $B_p =B_p((0,0),1)$ and $B_\infty =B_\infty((0,0),1)$ denote the respective unit balls and  
$a_p = \Area(B_p)$, $a_\infty = \Area(B_\infty)=2$ denote their respective areas.

\begin{itemize}
\item
For all $p\in [1,\infty],$ $\Ball p$ will denote the uniform distribution that selects a point uniformly from $B_p$. This  distribution has support $B_p$ with uniform density $1/a_p$ within $B_p.$

\item 
$\Ballpq p q$  will be the convolution of distributions $\Ball p$ and $\delta \Ball q$.  This distribution's support is the Minkowski sum
$B_p + \delta B_q$.
Note that the density of $\Ballpq p q$ is NOT uniform in $B_p + \delta B_q$.
\end{itemize}
\end{Definition}

The main result of this paper is 
\begin{Theorem}
\label{thm: main}
Fix $p,q$ so that either $p,q \in \{1,2,\infty\}$ or $p= \infty$ and $q \ge 1.$ Let $S_n$ be $n$ points chosen from the distribution $\Ballpq p q$ and 
$\MN = |\MAXSN|$. Let $\delta \ge 0$ be a function of $n$.  
Then $\EMN$ behaves as below.
$$
\begin{array}{||l|c||l||l|l||l||l||}
\hline \hline
 & (a) & (b) & (c) & (d) & (e) & (f) \\
\hline \hline
\rowcolor{Gray}
 &{\mathbf D=}   & \multicolumn{4}{c||}{ 0 \le \delta }            & \delta=1 \\ \hline \hline
(i) & \Ballpq \infty \infty  & \multicolumn{4}{c||}{\Theta\left( \ln n\right)}               & \Theta\left( \ln n\right)  \\ \hline
\multicolumn{7}{|l||}{}                        \\ \hline\hline 
\rowcolor{Gray}
&    & \delta \le \frac 1 {\sqrt n}    &      \frac 1 {\sqrt n}  \le \delta \le 1   & 1\le \delta \le   {\sqrt n}       &  \sqrt n \le \delta   &  \\ \hline \hline
(ii) & \Ballpq 1 1  &  \Theta \left( \sqrt n \right)    &      \Theta \left(  \frac {n^{1/3}}  { \delta^{1/3}}  \right)      &    \Theta \left(   \delta^{1/3}  n^{1/3} \right)       &   \Theta \left( \sqrt n \right)   &  \Theta\left(n^{1/3}\right) \\ \hline
(iii) &\Ballpq 2 2  &    \Theta \left( \sqrt n \right)   &      \Theta \left(    \frac {n^{2/7}} { \delta^{3/7}} \right)         &        \Theta \left( { \delta^{3/7}}   {n^{2/7}} \right)      &  \Theta \left( \sqrt n \right)    &   \Theta\left(n^{2/7}\right)  \\ \hline
\multicolumn{7}{||l||}{}                        \\ \hline
\rowcolor{Gray}
 & & \delta \le \frac 1 {\sqrt n}      & \multicolumn{2}{c|}{\frac 1 {\sqrt n}  \le \delta \le  \sqrt n} & \sqrt n \le \delta    &  \\ \hline
(iv) &\Ballpq \infty q&    \Theta\left( \ln n\right)  & \multicolumn{2}{c|}{\Theta \left( \ln n + \sqrt  \delta n^{1/4} \right)  }  &   \Theta \left( \sqrt n \right)    &  \Theta \left(n^{1/4} \right)  \\ \hline
\multicolumn{7}{|l|}{}                        \\ \hline \hline
\rowcolor{Gray}
&   &  \delta \le \frac 1 {\sqrt n}    &  \frac 1 {\sqrt n}  \le \delta \le n^{1/26}          &  n^{1/26}    \le \delta \le \sqrt n        & \sqrt n \le \delta    &  \\ \hline
(v) &\Ballpq 1 2 &   \Theta \left( \sqrt n \right)    &      \Theta \left(    \frac {n^{2/7}} { \delta^{3/7}} \right)       &     \Theta \left( \sqrt  \delta n^{1/4} \right)       &   \Theta \left( \sqrt n \right)    &   \Theta\left(n^{2/7}\right) \\ \hline
\end{array}
$$
\end{Theorem}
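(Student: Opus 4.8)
The starting point is the exact identity $\EMN = n\,\PR{u_1\in\MAXSN}$; conditioning on the position $v$ of $u_1$ gives
\[
\EMN \;=\; n\int_{B_p+\delta B_q} f(v)\,\bigl(1-\mu(v)\bigr)^{n-1}\,dv,
\]
where $f$ is the density of $\Ballpq p q$ and $\mu(v)=\PR{u.x\ge v.x,\ u.y\ge v.y}$ is the mass of the quadrant dominating $v$. So no concentration is needed: proving the theorem amounts to estimating this one integral, up to constants, in each row and each $\delta$-window. Since $(1-\mu)^{n-1}\le e^{-(n-1)\mu}$ is super-polynomially small once $\mu(v)\gg(\log n)/n$, while $(1-\mu)^{n-1}\ge e^{-2n\mu}$ whenever $\mu(v)\le\tfrac12$, the integral is pinned --- from above and from below --- to a thin boundary layer along the ``upper-right'' arc of $\partial(B_p+\delta B_q)$ where $\mu(v)=O(1/n)$, and everything reduces to the local behaviour of $f$ and $\mu$ there.

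I would then split that upper-right boundary into a bounded number of pieces of three types: straight segments (these occur when $p=1$ or $q=1$: the relevant diamond edges have slope $-1$); smooth strictly convex arcs on which the tangent slope stays bounded away from $0$ and $-\infty$ (the interiors of the $B_2$-arcs, and the ``rounded corners'' of $B_\infty+\delta B_q$, which are translated $\delta$-scaled copies of a quarter of $\partial B_q$); and the degenerate loci where these meet --- true corners of $B_\infty$, diamond vertices, poles of arcs, and the axis-parallel edges of $B_\infty+\delta B_q$. For a point $v$ at perpendicular depth $t$ on a slope-bounded piece, two local asymptotics are needed. The \emph{density} comes from $f(v)=\tfrac{1}{a_p a_q\delta^2}\,\Area\!\bigl(B_p\cap(v-\delta B_q)\bigr)$: this overlap behaves like $A\,(t/\sigma)^\gamma$ for $t\lesssim\sigma$ and like $A$ for $t\gtrsim\sigma$, where $\sigma$ is a ``transition width'' of order the radius of the smaller ball ($\sim\min(1,\delta)$ or $\sim\delta$) and the exponent $\gamma$ is read off the local shapes --- e.g. $\gamma=1$ for a diamond edge smoothed by a diamond (constant chord), $\gamma=\tfrac32$ for anything smoothed by the round $B_2$ (a circular segment has area $\sim\sqrt r\,t^{3/2}$), $\gamma=2$ on a $B_q$-rounded corner, $\gamma=1+1/q$ on a flat $B_\infty$ edge smoothed by $\delta B_q$. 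The \emph{domination} input then follows by integrating $f$ over the dominating region of $v$, which on a slope-bounded piece is a curved triangle with apex $v$, base on the boundary, and both legs $\Theta(t)$; this yields $\mu(v)\asymp A\,(t/\sigma)^\gamma t^2$ as long as $t$ stays inside the transition layer.

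Feeding these into the localized integral, a slope-bounded piece of arc length $L$ contributes
\[
n\,L\int_0^{\infty} A\,(t/\sigma)^\gamma\,e^{-c\,nA\,(t/\sigma)^\gamma t^2}\,dt \;=\; \Theta\!\bigl(L\,(nA)^{1/(\gamma+2)}\,\sigma^{-\gamma/(\gamma+2)}\bigr),
\]
the evaluation being an elementary substitution to a Gamma value. Plugging in $\gamma,A,\sigma,L$ reproduces the table cell by cell: on the $B_2$-arcs one gets $\Theta(n^{2/7}\delta^{-3/7})$ for $\delta\le1$ (where $A=\Theta(1),\sigma=\Theta(\delta),L=\Theta(1)$) and $\Theta(\delta^{3/7}n^{2/7})$ for $\delta\ge1$ ($A=\Theta(\delta^{-2}),\sigma=\Theta(1),L=\Theta(\delta)$); the $B_q$-rounded corners ($\gamma=2$) give $\Theta(\sqrt\delta\,n^{1/4})$; the diamond edges ($\gamma=1$) give $\Theta(n^{1/3}\delta^{-1/3})$ or $\Theta(\delta^{1/3}n^{1/3})$. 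On an axis-parallel piece, or near a corner or pole, the $t$-integral no longer self-substitutes and instead leaves a residual $\int ds/s$ over the ``width'' variable, cut off above at $\Theta(1)$ and below at $\Theta(1/(nA))$, which evaluates to $\Theta(\log n)$ --- the source of every logarithm in the table (and, for $p=q=\infty$, where the coordinates of $\Ballpq\infty\infty$ are independent, it sharpens to the classical $\EMN=H_n$). Each cell is the maximum over the pieces, and the $\delta$-thresholds are exactly where that maximum switches pieces (e.g. $n^{2/7}\delta^{-3/7}=\sqrt\delta\,n^{1/4}$ at $\delta=n^{1/26}$ in row (v)), where the action depth $t\asymp(\sigma^\gamma/(nA))^{1/(\gamma+2)}$ escapes the transition layer $t\lesssim\sigma$ (the rightmost windows $\delta\ge\sqrt n$: there $B_p+\delta B_q$ has rescaled to $\delta B_q$ with a uniform interior and the answer reverts to $\Theta(\sqrt n)$), or where the perturbation $\delta$ drops below the natural fluctuation scale $\sim1/\sqrt n$ of the unperturbed maxima (the leftmost windows $\delta\le1/\sqrt n$, where the $B_p$ answer is recovered). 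The matching lower bounds come from rerunning each estimate on $\{\mu\le\tfrac12\}$ using $(1-\mu)^{n-1}\ge e^{-2n\mu}$.

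The main obstacle is the planar-geometry bookkeeping of the middle step, carried out uniformly over all the regimes: for general real $q\ge1$ (rows (iv)--(v)) the rounded corner is an arc of $\partial B_q$ whose curvature degenerates at its two poles, so that piece must be further subdivided and the pole pieces controlled separately; in the mixed cases $p\ne q$ and at large $\delta$ one must correctly decide whether the local behaviour is governed by $B_p$, by $\delta B_q$, or by their interface, which is what fixes the exact $\delta$-boundaries in the table; and one must check that the corner and pole contributions never beat the bulk, which is precisely what forces the logarithmic terms to be $\Theta(\log n)$ rather than something $\delta$-dependent. The probabilistic content is minimal; the work lives entirely in the convex geometry of Minkowski sums of planar $L_p$-balls and in a family of elementary Laplace/Gamma integrals.
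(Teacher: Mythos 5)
Your proposal is correct in outline and takes a genuinely different route from the paper. You work directly from the exact identity $\EMN = n\int f(v)(1-\mu(v))^{n-1}\,dv$ and evaluate it by a Laplace/boundary-layer reduction, reading off a single master formula $\Theta\bigl(L\,(nA)^{1/(\gamma+2)}\sigma^{-\gamma/(\gamma+2)}\bigr)$ from local data $(A,\sigma,\gamma,L)$ on each slope-bounded boundary arc; I have spot-checked that your $(\gamma,A,\sigma,L)$ tuples and the Gamma-integral substitution reproduce every entry of the table, including the $\delta$-windows and the crossover $\delta=n^{1/26}$ in row (v). The paper, by contrast, never writes down the global integral: it localizes combinatorially, via the Sweep Lemma, by chopping the boundary layer into $\Theta(m)$ strips or sectors and arguing each contributes $O(1)$ expected maxima, with a matching lower bound from $\Theta(m)$ disjoint dominant regions of measure $\Theta(1/n)$. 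Your route is the classical one and is more systematic and quantitative (it would yield explicit constants from the $\Gamma$-values); the paper's is more elementary and, importantly, handles the problem's genuinely awkward part --- transitions between adjacent regimes of the density, the axis-parallel $B_\infty$ edges, corners and $B_q$-arc poles --- by the single blunt instrument of the Sweep Lemma plus a separate quasi-independence argument (Lemma \ref{lem: side rec}) for the $\Theta(\ln n)$ terms. That same awkward part is where your sketch is thinnest: the statement that degenerate pieces leave a ``residual $\int ds/s$'' is correct in spirit but, to close it, you would need to (i) prove the approximate factorization $\mu(v)\approx G_1 G_2$ near flat $B_\infty$ edges and control the error uniformly, (ii) verify that corner/pole contributions are dominated (this is essentially the paper's Corollary \ref{cor: Quadrants} plus the $\hat R$ analysis in Lemma \ref{lem:B1B2bWedge}), and (iii) make the Laplace cutoff argument rigorous when $\gamma$ and $A$ jump between regimes of the same arc (rows (v) at large $\delta$). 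None of these look like fatal obstacles; they are the ``planar-geometry bookkeeping'' you already flag, and they correspond fairly closely in total effort to the paper's case analysis.
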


{\em \small Observations: In $\Ballpq p q$
\begin{itemize}
\item When $p = q = \infty$, $\MN$ has exactly the same distribution as if $S_n$  were chosen from $\Ball \infty$ so this is an uninteresting case.
\item When $\delta$ is small enough $(\le 1/\sqrt n)$,  $\EMN$ behaves  almost as  if $S_n$  were chosen from $\Ball p$ and when
$\delta$ is large enough $(\ge \sqrt n)$   it behaves  almost as if $S_n$  were chosen from $\Ball q.$ 
\item Later Lemma \ref{lem: scaling} will show that $\MN$ has the same distribution for $S_n$ chosen from both
$\Ballpq p q $ and $\Ball q + \frac 1 \delta \Ball p.$   Thus row (iv) gives the behavior for $\Ballpq q \infty$ for any $q \ge 1$ and row (v) the behavior for $\Ballpq 2 1.$
\item  When  $p =q \in \{1,2\}$ the behavior starts at 
$\Theta(\sqrt n)$, 
smoothly decreases until reaching $\delta =1$ and then increases again until reaching $\Theta(\sqrt n)$. The behavior in the middle is different for $1$ and $2.$ 
In both cases there is symmetry between $\delta$ and $1/\delta$ (from Lemma \ref{lem: scaling}).
\item When $p=1, q=2$ there is no symmetry.  Behavior starts at $\Theta(\sqrt n)$, decreases to 
$\Theta\left(n^{7/26}\right)$ at $\delta = n^{1/26}$ and then increases again at a different rate to $\Theta(\sqrt n)$.
\item  When $p= \infty$, the behavior is asymptotically equivalent  for all $q \in[1,\infty),$ not just $q=1,2.$  The only difference is in the value  of the constant hidden by the $\Theta.$ The behavior starts at $\Theta(\ln n)$, stays there for a short while and then smoothly increases to $\Theta(\sqrt n).$
\end{itemize}
}

\begin{figure}[p]
\centerline{
\includegraphics[width = 4.8in]{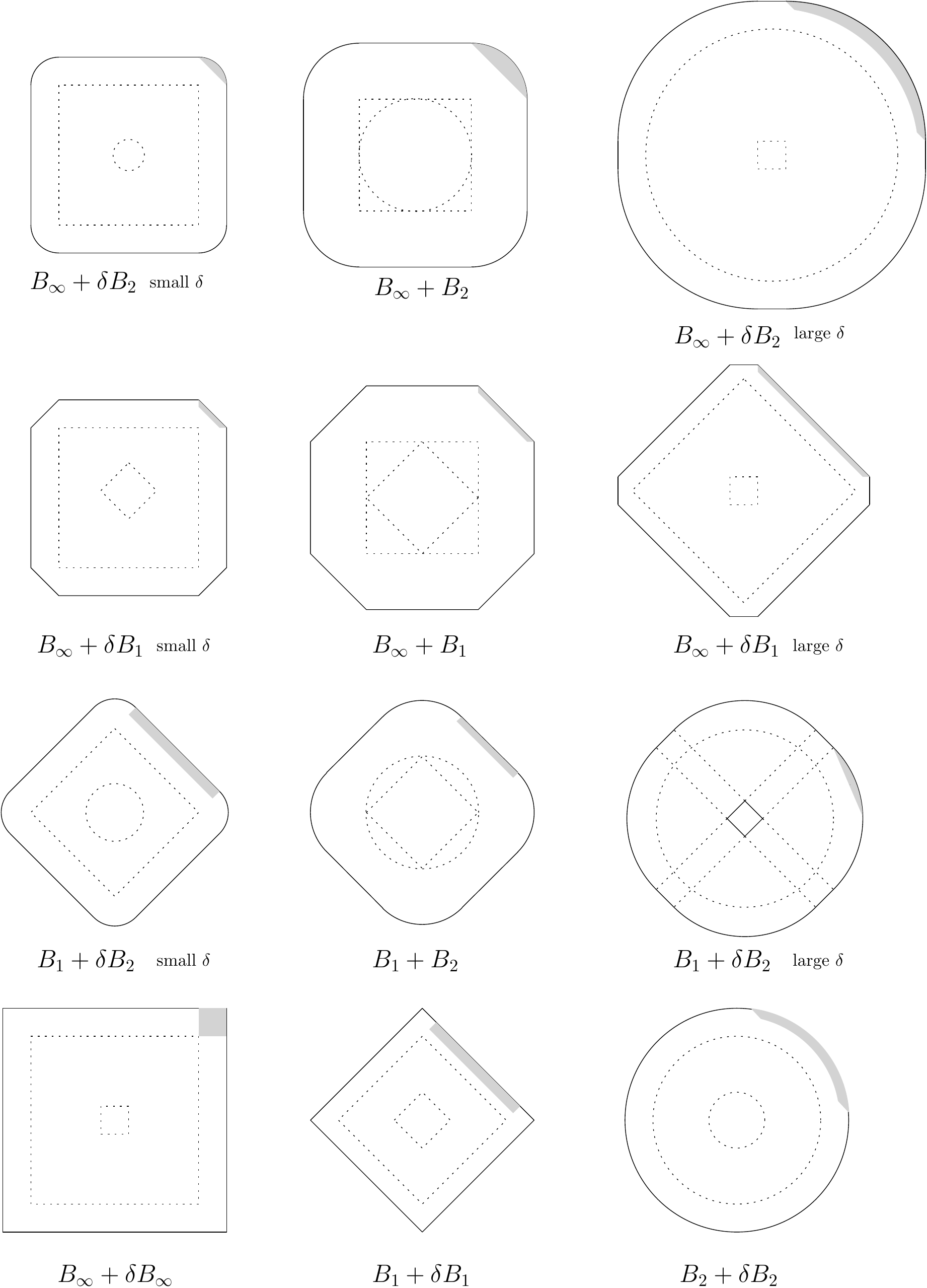}}
\caption{  Illustrations of the supports of some of the different distributions in the form  $\Ballpq p q$ examined in Theorem \ref{thm: main}. The dotted lines  denote the $B_p$ and $\delta B_q$ balls centered at $0.$  Note that in all cases the  density is uniform near the center of the support but then decreases to $0$ as the boundary is approached.  The gray areas denote, approximately, where the maxima of $S_n$ are concentrated.}
\end{figure}

\section{Basic Lemmas}
\label{sec: Basic Lemmas}
The following collection of Lemmas comprise the basic toolkit used to derive Theorem \ref {thm: main}.  They are only stated here, with complete proofs being provided in  Section  \ref {sec: Basic Lemma Proofs}.
\begin{Definition}
\label{def: den}
Let $\bfd$
 be a distribution over $\Re^2$,  $x \in \Re^2$ and $A \subset \Re^2$ a measurable region. 
 \begin{itemize}
 \item $f_\bfd(x)$ will denote the density function of $\bfd.$
 \item $\mu_\bfd(A) = \int_{A} f_\bfd (x) dx$ will denote  the {\em measure} of $A.$
 \end{itemize}
 If $\bfd$ is understood we often simply write $f(x)$ and $\mu(A).$
\end{Definition}

\begin{Definition}
\label{def: PDdef}
Let  $D \subseteq \Re^2$, $v \in D$ and $A \subseteq D.$
\begin{itemize}
\item 
$P(v) = \{u \in D \,:\, \mbox{$u$ dominates $v$}\} \cup \{v\}$.
\item $P(A) = \bigcup_{v \in A} P(v).$
\item $A$ is dominant in $D$  or a dominant region in $D,$ if $P(A) = A.$
\end{itemize}
Note that, by definition, $\forall v \in D,$  $P(v)$  is  a dominant region in $D.$
\end{Definition}

\begin{Lemma}\label{lem: basic mu} Let $v$  and $S_n$ be chosen from $\bfd$ and $A \subseteq D.$ Then
\begin{eqnarray*}
(a) \quad \Pr(v \in A) &=& \mu(A).\\
(b) \quad \EXP {|A \cap S_n|} &=& n \mu(A).\\
(c) \quad  \Pr(A \cap S_n = \emptyset) &=&  \left(1 - \mu(A)\right)^n.
\end{eqnarray*}
\end{Lemma}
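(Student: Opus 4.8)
The plan is to prove the three parts in order, each one reducing to a standard probabilistic identity together with the definition of $\mu$ given in Definition \ref{def: den}. For part (a), I would simply unwind the definitions: by definition of the density function $f_\bfd$ we have $\Pr(v \in A) = \int_A f_\bfd(x)\,dx$, and by Definition \ref{def: den} this integral is exactly $\mu(A)$. Nothing else is needed.

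For part (b), I would write the count $|A \cap S_n|$ as a sum of indicator random variables, $|A \cap S_n| = \sum_{i=1}^n \mathbf{1}[u_i \in A]$, where $S_n = \{u_1,\dots,u_n\}$. Applying linearity of expectation and then part (a) to each term (each $u_i$ is distributed as $\bfd$), we get $\EXP{|A \cap S_n|} = \sum_{i=1}^n \Pr(u_i \in A) = n\,\mu(A)$. Here only linearity is used; independence of the $u_i$ is not even required for this part.

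For part (c), I would use the fact that the $u_i$ are IID: the event $A \cap S_n = \emptyset$ is the event that $u_i \notin A$ for every $i$, so by independence $\Pr(A \cap S_n = \emptyset) = \prod_{i=1}^n \Pr(u_i \notin A) = \prod_{i=1}^n \bigl(1 - \mu(A)\bigr) = \bigl(1 - \mu(A)\bigr)^n$, using part (a) to evaluate $\Pr(u_i \notin A) = 1 - \mu(A)$.

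There is essentially no obstacle here; the lemma is pure bookkeeping and its only role is to fix notation and record the three identities so that later arguments can invoke them without comment. The only point worth stating explicitly in the write-up is the reduction of the combinatorial quantities to indicator sums, so that linearity (for (b)) and independence (for (c)) can be applied cleanly; measurability of $A$ is assumed in the hypothesis so the integrals are well defined.
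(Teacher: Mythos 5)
Your proof is correct and is exactly the standard argument; note that the paper itself offers no proof of this lemma (it is stated in Section~\ref{sec: Basic Lemmas} and simply used), treating all three identities as elementary consequences of the definitions. Your write-up -- unwinding $\Pr(v\in A)=\int_A f_\bfd=\mu(A)$ for (a), indicator sums plus linearity for (b), and independence for (c) -- is precisely the bookkeeping the authors implicitly rely on.
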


The following observation will be used to prove most of our lower bounds.
\begin{Lemma}[Lower Bound] \label{lem: lb}
 Let $S_n$ be chosen from $\bfd$. Further let  
 $A_1,A_2,\ldots,A_m$ be a collection of   pairwise disjoint dominant regions in $D$ with $\mu(A_i) = \Omega(1/n)$ for all $i$.
 Then 
 $$\EMN  \ge  \EXP{ \left|  |\MAX\left(  S_n \cap \bigcup_{i=1}^m A_i\right)  \right|} = \Omega(m).$$
\end{Lemma}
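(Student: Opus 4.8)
The plan is to reduce the global count $\MN$ to the contribution of the $A_i$'s, exploiting the fact that each $A_i$ is dominant in $D$. The key structural observation is that if a point $u \in S_n$ lies in $A_i$ and $u$ is maximal within $S_n \cap A_i$, then $u$ is already maximal in all of $S_n$: any point $w \in S_n$ that dominated $u$ would, by definition of ``dominant region'' ($P(A_i) = A_i$, so $P(u) \subseteq A_i$), itself lie in $A_i$, contradicting maximality of $u$ inside $A_i$. Hence
\[
\MAX\!\left(S_n \cap \bigcup_{i=1}^m A_i\right) \;=\; \bigcup_{i=1}^m \MAX\!\left(S_n \cap A_i\right) \;\subseteq\; \MAX(S_n),
\]
and since the $A_i$ are pairwise disjoint this union is disjoint, giving the first (in)equality of the statement: $\EMN \ge \sum_{i=1}^m \EXP{|\MAX(S_n \cap A_i)|}$.

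Next I would bound $\EXP{|\MAX(S_n \cap A_i)|}$ from below by a constant. It suffices to show $\PR{S_n \cap A_i \ne \emptyset} = \Omega(1)$, because whenever $S_n \cap A_i$ is nonempty it has at least one maximal point, so $|\MAX(S_n \cap A_i)| \ge \mathbf{1}[S_n \cap A_i \ne \emptyset]$ and therefore $\EXP{|\MAX(S_n\cap A_i)|} \ge \PR{S_n \cap A_i \ne \emptyset}$. By Lemma \ref{lem: basic mu}(c), $\PR{S_n \cap A_i = \emptyset} = (1-\mu(A_i))^n$. Since $\mu(A_i) = \Omega(1/n)$, there is a constant $c>0$ with $\mu(A_i) \ge c/n$ for all large $n$ (and all $i$), so $(1-\mu(A_i))^n \le (1 - c/n)^n \le e^{-c}$, whence $\PR{S_n \cap A_i \ne \emptyset} \ge 1 - e^{-c} = \Omega(1)$.

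Combining, $\EMN \ge \sum_{i=1}^m \EXP{|\MAX(S_n \cap A_i)|} \ge m(1-e^{-c}) = \Omega(m)$, which is exactly the claimed bound. The main point requiring care — really the only non-routine step — is the structural claim that maximality within a dominant region lifts to global maximality; everything else is a one-line application of Lemma \ref{lem: basic mu} plus the elementary inequality $(1-x/n)^n \le e^{-x}$. One should also note the mild technical point that ``$\mu(A_i)=\Omega(1/n)$ for all $i$'' must be read with a uniform hidden constant $c$ (independent of $i$), which is how the statement is intended and what makes the bound $e^{-c}$ uniform across the $m$ regions.
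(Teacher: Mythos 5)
Your proposal is correct and follows essentially the same route as the paper's proof. Both arguments hinge on the same structural observation — that dominance of $A_i$ forces $\MAX(S_n \cap A_i) \subseteq \MAX(S_n)$ (the paper phrases this as $A_i \cap \MAX(S_n) = \MAX(S_n \cap A_i)$) — and both then apply Lemma \ref{lem: basic mu}(c) together with $(1-c/n)^n \le e^{-c}$ to get a per-region $\Omega(1)$ lower bound, summing over the pairwise disjoint $A_i$. Your writeup is slightly more explicit about why the structural claim holds and about the uniformity of the hidden constant across $i$, but the content is the same.
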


\begin{Definition}
\label{def:Pprime}
Let $D = B_p + \delta B_q.$  For $v \in D$ define
$$P'(v) = B_q(v,\delta) \cap B_p =(v+ \delta B_q) \cap B_p,$$
the preimage of point $v$ in $B_p.$
\end{Definition}

\begin{Lemma}
\label{lem: measure integral}
Fix $p,q\in [1,\infty]$.  Let $\bfd = \Ballpq p q$ and let $v$ be a point chosen from $\bfd.$ 
Let $A \subseteq \Re^2$.  Then
\begin{eqnarray}
f(v) &=& \frac 1 {a_p a_q} \frac  {\Area(\{u \in B_p \,:\, v-u \in \delta B_q\})} {\delta^2}  \nonumber\\
&=&  \frac 1 {a_p a_q} \frac  {\Area(P'v)} {\delta^2}  \label{eq: fdef}\\
\mu(A)  &=&  \frac 1 {a_p a_q}  \int_{u \in B_p} \frac  {\Area((u + \delta B_q) \cap A)} {\delta^2} du. \label{eq: mudef}
\end{eqnarray}
\end{Lemma}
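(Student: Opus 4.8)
The plan is to obtain $f(v)$ directly as the convolution of the densities of the two factors $\Ball p$ and $\delta\Ball q$, and then derive \eqref{eq: mudef} by integrating $f$ over $A$ and exchanging the order of integration.

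First I would write down the two factor densities. The distribution $\Ball p$ is uniform on $B_p$, so $f_{\Ball p}(u)=\frac1{a_p}\mathbf 1[u\in B_p]$. By Definition \ref{def: Dist1}, $\delta\Ball q$ is the image of $\Ball q$ under scaling by $\delta$; its support is $\delta B_q$, which has area $\delta^2 a_q$, so $f_{\delta\Ball q}(w)=\frac1{\delta^2 a_q}\mathbf 1[w\in\delta B_q]$. Since $\Ballpq p q$ is, by definition, the law of $U+W$ with $U\sim\Ball p$ and $W\sim\delta\Ball q$ independent, its density is the convolution
$$
f(v)=\int_{\Re^2} f_{\Ball p}(u)\,f_{\delta\Ball q}(v-u)\,du
 =\frac1{a_p a_q\,\delta^2}\int_{\Re^2}\mathbf 1[u\in B_p]\,\mathbf 1[v-u\in\delta B_q]\,du,
$$
and the inner integral is exactly $\Area(\{u\in B_p:\ v-u\in\delta B_q\})$, which gives the first line of the claim. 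For the second line, observe that $v-u\in\delta B_q\iff u\in v-\delta B_q$, and because the $L_q$ ball $B_q$ is symmetric about the origin, $v-\delta B_q=v+\delta B_q=B_q(v,\delta)$. Hence $\{u\in B_p:\ v-u\in\delta B_q\}=B_p\cap B_q(v,\delta)=P'(v)$ by Definition \ref{def:Pprime}.

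For $\mu(A)$ I would start from $\mu(A)=\int_A f(v)\,dv$, substitute the integral expression just obtained for $f(v)$, and get the double integral $\frac1{a_p a_q\,\delta^2}\int_A\int_{\Re^2}\mathbf 1[u\in B_p]\,\mathbf 1[v-u\in\delta B_q]\,du\,dv$. The integrand is nonnegative and measurable for any measurable $A$, so by Tonelli's theorem the order may be exchanged; performing the $v$-integral first and using $v-u\in\delta B_q\iff v\in u+\delta B_q$ gives $\int_A\mathbf 1[v-u\in\delta B_q]\,dv=\Area\big(A\cap(u+\delta B_q)\big)$. The surviving factor $\mathbf 1[u\in B_p]$ then restricts the outer integral to $u\in B_p$, which is precisely \eqref{eq: mudef}.

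The argument is essentially bookkeeping; the only points that deserve a word of care are getting the normalizing constant $1/(\delta^2 a_q)$ of $\delta\Ball q$ right (planar scaling by $\delta$ multiplies area by $\delta^2$), using the origin-symmetry of $B_q$ so that $v-\delta B_q$ becomes the ball $B_q(v,\delta)$ appearing in Definition \ref{def:Pprime}, and invoking Tonelli to justify the interchange of integrals. I do not anticipate any genuine obstacle here.
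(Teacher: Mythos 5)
Your proof is correct, but it runs in the opposite direction from the paper's. You compute $f(v)$ first, directly as the convolution of the two factor densities (the standard convolution formula for the density of a sum of independent random vectors), and then obtain Eq.~\eqref{eq: mudef} by integrating $f$ over $A$ and invoking Tonelli. The paper instead begins with the measure $\mu(A)$, writes it as the double integral $\int_{u\in B_p}\int_{w\in\delta B_q,\ u+w\in A} f_{\delta\Ball q}(w)\,f_{\Ball p}(u)\,dw\,du$, simplifies to get \eqref{eq: mudef}, and then recovers \eqref{eq: fdef} by a change of variables $v = u+w$, a Fubini swap, and a final differentiation step to extract the density from the measure. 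Your route is the more economical one: convolution gives the density at once, and integration is a cleaner way to pass from density to measure than differentiation is to pass from measure to density. The one small point you handle and the paper glosses over is the symmetry of $B_q$ about the origin, needed to identify $v-\delta B_q$ with $v+\delta B_q = B_q(v,\delta)$ in Definition~\ref{def:Pprime}; making that explicit is a slight improvement.
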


\begin{Lemma}
\lab{lem:easy mu}
Fix $p,q\in [1,\infty]$.  Let $\bfd = \Ballpq p q$ and
$\kappa>0$  be any  constant.  
$$
\begin{array}{lllcl}
(a)\  & \mbox{If } & v \in  D & \ \ \Rightarrow \ \ & f(v) = O(1).\\
(b)\  & \mbox{If } & v \in B_p  \mbox{\rm \ \ and }  \delta \le \kappa & \ \ \Rightarrow \ \  & f(v) = \Theta(1).\\
(c)\  & \mbox{If } & A \subseteq D  & \ \ \Rightarrow \ \  &\mu(A) = O(\Area(A)).\\
(d)\  & \mbox{If } & A \subseteq B_p  \mbox{\rm \ \ and }  \delta \le \kappa  & \ \ \Rightarrow \ \  &\mu(A) = \Theta(\Area(A)).\\
\end{array}
$$
The constants implicit in the $O()$  in (a) and (c) are only dependent upon  $p,q$ while the constants implicit in
the $\Theta( )$ in (b) and (d) are only dependent upon  $p,q,\kappa.$
\end{Lemma}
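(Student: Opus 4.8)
The plan is to reduce everything to the pointwise density bounds (a) and (b): once those are in hand, (c) and (d) follow immediately by integrating, since $\mu(A)=\int_A f(x)\,dx$ by Definition~\ref{def: den}. For (c), (a) supplies a constant $c=c(p,q)$ with $f\le c$ on $D$, so $\mu(A)=\int_A f\le c\,\Area(A)=O(\Area(A))$; for (d) the upper bound is (c), and since $A\subseteq B_p$ and $\delta\le\kappa$, (b) supplies a constant $c'=c'(p,q,\kappa)>0$ with $f\ge c'$ on $B_p$, whence $\mu(A)=\int_A f\ge c'\,\Area(A)=\Omega(\Area(A))$.

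For (a) I would apply the density formula (\ref{eq: fdef}) from Lemma~\ref{lem: measure integral}: since $P'(v)=(v+\delta B_q)\cap B_p\subseteq v+\delta B_q$ and $\Area(v+\delta B_q)=\delta^2 a_q$, we get $f(v)\le\frac1{a_p a_q}\cdot\frac{\delta^2 a_q}{\delta^2}=\frac1{a_p}=O(1)$, the constant depending only on $p$. For (b) the upper bound is just (a), and the substance is the matching lower bound $\Area(P'(v))=\Omega(\delta^2)$, uniform over $v\in B_p$. I would prove this by a scaling-towards-the-centre argument. Pick $r,R>0$ with $B_2(0,r)\subseteq B_p\subseteq B_2(0,R)$ and $c_q>0$ with $B_2(0,c_q)\subseteq B_q$ (these depend only on $p$, resp.\ $q$). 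Given $v\in B_p$, set $s=\min\bigl(1,\,c_q\delta/(R+r)\bigr)$ and $v_s=(1-s)v$. Convexity of $B_p$ gives $B_2(v_s,sr)=(1-s)v+sB_2(0,r)\subseteq B_p$, and since $\|v_s-v\|_2=s\|v\|_2\le sR$ we also have $B_2(v_s,sr)\subseteq B_2(v,s(R+r))\subseteq B_2(v,c_q\delta)\subseteq v+\delta B_q$ (the inequality $s(R+r)\le c_q\delta$ holds in both branches of the $\min$). Hence $B_2(v_s,sr)\subseteq P'(v)$, so $\Area(P'(v))\ge\pi s^2 r^2$, which is $\Omega(\delta^2)$ when $s=c_q\delta/(R+r)$ and equals $\pi r^2=\Omega(1)\ge\Omega(\delta^2)$ in the remaining case $s=1$ (where $c_q\delta\ge R+r$), using $\delta\le\kappa$. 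Feeding $\Area(P'(v))=\Theta(\delta^2)$ back into (\ref{eq: fdef}) yields $f(v)=\Theta(1)$.

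The only step I expect to require care is the lower bound in (b): bounding $\Area(P'(v))$ below \emph{uniformly} in $v$, including boundary points of $B_p$ where $v+\delta B_q$ pokes mostly outside $B_p$. The contraction $v\mapsto(1-s)v$ toward the interior point $0$ is what makes this uniform, and the hypothesis $\delta\le\kappa$ is used precisely to cover the degenerate regime in which $v+\delta B_q$ already contains all of $B_p$.
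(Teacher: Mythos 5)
Your proof is correct and follows essentially the same route as the paper: bound $\Area(P'(v))$ using the density formula from Lemma~\ref{lem: measure integral}, then integrate the pointwise bounds to obtain (c) and (d). The one place you go noticeably further than the paper is the lower bound in (b): the paper simply asserts that $\Area(B_p\cap(v+\delta B_q))\ge c\,a_q\delta^2$ for some $c=c(p,q,\kappa)>0$ when $v\in B_p$, whereas your contraction $v\mapsto(1-s)v$ with the nested Euclidean balls actually proves this uniformly over $v$, including boundary points — which is exactly the step that needs care and which the paper leaves implicit.
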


\begin{Lemma}[Mirror]
\label{lem: Mirror}
Let $\bfd$ be any distribution with a continuous density function $f(u)$ and $S_n$ a set of points chosen from $\bfd$.  Let $A, B$ be two disjoint regions in the support $D$ that are parameterized by $t \in [0,T]$ and satisfy:
\begin{enumerate}
\item 
$A(T) =A;$  $B(T) = B$.
\item (Monotonicity in $t$) $\forall t_1  < t_2$, $A(t_1)  \subseteq A(t_2)$ and $B(t_1)  \subseteq B(t_2)$.
\item (Asymptotic dominance in measure)  $\forall t,\,  \mu(B(t) = O(\mu(A(t)).$
\end{enumerate}
Define the random variables
$$t'=
\left\{
\begin{array}{ll}
\min\{t \,:\,  A(t)\cap S_n\not=\emptyset \}, & \mbox{if $A \cap S_n\not=  \emptyset$},\\
T& \mbox{if $A \cap S_n=  \emptyset$}.
\end{array}
\right.
\quad\mbox{and}\quad
X=|S_n\cup B(t')|.
$$
Then
\begin{equation*}
\label{eq: Xtp}
\EXP{X(t')} = O(1).
\end{equation*}
\end{Lemma}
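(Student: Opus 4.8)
The plan is to reduce the statement to two elementary facts: (i) the first point of $S_n$ to enter the growing region $A(\cdot)$ does so when that region has measure $\Theta(1/n)$, and (ii) by hypothesis (3) the region $B(\cdot)$ always has measure at most a constant times that of $A(\cdot)$; combining these, the number of points caught in $B$ by that moment is $O(1)$ in expectation. Throughout I read $X=|S_n\cap B(t')|$. The first move is routine: $X=\sum_{i=1}^{n}\mathbf 1[u_i\in B(t')]$, so by exchangeability of the $u_i$, $\EXP X=n\,\Pr[u_1\in B(t')]$, and it suffices to show $\Pr[u_1\in B(t')]=O(1/n)$.

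The key step is to decouple $u_1$ from the random time $t'$. By the monotonicity hypothesis (2), $A(t)\subseteq A(T)=A$ and $B(t)\subseteq B(T)=B$ for all $t$, and since $A\cap B=\emptyset$, any point that ever lies in some $A(t)$ lies in $A$ and hence in no $B(t'')$. Let $t'_{-1}=\min\{t:A(t)\cap\{u_2,\dots,u_n\}\neq\emptyset\}$ (or $T$ if that intersection is always empty), the value $t'$ would take using only $u_2,\dots,u_n$. If $u_1\notin A$, then $u_1\notin A(t)$ for every $t$, so $u_1$ does not affect the minimum defining $t'$ and $t'=t'_{-1}$; if $u_1\in A$, then $u_1$ lies in no $B(t'')$, so $\mathbf 1[u_1\in B(t')]=\mathbf 1[u_1\in B(t'_{-1})]=0$. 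In all cases $\mathbf 1[u_1\in B(t')]=\mathbf 1[u_1\in B(t'_{-1})]$, and $t'_{-1}$ is a function of $u_2,\dots,u_n$ alone, hence independent of $u_1$. Conditioning on $t'_{-1}$ and applying Lemma \ref{lem: basic mu}(a), $\Pr[u_1\in B(t')]=\EXP{\mu(B(t'_{-1}))}$.

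Now invoke hypothesis (3): $\mu(B(t))=O(\mu(A(t)))$ for every $t$, so $\EXP{\mu(B(t'_{-1}))}=O\!\left(\EXP{\mu(A(t'_{-1}))}\right)$, and it remains to bound $\EXP{\mu(A(t'_{-1}))}$. For $x\in[0,1]$ set $s_x=\inf\{t:\mu(A(t))>x\}$; by monotonicity and continuity of $t\mapsto\mu(A(t))$ (which holds since $f$ is continuous and the parametrization of $A(t)$ is regular, the relevant case being $\mu(A(0))$ already $O(1/n)$), the event $\mu(A(t'_{-1}))>x$ forces $t'_{-1}>s_x$, i.e.\ none of $u_2,\dots,u_n$ lies in $A(s_x)$. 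By Lemma \ref{lem: basic mu}(c) this has probability $(1-\mu(A(s_x)))^{n-1}\le(1-x)^{n-1}$, so $\EXP{\mu(A(t'_{-1}))}=\int_0^1\Pr[\mu(A(t'_{-1}))>x]\,dx\le\int_0^1(1-x)^{n-1}\,dx=1/n$. Hence $\Pr[u_1\in B(t')]=O(1/n)$ and $\EXP X=n\cdot O(1/n)=O(1)$.

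The main obstacle is the decoupling in the second paragraph: one has to notice that $u_1$'s being in $B(t')$ already forces $u_1\notin A$, which in turn removes $u_1$ from the minimum defining $t'$; this is exactly what makes $t'$ effectively independent of $u_1$ so that the measure-domination hypothesis can be applied conditionally. The other ingredients — Lemma \ref{lem: basic mu} and the ``first hit occurs at measure $\Theta(1/n)$'' estimate — are standard, modulo the mild regularity (continuity of $t\mapsto\mu(A(t))$, and $\mu(A(0))$ negligible) that the applications of the lemma supply.
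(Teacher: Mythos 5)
Your proof is correct and takes a genuinely different route from the paper's. The paper conditions directly on the hitting time $t'$: it reparametrizes so that $\mu(A(t))=t$, observes that given $t'$ exactly one of the $n$ points lies at the boundary of $A(t')$ while the remaining $n-1$ are distributed conditionally on $D\setminus A(t')$, and so obtains $\EXP{X\mid t'}=(n-1)\,\mu(B(t'))/(1-\mu(A(t')))$; it then controls $1/(1-\mu(A(t')))$ by an exponential tail bound and reduces everything to $\EXP{t'}=O(1/n)$. You instead start from linearity/exchangeability, $\EXP{X}=n\Pr[u_1\in B(t')]$, and decouple $u_1$ from the stopping time by the leave-one-out observation that $\mathbf 1[u_1\in B(t')]=\mathbf 1[u_1\in B(t'_{-1})]$ (since $u_1\in A$ kills the indicator outright, while $u_1\notin A$ makes $u_1$ irrelevant to the minimum); this makes $t'_{-1}$ independent of $u_1$, so hypothesis~(3) can be applied under $\EXP{\,\cdot\mid t'_{-1}}$ without any reweighting by $1/(1-\mu(A))$. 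Your route is arguably cleaner: it avoids the somewhat delicate ``conditional distribution of the other $n-1$ points given the hitting time'' step, replacing it by a purely combinatorial identity, and the tail bound $\Pr[\mu(A(t'_{-1}))>x]\le(1-x)^{n-1}$ feeds directly into $\int_0^1(1-x)^{n-1}\,dx=1/n$. Both proofs rely on continuity of $t\mapsto\mu(A(t))$ — the paper uses it implicitly when rescaling to $\mu(A(t))=t$ and when asserting $\Pr(|S_n\cap A(t')|=1)=1$; you invoke it explicitly to guarantee $\mu(A(s_x))=x$, which is the one place your argument would need a patch if $\mu(A(\cdot))$ had jumps. Given that the applications of the lemma in the paper all use regular parametrizations, this is the same standing hypothesis and your proof is complete.
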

\begin{Lemma}[Sweep]
\label{lem: Sweep}
Let $\bfd$ be any distribution with a continuous density function $f(u)$ and $S_n$ a set of points chosen from $\bfd$.  Let $A, B$ be two disjoint regions in the support $D$ that are parameterized by $t \in [0,T]$, satisfy conditions 1-3 of Lemma \ref{lem: Mirror} and, in addition, satisfy
$$\forall t \in [0,T],\quad  \mbox {if } v \in B \setminus B(t)\, \mbox{ and } \,  u \in A(t) \quad 
\Rightarrow \quad u \mbox{ dominates } v.$$
Then 
\begin{equation*}
\label{eq: BO1}
\EXP {|S_n \cap \MAX(P)|} =  O(1).
\end{equation*}
\end{Lemma}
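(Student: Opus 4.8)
The plan is to reduce the statement directly to Lemma~\ref{lem: Mirror}, using the extra ``sweep'' hypothesis purely to upgrade that lemma's conclusion ``$\EXP{X}=O(1)$'' (a bound on the number of \emph{sample} points in $B(t')$) into a bound on the number of \emph{maximal} sample points in the region $P$, where throughout I take $P\subseteq B$ as part of the lemma's setup (in the applications $P$ is a dominant region, so $\MAX(P)$ just means the maximal points of $S_n$ lying in $P$). The first step is to reuse verbatim the random threshold from Lemma~\ref{lem: Mirror}: set $t'=\min\{t : A(t)\cap S_n\neq\emptyset\}$ when $A\cap S_n\neq\emptyset$, and $t'=T$ otherwise, and set $X=|S_n\cap B(t')|$. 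Since conditions 1--3 are assumed to hold, Lemma~\ref{lem: Mirror} applies with no change and already gives $\EXP{X}=O(1)$; all of the probabilistic content is thereby absorbed into that lemma.

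The heart of the proof is then a purely deterministic containment: for \emph{every} realization of $S_n$, one has $\MAX(S_n)\cap P\subseteq S_n\cap B(t')$. I would prove this by contradiction. Suppose $v\in\MAX(S_n)\cap P$ but $v\notin B(t')$. If $A\cap S_n=\emptyset$ then $t'=T$ and $B(t')=B(T)=B\supseteq P\ni v$ by condition~1, a contradiction; so we may assume $A\cap S_n\neq\emptyset$, in which case $A(t')\cap S_n\neq\emptyset$ (this uses the convention, inherited from the setup of Lemma~\ref{lem: Mirror}, that the families $\{A(t)\}$ are arranged so that the defining minimum is attained). Pick $u\in A(t')\cap S_n$. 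Since $A$ and $B$ are disjoint and $v\in B\setminus B(t')$, we have $u\neq v$, and the sweep hypothesis, applied with parameter $t'$, says exactly that $u$ dominates $v$. Hence $v$ is dominated by a point of $S_n\setminus\{v\}$, contradicting $v\in\MAX(S_n)$. This establishes the containment, and therefore the pointwise inequality $|S_n\cap\MAX(P)|\le X$.

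Taking expectations and applying Lemma~\ref{lem: Mirror} yields $\EXP{|S_n\cap\MAX(P)|}\le\EXP{X}=O(1)$, as claimed. I do not expect a genuine obstacle: the only point that needs care is the assertion that $A(t')\cap S_n\neq\emptyset$, i.e.\ that the infimum defining $t'$ is attained rather than merely approached, which is a statement about the topology of the parameterized families and is guaranteed by the conventions under which Lemma~\ref{lem: Mirror} was stated. Should one wish to avoid relying on that convention, the clean workaround is to run the domination argument at level $t'+\varepsilon$ instead, obtaining that $v$ is dominated by some point of $A(t'+\varepsilon)\cap S_n$, and then to let $\varepsilon\to 0$ using the monotonicity in condition~2 together with the continuity of $f$; either way the deduction is short once Lemma~\ref{lem: Mirror} is in hand.
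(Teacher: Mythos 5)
Your proof is correct and follows essentially the same route as the paper's: both invoke the random threshold $t'$ from the Mirror Lemma, both use the sweep hypothesis to conclude that any sample point in $B\setminus B(t')$ is dominated by the sample point in $A(t')$ (hence cannot be maximal), and both then bound $\EXP{|\MAX(S_n)\cap B|}\le \EXP{|S_n\cap B(t')|}=O(1)$. Your treatment is marginally more careful in isolating the $A\cap S_n=\emptyset$ case and in addressing attainment of the infimum defining $t'$, but these are points the paper also implicitly handles via its ``continuity of the measure'' remark in the Mirror Lemma's proof, so there is no substantive difference.
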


\begin{Corollary}
\label{cor: Quadrants}
Fix $p,q\in [1,\infty]$ and choose $S_n$ from  $\bfd= \Ballpq p q.$
Let $Q_1$ be the upper-right quadrant of the plane and $O_{1}$ the first octant , i.e., 
$$Q_1 = \{u \in \Re^2 \,:\,  0 \le u.x,\,  0 \le u.y \}
\quad\mbox{and}\quad
O_{1} = \{u \in \Re^2 \,:\,   0 \le u.y \le u.x\}
$$
Then 
\begin{eqnarray}
\EMN = \EXP {|\MAX(S_n)|} &=&   \EXP {|Q_1 \cap \MAX(S_n)|}  + O(1) \label{eq:sym1}\\
			&=& \Theta\Bigl( \EXP {|O_1 \cap \MAX(S_n)|}\Bigr).\label{eq:sym2}
			\end{eqnarray}
\end{Corollary}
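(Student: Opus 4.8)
The plan splits along the two displayed identities. To prove (\ref{eq:sym1}), the key point is that the expected number of maximal points of $S_n$ lying \emph{outside} the first quadrant $Q_1$ is only $O(1)$. Such a point has a negative coordinate, so it lies in the half-plane $\{u:u.y<0\}$ or the half-plane $\{u:u.x<0\}$; hence it suffices to show $\EXP{|\MAX(S_n)\cap\{u:u.y<0\}|}=O(1)$, because the bound for $\{u:u.x<0\}$ then follows from the reflection $\sigma:(x,y)\mapsto(y,x)$, which preserves $\bfd$ (since $B_p$ and $B_q$, and therefore $D=B_p+\delta B_q$ and the density of $\bfd$, are invariant under swapping coordinates) and preserves the domination order. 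Adding the two $O(1)$ contributions to $\EXP{|Q_1\cap\MAX(S_n)|}$ then gives (\ref{eq:sym1}).

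To bound the maxima of $S_n$ below the $x$-axis I will apply the Sweep Lemma (Lemma~\ref{lem: Sweep}) with the disjoint regions $A=D\cap\{u:u.y\ge0\}$ and $B=D\cap\{u:u.y<0\}$. Writing $x^*=1+\delta$ for the rightmost extent of $D$ (which by symmetry equals minus its leftmost extent), I parameterize over $t\in[0,2x^*]$ by $A(t)=A\cap\{u:u.x\ge x^*-t\}$ and $B(t)=B\cap\{u:u.x\ge x^*-t\}$. Conditions 1--2 of Lemma~\ref{lem: Mirror} (the boundary values $A(2x^*)=A$, $B(2x^*)=B$, and monotonicity in $t$) are immediate. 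Condition 3 holds with equality, since $B(t)$ is, up to a null set, the reflection of $A(t)$ across the $x$-axis and $\bfd$ is invariant under $(x,y)\mapsto(x,-y)$ (its density being the convolution of the $x$-axis-symmetric densities of $B_p$ and $\delta B_q$). Finally the extra Sweep hypothesis holds: if $v\in B\setminus B(t)$ and $u\in A(t)$ then $u.x\ge x^*-t>v.x$ and $u.y\ge0>v.y$, so by Definition~\ref{def: Dom1} $u$ dominates $v$. Lemma~\ref{lem: Sweep} then gives $\EXP{|\MAX(S_n)\cap B|}=O(1)$, which with the reduction above proves (\ref{eq:sym1}).

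For (\ref{eq:sym2}) I first turn (\ref{eq:sym1}) into the cleaner statement $\EMN=2\,\EXP{|O_1\cap\MAX(S_n)|}+O(1)$: indeed $Q_1=O_1\cup O_1'$ with $O_1'=\{u:0\le u.x\le u.y\}=\sigma(O_1)$, the two octants meet only along the diagonal (which $S_n$ avoids almost surely), and $\sigma$ preserves $\bfd$ and domination, whence $\EXP{|Q_1\cap\MAX(S_n)|}=2\,\EXP{|O_1\cap\MAX(S_n)|}$. The inequality $\EXP{|O_1\cap\MAX(S_n)|}\le\EMN$ is trivial, so it only remains to show $\EXP{|O_1\cap\MAX(S_n)|}=\Omega(1)$; plugging this into the displayed relation forces $\EMN=O(\EXP{|O_1\cap\MAX(S_n)|})$, giving (\ref{eq:sym2}). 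For the $\Omega(1)$ bound, let $w$ be the almost surely unique point of $S_n$ with the largest $x$-coordinate. Then $w\in\MAX(S_n)$ because no other point of $S_n$ can dominate it, and using the coordinate-reflection symmetries of $\bfd$ (which in particular make $w.y$ symmetric about $0$ and each $u_i.x$ symmetric about $0$) one has $\PR{w\in Q_1}\ge\PR{w.y\ge0}-\PR{w.x<0}=1/2-2^{-n}$; therefore $\EXP{|Q_1\cap\MAX(S_n)|}\ge 1/2-2^{-n}$, and for $n\ge2$, $\EXP{|O_1\cap\MAX(S_n)|}=\frac12\EXP{|Q_1\cap\MAX(S_n)|}\ge 1/8$.

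The one delicate step is arranging the Sweep Lemma to apply: $A(t)$ and $B(t)$ must be matched as exact $x$-axis mirror images so that the domination hypothesis and the measure hypothesis hold simultaneously --- a coarser choice such as letting $A$ be the entire right half of $D$ would destroy the disjointness of $A$ and $B$. Everything else --- the several reflection symmetries of $\bfd$ and the elementary argument via the point of $S_n$ with largest $x$-coordinate --- is routine bookkeeping.
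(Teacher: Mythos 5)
Your proof is correct and follows the paper's own route: both apply the Sweep Lemma with $A$ the upper half of the support and $B$ the lower half, swept right to left, to show $O(1)$ maxima below the $x$-axis (and symmetrically left of the $y$-axis), giving Eq.~(\ref{eq:sym1}), and both then use the $x\leftrightarrow y$ symmetry to split $Q_1$ into two octants with equal expected maxima, giving Eq.~(\ref{eq:sym2}). If anything you are slightly more careful than the paper: you sweep over the full range $t\in[0,2(1+\delta)]$ rather than the paper's $[0,2]$ (which strictly does not reach $A(T)=A$ when $\delta>0$), and you explicitly supply the $\Omega(1)$ lower bound on $\EXP{|O_1\cap\MAX(S_n)|}$ — via the a.s.\ unique $x$-maximal point — that is technically needed to promote the identity $\EMN=2\,\EXP{|O_1\cap\MAX(S_n)|}+O(1)$ to the asserted $\Theta$, a step the paper leaves implicit.
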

\begin{proof}
Set 
$$A = D \cap \{u \in \Re^2\,:\, u.y \ge 0\}, \quad    B = D \cap \{u \in \Re^2\,:\, u.y < 0\}.$$
For $ t \in [0,2]$ set 
$$A(t)  = \{u \in A\,:\, u.x \ge 1-t\}, \quad    B(t) = \{u \in B\,:\, u.x \ge  1-t\}.$$
Conditions (1) and (2) of Lemma \ref{lem: Mirror}
trivially  hold.  Condition (3) holds because, by symmetry around the $x$-axis
$\mu(B(t)) = \mu(A(t)).$ Finally the additional condition of Lemma \ref{lem: Sweep} holds because every point in $B \setminus B(t)$ is below and to the left of every point in $A(t)$.  Thus the expected number of maximal points  in $S_n$ {\em below} the $x$-axis is $O(1)$.  Note that this is {\em independent} of $n$.  Similarly, the expected number of maximal points to the left of the $y$-axis is $O(1)$.  This proves Eq.~\ref{eq:sym1}

To prove Eq.~\ref{eq:sym2} define the second octant to be
$$O_{2} = \{u \in \Re^2 \,:\,   0 \le u.x\le u.y\}.$$
By the symmetry between the $x$ and $y$ coordinates in the distribution,
$$\EXP{| O_1 \cap \MAX(S_n)|} = \EXP{| O_2 \cap \MAX(S_n)|}.$$
Futhermore, since $O_1$ and $O_2$ partition $Q_1$,
\begin{eqnarray*}
\EXP{|Q_1 \cap  \MAX(S_n)|} &=&\EXP{| O_1 \cap \MAX(S_n)|} + \EXP{| O_2 \cap \MAX(S_n)|} \\
&=& 2 \EXP{|  O_1 \cap \MAX(S_n)|}.
\end{eqnarray*}
Thus
$$\EMN = \EXP{ Q_1 \cap | \MAX(S_n)|}  + O(1)
= \Theta
\left(
 \EXP{| O_1 \cap \MAX(S_n)|}
 \right).$$
\end{proof}

The fact that for $\delta >0$, $u$ dominates $v$ if and only if $\delta u$ dominates $\delta v$  implies
\begin{Lemma}[Scaling]
\label{lem: scaling}
Fix $p,q \in [1,\infty]$, $\bfd = \Ballpq p q$ and $\bfd' = \Ballpqd q p {\frac 1 \delta}.$ Let $S_n$ be $n$ points chosen from $\bfd$ and $S'_n$ $n$ points chosen from $\bfd'$.  Then $\MAX(S_n)$ and $\MAX(S'_n)$ have exactly the same distribution. In particular
$$\EXP{|\MAX(S_n)|} =\EXP{|\MAX(S'_n)|}.$$
\end{Lemma}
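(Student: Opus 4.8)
The plan is to exhibit an explicit bijection between the two sampling experiments that preserves both the underlying measure and the domination relation, namely the scaling map $\phi:\Re^2\to\Re^2$ defined by $\phi(u)=(u.x/\delta,\,u.y/\delta)$. This map is well defined and invertible precisely because $\delta>0$ (if $\delta=0$ the distribution $\bfd'$ is undefined, so we take $\delta>0$ throughout). The whole proof then amounts to tracking this one map through the definitions.

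First I would unwind Definition \ref{def: Dist1} to see what a single sample from $\bfd=\Ballpq p q$ is: it is $u_1+\delta u_2$ where $u_1$ is uniform on $B_p$, $u_2$ is uniform on $B_q$, and the two are independent. Applying $\phi$ gives $\phi(u_1+\delta u_2)=\tfrac1\delta u_1+u_2$. By Definition \ref{def: Dist1} again, $\tfrac1\delta u_1$ is exactly a sample from $\tfrac1\delta\Ball p$ and $u_2$ a sample from $\Ball q$, so $\tfrac1\delta u_1+u_2$ is a sample from the convolution $\Ball q+\tfrac1\delta\Ball p=\Ballpqd q p {\frac1\delta}=\bfd'$. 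Hence $\phi$ pushes $\bfd$ forward to $\bfd'$. Coupling the experiments: if $S_n=\{v_1,\dots,v_n\}$ is drawn IID from $\bfd$, then $\phi(S_n)=\{\phi(v_1),\dots,\phi(v_n)\}$ is distributed as $n$ IID points from $\bfd'$, so it suffices to compare $\MAX(S_n)$ with $\MAX(\phi(S_n))$ for a fixed point set.

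Next I would invoke the elementary fact noted just before the lemma: for $\delta>0$ and any $u,v\in\Re^2$, $u$ dominates $v$ if and only if $\phi(u)$ dominates $\phi(v)$. This is immediate from Definition \ref{def: Dom1}, since multiplying both coordinates by the positive constant $1/\delta$ preserves the two coordinatewise inequalities $u.x\le v.x$, $u.y\le v.y$ and preserves the condition $u\neq v$. Consequently, for any finite $S\subset\Re^2$ we have $\MAX(\phi(S))=\phi(\MAX(S))$, and in particular $|\MAX(\phi(S))|=|\MAX(S)|$. Combining this with the coupling from the previous step shows that $|\MAX(S_n)|$ and $|\MAX(S'_n)|$ are equal in distribution, and the asserted equality of expectations follows as an immediate corollary.

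The argument has essentially no hard step; the one place requiring a little care is the bookkeeping in the pushforward computation, i.e.\ confirming that the uniform law on $B_p$ scales to the uniform law on $\tfrac1\delta B_p$ (a one-line Jacobian check, or just Definition \ref{def: Dist1}) and that convolution commutes with scaling as claimed. If one prefers a more hands-on presentation, the same conclusion can be obtained from Lemma \ref{lem: measure integral} by the change of variables $x\mapsto x/\delta$, verifying directly that $\mu_{\bfd'}(\phi(A))=\mu_{\bfd}(A)$ for every measurable $A$ and every dominant region in particular, but the coupling argument above is cleaner.
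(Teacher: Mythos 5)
Your proof is correct and follows essentially the same route as the paper: both arguments use the scaling map $u \mapsto u/\delta$, observe that it pushes $\Ballpq p q$ forward to $\Ballpqd q p {\frac 1 \delta}$, and then invoke the invariance of dominance under positive scalar multiplication to conclude that $\MAX(S_n)$ and $\MAX(S'_n)$ are equal in distribution. Your write-up is slightly more explicit about the coupling and the pushforward bookkeeping, but no new idea is involved.
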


\begin{Lemma}[Limiting Behavior]
\label{lem: limiting}
Let $p\in [1,\infty]$, $q\in [1,\infty)$,   $\delta = O(1 / \sqrt n)$ and $S_n$ chosen from $\bfd = \Ballpq p q$. Then
$$
\EMN =
\left\{
\begin{array}{ll}
\Theta(\ln n) & \mbox{ if  $p =\infty$},\\
\Theta(\sqrt n) & \mbox{ if $p \not=\infty$}.
\end{array}
\right.
$$
\end{Lemma}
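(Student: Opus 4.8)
The plan is to prove the two regimes separately, using the earlier machinery. When $\delta = O(1/\sqrt n)$, Lemma~\ref{lem:easy mu}(b),(d) tell us that inside the "core" ball $B_p$ the density and the measure behave (up to constants depending on $p,q,\kappa$) exactly as the uniform distribution $\Ball p$ does. So morally the claim is that the smoothed distribution is indistinguishable from $\Ball p$ for the purposes of counting maxima, and we should recover the classical bounds $\Theta(\ln n)$ for $p=\infty$ and $\Theta(\sqrt n)$ for $p\neq\infty$. The only subtlety is that the support $D = B_p + \delta B_q$ is slightly larger than $B_p$: there is a thin annular "skin" $D\setminus B_p$ of width $O(\delta) = O(1/\sqrt n)$ where the density is non-uniform and tapers to zero, and we must argue this skin contributes only $O(\cdot)$ (in fact a lower-order, or at worst same-order, amount) to $\EMN$.

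For the \textbf{upper bound}, I would first use Corollary~\ref{cor: Quadrants} to reduce to counting maxima in the first octant $O_1$. Then decompose $O_1 \cap D$ into the part inside $B_p$ and the skin $D\setminus B_p$. For the skin: its area is $O(\delta) = O(1/\sqrt n)$, so by Lemma~\ref{lem:easy mu}(c) its measure is $O(1/\sqrt n)$, and by Lemma~\ref{lem: basic mu}(b) the expected number of \emph{sample points} (not just maxima) landing in the skin is $O(\sqrt n)$; this is already within the claimed $\Theta(\sqrt n)$ bound for the $p\neq\infty$ case, but for $p=\infty$ it is too weak, so there I would instead apply a Sweep-type argument (Lemma~\ref{lem: Sweep}) or simply observe that when $p=\infty$ the skin near the top-right is dominated, so a mirror/sweep shows its maxima contribution is $O(1)$. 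For the bulk inside $B_p$, I would bound $\EXP{|\MAX(S_n)\cap B_p|}$ by the standard integral $n\int_{v\in B_p} f(v)\,\PR{P'(v)\text{-region}\cap S_n=\emptyset}\,dv$, use Lemma~\ref{lem:easy mu}(b),(d) to replace $f$ and the dominant-region measures by $\Theta(1)$ times their uniform counterparts, and then invoke the known uniform-$\Ball p$ results ($H_n\sim\ln n$ for $p=\infty$, $\Theta(\sqrt n)$ for $p\neq\infty$) — or re-derive them via the same integral estimate. For $p=\infty$ in particular, the independence-of-coordinates characterization mentioned in the introduction gives $\Theta(\ln n)$ directly.

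For the \textbf{lower bound}, I would use Lemma~\ref{lem: lb}. For $p\neq\infty$: exhibit $m = \Theta(\sqrt n)$ pairwise-disjoint dominant regions inside $B_p$, each of measure $\Omega(1/n)$ — these are the standard little "staircase cells" hugging the upper-right boundary arc of $B_p$, which is strictly convex (for $p\in(1,\infty)$) or at least has a non-axis-aligned facet (for $p=1$); since $B_p\subseteq D$ and, by Lemma~\ref{lem:easy mu}(d), $\mu = \Theta(\Area)$ on $B_p$, choosing each cell of area $\Theta(1/n)$ works and they remain dominant in $D$ because moving up-and-right from a point of $B_p$ along the boundary stays a dominant direction. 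For $p=\infty$: the lower bound $\EMN = \Omega(\ln n)$ follows because $\MN\geq$ the number of maxima of the projection onto, say, the uniform-in-a-square sub-picture, or more cleanly because the restriction to $B_\infty$ already contains $n$ points behaving like uniform-in-a-square, whose expected maxima is $H_n$; alternatively, quote the $p=\infty$ uniform result and note the skin only adds points.

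The main obstacle I expect is handling the $p=\infty$ case cleanly on \emph{both} sides: the target $\Theta(\ln n)$ is so small that the crude "expected number of points in the skin is $O(\sqrt n)$" estimate is useless, and one genuinely needs a geometric/domination argument (a Sweep or Mirror application, as in Corollary~\ref{cor: Quadrants}) to show the non-uniform skin near the corner contributes only $O(1)$ extra maxima, while simultaneously checking that the tapering density does not somehow \emph{create} extra maxima further along the top or right edges. The $p\neq\infty$ case is comparatively routine: both bounds reduce, via Lemma~\ref{lem:easy mu}(b),(d), to the classical uniform-$L_p$-ball computation plus a trivial $O(\sqrt n)$ bound on the skin.
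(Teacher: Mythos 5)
Your $p \neq \infty$ case is essentially the paper's argument and is sound: inside $B_p$ (in fact inside $B_p((0,0),1-\delta)$, where the density is \emph{exactly} uniform) the classical $\Theta(\sqrt n)$ count applies, the skin has measure $O(\delta)=O(1/\sqrt n)$ so at most $O(\sqrt n)$ sample points land there, and the staircase-of-dominant-cells construction gives the lower bound. The hand-wave about ``replacing $f$ by $\Theta(1)$ times the uniform density and invoking the known result'' is slightly weaker than what the paper does, but it can be made rigorous for this case.

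The $p=\infty$ case, which you correctly flag as the hard one, has genuine gaps. First, your upper-bound plan appeals to the ``independent $\mathbf X,\mathbf Y$'' characterization for the bulk inside $B_\infty$, but for $q<\infty$ the convolution $\Ball \infty + \delta \Ball q$ does \emph{not} have independent coordinates globally (the $\delta \Ball q$ factor is not a product measure), and the dependence is concentrated precisely in the $O(\delta)$-wide boundary band along the top and right edges, which is exactly where the maxima are. Second, your lower-bound argument --- ``the restriction to $B_\infty$ already contains $n$ points behaving like uniform-in-a-square'' / ``the skin only adds points'' --- is invalid because $|\MAX(\cdot)|$ is not monotone under adding points: a skin point can dominate, and thereby delete, maxima of the $B_\infty$ sub-sample, so $|\MAX(S_n)| \ge |\MAX(S_n\cap B_\infty)|$ is simply false. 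The ingredient you are missing is the paper's Lemma~\ref{lem: side rec}: carve out side rectangles $R_1(\Delta),R_2(\Delta)$ of width $\Delta = \Theta\bigl(\sqrt{\ln n/n}+\delta\bigr)$ (strictly wider than the skin, extending into $B_\infty$, and with the corner $T(\Delta)$ removed) and observe that \emph{within such a strip} the conditional density is a function of one coordinate only, so the product/logarithmic machinery applies strip-locally. The upper bound then follows because $T(\Delta)$ has measure $\Theta(\ln n/n)$ (hence $O(\ln n)$ sample points), the bulk $U(\Delta)$ is dominated by $T(\Delta)$ with high probability, and each side strip contributes $O(\ln n)$ by the strip-local independence; the lower bound conditions on $T(\Delta)\cap S_n = \emptyset$ to make $R_1(\Delta)$ a dominant region, then applies the same strip-local independence to get $\Omega(\ln n)$. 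Without this strip-local product structure, neither direction of the $p=\infty$ bound goes through.
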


Note that if  $u$ chosen from $\Ball p$, $u.x$ and $u.y$  are  independent random variables.  Thus, for any $\delta >0$ if $v$ is chosen from $\bfd=\Ballpq \infty \infty$ $v.x$ and $v.y$  are  independent random variables. As noted in the introduction, this means that if $S_n$ is chosen from $\bfd$, $\EMN$ is exactly the same as if 
$S_n$ was just chosen from $\Ball \infty,$  i.e., $\EMN = \Theta(\ln n).$

Now note that Lemma \ref {lem: limiting}  combined with Lemma  \ref{lem: scaling} immediately imply the limiting behavior in columns (b) and (e) of the table in Theorem \ref{thm: main}. Note too that for rows (ii) and (iii),  column (d) follows directly from applying Lemma \ref{lem: scaling} to column (c).

Thus, proving Theorem \ref{thm: main} reduces to proving cells (ii) c,  (iii) c,   (iv) c,d  and  (v) c,d.  In the next sections we sketch how to derive these results with full proofs relegated to the appendix.
\section{The General Approach}
\label{sec:  General Approach}
\subsection{ A Simple Example:  $\Ball 1.$}
\begin{figure}[t]
\centerline{
\includegraphics[width = 4.8in]{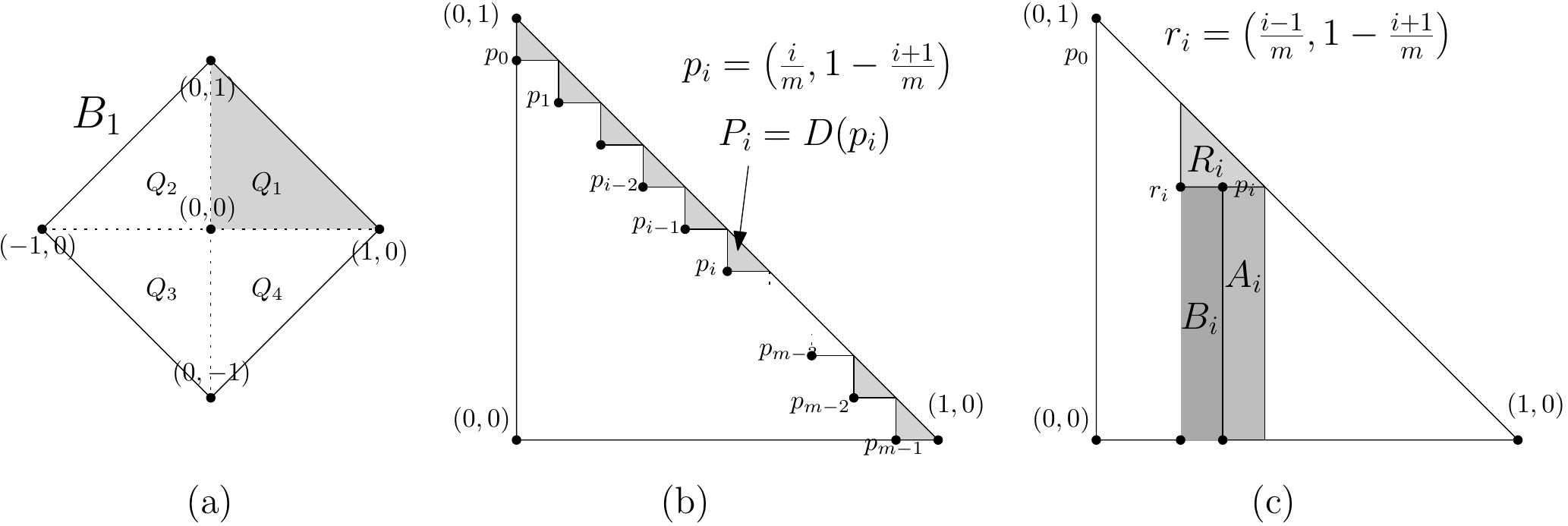}}
\caption{ Illustration of proof  $\EXP{\MN}= \Theta(\sqrt n)$ when $S_n$ is chosen from $\Ball 1$ All but $)(1)$ maxima will be in first quadrant $Q_1$;  (b) and (c) only show $Q_1$. (b) illustrates the lower bound and (c) the upper bound.}
\label{fig:B1full}
\end{figure}

 Before sketching our results it is instructive to see how the Lemmas in the previous section can be used to re-derive that fact that, if $\bfd = \Ball 1$ then $\EXP{\MN}= \Theta(\sqrt n)$.  This is illustrated in Figure \ref{fig:B1full}.   

Even though the behavior of  $\bfd = \Ball 1$ is already well understood we provide this to sketch the generic steps that are needed to derive $\EXP{\MN}$.  These are exactly the same steps that are needed when $\bfd = \Ballpq p q$ and permits identifying where the complications can arise in those more general cases.

Set $m = \lfloor \sqrt n\rfloor$ and let $p_i,r_i$ be the points defined in the figure with $P_i=P(p_i)$ and $R_i= P(r_i).$ Also set
$$B_i = \left\{(x,y)\,:\  \frac {i-1} m \le x \le  \frac i m, \ 0 \le y \le 1 - \frac {i+1} m\right\},
\quad
A_i= \left( \frac 1 m,0\right) + B_i.$$
Finally, for $0 \le t \le (1+i)/m$ set $B_i(t) = B_i \cap \{(x,y)\,:\, y \le (1+i)/m -t\}$ and  
$A_i(t)= \left( \frac 1 m,0\right) + B_i(t).$ The steps in the derivation are.

\begin{enumerate}
\item {\em Restricting to first Quadrant:}  Corollary \ref{cor: Quadrants} implies  that it is only necessary to analyze $ \EXP {|Q_1 \cap \MAX(S_n)|}$.
\item {\em Calculating Density and Measure:} Because $\bfd$ has a uniform density,  $\mu(A) = \Theta(\Area(A))$ for all regions $A \subseteq D.$
\item {\em Lower Bound:}  The $P_i$ are a collection of $t$ pairwise disjoint dominant regions with 
$\mu(P_i) = \Theta(\Area(P_i)) = \Theta(m^{-2}) =\Theta(1/n).$  Thus, from Lemma \ref{lem: lb},  
$\EMN  = \Omega(m) = \Omega(\sqrt n).$
\item {\em Upper bound:}  Note that $Q_1 \cap D = (\cup_i R_i)  \bigcup( \cup_i B_i)$ so
$$|Q_1 \cap \MAX(S_n)| \le \sum_i |R_i \cap \MAX(S_n)| + \sum_i |B_i \cap \MAX(S_n)|.$$
 Since $\mu(R_i) = \Theta(\Area(R_i)) = \Theta(1/n)$, Lemma \ref {lem: basic mu} implies that for all $i,$ 
 $$ \EXP{|R_i \cap \MAX(S_n)|}  \le  \EXP{|R_i \cap S_n)|}  =  n\Theta( \Area(R_i)) = \Theta(1).$$
 The crucial observation is that   for all $i$, the Sweep Lemma (Lemma \ref {lem: Sweep}) holds with  $B= B_i$ and $A=A_i$.  Thus $\EXP{|B_i \cap \MAX(S_n)|} = O(1).$ Combining the above completes the upper bound, showing that
 $$|Q_1 \cap \MAX(S_n)| \le \sum_i  \Theta(1) + \sum_i  \Theta(1) = \Theta(m) = \Theta(\sqrt n).$$
\end{enumerate}
\subsection{The General Approach}
 The proof of Theorem \ref {thm: main} will require case-by-case analyses of $\bfd = \Ballpq p q$ for different pairs $p,q$.  The analysis  for  each pair will follow exactly the same 5 steps as the analysis of $\Ball 1$ above.  We note where the complications arise.
 
 Step 1 of restricting the analysis to quadrant $Q_1$ will be the same for every case.
 
 Step 2, of deriving the measure, will often be quite cumbersome.  While Lemma \ref{lem: measure integral} provides an integral  formula this, in many cases, is unusable.  The density varies quite widely near the border of the support $B_p + \delta B_q$ which is where most of the maxima are located.  A substantial amount of work is involved in finding usable functional representations  for the densities/measures  in different parts of the support.
 
 Step 3, of deriving the  lower bound, is usually a simple application of  Lemma \ref{lem: lb}, given the results of step 2.
 
 Step 4 is the hardest step.  It is usually derived using the sweep lemma with the difficulties arising from how to specify the regions to be swept. This strongly depends upon {\em how} the measure is represented .
\section{Proofs of Basic Lemmas}\label{sec: Basic Lemma Proofs}

\begin{proof} of Lemma \ref {lem: lb}.

First note that, from Lemma \ref{lem: basic mu},
$\Pr(S_n \cap A_i = \emptyset) = \left(1 - \mu(A_i)\right)^n.$ Thus  $\mu(A_i) = \Omega(1/n)$ implies
$$ \Pr(S_n \cap A_i \not= \emptyset) =1 -\Pr(S_n \cap A_i = \emptyset) = \Omega(1).$$
If region $A$ is dominant then points in $A$ can only be dominated by other points in $A$
so  $A \cap \MAX(S_n) = \MAX(S_n \cap A).$  Since each $A_i$ is dominant, this implies
$$\EXP{| \MAX(S_n) \cap A_i|} \ge \Pr(S_n \cap A_i \not= \emptyset)  = \Omega(1).$$
Finally, since the $A_i$ are pairwise disjoint,
\begin{eqnarray*}
\EXP{|MAX(S_n)|} & \ge&   \EXP{\left|MAX(S_n) \cap \left(\bigcup_i A_i\right)\right|} \\
				&=& \EXP{\left|\bigcup_i \left(MAX(S_n) \cap A_i\right)\right|} \\
 		                  & =&  \sum_{i=1}^m \EXP{\left|MAX(S_n) \cap  A_i\right|} 
				\ge \sum_{i=1}^m \Omega(1) = \Omega(m).
\end{eqnarray*}
\end{proof}

\begin{proof} of Lemma \ref{lem: measure integral}:

Note that for $u \in B+p$, $f_{\Ball p}(u) = \frac 1 {a_p}$ and for $u' \in \delta B_q$, $f_{\delta \Ball q}(u') = \frac 1 {a_q \delta^2}$. 

To see Eq.~\ref{eq: mudef} note that
\begin{eqnarray*}
\mu(A) &=&  \int_{u \in \Ball p } \left(  \int_{\substack{w \in \delta B_q \\  u + w  \in A}} f_{\delta \Ball q}(w)  dw\right) f_{\Ball p}(u) du \\[0.1in]
	     &=& \frac 1 {a_p a_q \delta^2 } \int_{u \in B_p } \left(  \int_{\substack{w \in \delta B_q \\  u + w  \in A}}  dw\right) du \\[0.1in]
	     &=& \frac 1 {a_p a_q} \int_{u \in B_p }  \frac{\Area\left((u + \delta B_q ) \cap A\right)}{\delta^2} \,du.
\end{eqnarray*}
For Eq.~\ref{eq: fdef} first note that 
\begin{eqnarray}
\mu(A) 
	&=& \frac 1 {a_p a_q \delta^2 } \int_{u \in B_p } \left(  \int_{\substack{w \in \delta B_q \\ u + w  \in A}}  dw\right) du \nonumber \\[0.1in]
		&=&   \frac 1 {a_p a_q \delta^2 } \int_{u \in B_p }  \left(\int_{\substack{v \in u + \delta B_q \\v    \in A}}  dv \right) du \lab{eq:MDP1}\\[0.1in]
	&=&  \frac 1 {a_p a_q \delta^2 } \int_{v \in A }  \left(\int_{\substack{u \in v - \delta B_q \\u    \in \delta B_p}}  du \right) dv  \nonumber \\ 
	&=&  \frac 1 {a_p a_q} \int_{v \in A }  \frac {\Area\left\{ u \in B_p \,:\, v-u \in \delta B_q\right\}} {\delta^2}  dv \nonumber
\end{eqnarray}
where (\ref{eq:MDP1}) comes from the change of variables  $v = u +w$. 
Differentiating around $v$ yields  Eq.~\ref{eq: fdef}.
\end{proof}

\begin{proof} of Lemma \ref {lem:easy mu}:

The proof  for (a) follows easily from the fact that, for all $u,$
$$\Area (B_p  \cap (u + \delta B_q))) \le  \Area(u + \delta B_q) = \Area(\delta B_q)  =  a_q \delta^2,$$
so from Eq.~\ref{eq: fdef}, 
$f(v) = O (1)$.
Furthermore, if   $u \in B_p$ then 
$$\Area (B_p \cap (u + \delta B_q))) \ge c  \Area(u + \delta B_q) = c a_q \delta^2$$
where $c$ is only dependent upon $p,q,\kappa$.  Thus, again  from Eq.~\ref{eq: fdef},
$f(v) = \Theta (1)$, proving (b).  The proofs for (c) and (d)  follow from plugging these inequalities into Eq.~\ref{eq: mudef}.
\end{proof}

\begin{proof} of Mirror Lemma (Lemma \ref {lem: Mirror}):

Without loss of generality smoothly  rescale $t$ so  that $\mu(A(t))=t$,  and thus $T=\mu (A)$.

The informal intuition  of the Lemma  is that since the ``first'' point in $A$ appears when  the sweep line is $t'$, $\mu(A(t')) ~\sim \frac 1 n.$  Since $B$ is asymptotically dominated in measure by $A,$  $\mu(B(t')) = O(1/n)$ and thus $\EXP{X(t')} = n \mu(B(t')) = O(1).$

Note that by the continuity of the measure we know that $\Pr(|S_n \cap A(t')| =1) =1$.  That is, we may assume that
$|D \setminus A(t')| = n-1.$  

Now assume that
 $t'$ is known.   {\em Conditioned on known $t'$,} the remaining $n-1$ points in $S_n$ are  chosen from   $D \setminus A(t')$ with the associated conditional distribution. More specifically, if  $u$ is any one of those $n-1$ points.  
$$\Pr\left(u \in B(t')\,\Bigm|\, t'\right) 
= \frac  {\mu(B(t'))} {\mu(D\setminus A(t'))} = \frac{\mu(B(t'))}{1-\mu (A(t'))}.$$ 
Thus, conditioning on $t'$, and applying  Lemma \ref{lem: basic mu} (b)
$$\EXP{X(t') \,\Bigm|\, t'} = (n-1)\frac{\mu(B(t'))}{1-\mu(A(t'))}$$
and therefore
$$\EXP{X(t')}=\EXP{ \EXP{X(t') \,\Bigm|\, t'}   } = \EXP{(n-1)\frac{\mu(B(t'))}{1-\mu(A(t'))}}.$$
From the definition of $t'$ and  Lemma \ref{lem: basic mu} (c),  $\mu(A(t')) > 1/2$ with exponentially low probability.
Therefore, recalling that $\mu(A(t)) = t,$ 
\begin{align*}
\EXP{X(t')} &=(n-1)    \EXP{O(\mu(B(t')))} \\
&=(n-1)\EXP{O(\mu (A(t')))}=(n-1)O\left(\EXP{t'}\right).
\end{align*}
Another application of  Lemma \ref{lem: basic mu} (c) shows
\begin{eqnarray*}
\EXP{t'} &=& \int_{\alpha=0}^T \Pr (t' \ge \alpha) d\alpha 
  =  \int_{\alpha=0}^T  \left(1 - \mu(A(\alpha)\right)^{n-1} d\alpha \\
&=&  \int_{\alpha=0}^T  e^{-(n-1) \mu(A(\alpha))} \left(1+ O((n-1) \mu^2(A(\alpha))) d \alpha \right)\\
  &=&  \int_{\alpha=0}^T  e^{-(n-1)\alpha} \left(1+ O((n-1) \alpha^2)\right)d \alpha   = O\left(\frac 1 {n-1}\right).
\end{eqnarray*}
Thus
$\EXP{X(t')} = O(1).$
\end{proof}

\begin{proof} of Sweep Lemma (Lemma \ref{lem: Sweep}):

From the setup in  Lemma \ref {lem: Mirror}, for all $t,$ all points in $B\setminus B(t)$ are dominated by all points in $A(t)$. By the definition of $t'$,  $A(t')\cap S_n$ contains (exactly) one point.
Thus no point in $(B\setminus B(t'))\cap S_n$  can be maximal, i.e., 
$$MAX(S_n) \cap (B\setminus B(t'))  = 0.$$
The proof follows from 
$$\EXP{|MAX(S_n) \cap B| } = \EXP{|MAX(S_n) \cap B(t')| } \le \EXP{|S_n \cap B(t')|} = O(1).$$
\end{proof}

\begin{proof} of Lemma \ref{lem: scaling}:

Let $S_n=\{u_1,\ldots,u_n\}$ 
  be  chosen from $\bfd$.
Recall that the process of choosing  point $u$ from $\bfd$ is to choose $w$ from $\Ball p$,  $v$ from $\Ball q$ and return  $u=w+ \delta v$.  Choosing a point $u'$ from $\bfd'$ is the same except that it returns 
$p'= v+ \frac 1 \delta w= \frac 1 \delta u_i$. 
Thus the distribution of choosing $S_n=\{u_1,\ldots,u_n\}$ from $\bfd$ is exactly the same as choosing
$S_n=\{\frac 1 \delta u_1,\ldots, \frac 1 \delta u_n\}$ from $\bfd'.$

Finally, note that dominance is invariant under multiplication by a scalar, i.e., $p_i$ dominates $p_j$ if and only if
$\frac 1 \delta u_i$ dominates $\frac 1 \delta u_j$.  Thus $|\MAX(S_n)|$ and $|\MAX(S'_n)|$ have  the same distribution and $\EXP{|\MAX(S_n)|}=\EXP{|\MAX(S'_n)|}.$
\end{proof}

The proof of Lemma  \ref{lem: limiting}  will need an observation that will be reused multiple times in the analysis of $\Ballpq \infty q$ and is therefore  stated first, in its own lemma.

\begin{Lemma}
\label{lem: side rec}
Recall $P(v)$ from Definition \ref{def: PDdef}.
Fix $q \in [1,\infty)$ and set $\bfd = \Ballpq  \infty q$.  Let $\Delta \in (2\delta,1/2).$  
Define regions (Fig.~\ref{fig: small delta}(a))
 \begin{eqnarray*}
T(\Delta) &=& P((z,0))  \cap P((0,z)),\\
 R_1(\Delta) &=& P((z,0)) \setminus T(\Delta),\\
R_2(\Delta) &=& P((0,z)) \setminus T(\Delta).
\end{eqnarray*} 
(a) Then  
\begin{equation}
\label{eq:Biqul1}
\EXP{|\MAX(S_n \cap  R_1(\Delta))|} =O(\ln n).
\end{equation}

\par\noindent (b) Furthermore,  if $ \mu(R_1(\Delta))  = \Omega\left(1/\sqrt n\right)$, then\footnote{$R \Bigm|  B$  denotes random variable $R$ {\em conditioned} upon event $B.$}
\begin{equation}
\label{eq:Biqul2}
\EXP{|\MAX(S_n \cap  R_1(\Delta))|  \Bigm| S_n \cap T(\Delta) = \emptyset }  =   \Omega ( \ln n).\\
\end{equation}

\par\noindent (c)
Parts (a) and (b) 
 remain correct if $R_1(\Delta)$ is replaced by$ R_2(\Delta)).$
\end{Lemma}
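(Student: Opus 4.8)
\textbf{Proof plan for Lemma~\ref{lem: side rec}.}

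The plan is to understand the geometry of the region $R_1(\Delta)$ and realize that, within it, the distribution behaves essentially like a product distribution, so that counting maxima there reduces to the classical ``independent coordinates'' case that gives $\Theta(\ln n)$. First I would set $z$ precisely (I read $z = \Delta$, so $T(\Delta)$ is the ``corner'' dominant region shared by $P((\Delta,0))$ and $P((0,\Delta))$, and $R_1(\Delta) = P((\Delta,0)) \setminus T(\Delta)$ is the thin horizontal strip hanging off the right side of the support between the line $x = \Delta$ going rightward, with $y$ ranging over a band of width $O(\delta)$ around $y = 0$, minus the corner piece). The key structural observation is that since $p = \infty$, the base ball $B_\infty$ is an axis-aligned square, and on the right portion of $D = B_\infty + \delta B_q$ the density $f(v)$ depends, to within constants, only on the ``boundary coordinate'' $u.x$: for $v$ with $v.x$ near the right boundary and $v.y$ bounded away from the corners, $P'(v) = (v + \delta B_q) \cap B_\infty$ is a fixed clipped shape whose area depends only on how far $v.x$ is inside, by Lemma~\ref{lem: measure integral}. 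This gives that $R_1(\Delta)$, re-coordinatized, looks like a region on which the measure factors.

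For part (a), the approach is a dyadic / sweep-style decomposition of $R_1(\Delta)$ into $O(\log n)$ ``vertical slabs'' $R_1^{(j)}$ according to the distance of the slab from the right boundary (geometrically doubling widths from $\sim 1/n$ up to $O(1)$), together with a horizontal split; in each slab the measure is $\Theta(1/n)$ per unit, so by Lemma~\ref{lem: basic mu}(b) each slab contributes $O(1)$ expected points total, and summing over the $O(\log n)$ slabs gives $O(\ln n)$. Slabs closer to the boundary than $1/n$ contribute $O(1)$ in aggregate because their total measure is $O(1/n)$. Alternatively, and more cleanly, one applies the Sweep Lemma (Lemma~\ref{lem: Sweep}) repeatedly: sweep $R_1(\Delta)$ from its bottom edge upward using horizontal strips, pairing the ``not yet swept'' top part against the swept bottom part; the dominance hypothesis of Lemma~\ref{lem: Sweep} holds because within this thin strip points higher up and to the right dominate points lower and to the left. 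This shows the strip contributes like the classical $\Theta(\ln n)$ bound; since $\ln n$ is the correct answer one must be careful to get the {\em upper} bound $O(\ln n)$ and not lose a polynomial factor, which is exactly why the dyadic grouping is needed rather than a crude union bound over $\Theta(\sqrt n)$ cells.

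For part (b), the lower bound, I would exhibit $\Omega(\log n)$ pairwise disjoint dominant subregions of $R_1(\Delta)$, each of measure $\Omega(1/n)$, arranged along the right boundary in a dyadically-spaced ``staircase'': region $k$ lives at horizontal distance $\sim 2^k/n$ from the boundary and occupies a vertical sub-band so that region $k$ is dominant (nothing in $R_1(\Delta)\setminus$(region $k$) that we care about dominates it) and so that the $\Omega(\log n)$ of them are pairwise disjoint and each dominant {\em relative to} $R_1(\Delta)$. The condition $\mu(R_1(\Delta)) = \Omega(1/\sqrt n)$ guarantees the strip is wide enough that such a staircase of length $\Omega(\log n)$ fits with each cell still of measure $\Omega(1/n)$. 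Then Lemma~\ref{lem: lb}, applied with the conditioning $S_n \cap T(\Delta) = \emptyset$ (which only helps, since it removes points that could dominate things in $R_1(\Delta)$ and changes the conditional density by at most a constant factor), gives $\EXP{|\MAX(S_n \cap R_1(\Delta))| \mid S_n \cap T(\Delta) = \emptyset} = \Omega(\log n)$. Part (c) is immediate: the map $(x,y)\mapsto(y,x)$ is a symmetry of $\bfd$ (the distribution is symmetric in its two coordinates) carrying $R_1(\Delta)$ to $R_2(\Delta)$ and $T(\Delta)$ to itself, so (a) and (b) transfer verbatim.

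The main obstacle I expect is Step~2 in the paper's terminology: pinning down a usable closed form, up to constants, for $f(v)$ and $\mu(\cdot)$ on $R_1(\Delta)$ near the corner region $T(\Delta)$ and near the top/bottom of the strip, because there $P'(v) = (v+\delta B_q)\cap B_\infty$ is clipped by {\em two} sides of the square simultaneously and its area no longer depends on a single coordinate; the shape of $B_q$ (the $q$-ball) enters here. The saving grace is that $R_1(\Delta)$ is defined precisely so as to {\em exclude} the doubly-clipped corner $T(\Delta)$, so on $R_1(\Delta)$ the clipping is by (essentially) one side and the measure does factor up to constants depending only on $p,q$; making this rigorous — showing the excluded corner is exactly the locus of bad behavior and that the boundary between $T(\Delta)$ and $R_1(\Delta)$ contributes only $O(1)$ — is the delicate part of the argument.
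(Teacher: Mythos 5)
Your key structural observation---that within $R_1(\Delta)$ the density $f(v)$ depends only on $v.x$, so that $R_1(\Delta)$ carries a product distribution---is exactly the heart of the paper's proof. For part (a) the paper then simply invokes the classical fact (stated in the introduction) that the expected number of maxima among $X$ points drawn from any $2$-d distribution with independent coordinates is $\Theta(\ln X)$, with $X$ the binomial count $|S_n\cap R_1(\Delta)|$; since $X\le n$ this gives $O(\ln n)$ immediately. You gesture at this reduction but then propose re-proving the $\ln n$ bound from scratch via dyadic slabs; that sketch is both unnecessary and not quite right as written (slabs of geometrically growing width have geometrically growing measure, so each does not contain $O(1)$ expected \emph{points}---you would need $O(1)$ expected \emph{maxima} per slab, which is itself a restatement of the product-distribution fact you already have in hand).

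The genuine gap is part (b). You propose to apply Lemma~\ref{lem: lb} using $\Omega(\log n)$ pairwise-disjoint dominant subregions of $R_1(\Delta)$, but no such regions exist. For $|y|\le 1$ the right boundary of $B_\infty+\delta B_q$ is the vertical segment $x=1+\delta$, so $R_1(\Delta)$ is essentially the rectangle $[\Delta,\,1+\delta]\times[0,\Delta)$. Any nonempty dominant (upward-closed) subregion of a rectangle must contain points arbitrarily near its upper-right corner, so no two nonempty dominant subregions of $R_1(\Delta)$ can be disjoint---regardless of whether dominance is taken in $D$ or, after conditioning on $S_n\cap T(\Delta)=\emptyset$, in $D\setminus T(\Delta)$. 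Lemma~\ref{lem: lb} therefore yields at best $\Omega(1)$ here. (It produces $\Omega(\sqrt n)$ for $B_1,B_2$ precisely because those boundaries have strictly negative slope, which lets the wedges $P(p_i)$ avoid one another; that geometry is absent on the flat $B_\infty$-style side.) The paper's actual lower bound reuses the product structure: conditioning on $S_n\cap T(\Delta)=\emptyset$ rescales the density but preserves the fact that on $R_1(\Delta)$ it depends only on $v.x$, the count $X=|S_n\cap R_1(\Delta)|$ is binomial with success probability at least $\mu(R_1(\Delta))=\Omega(1/\sqrt n)$, hence $X=\Omega(\sqrt n)$ with high probability, and so $\EXP{\ln X}=\Omega(\ln n)$. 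Your part (c), the $x\leftrightarrow y$ reflection symmetry, is correct and is the paper's argument.
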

\begin{proof}
Suppose $u,v \in R_1(\Delta)$ satisfy  $u.x=v.x$.  Note that $v + \delta B_q$ is just 
$u + \delta B_q$  shifted vertically $v,x-u.x$.
Then, because $\Delta \in (2\delta,1/2),$
$$\Area((u+ \delta B_q) \cap B_\infty)  = \Area((v+ \delta B_q) \cap B_\infty).$$ 
(In Fig.~\ref{fig: small delta}(a) this is illustrated by noting that the two white cross-hatched areas are vertically shifted versions of each other.)

Thus, from 
Lemma \ref {lem: measure integral},  $f(u) = f(v).$  Equivalently,  this can be written as,
``if  $v \in R_1(\Delta)$ then  $f(v) = (g(v.x))$ for some function $g(x)$''.

The other observation  needed is that the distribution of $S_n \cap R_1(\Delta)$   is equivalent to the one generated by 
\begin{enumerate}
\item Choosing random variable $X$ from a binomial distribution 
$B\left(n, \mu(R_1(\Delta)\right).$
\item  For $v \in  R_1(\Delta)$, setting
$ \bar f(v) = \frac {f(v)} {\mu(R_1(\Delta))}$ to be  the probability density function for choosing a point $v$ from $\Ballpq \infty q,$ {\em conditioned on knowing} that $v \in R_1(\Delta).$  
\item  Choosing $X$ points (in $R_1(\Delta)$)   from the distribution defined by $\bar f(v).$
\end{enumerate}

In  particular,  point (2) implies  that the distribution on $R_1(\Delta)$ defined by $\bar f(v)$ 
is of the form  $\bar f(v) = \bar g(v.x)$, where $\bar g(v.x) = \frac {g(v.x)} {\mu(R_1(\Delta))}$ is only dependent upon $v.x$ and not $v.y.$
Thus $\bar f(v)$ denotes a distribution in which the $x$ and $y$ coordinates  are independent of each other.

 As stated in the introduction, the number of maxima for $X$ points chosen from such a distribution behaves exactly as if the points are chosen from $\Ball \infty$.  Thus, if $X$ points are chosen using $\bar f(u)$,  the expected number of maxima among them will be $\Theta(\ln X).$  Since $X \le n$
this immediately implies (a).

\medskip

To prove (b)  assume that the only information known is  that $S_n \cap T(\Delta)=\emptyset.$  Conditioned on this event, the new density for
$v \in (B_\infty + \delta B_q)\setminus T(\Delta)$ is
$\hat f(v) = \frac {f(v)} {1 - \mu(T(\Delta))} > f(v).$ In particular, this implies that
$\hat \mu(R_1(\Delta)) = \Omega(1/\sqrt n).$

Using the same argument as in the proof of (a), 
{\em  conditioned on $S_n \cap T(\Delta)=\emptyset$}, the distribution of 
$S_n \cap R_1(\Delta)$,   is equivalent to   the one generated by 
\begin{enumerate}
\item Choosing random variable $X$ from a binomial distribution 
$B\left(n, \hat \mu(R_1(\Delta)\right)).$
\item  For $v \in  R_1(\Delta)$, setting
$ \bar f(v) = \frac {\hat f(v)} {\hat \mu(R_1(\Delta))}$.   $\bar f(v)$ is the conditional  probability density function for choosing a point $v$ from $\Ballpq \infty q$ conditioned on knowing that $v \in R_1(\Delta)$ and $S_n \cap T(\Delta)=\emptyset$.
\item  Choosing $X$ points (in $R_1(\Delta)$)   from the distribution defined by $\bar f(p).$
\end{enumerate}

In  particular,  point (2) implies  that the distribution on $R_1(\Delta)$ defined by $\bar f(v)$ 
is of the form  $\bar f(v)  = \frac {g(v.x)} {(1 - \mu(T(\Delta))) \mu(R_1(\Delta))}$ is only dependent upon $v.x$ and not $v.y.$
Thus $\bar f(v)$ denotes a distribution in which the $x$ and $y$ coordinates  are independent of each other.
Again, as in the proof of (a), this implies that, if  if $X$ points are chosen using $\bar f(u)$,  the expected number of maxima among them will be $\Theta(\ln X).$  
Thus
$$
\EXP{|\MAX(S_n \cap  R_1(\Delta))|  \Bigm| S_n \cap T(\Delta) = \emptyset }  =   \Theta\left(\sum_{i=1}^n (\ln i)  \Pr(X = i)\right).
$$
Finally, recall that $X$ was drawn from a binomial $B(n,p)$ distribution, where 
$$p = \hat \mu( R_1(\Delta))> \mu( R_1(\Delta)) = \Omega(1/\sqrt n).$$
 But this immediately imples that $\Theta\left(\sum_{i=1}^n (\ln i)  \Pr(X = i)\right) = \Theta(\sqrt n)$, completing the proof of (b).

By symmetry, the  proof of (c)  is exactly the same as the proofs of (a) and (b).
 \end{proof}

\begin{proof} of Lemma \ref{lem: limiting}: See Fig.~\ref{fig: small delta}.

The proofs of the cases (a) $p \not = \infty$ and (b) $p= \infty$ are done separately. 

\begin{figure}[t]
\centerline{
\includegraphics[width = 4.8in]{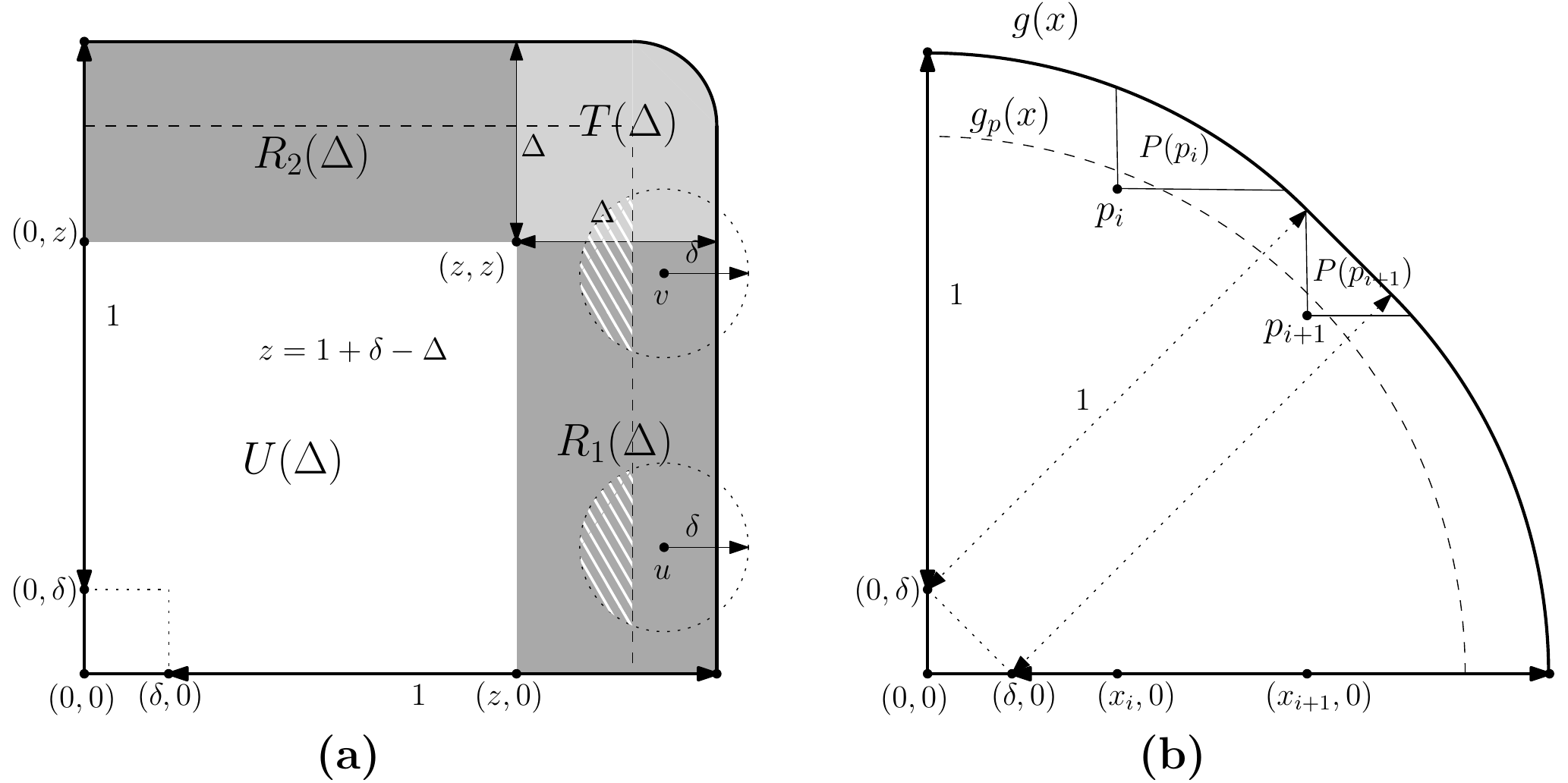}}
\caption{ Illustration of the proofs of Lemmas \ref{lem: limiting} and \ref {lem: side rec}. Note that only the support in the upper-right quadrant is shown. (a) illustrates 
the proof of Lemma \ref {lem: side rec} which is used in  the proof of  Lemma \ref{lem: limiting}, when  $p = \infty$.
 (b) illustrates the upper bound proof in Lemma \ref{lem: limiting} when $p\not = \infty$.
 Note that for clarity, diagrams are not drawn to school.  In (a),  $\delta = O(1/\sqrt n)$ and thus the part of $B_p + \delta B_q$ outside of $B_p$ is extremely thin.}
\label{fig: small delta}
\end{figure}

\medskip

\par\noindent\underline{(a) $p = \infty:$}  
Corollary \ref{cor: Quadrants} states that $\EMN = \EXP{|\MAX(S_n) \cap Q_1|} + O(1)$.  We therefore only need to analyze $ \EXP{|\MAX(S_n) \cap Q_1|}.$

Define  $T(\Delta),$  $R_1(\Delta)$ and $R_2(\Delta)$ as in  Lemma \ref {lem: side rec}.
Also set 
$$U(\Delta) = (B_\infty + \delta B_q) \setminus (T(\Delta) \cup R_1(\Delta) \cup R_2(\Delta)).$$
By decomposition and symmetry
\begin{eqnarray}\EXP{|\MAX(S_n) \cap Q_1|} &=& \EXP{| \MAX(S_n) \cap R_1(\Delta)|} + \EXP{ |\MAX(S_n) \cap R_2(\Delta)|}  \hspace*{.5in}  \ \nonumber\\
&& \quad +  \EXP{| \MAX(S_n) \cap T(\Delta)|} +  \EXP{| \MAX(S_n) \cap U(\Delta)|} \label{eq:Biqup4} \\
&=& 2 \EXP{| \MAX(S_n) \cap R_1(\Delta)|} \hspace*{.5in}  \  \nonumber \\
&&\quad +  \EXP{ |\MAX(S_n) \cap T(\Delta)|} +  \EXP{ \MAX(S_n) \cap U(\Delta)|}. \nonumber
\end{eqnarray}

\par\noindent\underline {\bf Upper Bound:}

Let $\Delta =  c  \sqrt \frac {\ln n} n +\delta$  for constant $c$ to be chosen later.  From Lemma \ref{lem:easy mu}(d)
$\mu(T(\Delta)) = \Theta\left( \frac {\ln n} n\right)$,  where the constants implicit in the $\Theta$ grow to $\infty$ as $c$ increases.

This immediately implies that 
\begin{equation}
\label{eq:Biqup1}
\EXP{| \MAX(S_n) \cap T(\Delta)|}  \le \EXP{|S_n \cap T(\Delta)|} = n \mu(T(\Delta))  = O(\ln n).
\end{equation}
From Lemma \ref{lem: basic mu} (c), for any $c'>0$ we can choose $c$ large enough so that
$$\Pr(S_n \cap T(\Delta)= \emptyset) = \left(1 - \mu(T(\Delta)) \right)^n = O(n^{-c'}).$$
Let $c'=1.$

Next note that  since any point in  $T(\Delta)$  dominates ALL points in $U(\Delta),$  $S_n \cap T(\Delta) \not = \emptyset$ implies  $|MAX(S_n) \cap U(\Delta)| =0.$
 Since $\MAX(S_n) \le n$,
 \begin{equation}
\label{eq:Biqup2}
\EXP{| \MAX(S_n) \cap U(\Delta)|}  \le  n \Pr(S_n \cap T(\Delta)= \emptyset)  = O(1).
\end{equation}

Finally, note that from Lemma \ref{lem: side rec}(a)
 \begin{equation}
\label{eq:Biqup3}|\MAX(S_n) \cap R_1(\Delta)| \le    |\MAX(S_n \cap R_1(\Delta))| = O(\ln n).
\end{equation}

Plugging Eqs.~\ref{eq:Biqup1}, \ref{eq:Biqup2} and \ref{eq:Biqup3} into Eq.~\ref{eq:Biqup4} proves
$$ \EMN = O(\ln n).$$

\par\noindent\underline {\bf Lower Bound:}

Let $\Delta =  \frac 1 {\sqrt n}  +2\delta$.  From Lemma \ref{lem:easy mu}(c), 
$\mu(T(\Delta)) = O(1/n).$   From  Lemma \ref{lem:easy mu}(c) (d), $\mu(R_1(\Delta)) = \Theta(1/ \sqrt n).$

Thus, from Lemma \ref{lem: basic mu}, for large enough $n,$
$$\Pr( S_n  \cap T(\Delta) = \emptyset)  =  \left(1 - \mu(T(\Delta))\right)^n  = \Omega(1).$$
Next observe that the  only points outside of $R_1(\Delta)$   that can dominate a point in $R_1(\Delta)$ are points in $T(\Delta).$
Thus, if $ S_n \cap T(\Delta) = \emptyset$ then
$$ \MAX(S_n) \cap R_1(\Delta) =\MAX(S_n \cap  R_1(\Delta)).$$
This combined with  Lemma \ref{lem: side rec}(b) yields
\begin{eqnarray*}
\EMN &\ge & \EXP{|\MAX(S_n) \cap R_1(\Delta) |}\\
         &\ge & \EXP{|\MAX(S_n) \cap R_1(\Delta) |  \Bigm| S_n \cap T(\Delta) = \emptyset } \cdot \Pr\left(  S_n \cap T(\Delta) = \emptyset  \right)\\
             &= & \EXP{|\MAX(S_n \cap R_1(\Delta)) |  \Bigm| S_n \cap T(\Delta) = \emptyset } \cdot \Pr\left(  S_n \cap T(\Delta) = \emptyset  \right)\\     
         &=& \Omega(\ln n) \cdot \Omega(1) = \Omega(\ln n).
         \end{eqnarray*}

\medskip
\par\noindent\underline{(b) $p \not = \infty:$}\\

\medskip

\par\noindent\underline {\bf Upper Bound:}

Define 
$\bar B_p= B_p((0,0), 1- \delta)$ and $\BarBallp$ as the uniform distribution over $\bar B_p.$

 Let $S'_m= \{v'_1,v'_2,\ldots,v'_n\}$ be a set $m$ points chosen from 
$\BarBallp.$  Since scaling a $L_p$ ball this way does not change the distribution of the number of maxima 
$\EXP{\MAX(S'_m)} = \Theta(\sqrt m).$

Now let $S_n$ be $n$ points chosen from $\Ballpq p q.$
 Then
\begin{eqnarray*}
\EXP{|\MAX(S_n)|}  &=& \EXP{|\MAX(S_n) \cap \bar B_p|}  +  \EXP{|\MAX(S_n) \cap \left(\left(B_p + \delta B_q \right)\setminus \bar B_p\right)|}\\
       		   &\le& \EXP{|\MAX(S_n \cap B_p)|}
   + \EXP{|S_n \cap \left(\left(B_p + \delta B_q \right)\setminus \bar B_p\right)|}\\
			& \le& \EXP{|\MAX(S_n \cap \bar B_p)|}   
                            +n \mu   \left(
					\left(B_p + \delta B_q \right) \setminus \bar B_p
					\right)\\
			& =& \EXP{|\MAX(S_n \cap \bar B_p|)}    + O(\sqrt n),
\end{eqnarray*}
where the last line uses the fact that
\begin{eqnarray*}
\mu   \left(\left(B_p + \delta B_q\right) \setminus B_p \right) &=&
O\left(\Area\left(\left(B_p + \delta B_q\right) \setminus B_p \right)\right)
  + O\left(\Area\left(B_p \setminus \bar B_p \right)\right)\\
&=& O(\delta)  + O(\delta) =  O\left(\frac  1 {\sqrt n}\right).
\end{eqnarray*}

From the triangle inequality, if $v \in \bar B_p$ then $v + \delta B_q \subset B_p.$ Thus
$\{u \in B_p \,:\, v-u \in \delta B_q\} = B_q(v,\delta)$ and
$$f(v) = \frac 1 {a_p a_q} \frac  {\Area(\{u \in B_p \,:\, v-u \in \delta B_q\})} {\delta^2}
= \frac 1 {a_p a_q} \frac  {\Area(B_q(v,\delta))} {\delta^2}
= \frac 1 {a_p}.$$

Now note that $S_n \cap \bar B_p$   has an equivalent distribution to 
\begin{enumerate}
\item Choosing random variable $X$, from a binomial distribution 
$B\left(n, \mu(\bar B_p)\right).$
\item  For $v \in \bar B_p$, setting
$\bar f(v) = \frac {f(v)} {\mu(\bar B_p)}$.   $\bar f(v)$ is the conditional  probability density function for choosing a point $v$ from $\Ballpq p q$ conditioned on knowing that $v \in B_p.$  
\item  Choosing $X$ points (in $\bar B_p$)   from the distribution defined by $\bar f(p).$
\end{enumerate}
But  since $f(v)$ is constant for $v \in \bar B_p$, $\bar f(v)$ is constant as well.  This means that $S_n \cap \bar B_p$  has the same distribution as $X$ points chosen uniformly from the  ball $\bar B_p.$
Conditioning on $X$ this implies
$$\EXP{|MAX(S_n \cap \bar B_p)| \,\bigm|\,  X} = \Theta\left(\sqrt X\right).$$
But $X \le n$ so 
$$ \EXP{|\MAX(S_n \cap B_p)|}  = O(\sqrt n)$$
and thus
$$\EXP{|\MAX(S_n)|}  \le  \EXP{|\MAX(S_n \cap B_p|)}    + O(\sqrt n) = O(\sqrt n).$$

\par\noindent\underline{\bf Lower Bound}  See Fig.~\ref{fig: small delta}(a).

Let $g_p(x) = \left(1  - x^{q}\right)^{1/q}$ be the equation of the upper boundary of $B_p$ for  $0 \le x \le 1.$
The Theorem of the Mean states that if $x' < x,$ then 
$$g_p(x) - g_p(x') = g'(z) (x-x'),$$
for some $z \in [x',x]$.  In particular,  since $g'_p(z)$ is bounded for $z \in [1/3,2/3]$  $$ \mbox{if } \  x,x' \in [1/3,2/3]   \quad \Rightarrow  \quad g(x) = g(x') + \Theta (x-x').$$

Now let $g(x)$ be the curve of the upper boundary of $B_p + \delta B_q$. By construction $g(x)$ is monotonically decreasing for $x \in [0,1]$ and is concave down. Because of the construction, we also know that $g(x)$ is within a distance $\Theta(1/\sqrt n)$ of $g_p(x)$ in the following sense:  there exists a $c>0$ such that
$$ g_p(x) \le   g(x) \le g_p(x) + c/{\sqrt n}
\quad\mbox{and}\quad
g^{-1}_p(y) \le   g^{-1}(y) \le g^{-1}_p(y) + c/{\sqrt n}.$$

Now set $\beta = (3+c) /\sqrt n$,  $t =\lfloor 1/3 \beta\rfloor$ and, for  $i=0,1,\ldots,t-1$  
\begin{eqnarray*}
x_i&=& \frac 1 3 + i \beta,\\
y_i &=& g_p(x_i)-\frac 1 {\sqrt n},\\
p_i &=&  (x_i, y_{i+1}).
\end{eqnarray*}
Note that
 by construction, $p_i \in B_p$ and, furthermore  $g_p(x_i) -y_{i+1}  = \Theta(1/\sqrt n)$  and $g^{-1}(y_i+1) - x_i = \Theta(1/\sqrt n).$
Thus  $\Area(P(p_i)) \cap B_p = \Omega(1/n)$ and, thus, by Lemma \ref {lem:easy mu} (c),
$\mu(P(p_i)) = \Omega(1/n).$  Since, again by construction,  
if $i \not = j$,  $P(p_i) \cap P(p_j)=\emptyset$  we can apply Lemma \ref{lem: lb}  to derive
$$\EXP{|MAX(S_n)|} = \Omega(t) = \Omega(\sqrt n),$$
and are done. \end{proof}

\section{Analysis of $\Ballpq 1 1$}
\label{sec: B1B1}
This section derives cell (ii)(c) in Theorem \ref{thm: main}, that is,  if $n$  points are chosen from $\bfd = \Ballpq 1 1$ and  $\frac 1  {\sqrt n} \le \delta \le 1$ then $\EMN = \Theta \left(  \frac {n^{1/3}}  { \delta^{1/3}}  \right)$.
Applying Lemma  \ref {lem: scaling} for   $1  \le \delta \le   \sqrt n $     yields cell (ii)(d), i.e., that   $\EMN =  \Theta \left(   \delta^{1/3}  n^{1/3} \right).$

Corollary \ref {cor: Quadrants} states that 
$$\EMN =   \EXP {|Q_1 \cap \MAX(S_n)|}  + O(1),$$
so our analysis will be restricted to the upper-right  quadrant $Q_1$.
Our approach will be to
\begin{enumerate}
\item State a convenient expression for $\mu(u)$ (proof delayed until later).
\item Derive a lower bound using Lemma \ref {lem: lb} by defining an appropriate  pairwise disjoint collection of dominant regions 
\item Derive an upper bound by partitioning $D$ into appropriate regions and applying the sweep Lemma.
\end{enumerate}

\begin{figure}[t]
\begin{center}
\includegraphics[width=2.2in]{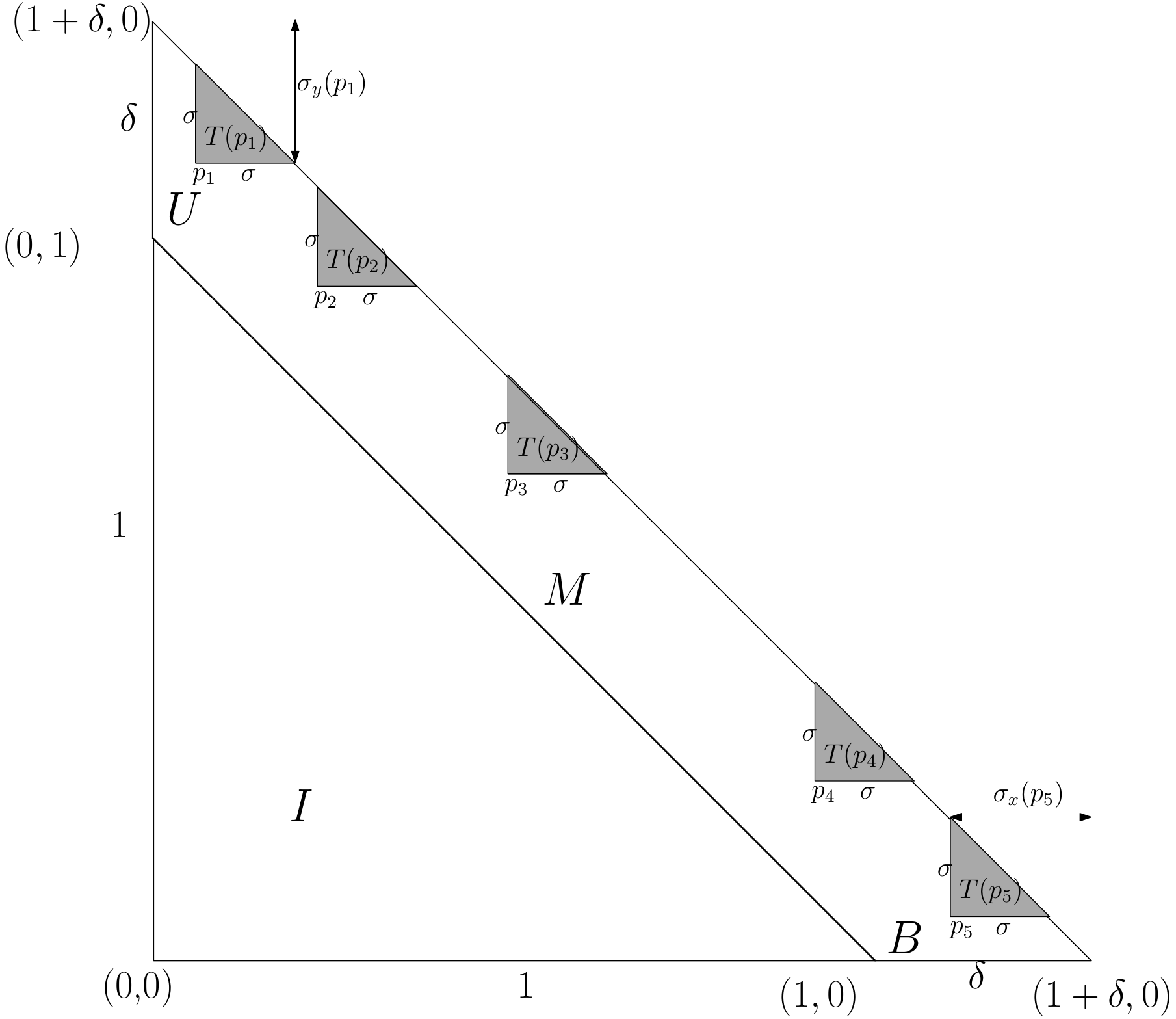}  \quad \includegraphics[width=2.2in]{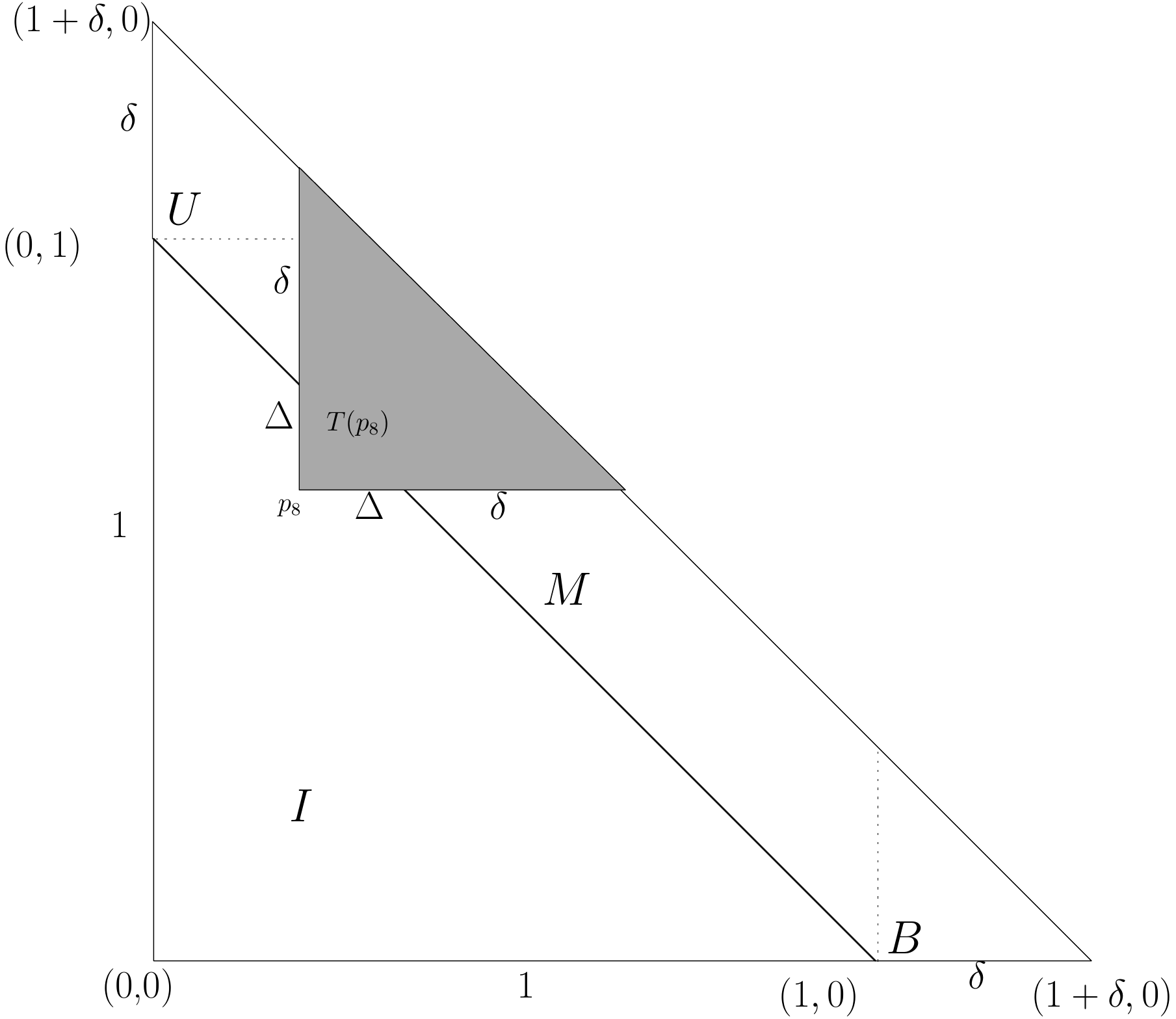}
\end{center}
\caption{Illustration of Definition  \ref{def:Regionsa}  and  Lemma \ref{lem:mainB1B1}.} %
\lab{fig:B1_Regionsd}
\rule{5.5in}{0.5pt}
\end{figure}

\begin{Definition}
\lab{def:Regionsa}
Let $D_1= (B_1 + \delta B_1) \cap Q_1.$  Set 
\begin{align*}
I   =&  D_1 \cap B_1,    & U  =&  D_1  \cap \left\{ u \in \Re^2 \,:\,  u.y \ge 1\right\}, \\
M =&  D_1 \setminus (I \cup U \cup B), & B  =&  D_1  \cap \left\{ u \in \Re^2 \,:\,  u.x \ge 1\right\}.\\
\end{align*}
\end{Definition}

In the following Lemma, we deal with  $D = B_1 + \delta B_1$ and $D_1 = D \cap Q_1$ as given in Definition \ref{def:Regionsa}, and for all $p \in D_1$ we consider $P(p)$ introduced in Definition \ref{def: PDdef} (see Fig.~\ref{fig:B1_Regionsd})
Notice $P(p)$ is 
the isosceles right triangle with base $p$ and hypotenuse flush with the upper-right border of $B_1+ \delta B_1.$
For notational convenience, to emphasise that $P(p)$ is a triangle,  set $T(p) = P(p).$   In particular notice that  $T(p)$ is a dominant region.
Further set 
$$
\begin{array}{lcll}
\sigma(p)  &=& 1 + \delta - p.x -p.y \quad & \mbox{side length of triangle $T(p)$},\\
\Delta_x(p) &=&  1 +\delta -p.x &\mbox{$p$'s $x$-distance to rightmost point in $B_1 + \delta B_1$},\\
\Delta_y(p) &=&  1 +\delta -p.y & \mbox{$p$'s $y$-distance to highest point in $B_1 + \delta B_1$}. \\
\end{array}
$$

\begin{Lemma}
\label{lem:mainB1B1}
Let $D$, $D_1$ and for every  $p \in D_1$, $T(p)$  defined as above. For  $p$ chosen from $\bfd = \Ballpq 1 1$, the formula for calculating  $\mu(T(p))$ differs by location of $p,$ as below:  
\begin{enumerate}
\item If $p \in U$   $\Rightarrow  \mu(T(p)) = \Theta\left( \Delta_y(p)\ \frac {  \sigma^3(p)} {\delta^2}\right) = O\left( \frac {\sigma^3(p)} {\delta}\right)$.
\item If $p \in B$    $\Rightarrow  \mu(T(p)) =\Theta\left( \Delta_x(p)\  \frac {\sigma^3(p)} {\delta^2}\right) = O\left( \frac {\sigma^3(p)} {\delta}\right).$
\item If $p \in M$   $ \Rightarrow  \mu(T(p)) =\Theta\left( \frac {\sigma^3(p)} {\delta}\right)$.
\item If $p \in I$ $ \Rightarrow  \mu(T(p))= \Theta(\sigma^2(p))$.
\end{enumerate}
\end{Lemma}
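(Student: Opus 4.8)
The plan is to compute $\mu(T(p))$ using the integral formula from Lemma \ref{lem: measure integral}, namely $\mu(T(p)) = \frac{1}{a_1 a_1}\int_{u\in B_1} \frac{\Area((u+\delta B_1)\cap T(p))}{\delta^2}\,du$, which requires understanding, for each $u\in B_1$, how much of the small diamond $u+\delta B_1$ lands inside the isosceles right triangle $T(p)$. Since $T(p)$ is the region of $B_1+\delta B_1$ dominating $p$ and is bounded above-right by the line $x+y = 1+\delta$ (the upper-right edge of the Minkowski sum), the analysis naturally splits according to how far $u$ is from the line $x+y=1$ (the upper-right edge of $B_1$), because that controls whether $u+\delta B_1$ is entirely inside $B_1+\delta B_1$, straddles the boundary, etc. I would parametrize by $s = 1 - u.x - u.y \ge 0$, the $L_1$-"depth" of $u$ below the edge of $B_1$; for $s \ge \delta$ the small diamond is wholly inside $B_1$, and $\Area((u+\delta B_1)\cap T(p))$ depends only on how $u$ sits relative to the two legs of the triangle $T(p)$ (the vertical line $x=p.x$ and the horizontal line $y=p.y$), while for $0\le s \le \delta$ only a fraction of the diamond is "counted."

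The key geometric fact driving all four cases is that the contribution to the integral is concentrated where $u$ is close to the triangle $T(p)$: roughly, $u$ must lie within $O(\delta)$ of $T(p)$ to contribute, so the effective domain of integration is a $\Theta(\delta)$-neighborhood of $T(p)$ intersected with $B_1$. Case 4 ($p\in I$, i.e. $p\in B_1$ and far enough from the boundary that $T(p)\cap B_1$ is a triangle of side $\Theta(\sigma(p))$ sitting well inside $B_1$): here for most relevant $u$ the diamond $u+\delta B_1$ is fully inside $B_1$, and $\frac{1}{\delta^2}\int \Area((u+\delta B_1)\cap T(p))\,du$ is just $\Area(T(p)\cap B_1) = \Theta(\sigma^2(p))$ by Fubini (the density is essentially constant $1/a_1$ there, consistent with Lemma \ref{lem:easy mu}(d) for $\delta$ not too large — though here one must be a bit careful since $\delta$ can be up to $1$, so I'd restrict to the regime where the triangle genuinely sits inside, which is what $I$ encodes). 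Case 3 ($p\in M$): now $T(p)$ pokes outside $B_1$, so the triangle is "clipped"; the integral becomes $\frac{1}{\delta^2}$ times the integral over a $\Theta(\delta)$-strip of the partial diamond areas, which works out to $\Theta(\sigma^2(p)\cdot \delta / \delta) $... more precisely I expect the computation to give $\Theta(\sigma^3(p)/\delta)$ because along the clipped hypotenuse the "fractional coverage" of the diamonds decreases linearly, contributing a factor $\sigma(p)/\delta$ relative to the naive $\sigma^2(p)$ — wait, that would be $\sigma^3/\delta$ only if there's also a $\sigma$ from one direction and the clipping gives the $\sigma/\delta$; I would set up the double integral carefully in rotated ($45^\circ$) coordinates aligned with the diamonds to make the partial-area function piecewise-polynomial and then integrate.

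Cases 1 and 2 ($p\in U$ or $p\in B$) are the boundary-of-boundary cases: $T(p)$ is a sliver near the top corner (resp. right corner) of $B_1+\delta B_1$, and additionally $p$ itself is outside $B_1$, so $P'(p) = (p+\delta B_1)\cap B_1$ is only a fraction of the full small diamond; the extra factor $\Delta_y(p)$ (resp. $\Delta_x(p)$) over the $O(\sigma^3/\delta)$ bound measures this clipping from the far side. The symmetry $x\leftrightarrow y$ reduces case 2 to case 1. I would present case 4 first (cleanest), then case 3, then case 1, invoking symmetry for case 2. The main obstacle I anticipate is the bookkeeping in case 3 (and to a lesser extent 1): correctly identifying the effective integration region as a $\Theta(\delta)$-neighborhood of the clipped triangle, handling the three sub-regions (diamond fully inside $B_1$ and fully inside $T(p)$; diamond straddling a leg of $T(p)$; diamond straddling the edge of $B_1$), and showing the cross-terms and corner contributions are lower-order. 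Choosing $45^\circ$-rotated coordinates so that $\delta B_1$ becomes an axis-aligned square of side $\sqrt2\,\delta$ should make every partial-area function an explicit piecewise-quadratic, reducing each case to an elementary (if tedious) integral whose dominant term I can read off; I would relegate those computations to the later proof section as the paper's structure indicates, and here just record the setup and the resulting orders of magnitude.
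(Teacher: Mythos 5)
Your overall route is legitimate: instead of the paper's tactic of bounding $f$ above and below on $T(p)$ and a sub-triangle $T'(p)$ (via the preimage areas $\Area(U(p))$, $\Area(U'(p))$), you propose to evaluate $\mu(T(p))$ directly from the $\mu$-form of Lemma~\ref{lem: measure integral}, $\mu(T(p)) = \frac{1}{a_1^2\delta^2}\int_{u\in B_1}\Area\bigl((u+\delta B_1)\cap T(p)\bigr)\,du$, parametrizing by the $L_1$-depth $s=1-u.x-u.y$ and working in $45^\circ$-rotated coordinates. The two routes are Fubini-duals of each other; yours trades the paper's clean $\max$/$\min$ density bounds for an explicit (if tedious) partial-area integration. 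Your heuristics for Cases~1--3 track the paper's answers, with Case~2 reduced to Case~1 by the $x\leftrightarrow y$ symmetry as the paper does.

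However, Case~4 contains a genuine gap. You assert that for $p\in I$ "most relevant $u$" have $u+\delta B_1\subseteq B_1$, so the integral collapses to $\Area(T(p)\cap B_1)$, and you identify this with $\Theta(\sigma^2(p))$; you also say $I$ "encodes the regime where the triangle genuinely sits inside." Neither of these holds. The region $I=D_1\cap B_1$ only requires the apex $p$ to lie in $B_1$; $T(p)$ always has side length $\sigma(p)=\Delta+\delta$, where $\Delta=1-p.x-p.y$ is $p$'s $L_1$-distance to the boundary of $B_1$, so $T(p)$ extends a distance $\delta$ past $B_1$ and $T(p)\cap B_1$ is a triangle of leg $\Delta$, giving $\Area(T(p)\cap B_1)=\Theta(\Delta^2)$. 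When $\Delta\ll\delta$ (which happens for all $p\in I$ near the boundary, and is allowed), $\Theta(\Delta^2)$ is strictly smaller than the claimed $\Theta(\sigma^2)=\Theta(\Delta^2+\delta^2)$: in the extreme $\Delta=0$ your estimate gives $0$ but the true answer is $\Theta(\delta^2)$. The missing contribution comes from the part of $T(p)$ outside $B_1$, where the density is no longer $\Theta(1)$ but still contributes $\Theta(\delta^2)$ in total; the paper captures this by decomposing $T(p)$ into the inner triangle $T$ (area $\Theta(\Delta^2)$, density $\Theta(1)$ by Lemma~\ref{lem:easy mu}(d)) plus the two boundary triangles $T(p_1)$, $T(p_2)$ (each of measure $\Theta(\delta^2)$ by the already-proved Case~3) plus a negligible $T'$, then summing. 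Your direct-integration plan can be repaired to reach the same conclusion, but you must explicitly keep the $\Theta(\delta)$-strip of $T(p)$ outside $B_1$ in the integration domain and show it contributes $\Theta(\delta^2)$; simply declaring the diamonds to lie "fully inside" drops the dominant term precisely in the regime $\Delta=O(\delta)$ that the lemma must cover.
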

In  Fig.~\ref{fig:B1_Regionsd},   $p_1$ is an example for case~1, $p_5$  for case~2,
$p_2, p_3$ and $p_4$  for case~3, and $p_8$ for case~4.

Before proving the Lemma, we  use it to derive the upper and lower asymptotic bounds on  $\EMN$. 
\begin{Lemma}
\label{lem: B1B1LB}
Let $S_n$ be $n$ points chosen from the distribution $\bfd = \Ballpq 1 1$ with $\frac 1 {\sqrt n} \le \delta \le 1$. Then
$$\EMN = \Omega \left(  \frac {n^{1/3}}  { \delta^{1/3}}  \right).$$ 
\end{Lemma}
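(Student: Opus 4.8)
The plan is to apply the Lower Bound Lemma (Lemma~\ref{lem: lb}) by exhibiting a family of $m = \Theta\left(n^{1/3}/\delta^{1/3}\right)$ pairwise disjoint dominant regions, each of measure $\Omega(1/n)$. The natural candidates are the triangles $T(p_i) = P(p_i)$ for a carefully spaced sequence of points $p_i$ lying near the upper-right boundary of $D_1 = (B_1 + \delta B_1) \cap Q_1$. The key design decision is \emph{where} to place the $p_i$ and \emph{how large} to make the side lengths $\sigma(p_i)$.

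First I would restrict attention, via Corollary~\ref{cor: Quadrants}, to $Q_1$, and then to the ``middle'' region $M$ from Definition~\ref{def:Regionsa}, i.e., the part of the boundary flush with the diagonal face of $B_1 + \delta B_1$ but away from the two end caps $U$ and $B$. By Lemma~\ref{lem:mainB1B1}(3), for $p \in M$ we have $\mu(T(p)) = \Theta\left(\sigma^3(p)/\delta\right)$. Setting this equal to $\Theta(1/n)$ forces the side length $\sigma := \sigma(p_i)$ to satisfy $\sigma^3/\delta = \Theta(1/n)$, i.e., $\sigma = \Theta\left((\delta/n)^{1/3}\right)$. I should check that with $1/\sqrt n \le \delta \le 1$ this $\sigma$ is indeed small (it is $O(n^{-1/6})$, hence $\ll 1$), so the triangles genuinely fit in a constant-width strip near the diagonal portion of the boundary and avoid the end regions — that is, the $p_i$ stay in $M$.

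Next I would place the $p_i$ so their triangles $T(p_i)$ are pairwise disjoint. Since each $T(p_i)$ is an isosceles right triangle of leg length $\sigma$ with hypotenuse on the (diagonally-oriented) boundary line $x + y = 1 + \delta$, and since domination regions $P(p_i)$ are nested or disjoint depending only on the relative position of the $p_i$, I can take the $p_i$ evenly spaced along a line parallel to and just inside the boundary, with consecutive hypotenuse-projections spaced by $\sigma$ (or $2\sigma$ to be safe). The number of such triangles that fit along the $\Theta(1)$-length diagonal face of the support is $m = \Theta(1/\sigma) = \Theta\left((n/\delta)^{1/3}\right) = \Theta\left(n^{1/3}/\delta^{1/3}\right)$. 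Each is dominant (noted after Definition~\ref{def:Regionsa}, $T(p) = P(p)$ is a dominant region) and has measure $\Theta(1/n) = \Omega(1/n)$. Lemma~\ref{lem: lb} then gives $\EMN = \Omega(m) = \Omega\left(n^{1/3}/\delta^{1/3}\right)$.

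The main obstacle is the bookkeeping of step~3: verifying that the chosen $p_i$ all actually lie in $M$ (not in $U$, $B$, or $I$) and that the triangles are genuinely pairwise disjoint rather than nested. This requires being careful about the geometry of $B_1 + \delta B_1$ near the corners where the boundary transitions from the diagonal face to the horizontal/vertical caps, and about the exact spacing constant so that $T(p_i) \cap T(p_j) = \emptyset$ for $i \ne j$. A secondary point to confirm is that $\sigma = \Theta((\delta/n)^{1/3})$ together with $\delta \ge 1/\sqrt n$ keeps $\sigma$ bounded below by a quantity that still leaves room for $m = \Omega(n^{1/3}/\delta^{1/3}) \ge \Omega(n^{1/4})$ triangles, so the bound is nontrivial; but this is immediate from the constraint $\delta \le 1$. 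None of this is deep, but it is exactly the kind of region-specification that the paper flags as where complications arise.
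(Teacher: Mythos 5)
There is a genuine gap in your construction, and it is precisely the one the paper's proof is built around. You claim that since $\sigma = \Theta((\delta/n)^{1/3})$ is small, ``the triangles genuinely fit in a constant-width strip near the diagonal portion of the boundary ... the $p_i$ stay in $M$,'' and that $m = \Theta(1/\sigma)$ of them fit along a ``$\Theta(1)$-length diagonal face.'' But $M$, as defined in Definition~\ref{def:Regionsa}, requires both $p.x < 1$ and $p.y < 1$; on the line $x + y = 1 + \delta - \sigma$ this confines $p.x$ to the interval $(\delta - \sigma,\, 1)$, of length $1 - \delta + \sigma$. When $\delta$ is close to $1$ this length is $\Theta(\sigma)$, so only $O(1)$ pairwise-disjoint triangles of width $\sigma$ fit inside $M$ --- vastly short of the required $m = \Theta(n^{1/3}/\delta^{1/3}) = \Theta(n^{1/3})$. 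The smallness of $\sigma$ does not help here; it is the width of $M$ along the boundary, not the width of each triangle, that collapses.

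The paper avoids this by \emph{not} restricting the $p_i$ to $M$. It places them along the middle third of the full diagonal face (which has length $\Theta(1+\delta) = \Theta(1)$ for all $\delta \le 1$) and allows the $p_i$ to fall in $B$ or $U$ as well as $M$. For $p_i \in B$ (resp.\ $U$) in that middle third, $\Delta_x(p_i)$ (resp.\ $\Delta_y(p_i)$) is bounded between $\frac{1}{3}$ and $\delta \le 1$, so $\Delta_x(p_i)/\delta = \Theta(1)$, and Case 2 (resp.\ Case 1) of Lemma~\ref{lem:mainB1B1} still yields $\mu(T(p_i)) = \Theta(\Delta_x(p_i)\,\sigma^3/\delta^2) = \Theta(\sigma^3/\delta) = \Theta(1/n)$. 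Your proof as written only invokes Case 3 and thus breaks down exactly in the regime $\delta \to 1$, which the lemma is required to cover. The fix is to replace the restriction ``$p_i \in M$'' with ``$p_i$ in the middle third of the face,'' and add the short case analysis for $p_i \in B$ and $p_i \in U$; the rest of your argument (disjointness, application of Lemma~\ref{lem: lb}) then goes through.
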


\begin{proof}
\begin{figure}[t]
\begin{center}
 \includegraphics[width=2.3in]{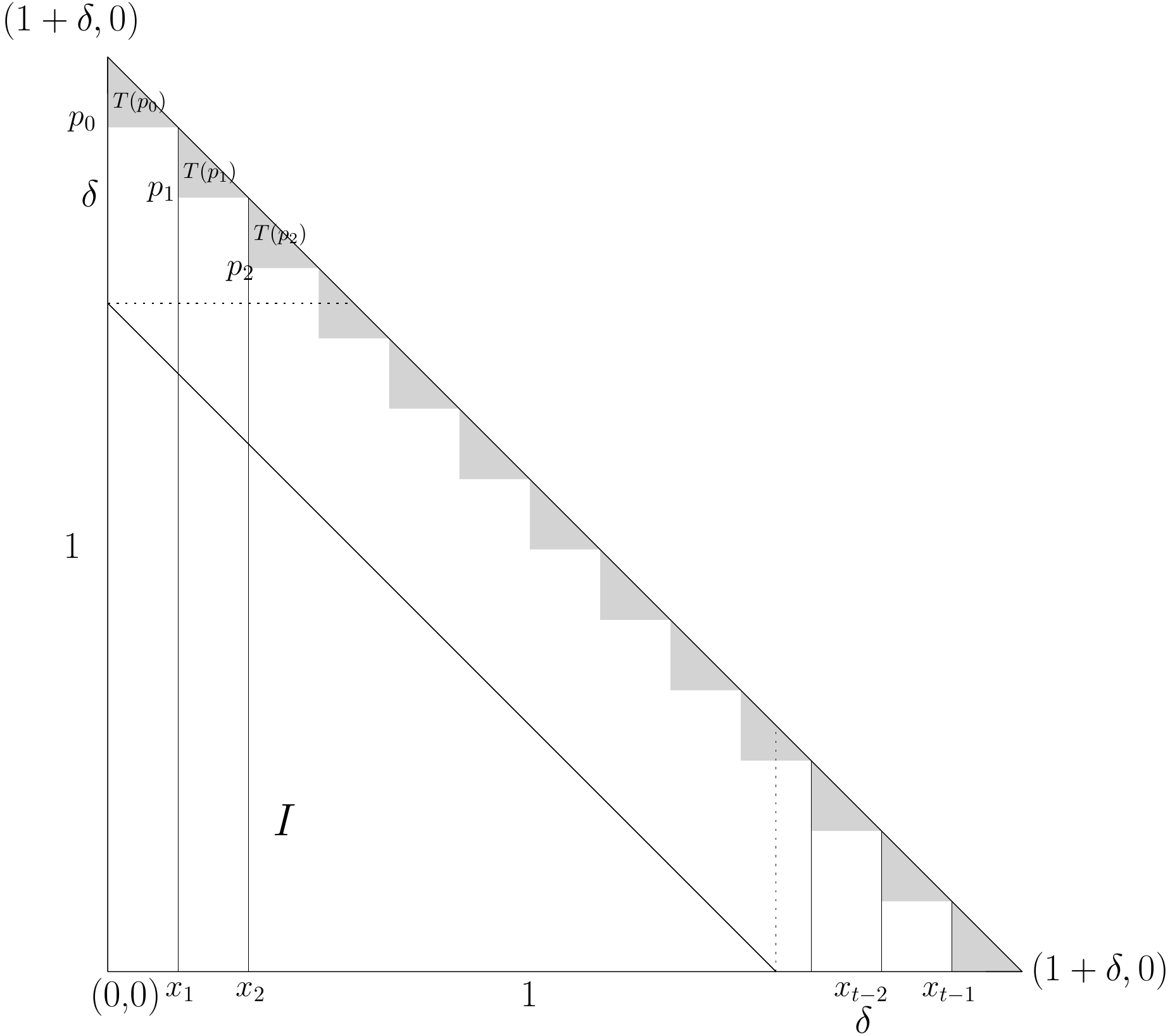}
\end{center}
\caption{The lower bound construction for $\Ballpq 1 1$.  Note that the $T(p_i)$ (in gray) are pairwise disjoint dominant regions.  Their horizontal and vertical sides all have length $\sigma$ which is chosen to force $\mu(T(p_i)) = \Theta(1/n).$
} %
\lab{fig:B1_Regionsa}
\rule{5.5in}{0.5pt}
\end{figure}
The proof creates 
 $m=\Omega \left(  \frac {n^{1/3}}  { \delta^{1/3}}  \right)$ points $p_i$  such that, $\forall  i\not =j,$  $T(p_i) \cap T(p_{j}) = \emptyset$  and, 
$\forall i,\,  \mu(T(p_i) = \Theta(\frac 1 n)$.  See Figure \ref{fig:B1_Regionsa}. Since the $T(p_i)$ are all dominant regions, 
Lemma \ref {lem: lb}
 then immediately imples that $\EXP{M_n} = \Omega(m) = \Omega \left(  \frac {n^{1/3}}  { \delta^{1/3}}  \right)$.

Start by fixing  $\sigma= \left(  \frac \delta n \right)^{1/3}$, setting $m' = \lfloor \frac {1 + \delta} \sigma \rfloor $ and,  for $ i = 0,\ldots m'$  defining
$$ x_i = i \sigma,
\quad \quad
y_i = 1+\delta- x_i - \sigma,
\quad\quad  p_i = (x_i,y_i).
$$

Note that $m' \sigma \le 1 + \delta$ so, if  $i \le m'$,  $p_i \in D_1$.   
Note that  
$$\sigma(p_i) = 1 + \delta - x_i - y_ i = \sigma.$$
Thus, by the construction,  
if $i \not =j$ then $T(p_i) \cap T(p_{j}) = \emptyset$.

Next note that, by definition,  
\begin{equation}
\label{eq:B1B11na}
\frac  {\sigma^3} \delta = \frac 1 n.
\end{equation}

Finally note that because  $ \frac 1 {\sqrt n} \le \delta$,  $\sigma^3 \le \frac \delta n \le \delta^3$ so  $\sigma \le \delta.$ 
Thus, for  $i \le m'-1$,    $p_i(x) \in B \cup U \cup M$. 
Now let $i$ satisfy   $ \frac {m'} 3 \le  i \le  \frac {2 m'} 3$. Then 
 \begin{equation}\label{eq:B1B1LowerRestrict}
  \frac 1 3  (1 + \delta) \le x_i \le  \frac 2 3 (1 + \delta),
  \end{equation}
and one of the following must be true
\begin{itemize}
\item $p(x_i) \in M$.
\item $p(x_i) \in B$.    Then  $\Delta_x(p_i) = 1 + \delta - x_i  \ge \frac 1 3 (1 + \delta) \ge \frac 1 3.$\\ Also, by definition,
$\Delta_x(p_i) \le \delta \le 1.$\\
Together these  imply   $\frac {\Delta_x(p_i)} \delta = \Theta(1)$.
\item $p(x_i) \in U$.  A symmetric analysis 
 shows  $\frac {\Delta_y(p_i)} \delta = \Theta(1)$.
\end{itemize}
Applying Cases 1-3 in Lemma \ref {lem:mainB1B1} then shows that for all $i$  satisfying Eq.~\ref{eq:B1B1LowerRestrict}, $\mu(T(p_i)) = \Omega(1/n).$

There are $m = m/3$ values of $i$ satsifying Eq.~\ref{eq:B1B1LowerRestrict}, and thus, from 
Lemma~\ref{lem: lb}.

$$\EMN = \Omega\left( \frac {m'} 3 \right) = \Omega\left( \left(\frac{n}{\delta}\right)^{1/3}\right).$$

 {\em Note:  The motivation  for the restriction (\ref{eq:B1B1LowerRestrict}) is that, if $\delta$ is close to $1$, then almost no $p_i$ would be in $M$.  It is therefore  necessary to analyze the measure of $T(p_i)$ for $p_i \in B,U.$  But,  in those cases, if  $i$ is close to $0$ or $m'$  then $ \mu(T(p_i)$ could be $o(1/n)$.  The proof therefore needs to bound away from those corners.}
\end{proof}

The upper bound is more technical.
\begin{Lemma}
\label{lem: B1B1UB}
Let $S_n$ be $n$ points chosen from the distribution $\bfd = \Ballpq 1 1$ with $\frac 1 {\sqrt n} \le \delta \le 1$. Then
$$\EMN = O \left(  \frac {n^{1/3}}  { \delta^{1/3}}  \right).$$
\end{Lemma}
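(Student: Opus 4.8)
The plan is to mirror the five-step template from the $\Ball 1$ example, now using the Sweep Lemma (Lemma~\ref{lem: Sweep}) together with the measure formulas in Lemma~\ref{lem:mainB1B1}. By Corollary~\ref{cor: Quadrants} it suffices to bound $\EXP{|Q_1 \cap \MAX(S_n)|}$, so we work entirely inside $D_1 = (B_1 + \delta B_1) \cap Q_1$, partitioned into the regions $I, M, B, U$ of Definition~\ref{def:Regionsa}. The aim is to show that the maxima in each region contribute $O\!\left( n^{1/3}/\delta^{1/3}\right)$. For $I$: since $I \subseteq B_1$ and (as $\delta \le 1$) $\mu$ restricted to $B_1$ is $\Theta(\Area)$ by Lemma~\ref{lem:easy mu}(d), the maxima in $I$ behave like the maxima of a uniform distribution on (a piece of) $B_1$, hence contribute $O(\sqrt n)$; one then checks $\sqrt n = O(n^{1/3}/\delta^{1/3})$ exactly when $\delta \le n^{1/4}$, which holds since $\delta \le 1$. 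Actually more care is needed: the relevant bound must be $O(n^{1/3}/\delta^{1/3})$, not merely $O(\sqrt n)$, so the contribution of $I$ should be shown directly to be $O(n^{1/3})$ by the standard $\Theta(\sqrt{\cdot})$ argument applied to the subset actually near the boundary, or by noting $I$'s maxima that matter lie within distance $O(\sigma)$ of $\partial B_1$ where $\sigma = (\delta/n)^{1/3}$.

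The heart of the argument is Step 4, the upper bound on maxima in $M \cup B \cup U$ via sweeping. The plan is to slice $D_1$ into $\Theta(m)$ vertical (or diagonal) strips of width $\sigma = (\delta/n)^{1/3}$, indexed $i = 0,\dots,m'$ with $m' = \lfloor (1+\delta)/\sigma\rfloor$, exactly as in Lemma~\ref{lem: B1B1LB}. For each strip we want to invoke the Sweep Lemma with $B$ equal to the portion of the strip not containing its ``corner'' point $p_i$ and $A$ a neighboring dominant wedge, so that finding one point in $A$ kills all maxima deep in $B$. The measure bookkeeping then reduces to: (i) showing $\mu$ of the relevant dominant region $A_i(t)$ is $\Theta(\mu)$ of a translate, so the monotonicity and asymptotic-dominance hypotheses of Lemma~\ref{lem: Mirror}/\ref{lem: Sweep} hold, using that $\mu(T(p))$ depends on $p$ only through $\sigma(p)$ and the distances $\Delta_x,\Delta_y$ as quantified in Lemma~\ref{lem:mainB1B1}; and (ii) showing each strip contributes $O(1)$ maxima beyond the swept part, plus the $O(1)$ ``residual'' near each strip boundary coming from a region of measure $\Theta(\sigma^3/\delta) = \Theta(1/n)$, which contributes $O(1)$ expected points by Lemma~\ref{lem: basic mu}(b). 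Summing over $\Theta(m) = \Theta((n/\delta)^{1/3})$ strips gives the bound.

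I expect the main obstacle to be the $U$ and $B$ regions, where Lemma~\ref{lem:mainB1B1} gives $\mu(T(p)) = \Theta(\Delta_y(p)\,\sigma^3(p)/\delta^2)$ rather than the cleaner $\Theta(\sigma^3(p)/\delta)$ valid in $M$. Near the two extreme corners of $B_1 + \delta B_1$ (top and right), $\Delta_y$ or $\Delta_x$ degenerates to $o(\delta)$, so a strip of width $\sigma$ there may have dominant-region measure $o(1/n)$, meaning the Sweep Lemma's hypothesis 3 ($\mu(B(t)) = O(\mu(A(t)))$) could fail or the strip could harbor more than $O(1)$ maxima. The fix, analogous to the ``Note'' at the end of Lemma~\ref{lem: B1B1LB}, is to handle a small window of $O(\delta/\sigma) = O((n/\delta^2)^{1/3} \cdot \delta^{?})$—more precisely, the strips within the corner zones where $\Delta \lesssim \delta$—by a crude first-moment bound: the entire corner cap of $(B_1+\delta B_1)\setminus B_1$ near the top has area $O(\delta)$ hence measure $O(\delta)$, so it holds $O(n\delta)$ points in expectation, which is far too many, so instead one must sweep those strips with a horizontal sweep exploiting that all such corner points are dominated once a point appears in the interior wedge, giving again $O(1)$ per strip and $O(m)$ total. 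Making the geometry of these corner sweeps precise—choosing $A(t), B(t)$ so that condition $u \in A(t), v \in B\setminus B(t) \Rightarrow u$ dominates $v$ holds throughout the thin curved cap—is the technically delicate part; everything else is routine substitution of $\sigma = (\delta/n)^{1/3}$ into the formulas of Lemma~\ref{lem:mainB1B1}.
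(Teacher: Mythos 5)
Your overall strategy matches the paper's: restrict to $Q_1$, slice into $m=\Theta(1/\sigma)$ vertical strips with $\sigma=(\delta/n)^{1/3}$, sweep each strip, and collect $O(m)+O(n\cdot m\cdot\Theta(1/n))=O(m)$ total. The key checkpoint is also correctly identified: one must verify the asymptotic-dominance hypothesis of the Sweep Lemma by showing $\mu(T(p_i(y_i-t)))=\Theta\bigl(\mu(T(p_{i+1}(y_i-t)))\bigr)$ for all $t$. But three confusions remain.

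First, your treatment of $I$ doesn't converge. You correctly notice that $\sqrt n\neq O(n^{1/3}/\delta^{1/3})$ when $\delta\gg n^{-1/2}$, so a crude $O(\sqrt n)$ bound on $I$ is useless. Your proposed fix (``standard $\Theta(\sqrt{\cdot})$ argument on the subset near $\partial B_1$'') is wrong on its face: whether a point of $I$ is maximal is governed by its distance to the \emph{outer} boundary of $D_1=(B_1+\delta B_1)\cap Q_1$, not its distance to $\partial B_1$; indeed $\mu(T(p))=\Theta(\sigma^2(p))$ with $\sigma(p)\ge\delta$ for $p\in I$, so $I$-maxima are concentrated near the segment $x+y=1$. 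The paper does not treat $I$ separately at all: each strip $\Strip_i$ extends all the way down to $y=0$, so the sweep already kills the $I$-portion once a point lands in the neighboring wedge, and the case analysis explicitly includes the $\delta-2\sigma\le t$ regime where $p_i(y_i-t)\in I$ and $\mu=\Theta((2\sigma+t)^2)$.

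Second, your corner worry is misdirected. You fear that $\mu(T(p_i))=o(1/n)$ near the $U$-corner would break hypothesis~3 of the Sweep Lemma, and propose a separate ``horizontal sweep.'' But hypothesis~3 is a \emph{ratio} bound; small $\mu$ is harmless. In the $U$-corner both $\mu(T(p_i(y_i-t)))=\Theta\bigl((i+2)\sigma(2\sigma+t)^3/\delta^2\bigr)$ and $\mu(T(p_{i+1}(y_i-t)))=\Theta\bigl((i+3)\sigma(\sigma+t)^3/\delta^2\bigr)$ carry the same $\Delta_y\approx(i+2)\sigma$ factor, so the ratio is $\Theta\bigl(((2\sigma+t)/(\sigma+t))^3\bigr)=\Theta(1)$; the vertical sweep works unchanged. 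No horizontal sweep is needed. Third, you miss the simplification that by $x\leftrightarrow y$ symmetry of $\Ballpq{1}{1}$ one has $\EXP{|\MAX(S_n)\cap B|}=\EXP{|\MAX(S_n)\cap U|}$, so it suffices to sweep $U\cup M\cup I$, eliminating the $B$-corner entirely. The substantive content you did not carry out---the four-way case comparison of $\mu(T(\cdot))$ as $p_i(y_i-t)$ transits through $U$, $M$ and $I$ as $t$ increases---is exactly what the paper's Eq.~(\ref{eq:B1B1UPmain}) and its surrounding tables provide.
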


\begin{figure}[t]
\begin{center}
\includegraphics[width=2.3in]{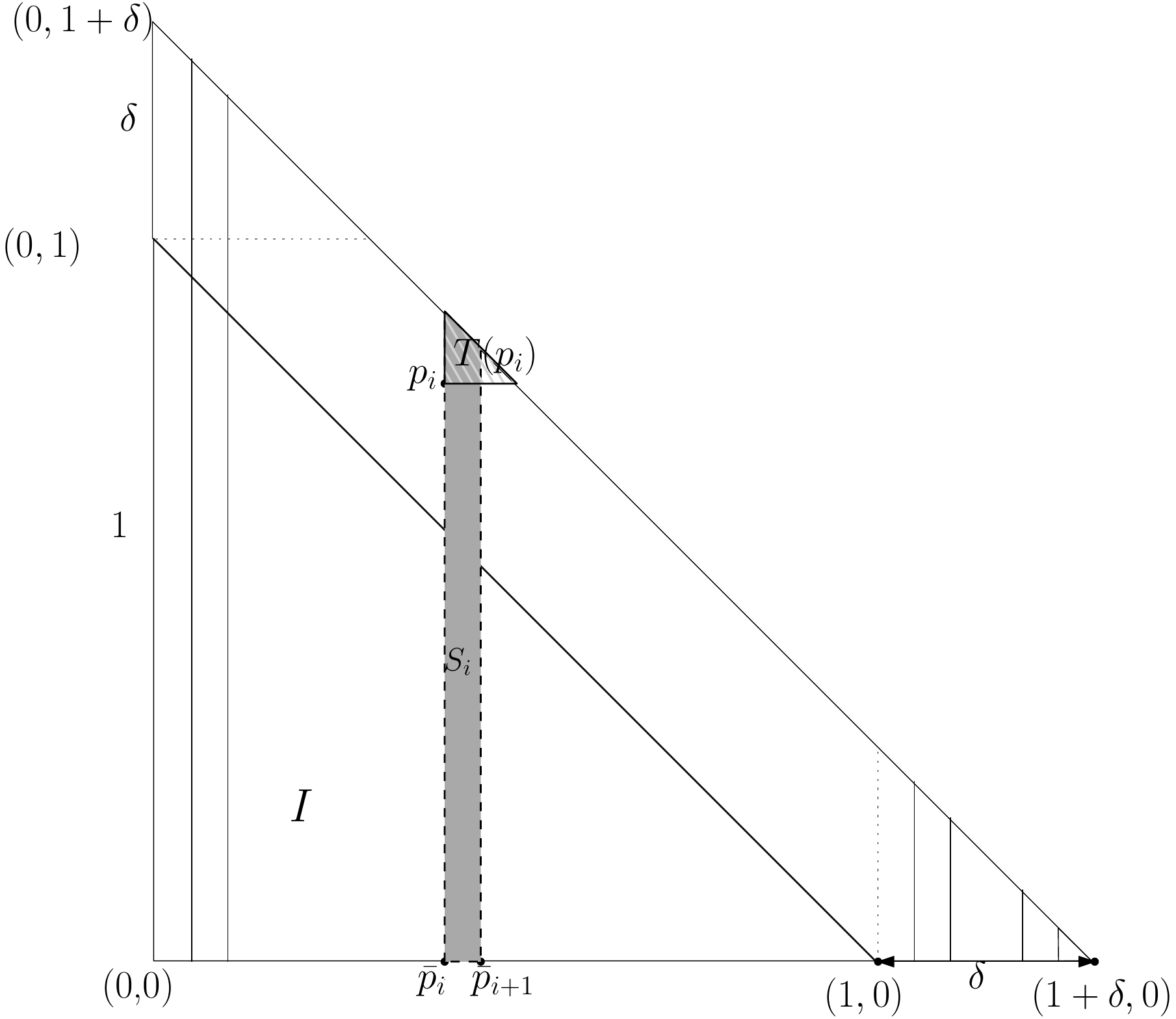}  \quad \includegraphics[width=2.3in]{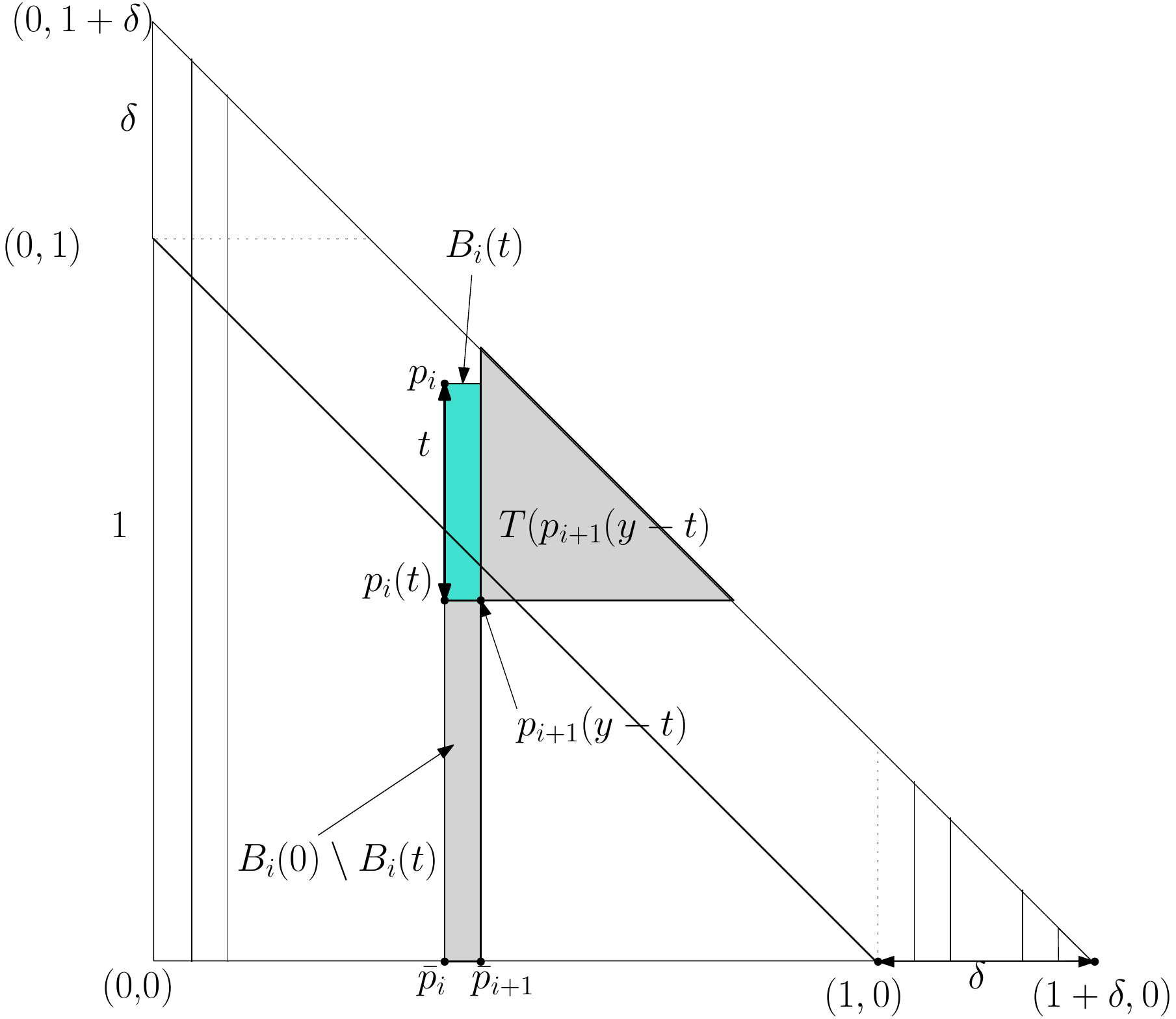}
\end{center}
\caption{Illustration of the proof of Lemma \ref {lem: B1B1UB}.  $S_i$ denotes $\Strip_i.$ } %
\lab{fig:B1_Regionsc}
\rule{5.5in}{0.5pt}
\end{figure}

\begin{proof}
The proof will assume that $\delta > \frac 8  {\sqrt n}.$  The case 
follows directly from Lemma \ref{lem: limiting}.

As noted,  $\EMN = \EXP{|\MAX(S_n) \cap D_1|} +O(1).$ Observe that, by the $x,y$ symmetry of $\bfd$,
$$ \EXP{|\MAX(S_n) \cap B|} =  \EXP{|\MAX(S_n) \cap U|}.$$
Thus
$$\EMN = \EXP{|\MAX(S_n) \cap (U \cup M \cup I)|}.$$

The   upper bound  approach will be,  for an  appropriate value $\sigma$,   to partition $U \cup M \cup I$ 
into $m = \Theta (1/\sigma)$ vertical {\em strips} of width $\sigma$ and show that each strip contains only $O(1)$ expected maxima.

\medskip

Set $\sigma= \left(  \frac \delta n \right)^{1/3}$.    Note that this implies
\begin{equation*}
\label{eq:B1B1UBsigbd}
\sigma =  \left(  \frac \delta n \right)^{1/3}   \le \left(  \delta  \cdot \frac  {\delta^2} {64} \right)^{1/3} = \frac  \delta 4.
\end{equation*}
Now set\footnote{Without loss of generality, we  assume that $1/\sigma $ is an integer.  If not we replace $\sigma$  by the  largest  $\sigma' \le \sigma$ such that $1/\sigma'$ is an integer.  This perturbation will not change $\sigma$ by more than a factor of $2.$}
 $m = \frac 1 \sigma$  and,  for $ i = 0,\ldots m$  define
$$ x_i = i \sigma,
\quad \quad
y_i = 1+\delta- x_i - 2\sigma,
$$
and
$$p_i(y) = (x_i,y),
\quad 
p_i=(p_i(y_i)) = (x_i,y_i),   
\quad  
\bar p_i = (x_i,0).
$$

Next, for $0 \le i < m$,  define
$$\Strip_i =  (U \cup M \cup I)  \cap \{u \in \Re^2 \,:\,  x_i \le u.x < x_{i+1}\}.$$
By definition, 
$$ U \cup M \cup I  = \bigcup_{i=0}^{m-1} \Strip_i.$$

Now define
\begin{eqnarray*}
B_i(t) &= & \Strip_i \cap \{u \,:\, y_i-t \le u.y \le y_i\},\\
A_i(t) &=& T(p_{i+1}(y_i-t)).
\end{eqnarray*}

To proceed note that since $i \le m-1$ and $\sigma \le \delta/4,$
$$y_i = 1+\delta- x_i - 2\sigma = 1 + \delta - (i+2)\sigma \ge 0.$$
Thus 
\begin{eqnarray*}
\Strip_i 
 & \subseteq& T(p_i)  \ \cup \  \left(\Strip_i \cap \{u \,:\, 0 \le u.y \le y_i\}\right)\\
&=& T(p_i) \cup B_i(y_i),
\end{eqnarray*}
where $B_i(y_i) \not=\emptyset.$
From 
Lemma  \ref {lem: basic mu} (b)
$$\EXP{|MAX(S_n) \cap T(p_i)|}  \le \EXP{|S_n| \cap T(p_i)|} = O(n \mu(T(p_i)),$$ 
so
$$\EXP{|MAX(S_n) \cap \Strip_i |}  \le  \EXP{|MAX(S_n) \cap B_i(y_i)|} + n \mu(T(p_i).$$

Finally, observe that all points in $B_i(y_i)\setminus B_i(t)$ are dominated by all points in $A_i(t)$.  
Thus Lemma \ref {lem: Sweep}  implies  that if, $\forall t,$  $\mu(B_i(t)) = O(\mu(A_i(t)))$ then
$$ \EXP{MAX(S_n) \cap B_i(y_i)} = O(1).$$
But
$$B_i(t) \subset T(p_i(y-t)) \quad  \Rightarrow  \quad  \mu(B_i(t)) \le  \mu(T(p_i(y-t))).$$
So,  proving 
\begin{equation} \label{eq:B1B1UPmain}
\forall t,\,   \mu(T(p_i(y-t)))  = \Theta(\mu(T(p_{i+1}(y_i-t)))),
\end{equation}
it will suffice to prove 
 $\EXP{MAX(S_n) \cap B(y_i)} = O(1)$.

Summarizing, the above discussion states that, if  Eq.~\ref{eq:B1B1UPmain} is true, then 
 \begin{eqnarray}
\label{B1B1UPfull}
\EXP{M_n} &=& \sum_{0 \le i \le m-1}\EXP{|MAX(S_n) \cap \Strip_i |} \nonumber\\
                 &= & O\left(\sum_{i=0}^{m-1}  \bigl(1 + n \mu(T(p_i))\bigr)\right)\\
                &=& O\left(m   + \sum_{i=0}^{m-1} n \mu(T(p_i))\right).\nonumber
\end{eqnarray}

The analysis requires the following simple observations that are true for $\forall i < m $ and $\forall t \le y_i:$

\begin{align*}
\sigma(p_i)  &= 1 + \delta - x_i - y_ i = 2 \sigma,   &	\sigma(p_{i+1}(y_i)) &= \sigma(p_i) - \sigma   = \sigma,\\
\sigma(p_i(y_i-t)) &= \sigma(p_i) + t  = 2 \sigma+ t, & \quad 	\sigma(p_{i+1}(y_i-t)) &= \sigma(p_{i+1}) + t  = \sigma+ t,	
\end{align*}
\begin{align*}
\Delta_x(p_i) &= \Delta_x(p_i(y-t)),				& 	\Delta_y(p_i) &= \Delta_y(p_{i+1}(y_i)),\\
\Delta_x(p_{i+1})  &= \Delta_x(p_i) + \sigma,	&	\Delta_y(p_{i+1}(y_i-t)) &= \Delta_y(p_{i+1}) + t,
\end{align*}
and
\begin{equation}
\label{eq:B1B1UB Delta Bounds}
\begin{array} {rcccl}
(m-i) \sigma &\le & \Delta_x(p_i) &\le  & (m-i+1)\sigma,\\
(i+2) \sigma  & \le & \Delta_y(p_i) & \le & (i+3) \sigma.\\
\end{array}
\end{equation}

Since $\sigma \le \delta/4$  the observations above yield
$$\sigma(p_i) = 2 \sigma  \le  \frac \delta 2, $$  so  
$p_i \not \in I$ and thus $p_i \in M \cup U$. 
Cases 1 or  3 in  Lemma \ref{lem:mainB1B1} then apply and 
$$\mu(T(p_i)) = O\left(\frac {\sigma^3} \delta \right) = O \left(\frac 1 n\right).$$
It remains to prove Eq.~\ref{eq:B1B1UPmain}.  This will require splitting into multiple cases depending upon the value of $x_i$.

\begin{itemize}
\item $\delta  \le x_i \le  1:$  If  $t \le \delta - 2 \sigma$ then  $p_i(y_i-t) \in M$;\\ If  $\delta - 2 \sigma \le t$ then $p_i(y_i-t) \in I$.  This yields
\begin{equation*}
\label{B1B1 UP pi b}
\mu(T(p_i(y_i-t))
=
\left\{
\begin{array}{ll}
\Theta\left (  \frac {(2 \sigma + t)^3} {\delta} \right) &  \mbox{if  $0 \le t \le \delta - 2 \sigma,$}\\
\Theta\left ((2 \sigma + t)^2\right)  &  \mbox{if   $ \delta - 2 \sigma  \le t$.}
\end{array}
\right.
\end{equation*}

\item  $ x_i \le  \delta:$  
If  $t \le \delta - (i+2) \sigma$ then  $p_i(y_i-t) \in U$;\\  If  $ \delta - (i+2) \sigma < t \le \delta - 2 \sigma$ then $p_i(y_i-t) \in M$;\\  
If  $ \delta - 2 \sigma  \le t$ then $p_i(y_i-t) \in I.$

  Using (\ref{eq:B1B1UB Delta Bounds})  and working through the details gives
\begin{equation*}
\label{B1B1 UP pi c}
\mu(T(p_i(y_i-t)))
=
\left\{
\begin{array}{ll}
\Theta\left (  (i+2) \sigma \frac {(2 \sigma + t)^3} {\delta^2} \right) &  \mbox{if  $0 \le t \le \delta - \sigma (i+2)$},\\
\Theta\left(\frac {(2 \sigma + t)^3} {\delta} \right) &  \mbox{if  $\delta - \sigma (i+2) \le t \le  \delta - 2 \sigma$},\\
\Theta\left ((2 \sigma + t)^2\right)  &  \mbox{if  $\delta - 2 \sigma  \le t$.}
\end{array}
\right.
\end{equation*}
\end{itemize}

We now work through similar calculations for  $p_{i+1}(y_i-t)$ to prove
\begin{itemize}
\item $\delta - \sigma  \le x_{i+1} \le  1:$
\begin{equation*}
\label{B1B1 UP pip1 b}
\mu(T(p_{i+1}(y_i-t))
=
\left\{
\begin{array}{ll}
\Theta\left (  \frac {( \sigma + t)^3} {\delta} \right) &  \mbox{if  $0 \le t \le \delta -  \sigma$},\\
\Theta\left (( \sigma + t)^2\right)  &  \mbox{if   $ \delta -  \sigma  \le t $}.
\end{array}
\right.
\end{equation*}
\item  $ x_{i+1}< \delta - \sigma$:

\begin{equation*}
\label{B1B1 UP pip1 c}
\mu(T(p_{i+1}(y_i-t)))
=
\left\{
\begin{array}{ll}
\Theta\left (  (i+3) \sigma \frac {(\sigma + t)^3} {\delta^2} \right) &  \mbox{if  $0 \le t \le \delta - \sigma (i+2)$},\\
\Theta\left(\frac {( \sigma + t)^3}{\delta} \right) &  \mbox{if  $\delta - \sigma (i+2) \le t \le  \delta - \sigma$},\\
\Theta\left ((\sigma + t)^2\right)  &   \mbox{if   $ \delta -  \sigma  \le t.$}
\end{array}
\right.
\end{equation*}
\end{itemize}
We now  prove Eq.~\ref{eq:B1B1UPmain}.

Note that,  $i+3 = \Theta(i+2)$  and $\sigma + t  = \Theta(2 \sigma +  t)$ with the constants not dependent upon $i$ or $t$.  We therefore consider the items in those pairs as interchangeable with each other.

\begin{itemize}
\item $ \delta - 2 \sigma  \le x_i < x_{i+1} \le 1:$ We divide $t$ into three ranges and 
 write the values of the functions in each of those ranges.
$$
\begin{array}{|l|c|c|c|}
\hline
t \in  & [0,\delta - 2 \sigma] &  [\delta - 2 \sigma, \delta - \sigma] & [\delta - \sigma,y_i]\\[0.1in]\hline
\mu(T(p_i(y_i-t))  =  &  \Theta\left (  \frac {(2 \sigma + t)^3} {\delta} \right)    & \Theta\left ((2 \sigma + t)^2\right)    &   \Theta\left ((2 \sigma + t)^2\right) \\[0.1in] \hline
\mu(T(p_{i+1}(y_i-t))  =& \Theta\left (  \frac {( \sigma + t)^3} {\delta} \right)   &  \Theta\left (  \frac {( \sigma + t)^3} {\delta} \right)  & \Theta\left (( \sigma + t)^2\right)  \\\hline
\end{array}
$$
Note that in the first and third ranges the values in the two rows are within  a constant factor of each other.
For the middle range recall that $\sigma \le \delta/4$ so, if  $\delta - 2 \sigma \le t < \delta - \sigma $ then $\sigma+t = \Theta(\delta)$ 
so in the middle range as well the two rows are within  a constant factor of each other, proving   Eq.~\ref{eq:B1B1UPmain}  for the complete range of $t$.

\item $x_{i+1} < \delta -2\sigma:$ We divide $t$ into four  ranges and 
 write the values of the functions in each of those ranges.
$$
\hspace*{-.4in}
\small
\begin{array}{|l|c|c|c|c|}
\hline
t \in  &  [0, \delta - \sigma (i+2)]&  [\delta -  \sigma(i+2), \delta - 2\sigma] & [\delta - 2\delta-\sigma]& [\delta-\sigma,y_i]\\[0.1in]\hline
\mu(T(p_i(y_i-t))  =  &     \Theta\left (  (i+2) \sigma \frac {(2 \sigma + t)^3} {\delta^2} \right) &  \Theta\left(\frac {(2 \sigma + t)^3} {\delta} \right) &  \Theta\left (2(\sigma + t)^2\right)  & \Theta\left (2(\sigma + t)^2\right)\\[0.1in] \hline
\mu(T(p_{i+1}(y_i-t))  =&\Theta\left (  (i+3) \sigma \frac {(\sigma + t)^3} {\delta^2} \right)   & \Theta\left(\frac {( \sigma + t)^3} {\delta} \right)  & \Theta\left(\frac {( \sigma + t)^3} {\delta} \right)& \Theta\left ((\sigma + t)^2\right) \\[0.1in]\hline
\end{array}
$$
Note that in the first, second and fourth  ranges the values in the two rows are within  a constant factor of each other.
For the 3rd range  range,  recall again that $\sigma \le \delta/4$ so, if  $\delta - 2 \sigma \le t < \delta - \sigma $ then $\sigma+t = \Theta(\delta)$ 
so in the middle range as well the two rows are within  a constant factor of each other, proving   Eq.~\ref{eq:B1B1UPmain}  for the complete range of $t$.

\end{itemize}

We have just  proven    that Eq.~\ref{eq:B1B1UPmain} is valid for all cases. We can thus apply 
Eq.~\ref{B1B1UPfull} to prove
 \begin{eqnarray*}
\EXP{M_n}         &=& O\left(  m+ \sum_{i=0}^{m-1} n \mu(T(p_i))\right)\\
			&=&  O\left(  m   + m  n \Theta(1/n) \right) = O (m ) =  O\left(\left(\frac n \delta\right)^{1/3}\right).
\end{eqnarray*}
and the proof is complete.

\end{proof}

We now return to prove Lemma  \ref{lem:mainB1B1}.

\begin{proof} of Lemma  \ref{lem:mainB1B1}.

Recall that the goal of the proof is to analyze $\mu(T(p))$ for $p \in D_1.$ For notational simplicity, we set $\sigma = \sigma(p).$
Recall
\begin{equation}\label{eq:B1B1int}
\mu(T(p)) =\int_{u\in T(p)}  f(u) du ,
\end{equation}
where $f(u)$ is the density as defined by Definition \ref{def: den}.

Next set 
$$T'(p) =T(p) \cap  \left\{u \,:\,  u.x + u.y \le 1 + \delta - \sigma/2\right\}.$$
Note that $T'(p)$ is the upper right isosceles triangle with lower-left corner $p$ and  sidelengths $\sigma/2.$  Since $T'(p) \subseteq T(p)$,  $\mu(T'(p)) \le \mu(T(p)).$

Finally, let
\begin{eqnarray*}
U(p)  &=&  \{u \in B_1 \,:\, (u + \delta B_1) \cap T(p) \not=0\},\\
U'(p)  &=&  \{u \in B_1 \,:\,  T'(p) \subseteq u + \delta B_1 \}.
\end{eqnarray*}
Notice $U(p)$ is the preimage of $T(p)$ in $\Ball 1$ while $U'(p)$ is the the intersection of the preimages of all of the points in  $T'(p).$

From Eq.~\ref{eq: fdef}   in Lemma  \ref{lem: measure integral},
\begin{equation}
\forall u \in T(p),\quad  f(u)  = O\left( \frac{\Area(U(p)) } {\delta^2}  \right) .
\end{equation}
Plugging into Eq.~\ref{eq:B1B1int} then gives
\begin{equation}\label{B1B1UB}
\mu(T(p)) \le   \Area(T(p)) \max_{u \in T(p)} f(u)
= O\left(    \frac {\Area(U(p)) \sigma^2} {\delta^2}  \right).
\end{equation}
In the other direction we   similarly find that
$$\forall u \in T'(p),\quad  f(u) 
 = \Omega\left( \frac{\Area(U'(p)) } {\delta^2}  \right).$$
Again  plugging into Eq.~\ref{eq:B1B1int} 
gives 
\begin{equation}
\label{B1B1LB}\mu(T(p)) \ge \mu(T'(p)) \ge   \Area(T'(p)) \min_{u \in T'(p)} f(u)
= \Omega\left(    \frac {\Area(U'(p)) \sigma^2} {\delta^2}  \right).
\end{equation}

We now work through the four cases in the Lemma, deriving the appropriate values of $\Area(U(p))$ and $\Area(U'(p))$ and plugging them into
Eqs.~\ref{B1B1UB} and \ref{B1B1LB} to complete the proof.

\par\noindent\underline{\bf Case 3: $ p \in M$.}
\\

\begin{figure}[t]
\begin{center}
\includegraphics[width=5cm]{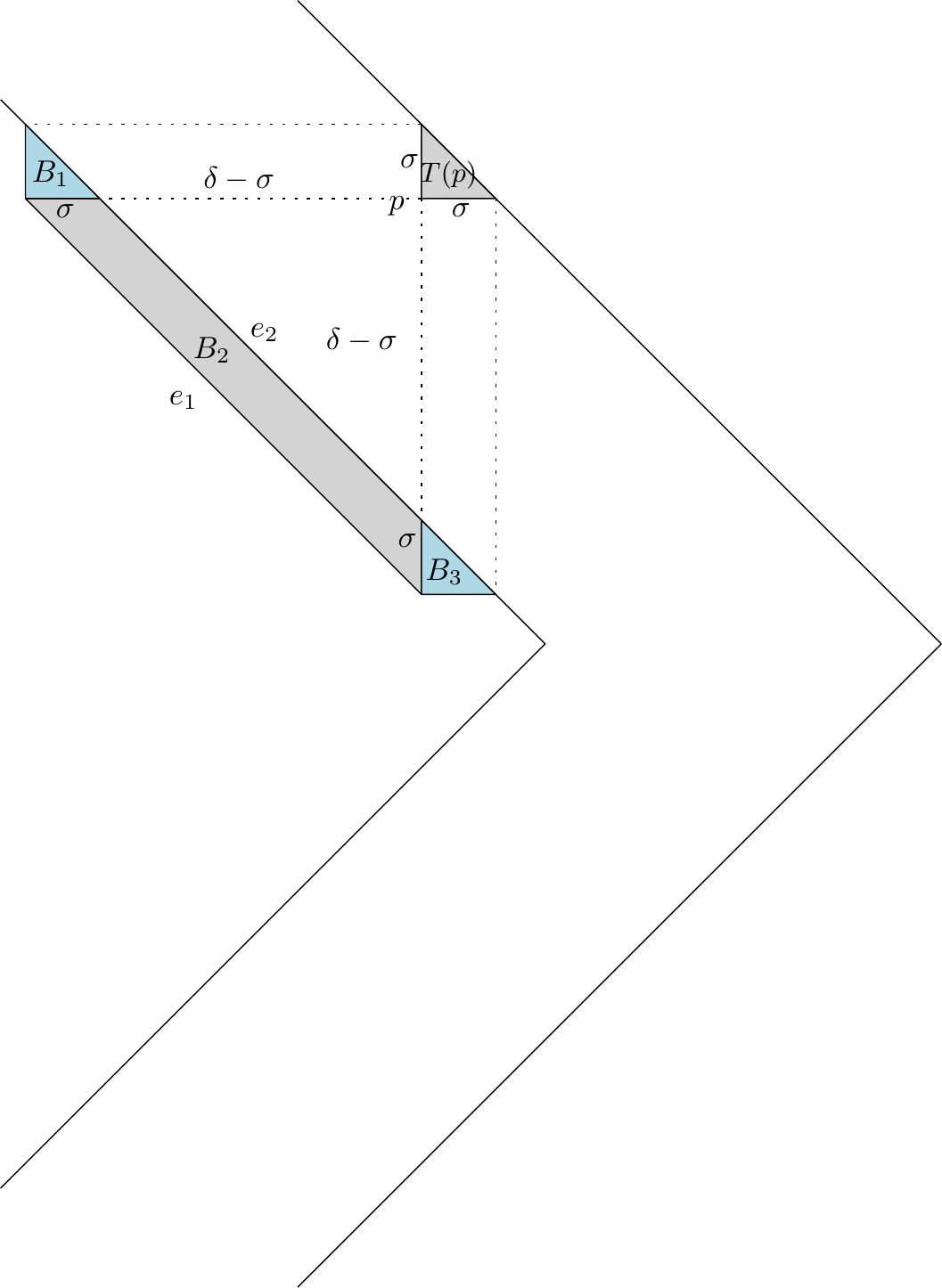}
\hspace*{.3in}
\includegraphics[width=5cm]{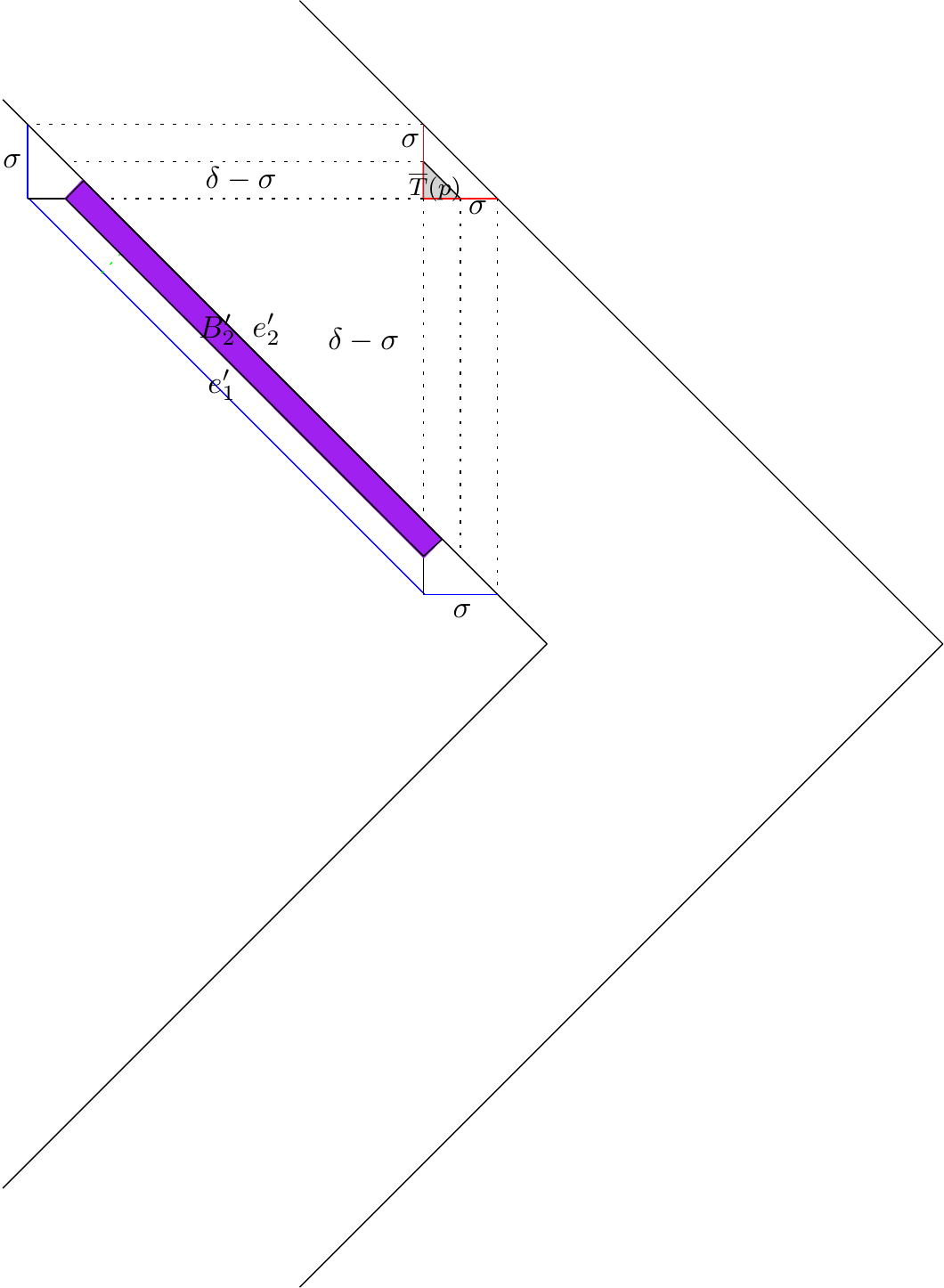}
\end{center}
\caption{Case 3 in which $T(p) \subseteq M$. } %
\lab{fig:Mid_case}
\rule{5.5in}{0.5pt}
\end{figure}

First assume that $T(p) \subset M,$ e.g.,  point $p_3$ in Figure \ref{fig:B1_Regionsd}.

 Consider the left diagram  in Figure \ref{fig:Mid_case}.  
Simple geometric arguments show that $U(p) = B_1 \cup B_2 \cup B_3$. Then  $\sigma \le \delta $ immediately yields
$$\Area(U(p)) = \Theta(\sigma \delta).$$

Consider the  right diagram  in Figure \ref{fig:Mid_case}.   This illustrates  the fact (again derivable from direct geometric arguments) that 
 $U'(p) = B'_2$ and 
$$\Area(U'(p)) = \Theta(\sigma \delta).$$
Plugging these formulas into Eqs.~\ref{B1B1UB} and \ref{B1B1LB} we get  
$$\mu(T(p)) = \Theta\left(    \frac {\sigma^3} \delta  \right).$$

  \begin{figure}[t]
\begin{center}
\includegraphics[width=5cm]{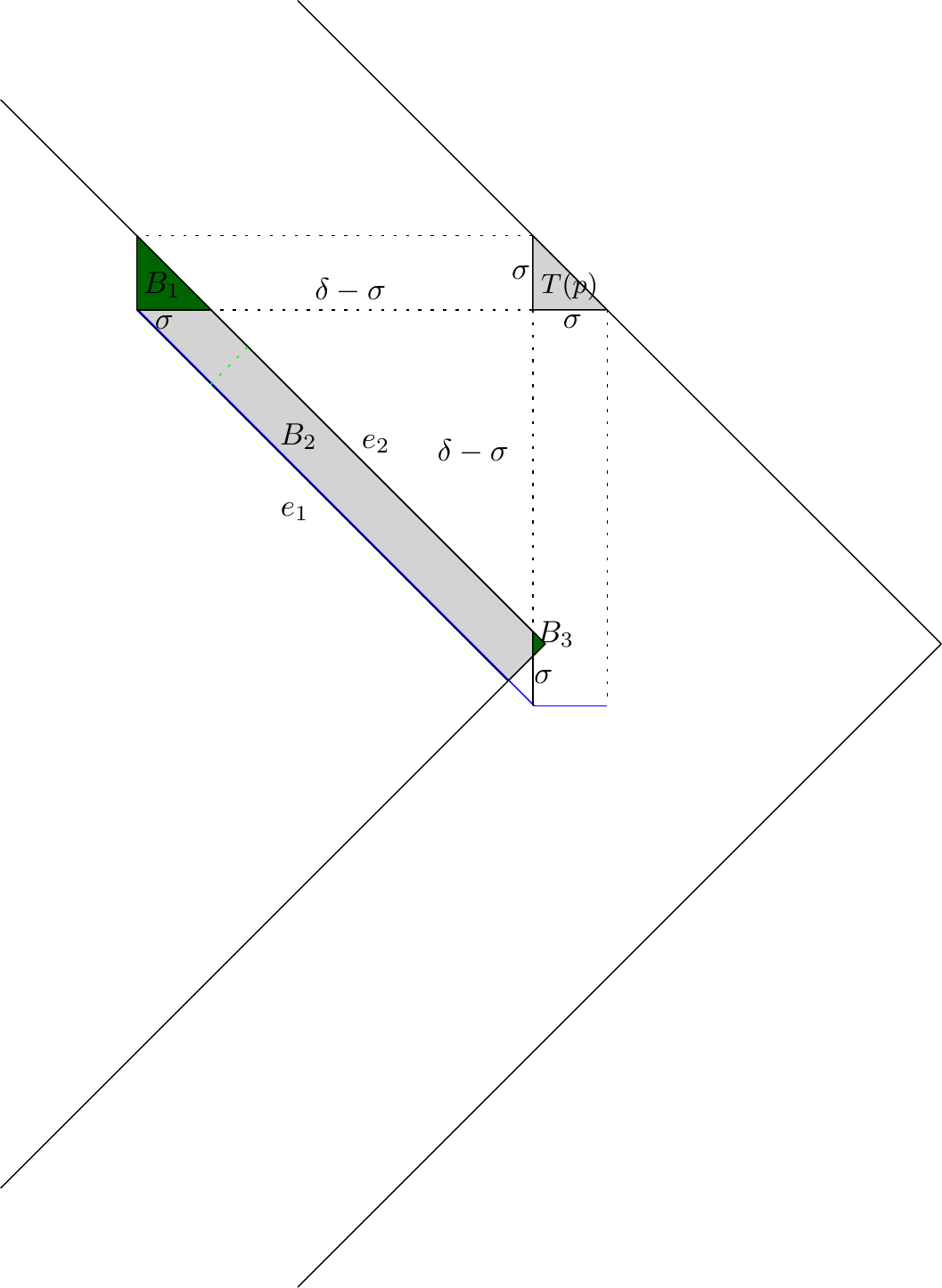}
\hspace*{.3in}
\includegraphics[width=5cm]{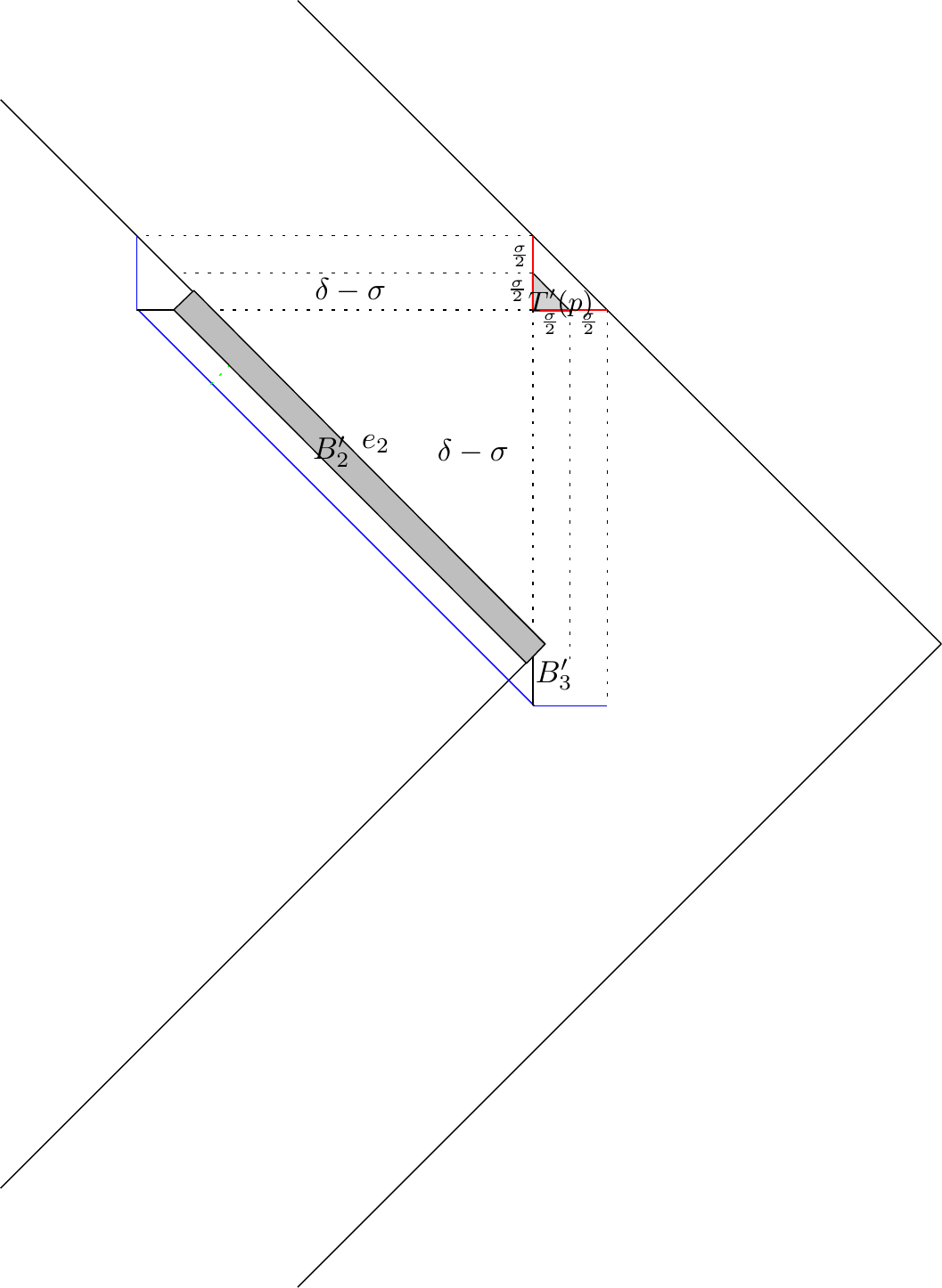}
\end{center}
\caption{Case 3 in which $T(p)\not \subseteq M$. } %
\lab{fig:Mid_case_2}
\rule{5.5in}{0.5pt}
\end{figure}

Assume that  $p \in M$ but $T(p) \not \subseteq M$, e.g., 
  points $p_2, p_4$ in Figure \ref{fig:B1_Regionsd}.
   Again,  simple geometric calculations (see Figure \ref{fig:Mid_case_2})  show that
  $$\Area(U(p)) = \Theta(\sigma \delta),
  \quad\mbox{and}\quad
  \Area(U'(p)) = \Theta(\sigma \delta).$$
(The intuition is that the areas are at least half the size of the preimages in the first subcase).  
Using again Eqs.~\ref{B1B1UB} and \ref{B1B1LB} we get 
$$\mu(T(p)) = \Theta\left(    \frac {\sigma^3} \delta  \right).$$

\par\noindent\underline{\bf Case 2:  $ p \in B.$}\\
  \begin{figure}[t]
\begin{center}
\includegraphics[width=5cm]{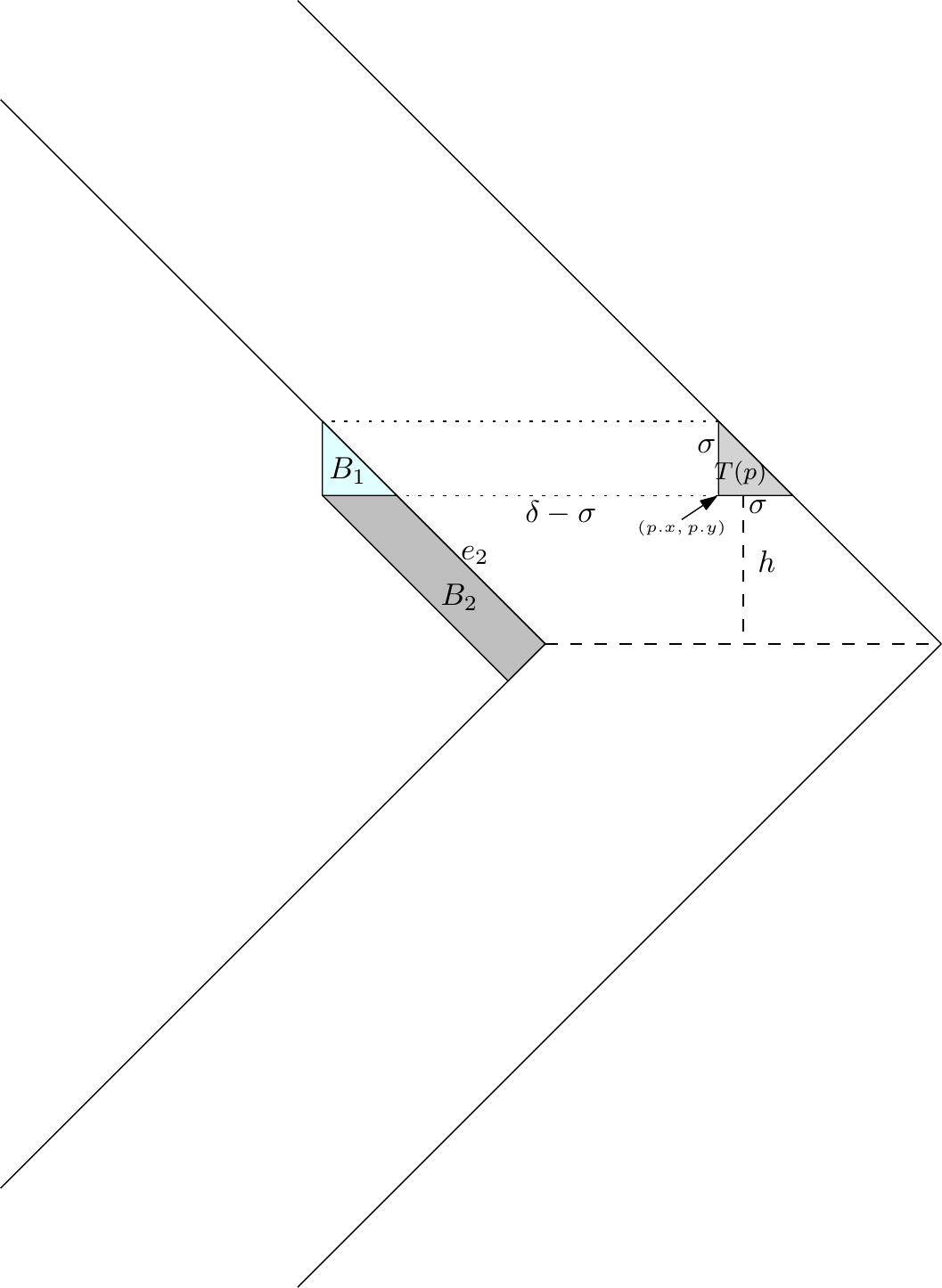}
\hspace*{.3in}
\includegraphics[width=5cm]{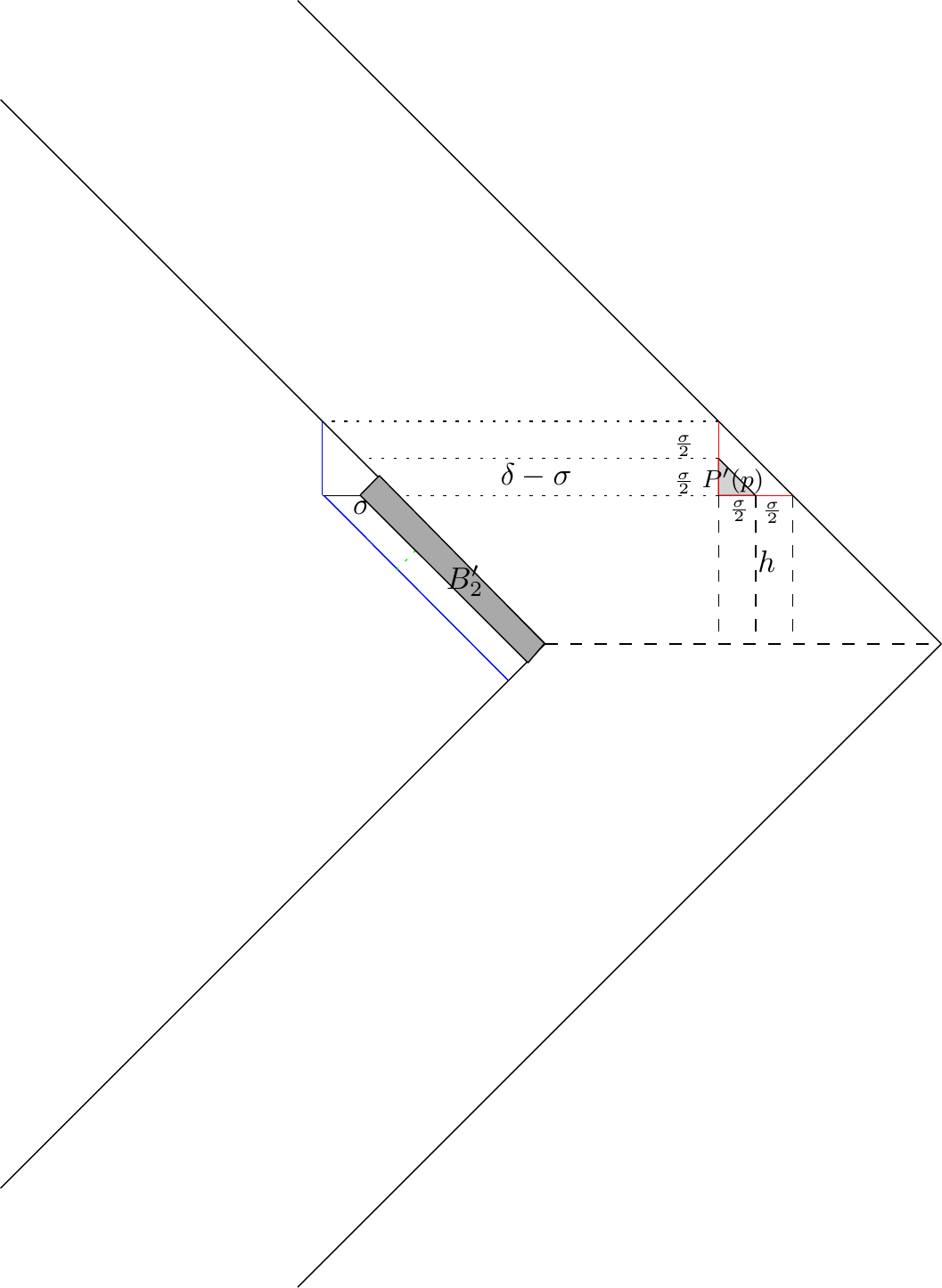}
\end{center}
\caption{Case 2} %
\label{fig:Bottom case v2}
\rule{5.5in}{0.5pt}
\end{figure}
An example is $p_5$   in Fig.~\ref{fig:B1_Regionsd}.

  In this case,  $1 < p.x$.    Set $h= p.y.$ 
  Note that,  by definition,   $h + \sigma \le \delta.$   

next note that, in Fig.~\ref{fig:Bottom case v2} , $U(p) = B_1 \cup B_2$ which,  by calculation satisfies
$$\Area(U(p)) = \Theta((h + \sigma)  \sigma).$$  
Similarly,  in the other direction, similar geometric arguments also  show
$$\Area(U('p)) = \Theta((h + \sigma)  \sigma).$$  
Plugging these formulas into Eqs.~\ref{B1B1UB} and \ref{B1B1LB}  
 and using the fact that $h+\sigma = \Delta_x(p)$ immediately gives
$$\mu(T(p) = \Theta\left(   \Delta_x(p) \frac {  \sigma^3} {\delta^2}  \right).$$

\par\noindent\underline{\bf Case 1: $p \in U.$} \\
An example is $p_1$  in Figure \ref{fig:B1_Regionsd}.
The proof for this case is totally symmetric to that of Case 2 and is therefore omitted.
  
  \medskip
  
  \par\noindent\underline{\bf Case 4:  $p \in I$.} \\
  
    \begin{figure}[t]
\begin{center}
\includegraphics[width=9cm]{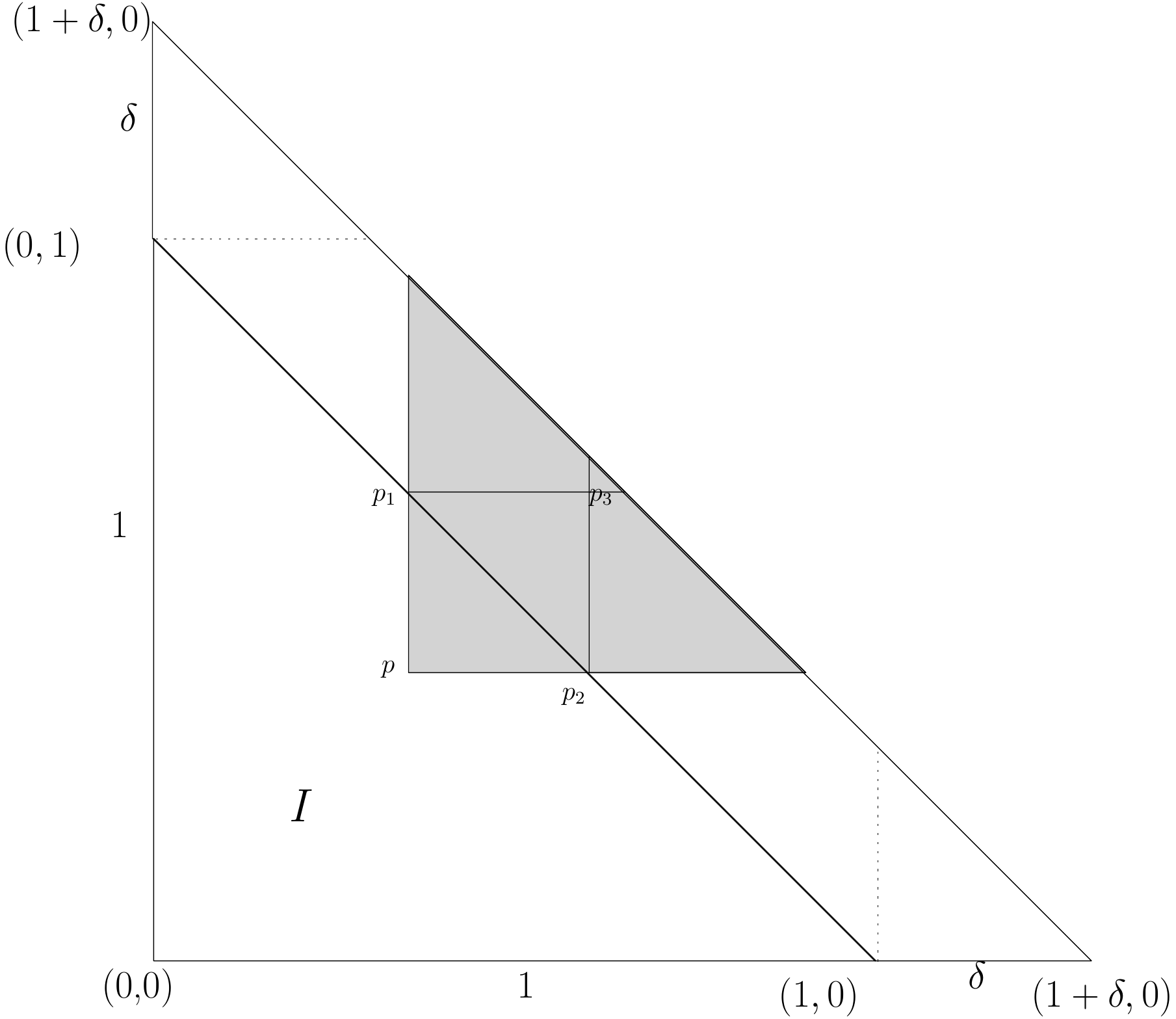}
\end{center}
\caption{Case 4} %
\label{fig:case v4}
\rule{5.5in}{0.5pt}
\end{figure}
An example is $p_8$   in Fig.~\ref{fig:B1_Regionsd}.

Set $\Delta =   1 - (p.x+p.y)   = \sigma(p) - \delta.$
We will prove that $\mu(T(p)) = \Theta\left( \Delta^2+ \delta^2 \right).$ Since this is 
$\Delta^2+ \delta^2=\Theta(\sigma^2(p))$ this completes the proof.

Given $p \in I$ now set
$$ p_1 =(p.x+ \Delta, p.y),\quad 
p_2 = (p.x, p.y+ \Delta),\quad
p_3 = (p.x+ \Delta, p.y+ \Delta).
$$ 
See Fig.~\ref{fig:case v4}.
Note that while $p_3$ is not necssarily in $\Ball 1 + \delta \Ball 1$, 
$p_1,p_2$ are on the boundary of $\Ball 1$ so $\sigma(p_1) = \sigma_p(2) = \delta$ and from Case 3
$$\mu(T(p_1)) = \mu(T(p_2)) =\Theta( \delta^2).$$
Let $T$ be the triangle with vertices $p,p_1,p_2$ and   $T'$  the triangle with vertices $p_1,p_2,p_3.$
Since $T \subset B_1$, from  Lemma \ref{lem:easy mu}(d) (with $\kappa =1$)
$$\mu(T) = \Theta(\Area(T)) = \Theta(\Delta^2).$$
From Lemma \ref{lem:easy mu}(c),
$$\mu(T') = O(\Area(T')) = O(\delta^2).$$

Finally note that $T(p) \subseteq  T(p_1) \cup T(p_2) \cup T \cup T'$ so
$$ \mu(T(p)) \le  \mu(T(p_1)) + \mu(T(p_2)) + \mu(T) + \mu(T') = \Theta ( \Delta^2 + \delta^2).$$
From the other direction,  $T$ and $T(p_1)$ are disjoint  with $T \cup T(p_1) \subset T(p)$ so 
$$\mu(T(p)) \ge  \mu(T(p_1)) + \mu(T) = \Theta(\Delta^2 + \delta^2).$$
Thus 
$$\mu(T(p))   =  \Theta(\Delta^2 + \delta^2).$$
\end{proof}
\section{Analysis of $\Ballpq 2 2$}
\label{sec: B2B2}
This section derives cell (iii)(c) in Theorem \ref{thm: main}, that is,  if $n$  points are chosen from $\bfd = \Ballpq 2 2$ and  $\frac 1  {\sqrt n} \le \delta \le 1$ then $\EMN = \Theta \left(  \frac {n^{2/7}}  { \delta^{3/7}}  \right)$.
Applying Lemma  \ref {lem: scaling} for   $1  \le \delta \le   \sqrt n $     yields cell (iii)(d), i.e., that   $\EMN =  \Theta \left(   \delta^{3/7}  n^{1/3} \right).$

As before, Corollary \ref {cor: Quadrants} states that 
$$\EMN =   \EXP {|Q_1 \cap \MAX(S_n)|}  + O(1)$$
so our analysis will be restricted to the upper-right  quadrant $Q_1$.
Our approach will be to
\begin{enumerate}
\item State a convenient expression for $\mu(u)$ (proof delayed until later).
\item Derive a lower bound using Lemma \ref{lem: lb} by defining an appropriate  pairwise disjoint collection of dominant region.
\item Derive an upper bound by partitioning $D$ into appropriate regions and applying the sweep Lemma.
\end{enumerate}

This section will need the following special definitions
\begin{Definition}
\lab{def:B2B2 Regionsa}
Let $D_1= (B_2+ \delta B_2) \cap Q_1$ be the support in the first quadrant.  The support restricted to the   the first and second octants is
$$D_{1,1} = D_1 \cap \{(x,y) \in \Re^2 \,:\,  y \le x\},\quad
D_{1,2} = D_1 \cap \{(x,y) \in \Re^2 \,:\,  x \le y\}.\
$$
For all $v \in D$ define
$$\sigma(v) = 1 + \delta - ||v||,\quad
\theta(v) = \arctan \left( \frac {v.y} {v.x}\right).
$$
$\sigma(v)$ is the distance from $v$ to the boundary of $D_1$ and $\theta(v)$ is the angle formed by the $x$ axis by the line connected the origin to $v$.
\end{Definition}

 By construction, the induced measure is radially symmetric, i.e.,  $f(v)= g(||v||),$ for some function $g.$   More specifically, 

\begin{Lemma}
\lab{lem:B2B2 measure}
Let $\bfd = \Ballpq 2 2$.  
The density of $\bfd$ satisfies
$$\forall v \in D_1,\quad   f(v) =
\left\{
	\begin{array} {cc}
 \Theta  \left(  \left(  \frac  {\sigma(v)} {\delta}    \right)^{3/2}\right)   & \mbox{ if $\sigma(v) \le \delta$},\\
		\Theta (1)  & \mbox{ if $\delta \le \sigma(v) \le 1 + \delta$}.
	\end{array}
\right.
$$
\end{Lemma}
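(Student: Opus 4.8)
The plan is to compute the density $f(v)$ for $\bfd = \Ballpq 2 2$ directly from the integral formula in Lemma~\ref{lem: measure integral}, namely $f(v) = \frac{1}{a_2 a_2}\frac{\Area(P'(v))}{\delta^2}$ where $P'(v) = B_2(v,\delta) \cap B_2$ is the intersection of the radius-$\delta$ disk centered at $v$ with the unit disk. By radial symmetry of both balls the answer depends only on $r = \|v\|$, so it suffices to analyze this two-disk intersection as a function of $r$, where the support is $r \in [0, 1+\delta]$. Writing $\sigma = \sigma(v) = 1+\delta - r$, I want to show $\Area(B_2(v,\delta)\cap B_2) = \Theta(\delta^2 \min(1,(\sigma/\delta)^{3/2}))$, which upon dividing by $\delta^2$ gives exactly the two-case formula stated.

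The key geometric fact is the classical lens-area estimate for two circles. If $r \le 1 - \delta$ the small disk $B_2(v,\delta)$ lies entirely inside $B_2$, so the area is $\pi\delta^2$ and $f(v) = \Theta(1)$; note $r \le 1-\delta$ is equivalent to $\sigma \ge 2\delta$, which is within the regime $\sigma \ge \delta$. For $1-\delta \le r \le 1+\delta$ the two circles overlap in a lens; the relevant parameter is how deep the small disk penetrates past the unit circle. The penetration depth along the radial direction is $h = \delta - (r - (1-\delta)) = 2\delta - (1+\delta - \sigma)\cdot\text{(correction)}$ — more carefully, the small disk reaches radius $r+\delta = 1+\delta$ at its far tip and radius $r - \delta$ at its near tip, so it pokes a distance $(r+\delta) - 1 = \delta - \sigma$ beyond the unit circle when $\sigma \le \delta$. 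Standard circular-segment geometry says that when a disk of radius $\delta$ pokes a distance $d$ (with $0 \le d \le \delta$) past a boundary whose curvature is comparable to or gentler than the disk's, the clipped-off area is $\Theta(d^{3/2}\delta^{1/2})$ — the half-width of the segment is $\Theta(\sqrt{d\delta})$ and its area is width times depth, $\Theta(\sqrt{d\delta}\cdot d) = \Theta(d^{3/2}\delta^{1/2})$. Hence the remaining overlap area is $\pi\delta^2 - \Theta((\delta-\sigma)^{3/2}\delta^{1/2})$. When $\sigma = \Theta(\delta)$ this is $\Theta(\delta^2)$; when $\sigma \to 0$ (i.e.\ $v$ near the outer boundary, $r \to 1+\delta$) the overlap is just a small lens of area $\Theta(\sigma^{3/2}\delta^{1/2})$ itself, giving $f(v) = \Theta((\sigma/\delta)^{3/2})$. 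I would verify that these two estimates glue together continuously at $\sigma \asymp \delta$ and that for $\sigma \ge \delta$ (in particular $\sigma \ge 2\delta$, plus the transitional band $\delta \le \sigma \le 2\delta$) the overlap area stays $\Theta(\delta^2)$, which covers the whole second case.

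The main obstacle is making the $\Theta(d^{3/2}\delta^{1/2})$ circular-segment estimate rigorous \emph{uniformly} over the full parameter range $\frac{1}{\sqrt n} \le \delta \le 1$, since the unit circle is not flat and one must check that its curvature ($=1$) never dominates the small disk's curvature ($=1/\delta \ge 1$). This is fine because $\delta \le 1$ guarantees the small disk is at least as curved as the unit circle, so the intersection region is well-behaved (the boundary arcs meet transversally unless $\sigma = 0$ or $\sigma = 2\delta$ exactly), and one can sandwich the lens between an inscribed and a circumscribed region of the claimed order. I would carry this out by choosing coordinates with $v$ on the positive $x$-axis at distance $r$ from the origin, writing the two circle equations, solving for the $x$-coordinate of the intersection points, Taylor-expanding near the boundary cases, and bounding the segment area by elementary integration of $\sqrt{\delta^2 - t^2}$-type integrands; the bookkeeping is routine once the penetration depth $\min(\delta, \delta-\sigma)_+$ — equivalently the relevant length scale — is identified. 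A secondary, purely notational check is confirming $a_2 = \pi$ so the constant $\frac{1}{a_2 a_2}$ is absorbed into the $\Theta$.
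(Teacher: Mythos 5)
Your overall strategy --- reduce by radial symmetry to a two-circle lens, estimate the lens area, divide by $\delta^2$ --- is the same as the paper's, and the concrete plan in your last paragraph (explicit coordinates, intersection points, $\sqrt{\delta^2-t^2}$ integrals) would indeed finish it. The obstacle is that the middle of your argument misidentifies the relevant length scale, and this would break the write-up rather than being a cosmetic slip. You write the penetration depth as $(r+\delta)-1 = \delta-\sigma$; with $r = 1+\delta-\sigma$ that quantity is actually $2\delta-\sigma$, and the number $\delta-\sigma$ you produced is instead the distance from the center $v$ to the unit circle --- not a penetration depth of the disk at all. The quantity that actually controls the lens is how far the small disk pokes \emph{into} $B_2$, namely $1 - (r-\delta) = \sigma$; that is the radial width of the lens, and the lens height is then $\Theta(\sqrt{\sigma\delta})$, giving the target area $\Theta(\sigma^{3/2}\sqrt{\delta})$ directly.

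Separately, the subtraction scheme $\Area(\text{lens}) = \pi\delta^2 - \Theta(\text{clipped-off segment})$ cannot be made to work uniformly over the regime that matters, $\sigma \le \delta$. There the part of the small disk outside $B_2$ is at least half of it, so the ``clipped'' area is itself $\Theta(\delta^2)$, and you are subtracting two quantities of the same order; at $\sigma \to 0$ your formula would have to produce an answer tending to $0$ out of $\pi\delta^2 - \Theta(\delta^2)$, which no $\Theta$-level estimate can guarantee. You recognize this at the end (``when $\sigma \to 0$ the overlap is just a small lens of area $\Theta(\sigma^{3/2}\delta^{1/2})$''), but the correct move is to use that direct lens computation for the \emph{entire} range $\sigma \le \delta$ and drop the subtraction. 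The paper does precisely this, computing the lens height from the exact two-circle intersection formula and sandwiching the lens between an inscribed quadrilateral and a circumscribed rectangle; for the remaining range $\sigma \ge \delta$ (including your ``transitional band'' $\delta \le \sigma \le 2\delta$) it avoids overlapping-disk bookkeeping entirely by observing $v \in B_2$ and invoking Lemma~\ref{lem:easy mu}(b), which you could use as well.
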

The proof of the Lemma is deferred to the end of this section.

Given the above, the lower bound is easy to derive:
\begin{Lemma}
\label{lem: B2B2LB}
Let $S_n$ be $n$ points chosen from the distribution $\bfd = \Ballpq 2 2$ with $\frac 1 {\sqrt n} \le \delta \le 1$. Then
$$\EMN = \Omega \left(  \frac {n^{2/7}}  { \delta^{3/7}}  \right).$$ 
\end{Lemma}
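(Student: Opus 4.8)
The plan follows the same template as the $\Ball 1$ and $\Ballpq 1 1$ analyses, but since only a lower bound is needed the work collapses to exhibiting a large family of pairwise disjoint dominant regions of measure $\Theta(1/n)$ and then invoking Lemma~\ref{lem: lb}. By Corollary~\ref{cor: Quadrants} it suffices to build the regions inside $D_1$, the quarter-disk $(B_2+\delta B_2)\cap Q_1$ of radius $1+\delta$, and every region will be a ``curved triangle'' $P(p_i)$ for a point $p_i$ placed just inside the boundary arc.

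I would first fix the scale $\sigma=c\,\delta^{3/7}/n^{2/7}$ for a small constant $c$ to be pinned down at the end, noting that the hypotheses $1/\sqrt n\le\delta\le1$ give precisely $\sigma\le\delta\le1$, so that Lemma~\ref{lem:B2B2 measure} will apply in its ``$\sigma(v)\le\delta$'' branch at the points used. Let $h(x)=\sqrt{(1+\delta)^2-x^2}$ be the boundary curve of $D_1$; on its $x$-range $\big[(1+\delta)/2,\,(1+\delta)\sqrt3/2\big]$ (polar angles in $[\pi/6,\pi/3]$, hence bounded away from both axes) pick equally spaced abscissae $a_0<a_1<\dots<a_{m-1}$ with gap $\beta=\Theta(\sigma)$ and set $p_i=(a_i,\,h(a_{i+1}))$. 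Two observations make this choice work. First, $||p_i||^2=(1+\delta)^2-\beta(a_i+a_{i+1})$ with $a_i+a_{i+1}=\Theta(1)$, so $\sigma(p_i)=\Theta(\beta)$ uniformly in $i$. Second, for $i<j$ the coordinatewise maximum $p_i\vee p_j=(a_j,h(a_{i+1}))$ satisfies $||p_i\vee p_j||\ge||(a_{i+1},h(a_{i+1}))||=1+\delta$, and since any point of $D$ dominating $p_i\vee p_j$ has norm at least $1+\delta$ too, $P(p_i)\cap P(p_j)$ lies on the boundary circle and so has measure zero (passing to the even-indexed $p_i$, or nudging each $p_i$ outward by a tiny multiple of $\beta$, makes the regions literally disjoint and changes nothing else). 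Thus the $P(p_i)$ are the desired pairwise disjoint dominant regions.

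The crux is the estimate $\mu(P(p_i))=\Theta(1/n)$, which I would prove by separately pinning down the area and the density on $P(p_i)$. Geometrically $P(p_i)$ has corner $p_i$ and is trapped between the box $[a_i,a_{i+1}]\times[h(a_{i+1}),h(a_i)]$ on the outside and a right triangle of legs $\Theta(\beta)$ at its corner on the inside; since $|h'|$ is bounded above and below on the chosen $x$-range, the box has dimensions $\Theta(\beta)\times\Theta(\beta)$, so $\Area(P(p_i))=\Theta(\beta^2)=\Theta(\sigma(p_i)^2)$. For the density, note that every point of $P(p_i)$ dominates $p_i$ and hence has norm at least $||p_i||$, i.e.\ $\sigma$-value at most $\sigma(p_i)=\Theta(\beta)\le\delta$; by Lemma~\ref{lem:B2B2 measure} its density is therefore $O\!\big((\beta/\delta)^{3/2}\big)$, while on a sub-square of side $\Theta(\beta)$ abutting $p_i$ every point has $\sigma$-value $\Theta(\beta)$ and so density $\Theta\!\big((\beta/\delta)^{3/2}\big)$. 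Multiplying, $\mu(P(p_i))=\Theta\!\big(\beta^2(\beta/\delta)^{3/2}\big)=\Theta\!\big(\beta^{7/2}/\delta^{3/2}\big)$, and choosing $c$ so that this equals $\Theta(1/n)$ forces $\beta=\Theta(\delta^{3/7}/n^{2/7})$, matching the chosen $\sigma$. Finally the $a_i$ cover an interval of length $\Theta(1)$ with spacing $\Theta(\beta)$, so there are $m=\Theta(1/\beta)=\Theta(n^{2/7}/\delta^{3/7})$ regions, and Lemma~\ref{lem: lb} yields $\EMN=\Omega(m)=\Omega\!\big(n^{2/7}/\delta^{3/7}\big)$.

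The step I expect to be the genuine obstacle is the two-sided area bound $\Area(P(p_i))=\Theta(\sigma(p_i)^2)$ and the accompanying need to keep the construction off the coordinate axes: near an axis $P(p)$ degenerates to a thin lens of area $\Theta(\sigma^{3/2})$ rather than $\Theta(\sigma^2)$, which would throw off the count of admissible regions, and that is exactly why the $p_i$ are confined to angles in $[\pi/6,\pi/3]$. Checking strict disjointness and confirming $\sigma\le\delta$ from $\delta\ge1/\sqrt n$ are routine, and the density half of the measure estimate is immediate from Lemma~\ref{lem:B2B2 measure} once one observes that $\sigma$-value is non-increasing under domination.
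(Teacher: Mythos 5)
Your proposal is correct and follows essentially the same route as the paper: restrict to the octant away from the axes, pick $\Theta(n^{2/7}/\delta^{3/7})$ points at distance $\Theta(\sigma)$ from the circular boundary with $\sigma=\Theta(\delta^{3/7}/n^{2/7})$, use the radial density formula of Lemma~\ref{lem:B2B2 measure} to get $\mu(P(p_i))=\Theta(\sigma^{7/2}/\delta^{3/2})=\Theta(1/n)$, and invoke Lemma~\ref{lem: lb}. The only stylistic difference is the placement of the $p_i$: the paper puts the $v_i$ on a concentric circle of radius $1+\delta-\sigma$ and gets disjointness from spacing the abscissae by more than the horizontal extent $\beta(v_i)$, whereas you place $p_i=(a_i,h(a_{i+1}))$ as a staircase whose outer corners touch the boundary circle, which gives disjointness (after thinning to even indices) by a norm argument. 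Both are fine; your disjointness reasoning is a bit cleaner, and your explicit note about the degenerate thin-lens shape of $P(v)$ near the axes is exactly why both proofs confine the angles to a fixed sub-interval of $(0,\pi/2)$.
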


\begin{proof} 
\begin{figure}[t]
\begin{center}
\includegraphics[width=10cm]{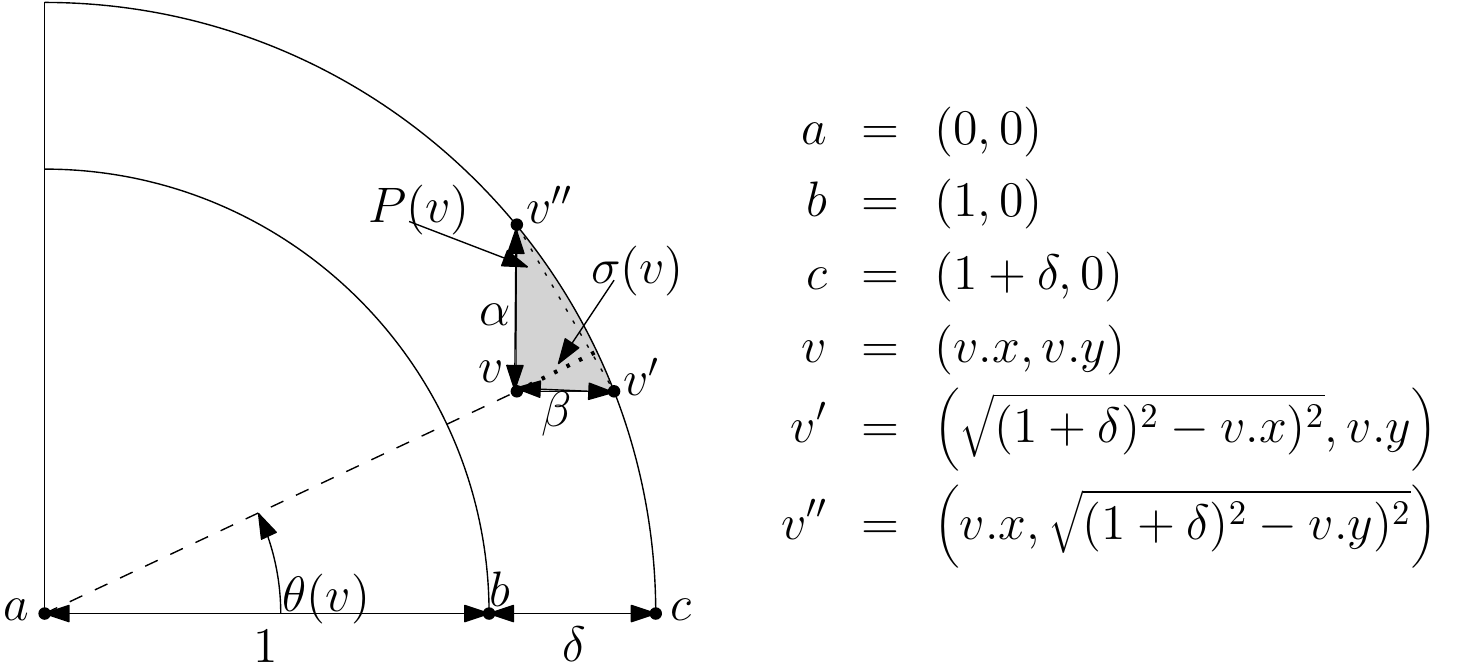}
\end{center}
\caption{Illustration of  the  proof of  Lemma \ref{lem: B2B2LB}.  } %
\label{fig:B2 B2 Pp}
\rule{5.5in}{0.5pt}
\end{figure}

Let $v \in D_1$. 
The remainder of the proof assumes  that
(i) $\sigma(v) \le \delta$ and  (ii) $\theta(v) \in \left[\frac \pi 6, \frac \pi 3\right]$  (this could be replaced by any arbitrary fixed angular range bounded away from $0$ and $\pi/2$ radians).

Refer to  Figure \ref {fig:B2 B2 Pp} for the below.
Recall $P(v)$ as 
introduced in Definition  \ref {def: PDdef}. 
  Let $v'$, $v''$ be the two other vertices of $P(v)$ with $v.x = v''.x$, 
$v.y =  v'.y$,  $\alpha(v) = v''.y - v.y$ and $ \beta(v)  = v'.x - v.x$.  
Note that  $\pi/6 \le \theta(v) \le \pi/3$   immediately implies
$\beta(v) = \Theta(\sigma(v))$ and $\alpha(v) = \Theta(\sigma(v)).$ Let $c > 0$ be such that $\beta(v) \le c \sigma(v)$ for all such $v.$

Since $P(v)$ contains the triangle with vertices $v,v',v''$ and is constained in the rectangle that has those three points and $(v'.x,v''.y)$ as corners,
 $ \frac {\alpha(v) \beta(v) } 2 \le \Area(P(v)) \le  \alpha(v) \beta(v)$.  Thus
$$\Area(P(v)) = \left( {\frac  {\sigma(v) }\delta  }\right)^{3/2}.$$


Let $u \in P(v).$  Note that $\sigma(u) \le \sigma (v)$, so    Lemma \ref{lem:B2B2 measure}  immediately implies $f(u) = O(f(v))$ and 
\begin{eqnarray*}
\mu(P(v)) = \int_{u \in P(v)} f(u) du 
&\le&  \Area(P(v)) \max_{u \in P(v)} f(u) \\
&=& O\left( \sigma^2(v)
 \left( {\frac  {\sigma(v) }\delta  }\right)^{3/2}
 \right) 
=\Theta
\left(
\frac  {\sigma^{7/2}(v)}  {\delta^{3/2}}
\right).
\end{eqnarray*}

Set 
$$\bar P(v) = P(v) \cap\left \{u \in \Re^2 \,:\,  \sigma(u) \ge \frac 1 4 \sigma(v) \right\}.$$

Using basic geometric arguments it is straightforward that  
$$\Area(\bar P(v)) =  \Theta(\Area(P(v)) = \Theta(\sigma^2(v)).$$

From  Lemma \ref{lem:B2B2 measure}, for every $u\in \bar P(v),$
$f(v') = \Theta(f(v))$ and thus

\begin{eqnarray*}
\mu(P(v)) = \int_{u \in P(v)} f(u) du 
&\ge&  \int_{u \in \bar P(v)} f(u) du \\
&\ge&  \Area(\bar P(v)) \min_{u \in \bar P(v)} f(u) \\
&=& \Theta \left( \sigma^2(v)
 \left( {\frac  {\sigma(v) }\delta  }\right)^{3/2}
 \right) 
=\Theta
\left(
\frac  {\sigma^{7/2}(v)}  {\delta^{3/2}}
\right).
\end{eqnarray*}

We have thus shown that, for all $v$ satisfying the two conditions, 
$$\mu(P(v)) = \Theta
\left(
\frac  {\sigma^{7/2}(v)}  {\delta^{3/2}}
\right).
$$

Set $\sigma = \frac  {\delta^{3/7}} {n^{2/7}}.$  Note that, for $\frac 1 {\sqrt n} \le \delta ,$  this implies  $\sigma \le \delta$ and $v$ satisfies the first condition.  Thus, if  $\theta(v) \in \left[\frac \pi 6, \frac \pi 3\right]$ and     $\sigma(v) = \sigma$ then 
\begin{equation}
\lab{eq:B2B2 1n prob}
\mu(P(v)) = \Theta
\left(
\frac  {\sigma^{7/2}(v)}  {\delta^{3/2}}
\right)
= \Theta\left (\frac 1 n\right).
\end{equation}
We  construct a collection of such $v$.
Set $x_\ell = (1 + \delta) \cos (\pi/3)$ and  $x_r = (1 + \delta) \cos (\pi/6)$.  Note that $x_r - x_\ell= \Theta(1).$

Let $m = \lfloor \frac  {x_r - x_\ell} {c \sigma}\rfloor$.  Note that since $x_r - x_\ell= \Theta(1), $  $m = \Theta\left(\frac 1 \sigma \right) = \Theta\left(     \frac   {n^{2/7}}  {\delta^{3/7}}  \right).$
Now set, $x_0=x_l$ and, for  $i=1,\ldots,m,$
\begin{eqnarray*}
x_i &=& x_l + i \sigma,\\
v_i &=&  (x_i,  \, \sqrt {(1+ \delta- \sigma)^2 - (x_{i})^2}).
\end{eqnarray*}

By construction,   $\sigma(v_i) = \sigma$,    $\theta(v_i) \in \left[\frac \pi 6, \frac \pi 3\right]$ and thus
$\mu(P(v_i)= \Theta(1/n)$.  Furthermore, each $P(v_i)$ is a dominant region and, 
since by construction,  $\forall i,$ $x_i + \beta(v_i)  \le x_{i+1},$ 
the $P(v_i)$ are pairwise disjoint.  Then, Lemma \ref{lem: lb} then immediately proves 
 the required 
$$\EXP{M_n}= \Omega(t) = \Omega \left(     \frac   {n^{2/7}}  {\delta^{3/7}}  \right).$$
\end{proof}

The upper bound proof will require an additional definition and lemma.

 \begin{figure}[t]
\begin{center}
\includegraphics[width=7cm]{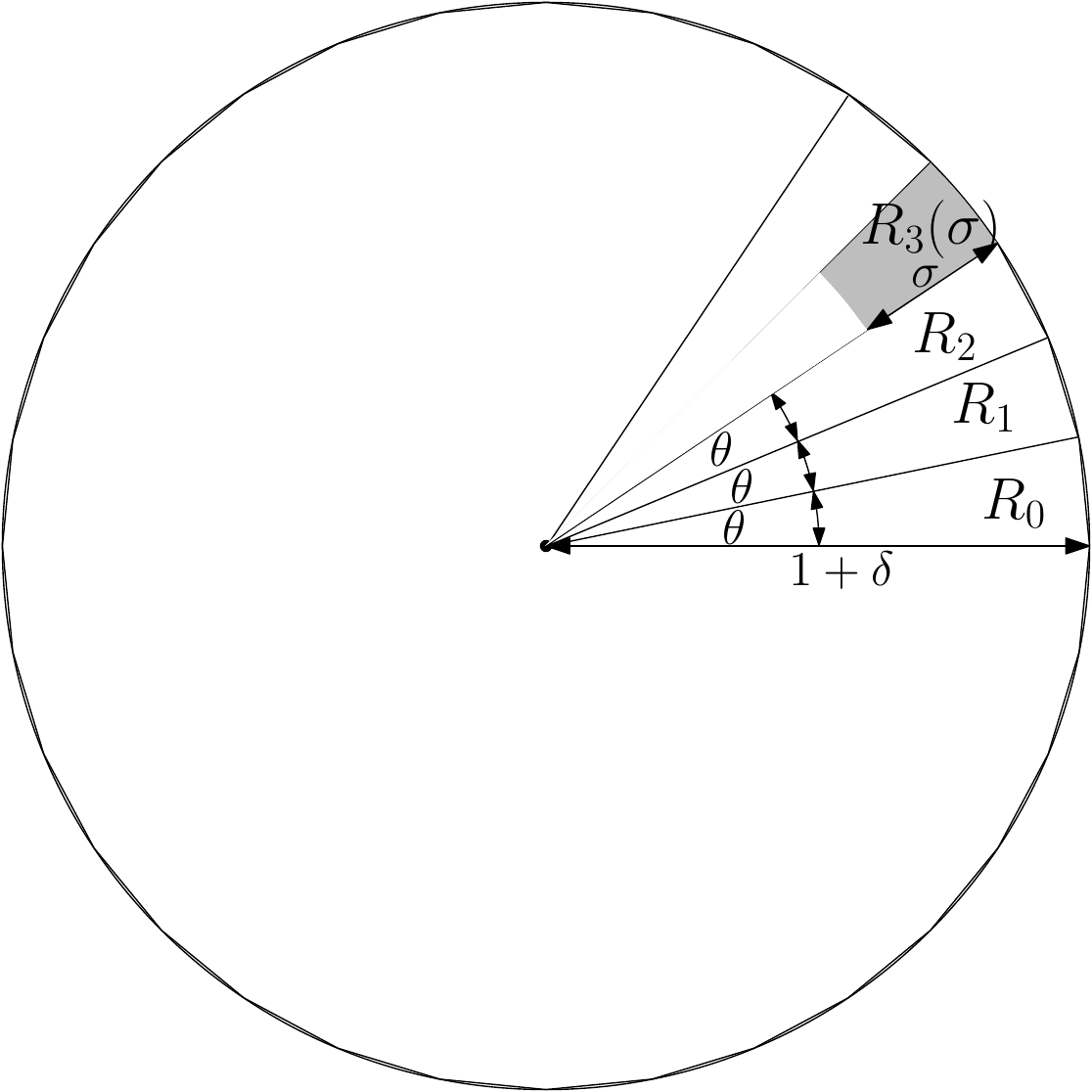}
\end{center}
\caption{Illustration of Definition \ref {def: B2B2Raddef}.  $D=B_2+ \delta B_2$  is partitioned into $t$ radially symmetric sectors $R_i.$  $R_i(\sigma)$ are the points in $R_i$ within $\sigma$ of the boundary of $D$. } %
\label{fig:B2 B2 UB V}
\rule{5.5in}{0.5pt}
\end{figure}

\begin{Definition}
\label{def: B2B2Raddef}
Let $t$ be a positive integer, 
$\theta = \frac {2 \pi} t$ and
$\theta_i = i \theta,$  $i =0,1,\ldots , t-1.$  Now partition  $D=B2 + \delta B_2$ into $t$ sectors, $i =0,1,\ldots, t-1,$
$$R_i =  D \cap  \{ v \in \Re^2 :\,  \theta_i \le \theta(v) \le \theta_{i+1}\},$$
and further define
$$R_i(\sigma) = \{v \in R_i \,:\, 1 + \delta -\sigma \le ||v|| \}.$$
\end{Definition}

\begin{Lemma}
\lab{lem:muri}
$$\mu(R_i(\sigma) )
=
\left\{
\begin{array}{cc}
\Theta
 \left( \frac {\sigma^{5/2}} {t \delta^{3/2}}   \right) & \mbox{ if $0 \le \sigma \le \delta$}, \\[0.1in]
\Theta\left( \frac \sigma t \right)  & \mbox{ if $\delta  \le \sigma \le 1 +  \delta$}.
\end{array}
\right.
$$
\end{Lemma}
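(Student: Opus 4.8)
The plan is to reduce the two-dimensional integral defining $\mu(R_i(\sigma))$ to a one-dimensional one using the radial symmetry of the induced measure noted just above the statement of Lemma~\ref{lem:B2B2 measure}, and then to evaluate that integral directly from the explicit density bounds there.

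First, since $f(v) = g(\|v\|)$ for some function $g$, and since $R_i$ is a sector of fixed angular width $\theta = 2\pi/t$, passing to polar coordinates $(r,\phi)$ gives
\[
\mu(R_i(\sigma)) = \int_{\theta_i}^{\theta_{i+1}}\!\!\int_{1+\delta-\sigma}^{1+\delta} g(r)\, r \, dr\, d\phi = \frac{2\pi}{t}\int_{1+\delta-\sigma}^{1+\delta} g(r)\, r \, dr.
\]
Substituting $s = 1+\delta - r$, so that $s$ is exactly the quantity $\sigma(v)$ appearing in Lemma~\ref{lem:B2B2 measure}, turns this into
\[
\mu(R_i(\sigma)) = \frac{2\pi}{t}\int_{0}^{\sigma} (1+\delta-s)\, \hat g(s)\, ds,
\]
where $\hat g(s) = \Theta\bigl((s/\delta)^{3/2}\bigr)$ for $s \le \delta$ and $\hat g(s) = \Theta(1)$ for $\delta \le s \le 1+\delta$, by Lemma~\ref{lem:B2B2 measure}. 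Because the implicit constants there depend only on $p=q=2$, the $\Theta$'s may be pulled through the integral, so it suffices to evaluate $\int_0^\sigma (1+\delta-s)\,\bar g(s)\,ds$ for the representative integrand $\bar g$.

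Next I would split into the two cases. When $0 \le \sigma \le \delta$, throughout the range of integration $s \le \sigma \le \delta \le 1$, so $1+\delta - s \in [1,2]$, i.e. $\Theta(1)$, and $\bar g(s) = (s/\delta)^{3/2}$; hence the integral is $\Theta\bigl(\delta^{-3/2}\int_0^\sigma s^{3/2}\,ds\bigr) = \Theta(\sigma^{5/2}/\delta^{3/2})$, giving $\mu(R_i(\sigma)) = \Theta\bigl(\sigma^{5/2}/(t\delta^{3/2})\bigr)$. When $\delta \le \sigma \le 1+\delta$, I split the integral at $s=\delta$: the part over $[0,\delta]$ is handled exactly as above with $\sigma$ replaced by $\delta$, and contributes $\Theta(\delta)$; over $[\delta,\sigma]$ one has $\bar g(s) = 1$, and $\int_\delta^\sigma(1+\delta-s)\,ds = \tfrac12(\sigma-\delta)(2+\delta-\sigma)$, which is $\Theta(\sigma-\delta)$ since $2+\delta-\sigma \in [1,2]$ on this range. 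Adding the two pieces gives $\mu(R_i(\sigma)) = \frac{2\pi}{t}\bigl(\Theta(\delta)+\Theta(\sigma-\delta)\bigr) = \Theta(\sigma/t)$.

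The computation is routine; the only points needing care are (i) justifying the reduction to the one-dimensional integral, which rests entirely on the stated radial symmetry, and (ii) keeping track of the Jacobian factor $r = 1+\delta-s$, which is $\Theta(1)$ only in the regime $s \le \delta$ — in the tail $s\in[\delta,\sigma]$ of the second case it genuinely varies, and it is precisely this that makes the final answer $\Theta(\sigma/t)$ rather than something larger in $\sigma$. Pulling the $\Theta$-estimates of Lemma~\ref{lem:B2B2 measure} through the integral is legitimate because its implicit constants are absolute.
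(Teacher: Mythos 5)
Your proof is correct and takes essentially the same route as the paper: exploit the radial symmetry $f(v)=g(\|v\|)$, pass to polar coordinates, substitute to distance-from-boundary, and integrate the density bounds from Lemma~\ref{lem:B2B2 measure} with a case split at $\sigma=\delta$. One small remark worth flagging: your closing claim that keeping the Jacobian factor $1+\delta-s$ is ``precisely what makes the final answer $\Theta(\sigma/t)$ rather than something larger'' is not quite right --- on $[\delta,\sigma]$ both $\int_\delta^\sigma(1+\delta-s)\,ds=\tfrac12(\sigma-\delta)(2+\delta-\sigma)$ and the cruder $\int_\delta^\sigma\,ds=\sigma-\delta$ are $\Theta(\sigma-\delta)$ (since $2+\delta-\sigma\in[1,2]$), so the Jacobian does not change the asymptotic; the paper simply drops it and arrives at the same bound. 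The rest is fine.
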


\begin{proof}

Change  into polar coordinates and write $v= (x,y) =  (||v|| \cos \gamma,||v||\sin \gamma)$. Recall  from 
Lemma \ref {lem:B2B2 measure} that  $f(v) = g(1+\delta - ||v||)$ so we can integrate to get

\begin{eqnarray*}
 \mu(R_i(\sigma))    &=& \int\int_{(x,y)  \in R_i(\sigma)}  f((x,y))\, dx dy \\
                         &=& \int_{\gamma =\theta_i}^{\theta_{i+1}} \int_{r=1+\delta-\sigma}^{1 + \delta}  r\, g(1 + \delta -r)\,dr \\
                           &=& \int_{\gamma =\theta_i}^{\theta_{i+1}} \int_{r=0}^\sigma (1+\delta -\sigma)\, g(r)\, dr\\
                          &=& \frac {2 \pi} t \int_{r=0}^\sigma (1+\delta -r)\, g(r)\, d r.
\end{eqnarray*}
For all 
$r \le \delta,$   $  g(r) = \Theta  \left(  \left(  \frac  {r } {\delta}    \right)^{3/2}\right)$ so

$$\mu(R_i(\sigma))  = \Theta
 \left(      \frac  1 t     
\int_{r=0} ^{\sigma} \frac {r^{3/2}} {\delta^{3/2}} d r
\right)
=
\Theta
 \left( \frac {\sigma^{5/2}} {t \delta^{3/2}}
\right).
$$

Recall from  Lemma \ref {lem:B2B2 measure}  that, for   $ \delta \le \sigma \le 1 + \delta$,  $f(v) = \Theta(1),$ so 
$$\mu(R_i(\sigma))  = \Theta
 \left(      \frac  1 t     
\int_{r=0} ^{\delta} \frac {r^{3/2}} {\delta^{3/2}} d r
+  \frac  1 t   \int_{r=\delta} ^{\sigma}  dr
\right)
 = \Theta \left(     \frac  t  \delta\right)  + \Theta\left( \frac{\sigma - \delta} t     \right)
= \Theta\left( \frac  \sigma t\right).
$$
\end{proof}

As before, the upper bound is more technical.
\begin{Lemma}
\label{lem: B2B2UB}
Let $S_n$ be $n$ points chosen from the distribution $\bfd = \Ballpq 1 1$ with $\frac 1 {\sqrt n} \le \delta \le 1$. Then
$$\EMN = O \left(  \frac {n^{2/7}}  { \delta^{3/7}}  \right).$$
\end{Lemma}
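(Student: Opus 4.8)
The plan is to run the same five-step template used for $\Ballpq 1 1$, built around the radial sectors of Definition~\ref{def: B2B2Raddef} and the measure estimate of Lemma~\ref{lem:muri}. As in the proof of Lemma~\ref{lem: B1B1UB}, I would first dispose of small $\delta$: for $\delta \le c/\sqrt n$ (suitable constant $c$) Lemma~\ref{lem: limiting} already gives $\EMN = \Theta(\sqrt n) = \Theta\!\left(n^{2/7}/\delta^{3/7}\right)$, so assume $\delta \ge c/\sqrt n$. By Corollary~\ref{cor: Quadrants} it suffices to bound $\EXP{|O_1 \cap \MAX(S_n)|}$, so I work inside $D_{1,1}$. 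Set $\sigma = \delta^{3/7}/n^{2/7}$; then $\delta \ge 1/\sqrt n$ forces $\sigma \le \delta$, and $\sigma^{7/2} = \delta^{3/2}/n$. Let $t = \Theta(1/\sigma) = \Theta\!\left(n^{2/7}/\delta^{3/7}\right)$ (with a constant I will fix later) and partition $D_{1,1}$ into the $\Theta(t)$ sectors $R_i$ of angular width $2\pi/t$, splitting each into its outer cap $R_i(\sigma)$ and its deep part $R_i \setminus R_i(\sigma) = \{v \in R_i : \|v\| < 1+\delta-\sigma\}$.

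The cap of a sector contributes only $O(1)$, directly and uniformly in $i$ (the density is radially symmetric, so $\mu(R_i(\sigma))$ does not depend on $i$): by Lemma~\ref{lem: basic mu}(b) and Lemma~\ref{lem:muri} (case $\sigma \le \delta$),
$$\EXP{|\MAX(S_n) \cap R_i(\sigma)|} \le \EXP{|S_n \cap R_i(\sigma)|} = n\,\mu(R_i(\sigma)) = \Theta\!\left(\frac{n\,\sigma^{5/2}}{t\,\delta^{3/2}}\right) = \Theta\!\left(\frac{n\,\sigma^{7/2}}{\delta^{3/2}}\right) = \Theta(1),$$
where the last two equalities use $t = \Theta(1/\sigma)$ and $\sigma^{7/2} = \delta^{3/2}/n$.

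The deep part is handled by the Sweep Lemma (Lemma~\ref{lem: Sweep}), and this is the step that needs care (it is the ``Step~4'' difficulty flagged earlier). For a sector $R_i = D_1 \cap \{\theta_i \le \theta(v) \le \theta_{i+1}\}$ bounded away from the axes I would take $B$ to be the deep part of $R_i$ and, for $\tau \in [0,\,1+\delta-\sigma]$, set $B(\tau) = \{v \in B : \|v\| \ge 1+\delta-\sigma-\tau\}$ (the annular shell of $B$ of radial depth $[\sigma,\sigma+\tau]$). For $A$ I would use the arc strip in the more horizontal neighbouring sector $R_{i-1}$, namely $A(\tau) = \{u \in D_1 : \theta_{i-1} \le \theta(u) \le \theta_i,\ \|u\| \ge \rho_0(\tau)\}$ with $\rho_0(\tau) = (1+\delta-\sigma-\tau)\,\bigl(\sin\theta_{i+1}/\sin\theta_{i-1}\bigr)$, clamped at $1+\delta$. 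Two things must be verified. (i) Every point of $A(\tau)$ dominates every point of $B \setminus B(\tau)$: since a point at angle $\phi$ and radius $\rho$ dominates a point at angle $\psi$ and radius $r$ exactly when $\rho\cos\phi \ge r\cos\psi$ and $\rho\sin\phi \ge r\sin\psi$, the $x$-inequality is automatic ($\phi \le \theta_i \le \psi$, $\rho \ge r$) and the $y$-inequality reduces to $\rho_0(\tau) \ge (1+\delta-\sigma-\tau)\sin\theta_{i+1}/\sin\theta_{i-1}$, which holds by the choice of $\rho_0$; the clamp $\rho_0(\tau) \le 1+\delta$ is where I need the constant in $t = \Theta(1/\sigma)$ to be large (so that $2\pi/t$ is small relative to $\sigma/(1+\delta)$, exactly the role played by $\sigma \le \delta/4$ in Lemma~\ref{lem: B1B1UB}). (ii) Asymptotic dominance $\mu(B(\tau)) = O(\mu(A(\tau)))$: both are unions of annular pieces, $B(\tau)$ of depth $[\sigma,\sigma+\tau]$ and $A(\tau)$ of depth $[0,\sigma+\tau]$ (up to constants) within a sector of angular width $2\pi/t$, so a Lemma~\ref{lem:muri}-type integration — with a two-case split according to whether $\sigma+\tau \le \delta$ or $\sigma+\tau > \delta$, mirroring the case analysis around Eq.~\ref{eq:B1B1UPmain} — gives $\mu(B(\tau)) = O(\mu(A(\tau)))$ for all $\tau$. (The $O(1)$ many sectors abutting the $x$- and $y$-axes need a separate, easier argument, using the arc on the other side of the sector.) Granting (i) and (ii), Lemma~\ref{lem: Sweep} yields $\EXP{|\MAX(S_n) \cap (R_i \setminus R_i(\sigma))|} = O(1)$.

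Summing over the $\Theta(t)$ sectors,
$$\EMN = \Theta\!\left(\EXP{|O_1 \cap \MAX(S_n)|}\right) = \sum_i\Bigl(\EXP{|\MAX(S_n)\cap R_i(\sigma)|} + \EXP{|\MAX(S_n)\cap(R_i\setminus R_i(\sigma))|}\Bigr) = \sum_i O(1) = O(t) = O\!\left(\frac{n^{2/7}}{\delta^{3/7}}\right).$$
I expect the genuine obstacle to be Step~4, i.e. choosing the swept pair $(A(\tau),B(\tau))$ for the deep part so that domination and the measure comparison hold simultaneously and verifying the two-case measure estimate; the rest is a routine combination of Corollary~\ref{cor: Quadrants}, Lemma~\ref{lem: basic mu}, and Lemma~\ref{lem:muri}. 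Finally, applying Lemma~\ref{lem: scaling} to this bound over $1 \le \delta \le \sqrt n$ recovers cell (iii)(d) of Theorem~\ref{thm: main}.
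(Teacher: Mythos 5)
Your proposal follows the paper's strategy closely: set $\sigma=\delta^{3/7}/n^{2/7}$, take $t=\Theta(1/\sigma)$, decompose the first octant into the radial sectors $R_i$ of Definition~\ref{def: B2B2Raddef}, bound the outer cap $R_i(\sigma)$ of each sector by a direct expectation count via Lemma~\ref{lem: basic mu} and Lemma~\ref{lem:muri}, and apply the Sweep Lemma to the deep part $R_i\setminus R_i(\sigma)$ with the sweeping region $A$ taken from the annular cap of a \emph{neighboring} sector. This is exactly what the paper does, with two superficial differences: the paper sweeps with the more-vertical neighbor $R_{i+1}$ and trims $R_i$ by a vertical half-plane $\{v : v.x\le v_i.x\}$, whereas you sweep with the more-horizontal neighbor $R_{i-1}$ and trim by radial shells.

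That asymmetry in which neighbor you pick is where the proposal slips. You correctly flag that you need the clamp $\rho_0(\tau)\le 1+\delta$, equivalently $K-1 \le (\sigma+\tau)/(1+\delta-\sigma-\tau)$ with $K=\sin\theta_{i+1}/\sin\theta_{i-1}$. A Taylor expansion gives $K-1 = \Theta\bigl(\cot\theta_{i-1}\cdot t^{-1}\bigr)$, which is $O(\sigma)$ only when $\cot\theta_{i-1}=O(1)$. For sectors within any fixed angle of the $x$-axis --- a constant \emph{fraction} of the $t/8$ sectors in the octant, hence $\Theta(t)$ of them rather than the ``$O(1)$ many'' your parenthetical asserts --- the factor $\cot\theta_{i-1}$ cannot be absorbed into the constant $c$ in $t=c/\sigma$, the clamp is active on a $\tau$-interval of positive length, and on that interval $A(\tau)=\emptyset$ while $\mu(B(\tau))>0$, so condition~3 of Lemma~\ref{lem: Mirror} (asymptotic dominance in measure) fails outright. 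The escape you name --- sweep from the other side --- does repair every such sector, because for $R_{i+1}$ the relevant ratio is $\cos\theta_i/\cos\theta_{i+2}$ and $\cos\theta_i/\cos\theta_{i+2}-1 = \Theta(\tan\theta_{i+1}/t)=O(1/t)$ uniformly on the first octant since $\tan\theta_{i+1}\le 1$. So you should simply sweep with $R_{i+1}$ in every sector; with that one change your proof coincides, up to the cosmetic choice of radial versus vertical trimming of $R_i$, with the paper's proof, and the concluding summation and the Lemma~\ref{lem: scaling} step to cell (iii)(d) are as you state.
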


\begin{proof}
Chose integer $t$ in Definition \ref{def: B2B2Raddef} so that
$$t = \min \left\{
t' \ge  \frac {n^{2/7}}  { \delta^{3/7}}
\quad\mbox{and}\quad
\mbox{$t'$ is a multiple of 8}\right\}.
$$
Recall from Corollary \ref{cor: Quadrants} that 
$$\EMN 
= \Theta
\left(
 \EXP{| \MAX(S_n) \cap D_{1,1}|}
 \right),$$
where $D_{1,1} = D \cap O_1$ is the support of $\bfd$ in the first octant.

Next note that because $t$ is a multiple of 8, $D_{1,1} = \bigcup_{i=1}^{t/8} R_i.$  Thus
$$\EXP{| \MAX(S_n) \cap D_{1,1}|} = \sum_{i=1}^{t/8} \EXP{| \MAX(S_n) \cap R_i|}.$$
We now  use the Sweep lemma to show  that for all $i,$ $\EXP{| \MAX(S_n) \cap R_i|} =O(1)$. The proof will then follow.

 \begin{figure}[t]
\begin{center}
\includegraphics[width=9cm]{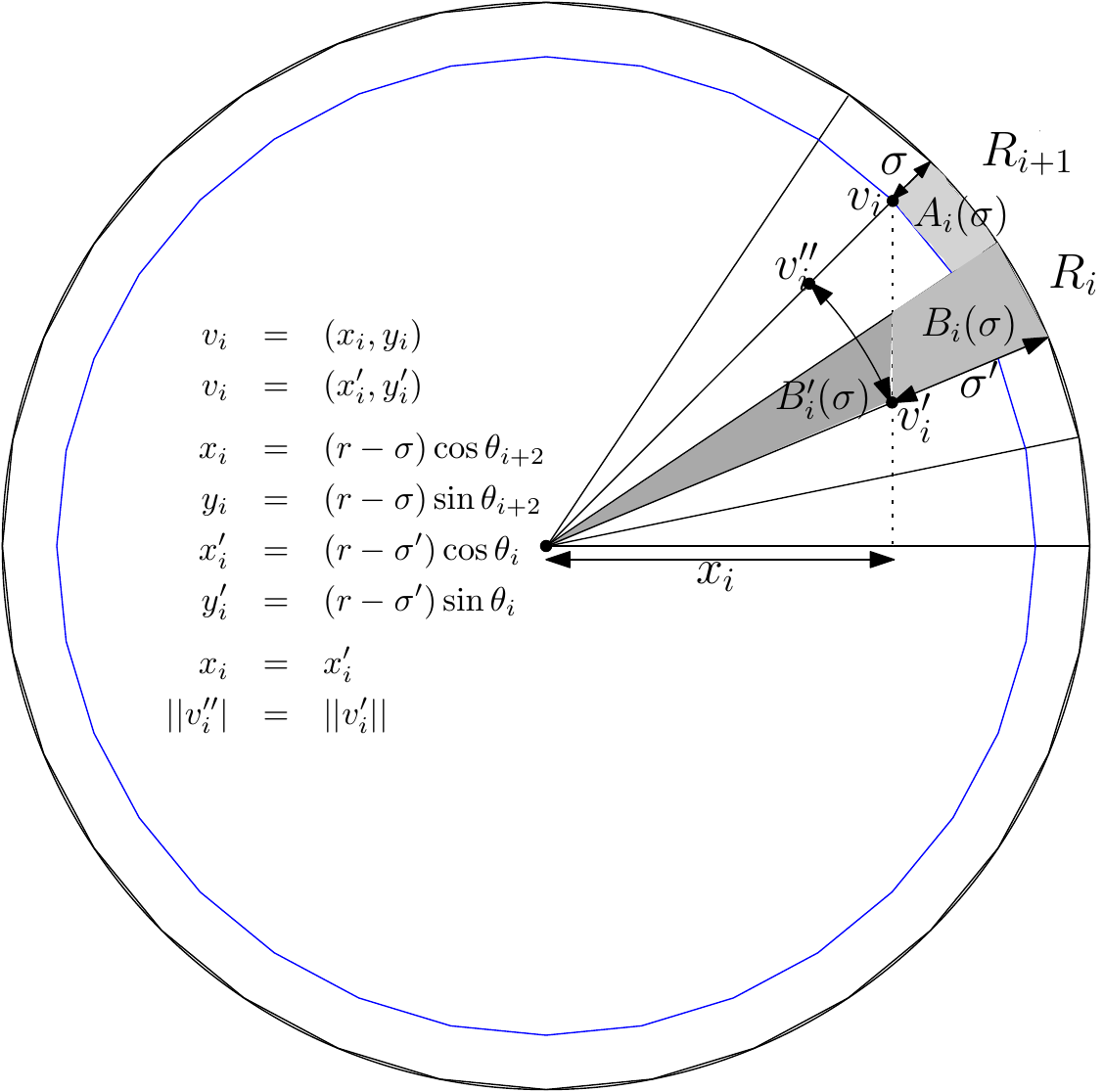}
\end{center}
\caption{ Illustration of the proof of Lemma \ref{lem: B2B2UB}.} %
\label{fig:B2 B2 UB VI}
\rule{5.5in}{0.5pt}
\end{figure}

Consider Figure  \ref{fig:B2 B2 UB VI}.  Set $r = 1+ \delta$ to be the full radius of  $B_2 + \delta B_2$.
Let $A_i(\sigma) = R_{i+1}(\sigma).$  Set 
$$v_i =( (r - \sigma) \cos \theta_{i+2},\,  (r - \sigma) \sin \theta_{i+2}).$$
This is the leftmost point of $A_i(\sigma)$ on the boundary line between $R_{i+2}$ and $R_{i+3}.$
Now drop a vertical line from $v_i$ to the $x$-axis and let $v'_i$ be the point at which it intersects  the boundary between $R_i$ and $R_{i-1}.$ Let $\sigma'= \sigma(v_i').$    By construction

$$(r - \sigma) \cos \theta_{i+2} = v.x = v'.x = (r -\sigma') \cos \theta_{i}.$$
But
\begin{eqnarray*}
\cos \theta_{i+2} &=&  \cos \left(\theta_i + \frac 2 t\right)\\
			    &=&  \cos \theta_i \cos (2/t) + \sin \theta_i \sin (2/t)\\
                               &=&  (1 + O(1/t) \cos \theta_i  + O(1/t)\\
                               & =&  \cos \theta_i    + O(1/t).
\end{eqnarray*}
Because the $R_i$ are in the first octant, $\theta_i$ is bounded away from $\pi/2$  so $\cos \theta_i$ is bounded away from $0.$  Thus, 
$\sigma' - \sigma = O(1/t).$

Define 
$$B'_i (\sigma)=  \{v \in R_i \,:\,  v.x \le v_i.x\},\quad    B_i(\sigma) = R_i \setminus B'_i(\sigma).$$
By construction
\begin{itemize}
\item  $\forall \sigma,$ every point in $A_i(\sigma)$ dominates every point in $B'_i(\sigma).$
\item $\forall \sigma,$  $B_i(\sigma) \subset R_i(\sigma')$ \quad $\Rightarrow$ \quad  $\mu(B_i(\sigma)) \le  \mu(R_i(\sigma'))$.
\item  $\bar \sigma = \frac 1 t $   \quad $\Rightarrow$ \quad $\bar\sigma' = \frac 1 t + O(\frac 1 t ) =\Theta(\bar \sigma)$.
\item  $\frac 1 {\sqrt n} \le \delta \quad \Rightarrow\quad  \frac 1 {n^2} \le \delta^4  \quad \Rightarrow  \quad\frac {\delta^3} {n^2} \le \delta^7
\quad \Rightarrow  \quad \bar  \sigma  = \frac 1 t \le \frac {\delta^{3/7} }{n^{2/7}} \le \delta.$
$$\Rightarrow \mu(B_i(\bar\sigma)) \le  \mu(R_i(\bar\sigma')) = \Theta \left( \frac {\bar \sigma^{5/2}} {t \delta^{3/2}}\right) = \Theta(1/n).$$
\end{itemize}
  Then 
  $$\EXP{|MAX(S_n) \cap B_i(\bar \sigma)|} \le \EXP{|S_n \cap B_i(\bar \sigma)|}
  = O(n \mu(B_i(\bar \sigma))) = O(1).$$
 
 Since   $R_i = B'_i(\sigma) \cup B_i(\sigma)$,  this implies
\begin{eqnarray*}
\EXP{|MAX(S_n) \cap R_i|}  &\le&  \EXP{|MAX(S_n) \cap B'_i(\bar \sigma)|}   + \EXP{|MAX(S_n) \cap B_i(\bar \sigma)|}  \\
&=&      \EXP{|MAX(S_n) \cap B'_i(\bar \sigma)|}   + O(1).		
\end{eqnarray*}

Thus,  to show that $\EXP{|MAX(S_n) \cap R_i|} =O(1)$ it suffices to prove that
$\EXP{|MAX(S_n) \cap B'_i(\bar \sigma)|} = O(1),$  which we will now do via the sweep lemma,  using $\sigma$ as the sweep parameter.

Set 
$$
\begin{array}{ccccccl}
A &=& R_{i+1}, &                   \quad & A(\sigma) &=& A_i(\sigma),\\
B &=& B'_{i}({\bar \sigma}), & \quad & B(\sigma) &=& B_i(\sigma) \cap B'_i(\bar \sigma).
\end{array}
$$
Note that,  by the previous discussion
\begin{itemize}
\item  Every point in $B \setminus B(\sigma) $ is dominated by every point in $A_i(\sigma).$
\item If   $\sigma \le \bar \sigma$ $\Rightarrow $  $B(\sigma) = \emptyset$.
\item If $\sigma > \bar \sigma $$ \Rightarrow$  $B(\sigma)  \subset B_i(\sigma) \subset R_i(\sigma')$.
\end{itemize}
Thus,  if  $\sigma \le \bar \sigma$ then
$$ \mu(B(\sigma)) = 0 = O(\mu(A(\sigma)),$$
while if $ \sigma > \bar \sigma$ then
\begin{eqnarray*}
\mu(B(\sigma))  &= & O (\mu(R_i(\sigma'))) \\
                             &= & O (\mu(R_{i+1}(\sigma'))) \\
                           &=  & O (\mu(R_{i+1}(\sigma + O(\bar \sigma) )))  \\
                           &=  & O (\mu(R_{i+1}(\sigma))), 
  \end{eqnarray*}
where the last inequality comes from        plugging in the values from Lemma \ref{lem:muri}.  
This explicitly satisfies the conditions of the Sweep Lemma 
and thus,
$\EXP{|MAX(S_n) \cap B'_i(\bar \sigma)|} = O(1),$  
so the proof of the upper bound  is completed.
\end{proof}

\begin{proof} of Lemma \ref {lem:B2B2 measure}.

   \begin{figure}[t]
\begin{center}
\includegraphics[width=9cm]{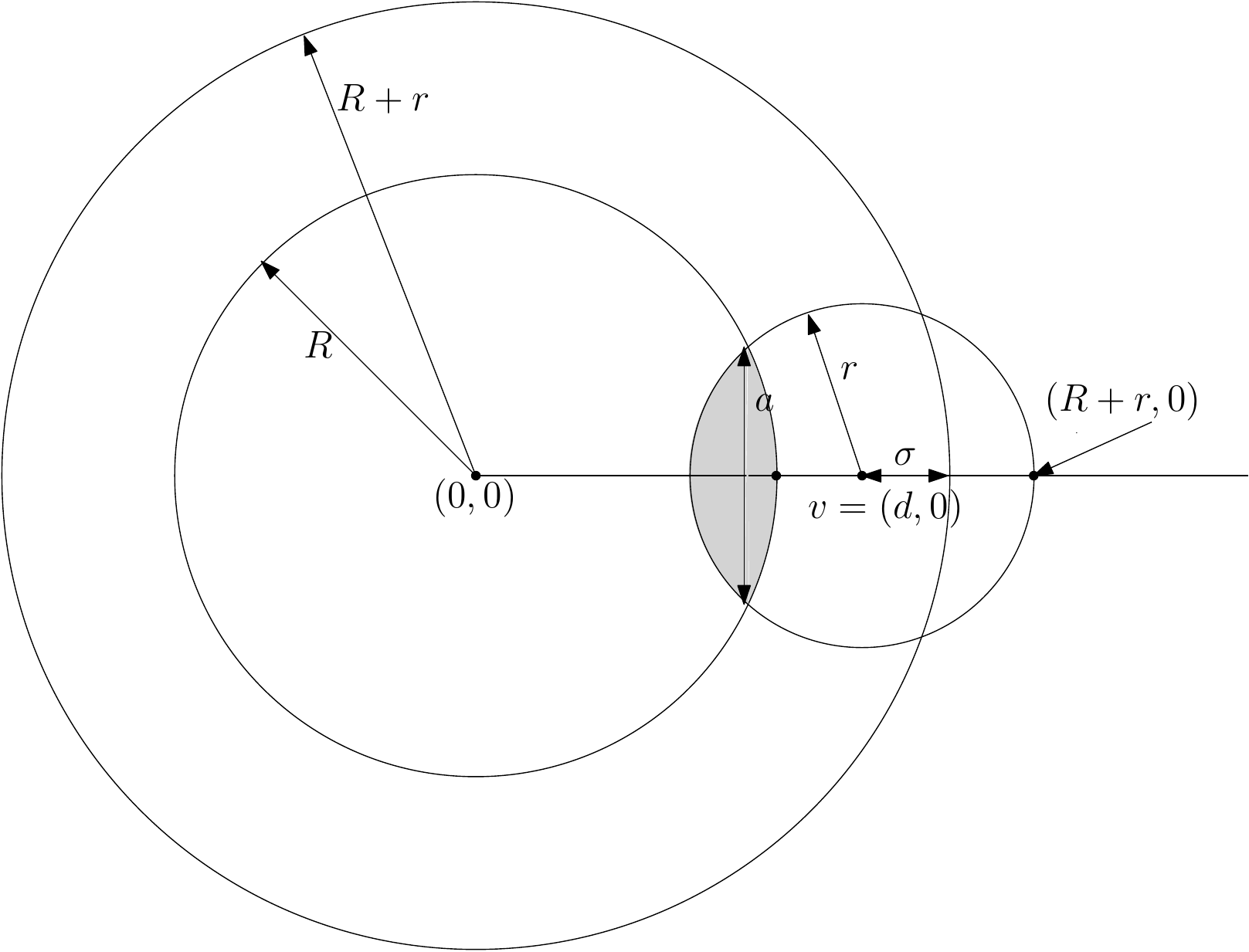}
\end{center}
\caption{Illustration of proof of Lemma \ref{lem:B2B2 measure}.  $R=1,$ $r= \delta$. The gray area is $P'(v)$, the preimage of $v$ in $B_2,$  i.e., the intersection
of $B_2$ and $B_2(v,\delta).$
} %
\label{fig:B2 B2 Measure}
\rule{5.5in}{0.5pt}
\end{figure}
Set $\sigma = \sigma(v).$ The radial symmetry of $\bfd$ follows from the radial symmetry of $\Ball 2$.  Because of the radial symmetry we may assume that $v =(1+\delta-\sigma,0),$ i.e., $v$ lies on the non-negative $x$ axis.  
First note that if $\sigma \ge \delta,$ then  $ v \in B_2$ so, by Lemma \ref{lem:easy mu}(b) (with $\kappa =1$)
$f(v) = \Theta(1).$ 
 If $\sigma > 1 + \delta,$  then $v \not\in B_2 + \delta B_2$.  The remainder of the proof therefore assumes  $0 \le \sigma < \delta.$

From Eq.~\ref{eq: fdef} in Lemma \ref{lem: measure integral},
\begin{equation}
\label{eq:B2B2mf}
f(v) = \Theta\left( \frac {\Area(P'(v))} {\delta^2} \right)
\end{equation}
where 
$$P'(v)  = B_2 \cap B_2(v,\delta),$$
i.e. the  preimage of $v$, is the intersection of two circular balls. 
The first has radius $R=1$ with center$ (0,0)$ and the second has radius $r=\delta$ with center $v=(d,0)$,  
where $d=1+\delta-\sigma.$ 
The intersection region is  shaped like an asymmetric lens (Fig.~\ref{fig:B2 B2 Measure}).
The width of this lens is  $R - (d-r) = \sigma.$
Let $a$ denote the  height of the lens.

It is known, e.g., \cite{weissteinCirc}, that such an intersection satisfies
\begin{equation}
\lab{eq:adef}
a = 
\frac 1 d \sqrt {(-d+r-R)(-d-r+R)(-d+r+R)(d+r+R)}.
\end{equation}
Since  $\sigma \le \delta =r \le 1$, $ d = \Theta(1)$ and 
$$
\begin{array}{ccccccc}
-d +r -R &=&  (-1 - \delta + \sigma) + \delta -1 &=& -2 + \delta &=&  - \Theta(1),\\
-d -r +R &=& (-1 - \delta + \sigma) - \delta +1 &=& -2\delta + \sigma &=&  - \Theta(\delta),\\
-d +r +R &=& (-1 - \delta + \sigma) + \delta +1 &=&  \sigma, \\
d +r +R &=&  (1 + \delta-  \sigma) + \delta +1 &=& 2 + 2 \delta  + \sigma &=&   \Theta(1).
\end{array}
$$
Plugging  these values into (\ref{eq:adef})  yields  $a = \Theta(\sqrt {\sigma \delta}).$
Because $P'(v)$ is the intersection of two circles,  $P'(v)$ is convex and thus contains the quadrilateral 
defined by the  four corners 
$$\left((1-\frac \sigma 2, 0\right),  \, \left( 0, \frac \alpha 2\right),\,  \left(1 + \frac \sigma 2, 0\right),\, \left(0, \frac \alpha 2\right).
$$
$P'(v)$  is also, by definition, contained in the rectangle 
$$ \left\{(x,y) \,:\,  1-\sigma/2 \le x \le 1+\sigma/2  \quad   \mbox{and}  \quad - a/2 \le y \le a/2)  \right\}.$$
Thus, 
$\Area(P'(v)) = \Theta(\sigma a) = \Theta(\sigma^{3/2} \sqrt \delta)$. Plugging into Eq.~\ref{eq:B2B2mf} yields

$$f(v) = \Theta
\left( 
\frac  {\Area(P'(v))} {\delta^2}
\right)
= \Theta
\left(  \left(
\frac  {\sigma} {\delta}
\right)^{3/2}\right).
$$
\end{proof}
\section{Analysis of $\Ballpq 1 2$}
\label{sec: B1B2}
This section derives cell v(c) and v(d) in Theorem \ref{thm: main}, that is,  if $n$  points are chosen from 
$\bfd = \Ballpq 1 2$ and  
$\frac 1  {\sqrt n} \le \delta \le n^{1/26}$ then 
$\EMN = \Theta \left(  \frac {n^{2/7}}  { \delta^{3/7}}  \right)$,  while if
$ n^{1/26} \le \delta \le \sqrt n $ then 
$\EMN = \Theta \left(  \frac {n^{1/4}}  { \delta^{1/2}}  \right)$.
Applying Lemma  \ref {lem: scaling}   gives a full analysis for the case $\bfd = \Ballpq 2 1$ as well.

From Corollary \ref{cor: Quadrants} it suffices to analyze
$\EXP{\MAX(S_n \cap O_1)},$  the maxima in the first octant.

  \begin{figure}[t]
\begin{center}
\includegraphics[width=9cm]{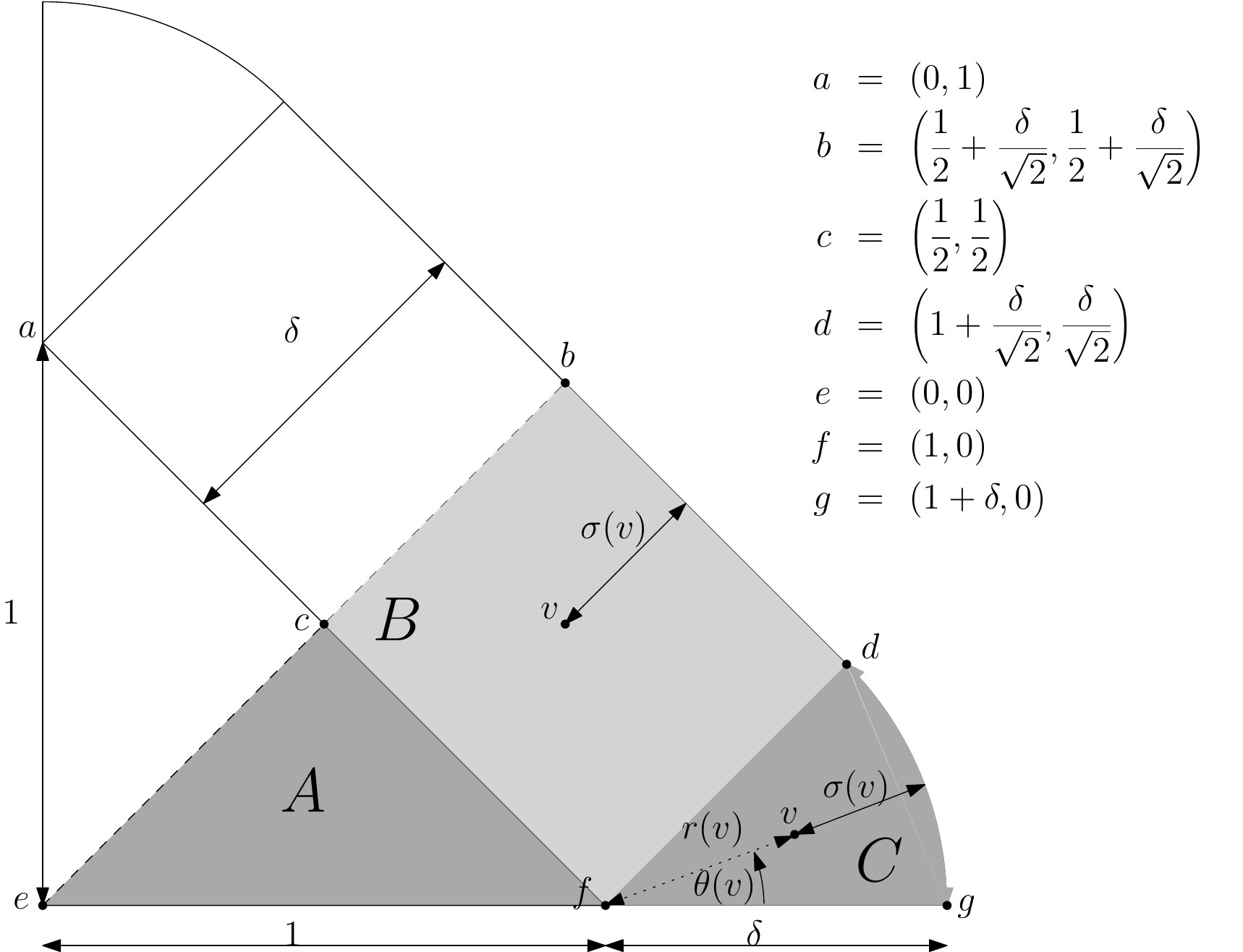}
\end{center}
\caption{Illustration of $Q_1$, the 
 three regions into which the first octant is partitioned  for  $\bfd = \Ballpq 1 2$ and associated variables.  The example illustrates the case $\delta < 1.$  If $\delta >> 1$ then $C$ could be much larger than  $A$ and $B$.
} %
\label{fig:B2 B2 f1}
\rule{5.5in}{0.5pt}
\end{figure}

\begin{Definition}
\lab{def:B1B2 Regionsa}
Let $D= B_1+ \delta B_2$ be the support of $\bfd$ and $D_{1,1} = D \cap O_1$ the support restricted to the first octant.  
Partition $D_{1,1}$ into $A,B,C$ as follows
\begin{eqnarray*}
A &=& D_{1,1} \cap    B_1,\\
B &=&  (D_{1,1} \setminus A) \cap \{u \in \Re^2\,:\,  u.x-u.y \le 1\},\\
C &=&  (D_{1,1} \setminus A) \cap \{u \in \Re^2\,:\,  u.x-u.y \ge 1\}.
\end{eqnarray*}
For $v \in A \cup B $ define
$$\sigma(v) =  \frac 1 {\sqrt 2} \left(1 + \sqrt 2 \delta - (x+y)\right).$$
For $v \in C$ define
\begin{eqnarray*}
r(v) &=&  \sqrt {(v.x-1)^2 + (v.y)^2},\\
\sigma(v) &=& \delta - r(v),\\
\theta(v) &=& \arctan \left( \frac  {v.x-1} {v.y}\right).
\end{eqnarray*}

$$\sigma(v) = 1 + \delta - ||v||,\quad
\theta(v) = \arctan \left( \frac {v.y} {v.x}\right).
$$
Note that in both cases $\sigma(v)$ is the distance from $v$ to the boundary of $D_{1,1}.$ If $v \in C$, $\theta(v)$ is the angle formed with  the $x$ axis by the line connecting $(1,0)$  to $v$.
\end{Definition}

Because it is the  convolution of two very different distributions the density of $\bfd$ is  does not have a clean description.  The next lemma encapsulates  properties that can be used  to derive the behavior of $\EMN$.

\begin{Lemma}
\lab{lem:B1B2 measure}
Let $\bfd = \Ballpq 1 2$  
and $v \in D_{1,1}.$

\begin{itemize}
\item  If $v \in A$, then 
$$f(v) =\left\{
\begin{array}{ll}
\Theta (1) & \mbox{if $ \delta \le 1$},\\
\Theta\left( \frac  1 {\delta^2} \right)& \mbox{if $ \delta > 1$}.
\end{array}
\right.
$$
\item If $v \in B$, then
\begin{equation}
\label{eq: B1B2 measureB}
f(v)
=\left\{
\begin{array}{ll}
\Theta\left(\frac 1 {\delta^2} \right) & \mbox{if  $\sigma(v) = \Omega(1)$},\\
\Theta \left(\frac  {\sigma(v)} {\delta^2} \right) & \mbox{if $\delta \sigma(v) = \Omega(1)$ and  $\sigma(v) = O(1)$},\\
\Theta  \left(  \left( \frac  {\sigma(v)} {\delta}  \right)^{3/2} \right)& \mbox{if $\delta \sigma(v) = O(1)$ and  $\sigma(v) = O(1).$}
\end{array}
\right.
\end{equation}

\item If $v \in C$, then

\begin{enumerate}
\item If $\theta(v) =\frac \pi 4$ then $f(v)$ is  as defined in Eq.~\ref{eq: B1B2 measureB}
\medskip

\item Fix $d > 0$ and  set $\gamma(v)  = \frac \pi 4 - \theta(v)$. If   $\sigma(v)< d \delta$, then 
\begin{equation}
\label{eq: B1B2 measureC}
f(v) = 
 \left\{
 \begin{array}{ll}
 \Theta\left( \frac {\min(\sigma(v),1) \cdot \min\left( \sqrt {\delta \sigma(v)},1)\right)} {\delta^2}\right) & \mbox{if  $ \sigma(v)  = \Omega(\delta (\gamma(v))^2)$},\\
 \Theta\left( \frac {\min(\sigma(v),1) \cdot \min\left( \frac  {\sigma(v)}  {\gamma(v)},1\right)} {\delta^2}\right) & \mbox{if  $ \sigma(v)  = O(\delta (\gamma(v))^2)$}.\\
 \end{array}
 \right.
 \end{equation}
\end{enumerate}
\end{itemize}

\end{Lemma}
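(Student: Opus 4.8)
The plan is to reduce the entire statement to a single plane-geometry estimate. By Eq.~\ref{eq: fdef} of Lemma~\ref{lem: measure integral}, $f(v)$ equals the fixed constant $1/(a_1 a_2)$ times $\Area(P'(v))/\delta^2$, where $P'(v)=B_2(v,\delta)\cap B_1$ is the intersection of the Euclidean disk of radius $\delta$ centred at $v$ with the unit $L_1$ ball (the square with vertices $(\pm1,0),(0,\pm1)$). So all that is needed is, in each of the three regions, matching lower and upper bounds on $\Area(P'(v))$. As in the analyses of $\Ballpq 1 1$ and $\Ballpq 2 2$, I would get these by bounding $P'(v)$ below by an inscribed triangle or quadrilateral spanned by a few explicitly chosen boundary points and above by a circumscribed rectangle or parallelogram; the only elementary facts used are that a circular cap of a disk of radius $R$ cut off at depth $h\in[0,R]$ has area $\Theta(\sqrt R\, h^{3/2})$ and chord length $\Theta(\sqrt{Rh})$.

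\emph{Regions $A$ and $B$.} For $v\in A=D_{1,1}\cap B_1$ with $\delta\le 1$, the claim $f(v)=\Theta(1)$ is exactly Lemma~\ref{lem:easy mu}(b) with $\kappa=1$; for $\delta>1$, $B_1$ lies inside a unit-radius ball so $B_2(v,\delta)$ covers a constant fraction of $B_1$, whence $\Area(P'(v))=\Theta(\Area(B_1))=\Theta(1)$ and $f(v)=\Theta(1/\delta^2)$. The region $B$ is defined precisely so that the foot of the perpendicular from $v$ to the line $\{x+y=1\}$ lands on the corresponding edge of $B_1$, with $v$ at distance $\delta-\sigma(v)$ outside that line; hence $P'(v)$ is, up to clipping by the two neighbouring edges of $B_1$ (each of length $\sqrt2$), a circular cap of radius $\delta$ and depth $\sigma(v)$. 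Its chord has length $\Theta(\sqrt{\delta\sigma(v)})$, so when $\delta\sigma(v)=O(1)$ the cap fits inside $B_1$ and $\Area(P'(v))=\Theta(\sqrt\delta\,\sigma(v)^{3/2})$; when $\delta\sigma(v)=\Omega(1)$ but $\sigma(v)=O(1)$ the cap is clipped to a strip of length $\Theta(1)$ and width $\Theta(\sigma(v))$, so $\Area(P'(v))=\Theta(\sigma(v))$; and when $\sigma(v)=\Omega(1)$ the disk already covers a constant fraction of $B_1$, so $\Area(P'(v))=\Theta(1)$. Dividing by $\delta^2$ gives the three cases of Eq.~\ref{eq: B1B2 measureB}.

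\emph{Region $C$.} The relevant feature of $\partial B_1$ near $v\in C$ is the right-angle corner at $(1,0)$, the wedge bounded by $\{x+y=1\}$ and $\{x-y=1\}$, and $v$ lies outside this wedge at distance $r(v)=\delta-\sigma(v)$ from the apex $(1,0)$, with the ray from $(1,0)$ to $v$ making angle $\gamma(v)$ with the diagonal $\{x-y=1\}$; so $P'(v)$ is the intersection of the disk with this wedge, clipped further by the rest of $B_1$ only when $\sigma(v)$ or the chord reaches the diamond scale (which is what the $\min(\cdot,1)$ factors in Eq.~\ref{eq: B1B2 measureC} record). When $\gamma(v)=0$ the foot of the perpendicular from $v$ to $\{x+y=1\}$ is at the apex, the cap is split symmetrically by the second edge, and the estimate reduces to that of region $B$; this is case~1. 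In general a short computation shows the disk penetrates $\{x+y\le1\}$ to depth $\Theta(\sigma(v)+\delta\gamma(v)^2)$. If $\sigma(v)=\Omega(\delta\gamma(v)^2)$ this depth is $\Theta(\sigma(v))$, a constant fraction of the resulting cap lies in the wedge, and $\Area(P'(v))=\Theta(\min(\sigma(v),1)\cdot\min(\sqrt{\delta\sigma(v)},1))$. If $\sigma(v)=O(\delta\gamma(v)^2)$ the cap is cut by the second edge of the wedge near the apex, leaving a parabolic sliver of extent $\Theta(\sigma(v)/\gamma(v))$ along $\{x+y=1\}$ and depth $\Theta(\sigma(v))$, so $\Area(P'(v))=\Theta(\min(\sigma(v),1)\cdot\min(\sigma(v)/\gamma(v),1))$. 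Dividing by $\delta^2$ yields Eq.~\ref{eq: B1B2 measureC}; the hypothesis $\sigma(v)<d\delta$ is used only to keep $r(v)=\Theta(\delta)$ throughout.

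\emph{The main obstacle.} The crux will be region $C$, where one must handle disk-versus-wedge intersections that are clipped by zero, one, or two sides of the wedge and, independently, may or may not reach the outer part of the diamond, and verify that the transition between the two regimes is exactly at $\sigma(v)=\Theta(\delta\gamma(v)^2)$, with the inscribed and circumscribed polygons used for the two-sided bounds matching up to constants in every sub-case. Regions $A$ and $B$, and the verification that the foot-of-perpendicular dichotomy correctly delimits $B$ from $C$, are routine by comparison.
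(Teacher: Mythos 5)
Your proposal is correct and follows essentially the same route as the paper: reduce everything to estimating $\Area(P'(v))/\delta^2$ by elementary chord/cap geometry in each of the three regions, with two-sided bounds from inscribed and circumscribed polygons. The only cosmetic difference is in region $C$, where you phrase the estimate via the cap's penetration depth $\Theta(\sigma+\delta\gamma^2)$ and the chord's offset from the corner $(1,0)$, while the paper instead solves the two quadratics $\delta^2=r^2+2r\ell\,(\cos\theta\pm\sin\theta)+2\ell^2$ for the chord extents $\ell_1,\ell_2$ measured along the two edges of the wedge from that corner; these are the same two lengths arrived at slightly differently, and the transition at $\sigma=\Theta(\delta\gamma^2)$ emerges identically.
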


This immediately implies 

\begin{Corollary}
\label{cor: B1B2theta}
Fix $d >0.$ If $v \in C$  and $\sigma\le d \delta$  then $f(v) = \Theta\left(g(\theta(v),\sigma(v))\right)$,  where $g(\theta,\sigma)$ satisfies
\begin{itemize}
\item $\forall \sigma,$  if  
$\theta_1 \le \theta_2$ then 
$g\left(\theta_1,\sigma\right)  = O(g\left(\theta_2,\sigma\right))$
\item 
$\forall \theta,\quad   g\left(\theta,2 \sigma\right)  = O(g\left(\theta,\sigma\right)) $,\\
where the constant in the $O()$  is independent of $\sigma, \delta.$ 
\end{itemize}
\end{Corollary}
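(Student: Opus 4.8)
The plan is to extract an explicit $g$ from Lemma~\ref{lem:B1B2 measure} and then verify the two growth conditions by elementary manipulation. Writing $\gamma = \gamma(\theta) = \frac\pi4-\theta$, I would define $g(\theta,\sigma)$ to be the right-hand side of Eq.~(\ref{eq: B1B2 measureC}), namely
$$g(\theta,\sigma) \;=\; \frac{1}{\delta^2}\,\min(\sigma,1)\cdot
\begin{cases}
\min\bigl(\sqrt{\delta\sigma},\,1\bigr) & \text{if } \sigma \ge \delta\gamma^2,\\[3pt]
\min\bigl(\sigma/\gamma,\,1\bigr) & \text{if } \sigma \le \delta\gamma^2 .
\end{cases}$$
The key elementary fact, which I would record up front, is the \emph{seam identity}: when $\sigma = \delta\gamma^2$ one has $\sqrt{\delta\sigma} = \sigma/\gamma = \delta\gamma$, so the two branches agree there; hence $g$ is well defined up to a constant, and at $\theta = \frac\pi4$ (where $\gamma=0$ and only the first branch applies) the definition is consistent with part (1) of the $v\in C$ case of Lemma~\ref{lem:B1B2 measure}, i.e.\ with Eq.~(\ref{eq: B1B2 measureB}), as one sees by checking the subcases $\sigma\ge 1$, $\delta\sigma\ge 1$, $\delta\sigma<1$. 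With this $g$, part (2) of the $v\in C$ case of Lemma~\ref{lem:B1B2 measure} is exactly the statement $f(v)=\Theta(g(\theta(v),\sigma(v)))$ whenever $\sigma(v)\le d\delta$.

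Next I would establish three properties of each individual branch, all with constants depending only on $d$ (in particular not on $\sigma$ or $\delta$): (a) each branch is non-decreasing in $\sigma$ for fixed $\theta$; (b) each branch satisfies the doubling bound $(\text{branch at }2\sigma)\le 4\,(\text{branch at }\sigma)$, because it has the form $\delta^{-2}\min(\sigma,1)\min(h(\sigma),1)$ with $h\in\{\sqrt{\delta\sigma},\,\sigma/\gamma\}$, $h(2\sigma)\le 2h(\sigma)$, $\min(2\sigma,1)\le 2\min(\sigma,1)$, and $t\mapsto\min(t,1)$ non-decreasing; (c) for fixed $\sigma$ each branch is non-increasing in $\gamma$ — trivially for the first branch, and for the second because $\gamma\mapsto\min(\sigma/\gamma,1)$ is non-increasing.

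From (c) together with the seam identity, $g(\cdot,\sigma)$ is, as a function of $\gamma$, non-increasing across both pieces, so $\theta_1\le\theta_2$ (equivalently $\gamma(\theta_1)\ge\gamma(\theta_2)$) gives $g(\theta_1,\sigma)=O(g(\theta_2,\sigma))$. From (a) and the seam identity the map $\sigma\mapsto g(\theta,\sigma)$ is itself non-decreasing and continuous at the threshold, and then the $\sigma$-doubling $g(\theta,2\sigma)=O(g(\theta,\sigma))$ follows: if $\sigma$ and $2\sigma$ lie in the same branch, use (b) directly; if they straddle the threshold $\sigma=\delta\gamma^2$, then $\delta\gamma^2\in[\sigma,2\sigma]$, and chaining monotonicity (a), the branch-doubling bound (b) applied to each branch, and the seam identity at $\delta\gamma^2$ yields $g(\theta,2\sigma)\le 16\,g(\theta,\sigma)$. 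Assembling these pieces proves the corollary, with all implicit constants absolute.

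The only real difficulty is bookkeeping: since $g$ is genuinely piecewise, every comparison has to be checked both within each branch and across the threshold $\sigma\asymp\delta\gamma^2$, and one must keep the $\Theta$ hidden in ``$f=\Theta(g)$'' from contaminating the claimed uniformity of the constants. The seam identity is precisely what makes the two branches glue together with only a constant loss, so I would isolate it at the start and invoke it at each crossover rather than re-deriving it case by case.
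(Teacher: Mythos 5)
Your proof is correct and fills in exactly what the paper leaves implicit (the paper states the corollary with only the phrase ``This immediately implies'' after Lemma~\ref{lem:B1B2 measure}, giving no argument). Your explicit choice of $g$ as the right-hand side of Eq.~(\ref{eq: B1B2 measureC}), the seam identity $\sqrt{\delta\sigma}=\sigma/\gamma=\delta\gamma$ at $\sigma=\delta\gamma^2$, and the within-branch monotonicity/doubling checks are exactly the bookkeeping that the paper's ``immediately'' is gesturing at; your cross-threshold chaining via the seam point with the explicit constant $16$ is clean, and your consistency check of the $\theta=\pi/4$ branch against Eq.~(\ref{eq: B1B2 measureB}) (valid under $\sigma\le d\delta$) is a worthwhile sanity check that the paper omits.
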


The lower bound is easy to derive:
\begin{Lemma}
\label{lem: B1B2LB}
Let $S_n$ be $n$ points chosen from the distribution $\bfd = \Ballpq 1 2$ with $\frac 1 {\sqrt n} \le \delta \le \sqrt n$. Then
$$\EMN = 
\left\{\begin{array}{ll}
\Omega \left(  \frac {n^{2/7}}  { \delta^{3/7}}  \right), & \mbox{if  $\frac 1 {\sqrt n} \le \delta \le n^{1/26}$},\\
 \Omega \left( \sqrt \delta   {n^{1/4}} \right), & \mbox{if  $n^{1/26} \le \delta \le \sqrt n$}.\\
\end{array}
\right.
$$
\end{Lemma}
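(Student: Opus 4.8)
The plan is to apply the Lower Bound Lemma (Lemma~\ref{lem: lb}): for each of the two $\delta$-ranges it suffices to exhibit a family of $m$ pairwise disjoint dominant regions of the form $P(v_i)\subseteq D_{1,1}$, each of measure $\Omega(1/n)$, with $m$ matching the claimed bound; Corollary~\ref{cor: Quadrants} justifies working inside the first octant. In fact I will produce \emph{two} such families, one living in region $B$ and one in region $C$, and keep whichever is larger. Since $n^{2/7}/\delta^{3/7}$ and $\sqrt\delta\,n^{1/4}$ cross exactly at $\delta=n^{1/26}$ (with common value $n^{7/26}$), this gives $\EMN=\Omega\!\left(\max\!\big(n^{2/7}/\delta^{3/7},\ \sqrt\delta\,n^{1/4}\big)\right)$, which is precisely the asserted piecewise bound.

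For the region-$B$ family I simply transplant the lower bound of Lemma~\ref{lem: B2B2LB}. On the flat part of $\partial D_{1,1}$ bounding $B$, Lemma~\ref{lem:B1B2 measure} gives $f(v)=\Theta\big((\sigma(v)/\delta)^{3/2}\big)$ whenever $\delta\,\sigma(v)=O(1)$ and $\sigma(v)=O(1)$, exactly the $\Ballpq 2 2$ profile. Choosing $\sigma=\delta^{3/7}/n^{2/7}$, the hypothesis $\delta\le n^{1/26}$ makes $\delta\sigma=\delta^{10/7}n^{-2/7}=o(1)$ and $\sigma=o(1)$, so this regime applies; placing points $v_i$ at depth $\sigma$, equally spaced $\Theta(\sigma)$ apart along a constant-length middle sub-arc of that boundary, each $P(v_i)$ is, up to a constant factor, a right isosceles triangle of leg $\Theta(\sigma)$ on which $f=\Theta((\sigma/\delta)^{3/2})$, whence $\mu(P(v_i))=\Theta\!\big(\sigma^{2}(\sigma/\delta)^{3/2}\big)=\Theta(\sigma^{7/2}/\delta^{3/2})=\Theta(1/n)$, and disjointness comes from the spacing as in Lemma~\ref{lem: B2B2LB}. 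This yields $m=\Theta(1/\sigma)=\Theta(n^{2/7}/\delta^{3/7})$. (For $\delta$ within a constant factor of $1/\sqrt n$ the condition $\sigma<\delta$ may fail; there one invokes Lemma~\ref{lem: limiting} instead, the two bounds agreeing at $\delta=\Theta(1/\sqrt n)$.)

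For the region-$C$ family I place points along the radius-$\delta$ boundary arc of $C$ (the rounded corner of $B_1$ at $(1,0)$), restricted to a sub-arc of length $\Theta(\delta)$ on which the angle $\theta(v)$ is bounded away from $\pi/4$ and $\pi/2$, so $\gamma(v)=\Theta(1)$. For $v$ at depth $\sigma(v)=\sigma$ there, Corollary~\ref{cor: B1B2theta} and the case $v\in C$ of Lemma~\ref{lem:B1B2 measure} give $f(v)=\Theta(\sigma^{2}/\delta^{2})$ (one checks $\sigma<d\delta$ and $\sigma=O(\delta\gamma^2)$ for the relevant fixed $d$, using $1/\sqrt n\le\delta\le\sqrt n$), while a direct geometric estimate — the arc has curvature $1/\delta\ll 1/\sigma$ and local tangent bounded away from horizontal and vertical, and the preimage $P'(v)=(v+\delta B_2)\cap B_1$ is the intersection of a locally flat disk boundary with the $\pi/2$-wedge of $B_1$ at $(1,0)$ — gives $\Area(P(v))=\Theta(\sigma^{2})$ and $\Area(P'(v))=\Theta(\sigma^2)$. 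Hence $\mu(P(v))=\Theta(\sigma^{4}/\delta^{2})$; taking $\sigma=\sqrt\delta/n^{1/4}$ makes this $\Theta(1/n)$, and spacing consecutive $v_i$ by arc length $\Theta(\sigma)$ along the $\Theta(\delta)$-long sub-arc gives $m=\Theta(\delta/\sigma)=\Theta(\sqrt\delta\,n^{1/4})$ pairwise disjoint dominant regions. (When $\delta=\Theta(\sqrt n)$ the depth $\sigma$ is no longer $o(1)$ and this breaks; there $\sqrt\delta n^{1/4}=\Theta(\sqrt n)$ and one gets $\EMN=\Theta(\sqrt n)$ from Lemma~\ref{lem: scaling} together with Lemma~\ref{lem: limiting}.) Applying Lemma~\ref{lem: lb} to the better of the two families finishes the proof.

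The main obstacle is the region-$C$ analysis, since, unlike $\Ballpq 2 2$, the density is not radially symmetric: one must (i) pin down the angular sub-range where the clean estimate $f(v)=\Theta(\sigma^{2}/\delta^{2})$ of Lemma~\ref{lem:B1B2 measure} holds and remains valid for the chosen $\sigma=\sqrt\delta/n^{1/4}$ (this is where the threshold $n^{1/26}$ enters and where the constant $\gamma$ must be fixed); (ii) verify $\Area(P(v))=\Theta(\sigma^{2})$ by tracking how the $\delta$-radius arc and the straight octant edge $u.x-u.y=1$ truncate the staircase corner $P(v)$; and (iii) choose the spacing constant so the $P(v_i)$ are genuinely disjoint. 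The handling of the two $\delta$-endpoints ($\delta\approx 1/\sqrt n$ for the $B$-family, $\delta\approx\sqrt n$ for the $C$-family) is routine once one appeals to Lemmas~\ref{lem: limiting} and~\ref{lem: scaling}.
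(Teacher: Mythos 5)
Your proposal is correct and takes essentially the same approach as the paper's proof: you build families of pairwise disjoint dominant regions of measure $\Theta(1/n)$, one along the flat diagonal boundary of $A\cup B$ with $\sigma=\delta^{3/7}/n^{2/7}$ and one along the circular arc of $C$ at bounded angular distance from $\theta=\pi/4$ with $\sigma=\sqrt{\delta}/n^{1/4}$, invoke Lemma~\ref{lem: lb} on each, and take the larger bound, which switches precisely at $\delta=n^{1/26}$. This mirrors the paper's parts (a) and (b), including the use of the $v\in B$ and $v\in C$ branches of Lemma~\ref{lem:B1B2 measure} to compute $f(v)$ and hence $\mu(P(v_i))$.
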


\begin{proof} 

\par\noindent\underline{(a) $\EXP{|\MAX(S_n \cap (A \cup B))|}$  when  $ \frac 1 {\sqrt n}\le \delta \le n^{1/5}:$}

Fix $\sigma$  with value to be determined later. Set $m = \lfloor 1/\sigma \rfloor$.\\
For $ i = 0,\ldots m-1$  define
$$ x_i =\frac 1 2 + \frac \delta {\sqrt 2} +  i \sigma,
\quad \quad
y_i = \frac 1 2 + \frac \delta {\sqrt 2} -  x_i - \sigma,
\quad\quad  p_i = (x_i,y_i).
$$
The $p_i$ are $m$ equally spaced points along the line $\sigma(x) = 1 + \sqrt 2 \delta-  \sigma$, which is parallel and distance $\sigma$ from the  line $x+y = 1 + \sqrt 2 \delta$ on the boundary of $D.$   Thus  the $P(p_i)$ are isosceles triangles.
Note that  $\forall  i\not =j,$  $P(p_i) \cap P(p_{j}) = \emptyset$. Since all $P(p_i)$ are dominant regions, if we could show  
$\forall i,\,  \mu(P(p_i) = \Theta(\frac 1 n)$
Lemma \ref {lem: lb}
implies that $\EXP{M_n} = \Omega(m)$.

Set 
$$\hat P(v) = P(v) \cap \{v' \in A \cup B \,:\,  \sigma(v') \ge \sigma(v)/2\}.$$
Then, from Eq.~\ref{eq: B1B2 measureB},  
$$\forall v' \in P(v),\,  f(v') = O (f(v)
\quad\mbox{and}\quad
\forall v' \in \hat P(v),\,  f(v') = \theta(f(v)).$$
Since $\Area(P(p_i)) = \frac 1 2 \sigma^2$ and $ \Area(\hat P(p_i) = \frac 1 8 \sigma^2$, this gives
$$\mu(P(p_i)) = \Theta\Bigl(\Area(P(p_i))\, f((p_i))\Bigr) =  \Theta\Bigl(\sigma^2  f((p_i))\Bigr).$$ 

%

Now set $\sigma = \frac {\delta^{3/7}} {n^{2/7}}.$
Then $1 \le \delta \le n^{1/5}$ implies
$$
\sigma  =\frac {\delta^{3/7}}{n^{2/7}} \le n^{\frac {3} {35} - \frac {10} {35}} \le 1
\quad\mbox{and}\quad {\delta \sigma} = { \frac  {\delta^{10/7}} {n^{2/7}}}\le 1.
$$
This satisfies the final condition in Eq.~\ref{eq: B1B2 measureB}, so
$$\mu(P(p_i))  =  \Theta\Bigl(\sigma^2  f((p_i))\Bigr) =  
\Theta\left(\sigma^2  \left( \frac  {\sigma} {\delta}  \right)^{3/2} \right)
= \Theta\left(\frac  {\sigma^{7/2}} {\delta^{3/2}}\right)
= \Theta\left(\frac   {\delta^{3/2}} n  \frac 1  {\delta^{3/2}}\right) = \Theta\left(\frac 1 n\right).$$

Then Lemma \ref {lem: lb}
 implies that 
$$\EXP{M_n} = \Omega(m)  =  \Omega\left( \frac 1 n \right) = \Omega\left( \frac {n^{2/7}}  {\delta^{3/7}}  \right).$$

\par\noindent\underline{(b) $\EXP{|\MAX(S_n \cap C|}$ when 
$ 1 \le \delta  \le \sqrt n:$}

Set $\sigma = \frac {\sqrt \delta} {2 n^{1/4}}.$

Let $v \in C$ be such that $\sigma(v) \le \sigma$ and $\theta(v) \le \pi/8$ so $\pi/8 \le \gamma(v) \le \pi/ 4.$
Note that this implies $\sigma(v) \le \min (1,\delta/2)$ and $\gamma(v) = \Theta(1)$ so, from Eq.~\ref{eq: B1B2 measureC}
$$f(v) = \Theta\left(\frac {(\sigma(v))^2} {\delta^2}\right).$$

The construction is very similar of the proof of  Lemma \ref {lem: B2B2LB} so we only sketch the details.

First let $v \in C$ be such that $\sigma(v) = \sigma$ and $\theta(v) \le \pi/8.$

Recall $P(v)$ as 
introduced in Definition  \ref {def: PDdef}. 
  Let $v'$, $v''$ be the two other vertices of $P(v)$ with $v.x = v''.x$, 
$v.y =  v'.y$,  $\alpha(v) = v''.y - v.y$ and $ \beta(v)  = v'.x - v.x$.  
Note that  $\pi/8 \le \theta(v) \le \pi/4$   immediately implies
$\beta(v) = \Theta(\sigma(v))$ and $\alpha(v) = \Theta(\sigma(v)).$ Let $c > 0$ be such that $\beta(v) \le c \sigma(v)$ for all such $v.$

Following the same steps as in the derivation of Eq.~\ref{eq:B2B2 1n prob},
$$\mu(P(v)) = \Theta\left(
(\sigma(v))^2 \frac {(\sigma(v))^2} {\delta^2}
\right)
=
\Theta\left(
 \frac {\sigma^4} {\delta^2}
\right)
= \Theta\left(
 \frac {1} {n}
\right)
.$$

Set $x_\ell = 1 + \delta \cos (\pi/8)$ and  $x_r = 1 + \delta\cos (\pi/4)$.  Note that $x_r - x_\ell=  \Theta( \delta).$

Let $m = \lfloor \frac  {x_r - x_\ell} {c \sigma}\rfloor$.  Since $x_r - x_\ell= \Theta(1), $  $m = \Theta\left(\frac \delta \sigma \right) = \Theta\left(     \sqrt \delta n^{1/4}  \right).$
Now set, $x_0=x_l$ and, for  $i=1,\ldots,m,$
\begin{eqnarray*}
x_i &=& x_l + i \sigma,\\
v_i &=&  (x_i,  \, \sqrt {(\delta- \sigma)^2 - (x_{i})^2}).
\end{eqnarray*}

By construction,   $\sigma(v_i) = \Theta(\sigma)$,    $\theta(v_i) \in \left[\frac \pi 8, \frac \pi 4\right]$ and thus
$\mu(P(v_i))= \Theta(1/n)$.  Furthermore, each $P(v_i)$ is a dominant region and, 
since by construction,  $\forall i,$ $x_i + \beta(v_i)  \le x_{i+1},$ 
the $P(v_i)$ are pairwise disjoint.  Lemma \ref{lem: lb} then   
 proves  the required 
$$\EXP{M_n}= \Omega(m) = \Omega\left(     \sqrt \delta n^{1/4}  \right).$$

To complete the proof of the lemma simply note that
$$
\sqrt \delta n^{1/4} \le  \frac {n^{2/7}} {\delta^{3/7}} 
 \quad \Leftrightarrow \quad 
\delta^{\frac {26} {28}} = \delta^{\frac {13} {14}} = \delta^{\frac 1 2 + \frac 3 7} \le n^{\frac 2 7 - \frac 1 4} = n^{\frac 1 {28}}
 \quad \Leftrightarrow \quad    \delta \le n^{\frac 1 {26}}.
$$
\end{proof}

As usual, the   upper  bound is  more technical.
\begin{Lemma}
\label{lem: B1B2UB}
Let $S_n$ be $n$ points chosen from the distribution $\bfd = \Ballpq 1 2$ with $\frac 1 {\sqrt n} \le \delta \le 1$. Then
\begin{equation}
\label{eq:B1B2UPstate}
\EMN = 
\left\{\begin{array}{ll}
O \left(  \frac {n^{2/7}}  { \delta^{3/7}}  \right), & \mbox{if  $\frac 1 {\sqrt n} \le \delta \le n^{1/26}$},\\
O \left( \sqrt \delta   {n^{1/4}} \right), & \mbox{if  $n^{1/26} \le \delta \le \sqrt n$}.\\
\end{array}
\right.
\end{equation}
\end{Lemma}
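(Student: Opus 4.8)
The plan is to run the same five‑step template used for $\Ballpq 1 1$ and $\Ballpq 2 2$, now on the first‑octant partition $D_{1,1}=A\cup B\cup C$ of Definition~\ref{def:B1B2 Regionsa}. By Corollary~\ref{cor: Quadrants} it is enough to bound $\EXP{|\MAX(S_n)\cap D_{1,1}|}$, and I would split it as $\EXP{|\MAX(S_n)\cap(A\cup B)|}+\EXP{|\MAX(S_n)\cap C|}$ and prove separately that $\EXP{|\MAX(S_n)\cap(A\cup B)|}=O(n^{2/7}/\delta^{3/7})$ and $\EXP{|\MAX(S_n)\cap C|}=O(n^{2/7}/\delta^{3/7}+\sqrt\delta\,n^{1/4})$. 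Adding these and recalling the elementary equivalence worked out at the end of the proof of Lemma~\ref{lem: B1B2LB}, namely that $\sqrt\delta\,n^{1/4}\le n^{2/7}/\delta^{3/7}$ exactly when $\delta\le n^{1/26}$, gives both lines of Eq.~\ref{eq:B1B2UPstate}. Throughout I may assume $\delta$ is at least a suitable constant multiple of $1/\sqrt n$, the complementary range being covered by Lemma~\ref{lem: limiting}.

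For $A\cup B$ I would copy the $\Ballpq 1 1$ upper bound almost verbatim. Because the outer boundary of $B$ is the straight line $u.x+u.y=1+\sqrt2\,\delta$, the dominance region $P(p)$ of any $p\in A\cup B$ is contained in a right isosceles triangle $T(p)$ with legs $\Theta(\sigma(p))$, exactly as in Section~\ref{sec: B1B1}. Fix $\sigma_0=\delta^{3/7}/n^{2/7}$, cut $A\cup B$ into $m=\Theta(1/\sigma_0)$ vertical strips of width $\sigma_0$, and in each strip separate the topmost triangle $T(p_i)$ (flush with the outer boundary) from the remainder $B_i(\cdot)$, which runs downward through $B$ and into $A$. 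By Lemma~\ref{lem: basic mu} the top triangle contributes at most $n\,\mu(T(p_i))$ expected maxima, while the remainder is swept by the top triangle $A_i(\cdot)=T(p_{i+1}(\cdot))$ of the neighbouring strip through the Sweep Lemma. The two facts to establish are (i) $\mu(T(p_i))=O(1/n)$ for every $i$, and (ii) the measure‑stability estimate $\mu(T(p_i(y_i-t)))=\Theta(\mu(T(p_{i+1}(y_i-t))))$ for all $t$, the analogue of Eq.~\ref{eq:B1B1UPmain}; both follow by integrating the density of Lemma~\ref{lem:B1B2 measure} over such triangles. The only new wrinkle relative to $\Ballpq 1 1$ is that the density in $B$ passes through three regimes, $\Theta(1/\delta^2)$, $\Theta(\sigma/\delta^2)$ and $\Theta((\sigma/\delta)^{3/2})$, and then flattens inside $A$, so the case split on $t$ has more branches; but inside each branch the density is monotone in $\sigma$ while $\sigma(p_i(\cdot))$ and $\sigma(p_{i+1}(\cdot))$ differ by only $\sigma_0$, so consecutive triangles stay within a constant factor, and at the regime boundaries ($\sigma\sim1/\delta$ and $\sigma\sim1$) the two one‑sided formulas agree up to constants. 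With this $\sigma_0$ one checks $\mu(T(p_i))=\Theta(\sigma_0^{7/2}/\delta^{3/2})=\Theta(1/n)$ in the governing regime and smaller elsewhere, so each strip carries $O(1)$ maxima and summing gives $O(m)=O(n^{2/7}/\delta^{3/7})$.

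For $C$ I would copy the $\Ballpq 2 2$ upper bound, using the radial structure centred at the corner $(1,0)$: choose a sector count $t$ covering the bounded angular range of $C$, partition $C$ into sectors $R_i$, and in each sector sweep inward in the radial depth $\sigma$, taking as mirror the adjacent sector's outer crust via the dropped‑vertical‑line construction from the proof of Lemma~\ref{lem: B2B2UB}. Corollary~\ref{cor: B1B2theta} is tailored exactly for this step: its monotonicity of the density in $\theta$ makes the measure of a sector at depth $\sigma$ comparable, up to a constant, to that of its neighbour, which is the hypothesis $\mu(B(t))=O(\mu(A(t)))$ of the Mirror/Sweep lemmas, while its doubling bound $g(\theta,2\sigma)=O(g(\theta,\sigma))$ absorbs the $O(1/t)$ discrepancy in depth between vertically aligned points of adjacent sectors --- precisely the role played by radial symmetry in $\Ballpq 2 2$. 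Integrating the density of Lemma~\ref{lem:B1B2 measure} over a sector of depth $\sigma$ and fixing $t$ so that the outermost sub‑sector has measure $\Theta(1/n)$ produces a count of order $n^{2/7}/\delta^{3/7}$ on the part of $C$ near $\theta=\pi/4$ (where the density mimics region $B$) and of order $\sqrt\delta\,n^{1/4}$ on the part bounded away from $\pi/4$ (where the blurred‑corner density of order $(\sigma(v)/\delta)^2$ governs), so every sector carries $O(1)$ maxima and $\EXP{|\MAX(S_n)\cap C|}=O(n^{2/7}/\delta^{3/7}+\sqrt\delta\,n^{1/4})$.

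The main obstacle is the combined bookkeeping behind step (ii): checking, uniformly over $t$ and across all three density regimes of $B$ together with the flat interior $A$, that the top triangles of consecutive strips have comparable measure, and then the analogous verification inside $C$, where the density depends on the two parameters $(\sigma,\theta)$ and itself splits according to whether $\sigma$ lies above or below $\delta\gamma(v)^2$. A secondary but genuine subtlety is the range $\delta\ge1$: there $\sigma_0$ can exceed $\delta$ (and even exceed $1$), so in $B$ the relevant density is the $\Theta(\sigma/\delta^2)$ or $\Theta(1/\delta^2)$ branch rather than $\Theta((\sigma/\delta)^{3/2})$, and the strip and sector counts must be recomputed and confirmed to still evaluate to $O(\sqrt\delta\,n^{1/4})$; this is the computation that actually produces the crossover at $\delta=n^{1/26}$, and one must in particular verify that the $C$‑contribution never overshoots $O(\sqrt\delta\,n^{1/4}+n^{2/7}/\delta^{3/7})$ so that both lines of Eq.~\ref{eq:B1B2UPstate} hold.
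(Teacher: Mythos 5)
Your high‑level architecture — decompose into $A\cup B$ and $C$, use the $\Ballpq 1 1$‑style strips for $A\cup B$ and the $\Ballpq 2 2$‑style radial sectors for $C$, invoke Corollary~\ref{cor: B1B2theta} to get the measure‑comparison hypothesis of the Sweep Lemma, and recombine via the $\delta=n^{1/26}$ crossover — matches the paper's. But two of the things you defer as ``secondary subtleties'' are in fact where the paper's proof introduces genuinely new structure, and the plan as you state it fails there.

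First, for $C$ in the regime $\delta\ge 1$ a single uniform sector count $t$ (``fixing $t$ so that the outermost sub‑sector has measure $\Theta(1/n)$'') cannot make every sector carry $O(1)$ expected maxima: the density on $C$ depends on $\gamma(v)=\frac\pi4-\theta(v)$ as well as $\sigma(v)$, switching between $\Theta\bigl((\sigma/\delta)^{3/2}\bigr)$ and $\Theta\bigl(\sigma^2/(\delta^2\gamma)\bigr)$ depending on whether $\sigma\gtrless\delta\gamma^2$, so the radial depth required to make a sector's outer crust have measure $\Theta(1/n)$ itself varies with $\gamma$. The paper resolves this with a dyadic decomposition of $C$ into wedges $E_0,E_1,\dots,E_k$ of geometrically shrinking angular width together with a residual wedge $Y$, and chooses a different sub‑sector count $m_i=\Theta\bigl(2^{-3i/4}\sqrt\delta\,n^{1/4}\bigr)$ in each, so that $\sum_i m_i=O(\sqrt\delta\,n^{1/4})$; this is the mechanism your outline does not have and cannot do without.

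Second, for $A\cup B$ the claim that the $\Ballpq 1 1$ argument transfers ``almost verbatim'' with $\sigma_0=\delta^{3/7}/n^{2/7}$ and yields $O(n^{2/7}/\delta^{3/7})$ across the whole range $1/\sqrt n\le\delta\le\sqrt n$ is not right. Already for $\delta>n^{1/5}$ you have $\delta\sigma_0>1$, so the governing branch of Lemma~\ref{lem:B1B2 measure} is $\Theta(\sigma/\delta^2)$, and $\mu(T(p_i))$ is no longer $\Theta(1/n)$ with your choice of $\sigma_0$; the paper instead switches to $\sigma_c=\delta^{2/3}/n^{1/3}$ and obtains the weaker but still sufficient bound $O(n^{1/3}/\delta^{2/3})$. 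More importantly, once $\delta\gtrsim 1$ the geometry of $A\cup B$ changes enough that the paper first proves that all but $O(1)$ expected maxima lie in a dominant corner region $P(h)$ (cases (a)(ii)), then does the strip analysis inside $P(h)$, and finally isolates a separate case (a)(iii) for $\sqrt n/\log n\le\delta\le\sqrt n$ where $\mu(P(h))$ is so small that even that reduction may fail and the whole of $A\cup B$ contributes only $O(\log^4 n)$. These reductions are not cosmetic; without them the strip‑sweep argument you describe does not close for large $\delta$, so the proposal as written has a genuine gap.
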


\begin{proof}
The proof will be split into two parts:  the first an upper bound on $v \in A \cup B$ and the second an upper bound on $v \in C.$
This will proceed via case analyses that show
$$(a) \ \EXP{\left| \MAXSN \cap (A \cup B) \right|} =
\left\{
\begin{array}{ll}
O\left(\frac   {n^{2/7}}  {\delta^{3/7}} \right) & \mbox{If $\frac 1{ \sqrt n} \le \delta \le \frac 1 {\sqrt 2}$}, \\
O\left( \frac {n^{2/7}} {\delta^{3/7}} \right)   & \mbox{If $\frac 1 {\sqrt 2} \le \delta \le \ n^{1/5}$},\\
O\left( \frac {n^{1/3}} {\delta^{2/3}}\right)  & \mbox{If $ n^{1/5}\le \delta \le \frac {\sqrt n} {\log n}$},\\
O(\log^4 n)  &\mbox{If $  \frac {\sqrt n} {\log n} \le \delta  \le \sqrt n$},
\end{array}
\right.
$$
and
$$(b) \ \EXP{\left| \MAXSN \cap C\right|} =
\left\{
\begin{array}{ll}
O\left(\frac   {n^{2/7}}  {\delta^{3/7}} \right) & \mbox{If $\frac 1{ \sqrt n} \le \delta \le 1$}, \\
O\left(\sqrt \delta n^{1/4}\right)  & \mbox{If $1  \le \delta \le  \sqrt n$}.
\end{array}
\right.
$$
Since $\frac {n^{1/3}} {\delta^{2/3}} = O\left(\sqrt \delta n^{1/4}\right) $  if $ n^{1/5} \le \delta$  and  
$\frac   {n^{2/7}}  {\delta^{3/7}} > \sqrt \delta n^{1/4}$ if and only if $ \delta < n^{1/26}$,  the proof  will follow.
\medskip

\par\noindent\underline{(a) $ v \in A \cup B:$}

First note, that from Lemma \ref{lem:B1B2 measure}, $\forall v,v' \in A \cup B,$
\begin{eqnarray}
\mbox{If $\sigma(v') \le \sigma(v)$}  & \Rightarrow& f(v') = O((f(v)). \label{eq:B1B2UP1}\\
\mbox{ If $\frac 1 2 \sigma(v) \le \sigma(v') \le \sigma(v)$}   & \Rightarrow& f(v') = \Theta((f(v)).\label{eq:B1B2UP2}
\end{eqnarray}

Now set $\sigma_b= \frac {\delta^{3/7}} {n^{2/7}}$ and $\sigma_c  =\frac {\delta^{2/3}}{n^{1/3}}.$

If $ \frac 1 {\sqrt n}  \le \delta \le n^{1/5},$ then 
$$
\sigma_b  =\frac {\delta^{3/7}}{n^{2/7}} \le n^{\frac {3} {35} - \frac {10} {35}} \le 1
\quad\mbox{and}\quad {\delta \sigma_b} = { \frac  {\delta^{10/7}} {n^{2/7}}}\le 1.
$$
This satisfies the final condition in Eq.~\ref{eq: B1B2 measureB} so, if $\sigma(v) = \Theta(\sigma_b)$,
\begin{equation}
\label{eq:B1B2UP8}
 \Theta\Bigl(\sigma^2(v)  f((v)\Bigr) =  
\Theta\left(\sigma_b^2  \left( \frac  {\sigma_b} {\delta}  \right)^{3/2} \right)
= \Theta\left(\frac  {\sigma_b^{7/2}} {\delta^{3/2}}\right)
= \Theta\left(\frac   {\delta^{3/2}} n  \frac 1  {\delta^{3/2}}\right) = \Theta\left(\frac 1 n\right).
\end{equation}

If $ n^{1/5} < \delta$, then  combining with the constraint  
$\delta \le  \sqrt n$ implies
$$
\sigma_c  =\frac {\delta^{2/3}}{n^{1/3}}  \le 1
\quad\mbox{and}\quad 
{\delta \sigma_c} = { \frac  {\delta^{5/3}} {n^{1/3}}}> 1.$$
This satisfies the middle   condition in Eq.~\ref{eq: B1B2 measureB} so, if $\sigma(v) = \Theta(\sigma_c)$,
\begin{equation}
\label{eq:B1B2UP9}
\Theta\Bigl(\sigma^2(v)  f((v))\Bigr) =  
\Theta\left(\sigma^2_c   \frac  {\sigma^3_c} {\delta^2}   \right)
= \Theta\left(\frac   {\delta^{2}} n  \frac 1  {\delta^{2}}\right) = \Theta\left(\frac 1 n\right).
\end{equation}

The analysis will be similar to that used to prove Lemma \ref{lem: B1B1UB}.  For an  appropriate value $\sigma$,   $A \cup B$ 
will be partitioned into $m = \Theta (1/\sigma)$ vertical {\em strips} of width $\sigma$ and, using the sweep Lemma,  showing  that each strip contains only $O(1)$ expected maxima.

Due to changes in the geometry of the support and distribution of  $\Ballpq 1 2$ as $\delta$ grows, the analysis is  split  into three  cases:  (a)(i) $\frac 1{ \sqrt n} \le \delta \le \frac 1 {\sqrt 2}$,  (a)(ii)  $\frac 1 {\sqrt 2} \le \delta \le \frac {\sqrt n} {\log n}$ and
(a)(iii) $  \frac {\sqrt n} {\log n} \le \delta  \le \sqrt n.$

\bigskip

\par\noindent\underline{(a)(i)  $\frac 1{ \sqrt n} \le \delta \le \frac 1 {\sqrt 2}:$}


\medskip
\begin{figure}[t]
\begin{center}
\includegraphics[width=10cm]{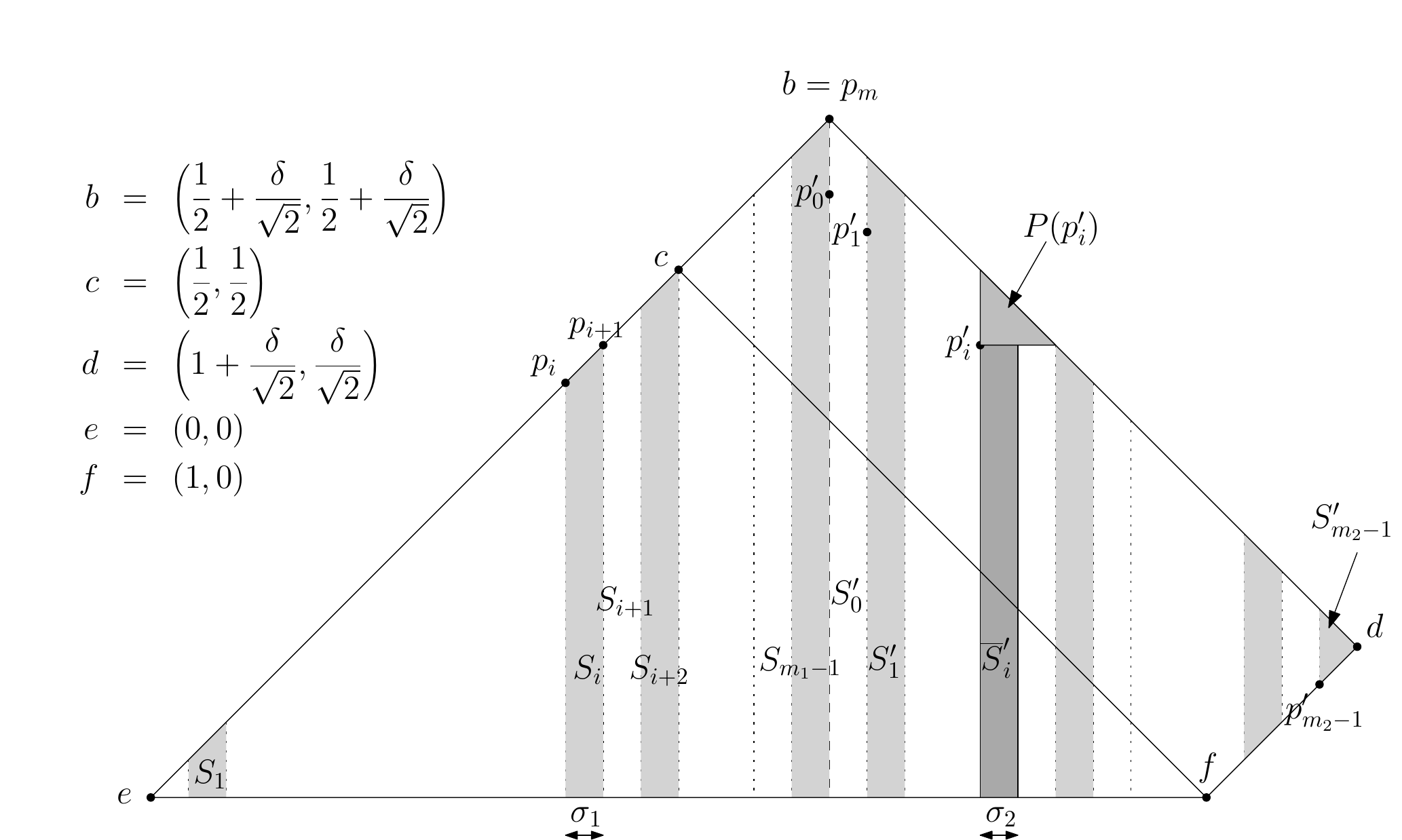}
\end{center}
\caption{Illustration of  the derivation of Lemma \ref{lem: B1B2UB} for case (a)(i).
$S_i,$ $S'_i$ and $\overline {S'}_i$ denote, respectively, $\Strip_i$, $\Stripp_i$ and $\Stripph_i$.
 } %
\label{fig:B2 B2 f7}
\rule{5.5in}{0.5pt}
\end{figure}
Fix $\sigma = \sigma_b$  and define
$$m_1 =  \left \lceil
\frac 1   \sigma  \left(\frac 1 2 + \frac \delta {\sqrt 2} \right)
\right\rceil
\quad\mbox{and}\quad
m_2  =\left \lceil  \frac 1  {2  \sigma}\right\rceil,
$$
$$ \sigma_1 = \frac 1 {m_1}\left(\frac 1 2 + \frac \delta {\sqrt 2} \right)
\quad\mbox{and}\quad
\sigma_2 = \frac 1 {2 m_2}.
$$

By construction $\sigma = \Theta\left(  \sigma_1 \right)= \Theta\left(  \sigma_2\right)= \Theta\left(  \sigma_b\right).$

Define (illustrated in  Fig.~\ref{fig:B2 B2 f7})
\begin{align*}
\forall i &\le  m_1, 	&   x_i &=  i \sigma_1 , 		&   y_i &=  i \sigma_1, 	& p_i&=(x_i,y_i) \\
\forall i & \le  m_2, 	   &x'_i &=  \frac 1 2 + \frac \delta {\sqrt 2} + i \sigma_2, & y'_i &= 1 + \sqrt 2 \delta - x_i - 2 \sigma_2, &  p'_i&=(x'_i,y'_i), 
\end{align*}

Note that $x_{m+1} = x'_0.$ Also define 
\begin{align*}
\forall i &\le  m_1-1, 	 & \Strip_i &= \left\{\{u \in (A \cup B)\,:\,  u.x \in  [x_i, x_{i+1}]\right\}, &&\\
\forall i & \le  m_2-1, 	 & \Stripp_i &=    \{u \in A \cup B)\,:\, u.x \in  [x'_i, x'_{i+1}]  \}, &\Stripph_i &=  \{u \in \Stripp_i \,:\,   u.y \le y'_{i}\}.
\end{align*}
Also note that  $\Strip_i$ and $\Stripp_i$   partition $A \cup B$ so
$$
\EXP{|\MAX(S_n \cap (A \cup B) |}
\le
\sum_{i=0}^{m_1-1} \EXP{|\MAX(S_n) \cap \Strip_i| } + \sum_{i=0}^{m_2-1} \EXP{|\MAX(S_n) \cap \Stripp_i| }.
$$

By definition $\Stripp_i \subseteq \Stripph_i \cup P(p'_i).$  Also, since $\sigma(p'_i) = \Theta(\sigma_b),$ 
Eq.~\ref{eq:B1B2UP8} gives  $\mu(P(p'_i)) = O(\sigma_b^2 f(p'_i)) = O(1/n)$  and thus
\begin{eqnarray*}\EXP{| \MAX(S_n \cap  \Stripp_i| )} 
&\le&  \EXP{| \MAXSN \cap  \Stripph_i| } + \EXP{| \MAXSN\cap  P(p'_i)| )}\\
&\le &  \EXP{| \MAXSN \cap  \Stripph_i| } + n \mu(P(p'_i)) \\
&\le &  \EXP{| \MAXSN \cap   \Stripph_i| } + O(1). \\
\end{eqnarray*}
Finally, note that  $\Stripph_{m_2-1} = \emptyset.$
Thus
$$
\EXP{|\MAXSN \cap (A \cup B)| }
\le
\sum_{i=0}^{m_1-1} \EXP{|\MAXSN\cap \Strip_i| } + \sum_{i=0}^{m_2-2} \EXP{|\MAXSN \cap \Stripph_i| }  + O(m_2).
$$

For  $i\le m_1-2$ set 
$$ A(t) = \left\{u \in \Strip_{i+1} \,:\, u.y \ge  y_{i+1} -t\right\}
\quad\mbox{and} \quad B(t) =  \left\{u \in \Strip_{i}\,:\, u.y \ge  y_{i+1} -t\right\}.
$$

From Eqs.~\ref{eq:B1B2UP1}  and \ref{eq:B1B2UP2} it is straightforward to see that \\
$\forall t,\, \mu(B(t)) = O(\mu(A(t))$, so the sweep Lemma shows that $$\EXP{| \MAXSN \cap \Strip_i |} = O(1).$$

Similarly, for  $i\le  m_2-2$ set 
$$ A(t) =  \left\{u \in  \Stripp_{i+1} \,:\, u.y \le   y'_{i} -t\right\}
\quad\mbox{and} \quad B(t) = \left\{u \in \Stripph_{i} \,:\, u.y \le y'_{i}  -t\right\}.
$$
Again, from Eqs.~\ref{eq:B1B2UP1}  and \ref{eq:B1B2UP2} it is straightforward to see that 
$\forall t,\, \mu(B(t)) = O(\mu(A(t))$, so 
 the sweep lemma shows that 
 $$\EXP{|\MAXSN \cap \Stripph_i |} = O(1).$$

Combining all of the above yields

\begin{eqnarray*}
\EXP{|\MAXSN \cap (A \cup B)| }
&\le&
\sum_{i=0}^{m_1-2} \EXP{|\MAXSN\cap \Strip_i| } + \sum_{i=0}^{m_2-2} \EXP{|\MAXSN\cap \Stripp_i| }  + O(m_2)\\
&=& O(m_1) + O(m_2) + \EXP{|\MAXSN\cap \Strip_{m_1-1}|}.
\end{eqnarray*}

To complete the proof set 
$$\hspace*{-.2in} A(t) = \left \{u \in \Stripp_{0} :\, u.y \le   \frac 1 2 + \frac \delta {\sqrt 2}  -t\right\}
\quad\mbox{and} \quad B(t) = \left\{u \in \Strip_{m_1-1}\,:\, u.y \le \frac 1 2 + \frac \delta {\sqrt 2}  -t\right\}.
$$
Then a final 
application of the sweep lemma shows that $$\EXP{| \MAXSN \cap \Strip_{m_1-1}| } = O(1),$$
so
$$
\EXP{|\MAXSN \cap (A \cup B)| } = O(m_1 + m_2) = O\left( \frac 1 \sigma_b  \right) = O \left(  \frac {n^{2/7}}  { \delta^{3/7}}  \right).
$$

\medskip

\par\noindent\underline{(a)(ii)  $\frac 1 {\sqrt 2} \le \delta \le \frac {\sqrt n} {\log n}:$}

\medskip

Set
$$
b = \left(  \frac 1 2 + \frac {\delta} {\sqrt 2},  \frac 1 2 + \frac {\delta} {\sqrt 2}  \right), \quad 
d =  \left( 1+\frac {\delta} {\sqrt 2}, \frac {\delta } {\sqrt 2}  \right), \quad
h = \left(  \frac 1 2 + \frac {\delta} {\sqrt 2},   \frac {\delta} {\sqrt 2}\right).
$$
\begin{figure}[t]
\begin{center}
\includegraphics[width=10cm]{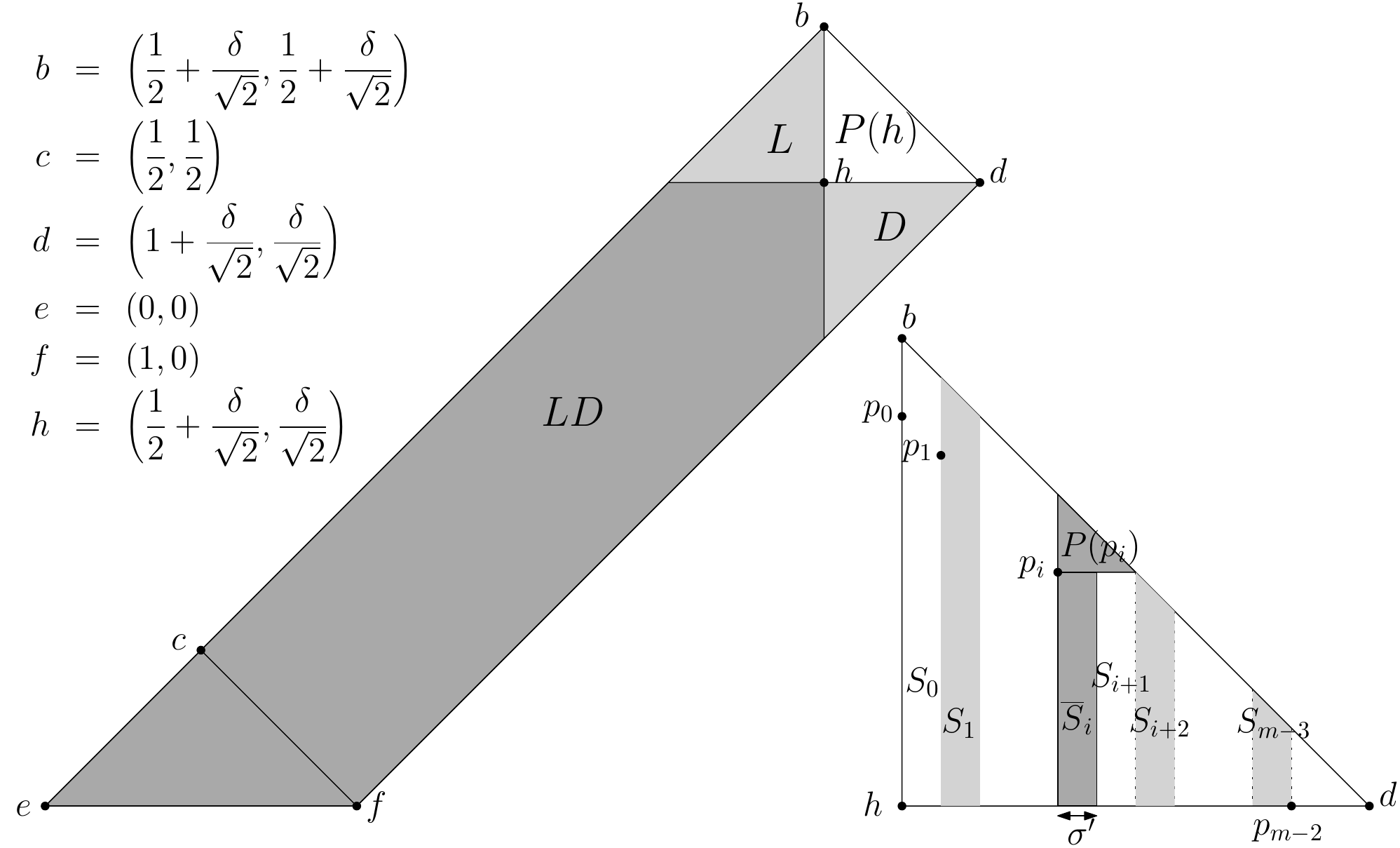}
\end{center}
\caption{Illustration of  the derivation of Lemma \ref{lem: B1B2UB} for case (a)(ii). The right hand figure is  a blown up version of 
$P(h).$ $S_i,$ and $\overline {S}_i$ denote, respectively, $\Strip_i$ and $\Striph_i$.
 } %
\label{fig:B2 B2 f8}
\rule{5.5in}{0.5pt}
\end{figure}

See Fig.~\ref{fig:B2 B2 f8}. 
The proof works in two steps.  The first is to show that 
\begin{equation}
\label{eq:B1B2UP3}
\EXP{|\MAXSN \cap (A \cup B)| } = \EXP{|\MAXSN \cap P(h)| } + O(1).
\end{equation}
The second is to prove 
\begin{equation}
\label{eq:B1B2UP4}
 \EXP{|\MAXSN \cap P(h)| } =
 \left\{
\begin{array}{ll}
O\left( \frac {n^{2/7}} {\delta^{3/7}} \right) & \mbox{if $\frac 1 {\sqrt 2} \le \delta \le n^{1/5}$},\\
O\left( \frac {n^{1/3}} {\delta^{2/3}} \right) & \mbox{if $ n^{1/5}\le \delta \le \frac {\sqrt n} {\log n}$}.
\end{array}
\right.
\end{equation}

Set 
\begin{eqnarray*}
L &=&  \left\{u \in A \cup B\,:\,  u.x \le  \frac 1 2 + \frac {\delta} {\sqrt 2},\,  u.y \ge  \frac {\delta} {\sqrt 2}    \right\},\\
D & =&  \left\{u \in A \cup B\,:\,  u.x \ge  \frac 1 2 + \frac {\delta} {\sqrt 2},\,  u.y \le  \frac {\delta} {\sqrt 2}    \right\},\\
LD &=& \left\{u \in A \cup B\,:\,  u.x \le  \frac 1 2 + \frac {\delta} {\sqrt 2},\,  u.y \le  \frac {\delta} {\sqrt 2}    \right\}.\\
\end{eqnarray*}
Since $h \in A \cup B$ and $\sigma(h) = \Theta(\delta),$ Lemma \ref{lem:B1B2 measure} implies that 
$f(h)  = \Theta\left(  \frac 1 {\delta^2} \right)$.  Then using tandard techniques  show that
$$\mu(P(h)) = \Theta \left(  \Area(P(h)) \cdot f(h) \right) = \Theta\left(  \frac 1 {\delta^2} \right).
$$
From Lemma \ref {lem: basic mu}(c)
\begin{equation}
\label{eq:B1B2UPcase2Ph}
 \Pr(  (S_n \cap P(h))= \emptyset) =  \left(1 - \mu(P(h))\right)^n
 \le  e^{- c \log^2 n}
\end{equation}
for some constant $c >0.$  Since {\em any} point in $P(h)$ dominates {\em all} points in $LD,$
$(S_n \cap P(h))\not = \emptyset$ implies $\MAX(S_n) \cap LD = \emptyset.$  Thus
\begin{equation}
\label{eq:B1B2UP5}
\EXP{|\MAXSN\cap  LD| }  \le n \Pr(  (S_n \cap P(h))= \emptyset) = o(1).
\end{equation}
\medskip
Now set 
$$A(t ) = \left\{ u \in P(h) \,:\,  u.y \le \frac 1 2 + \frac {\delta} {\sqrt 2} -t \right\},\quad
B(t) = \left\{ u \in L\,:\,  u.y \le \frac 1 2 + \frac {\delta} {\sqrt 2} -t \right\}.
$$
It is straightforward to show that $\forall t \ge 0,$  $\mu(B(t)) = O(\mu(A(t)))$, 
so the sweep Lemma yields that 
\begin{equation}
\label{eq:B1B2UP6}\EXP{|\MAXSN\cap L|} = O(1).
\end{equation}
Similarly, setting 
$$A(t ) = \left\{ u \in P(h) \,:\,  u.x \le 1 + \frac {\delta} {\sqrt 2} -t \right\},\quad
B(t) = \left\{ u \in D\,:\,  u.y \le  1 + \frac {\delta} {\sqrt 2} -t \right\},
$$
and noting that  $\forall t \ge 0,$  $\mu(B(t)) = O(\mu(A(t)))$ permits using the sweep Lemma to prove that
\begin{equation}
\label{eq:B1B2UP7}\EXP{|\MAXSN \cap D|} = O(1).
\end{equation}
Combining Eqs.~\ref {eq:B1B2UP5}, \ref{eq:B1B2UP6} and \ref{eq:B1B2UP7} proves 
Eq.~\ref{eq:B1B2UP3}.

To analyze $\EXP{|\MAX(S_n \cap (P(h))| }$,  fix $\sigma$ to be a value to be specified later and 
set 
$$m = \left\lceil  \frac  1{ 2 \sigma}   \right \rceil 
\quad\mbox{and}\quad 
\sigma' = \frac 1{ 2 m}.
$$
 Note  that
$\sigma =  \Theta(\sigma').$
Now set
$$\forall i \le  m,\quad
x_i =  \frac 1 2 + \frac \delta {\sqrt 2} + i \sigma',\quad 
y _i = 1 + \sqrt 2 \sigma - x_i - 2 \sigma,\quad  p_i=(x_i,y_i).
$$
$$\forall i  \le  m-2, \quad 	  \Strip_i =   \left\{u \in P(h)\,:\, u.x \in  [x'_i, x'_{i+1}]  \right\},\quad 
\Striph_i  =  \left\{u \in \Strip_i \,:\,   u.y \le y'_{i}\right\}.
$$

By definition $\Strip_i \subseteq \Striph_i \cup P(p_i)$ and the $\Strip_i$ partition $P(h)$ with
$\Strip_{m-1} \cup \Strip_{m-2} = P(p_{m-2}).$ Thus
\begin{eqnarray*}
\EXP{|\MAX(S_n \cap P(h)| }
&=&
\sum_{i=0}^{m-3} \EXP{|\MAXSN \cap \Strip_i| } +  \EXP{|\MAXSN\cap P(p_{m-2})| }\\
&\le & \sum_{i=0}^{m-3} \EXP{|\MAXSN\cap \Striph_i| }  +\sum_{i=0}^{m-2} \EXP{|\MAXSN \cap P(p_i)| }\\
&\le & \sum_{i=0}^{m-3} \EXP{|\MAXSN\cap \Striph_i| }  +\sum_{i=0}^{m-2} n  \mu(P(p_i)).
\end{eqnarray*}

Now, for all $i\le  m_2-3$ set 
$$ A(t) =    \left\{u \in \Strip_{i+1}\,:\, u.y \le   y_{i} -t\right\}
\quad\mbox{and} \quad B(t) =  \left\{u \in \Striph_{i}\,:\, u.y \le y'_{i}  -t\right\}.
$$
Again, from Eqs.~\ref{eq:B1B2UP1}  and \ref{eq:B1B2UP2} it is straightforward to see that 
$\forall t,\, \mu(B(t)) = O(\mu(A(t))$ so 
 the sweep Lemma shows that 
 $$\forall i\le  m_2-3,\quad \EXP{| \MAXSN \cap \Striph_i |} = O(1),$$
giving
\begin{equation}
\label{eq:B1B2UPcase1}
 \EXP{|\MAXSN \cap P(h)|}  \le O(m) +\sum_{i=0}^{m-2} n  \mu(P(p_i)).
\end{equation}
To complete the proof note that $\mu(P(p_i)) = \Theta\Bigl(\sigma^2(p_i) f(p_i)\Bigr)$.

If $ \delta \le n^{1/5}$, set 
$\sigma = \sigma_b.$ Since $\sigma(p_i) = \Theta(\sigma_b)$,   Eq.~\ref{eq:B1B2UP8}
gives  
$$\mu(P(v))  = \Theta\left(\frac 1 n\right).$$
Plugging into Eq.~\ref{eq:B1B2UPcase1} yields
$$\EXP{|\MAXSN \cap P(h)|}  = O(m)
=O\left( \frac  1 \sigma _b\right) = O\left( \frac {n^{2/7}} {\delta^{3/7}} \right).
$$

If  $\delta >  n^{1/5} $ set $\sigma = \sigma_c.$ 
Since $\sigma(p_i) = \Theta(\sigma_b)$,   Eq.~\ref{eq:B1B2UP9}
gives  
$$\mu(P(v))  = \Theta\left(\frac 1 n\right).$$
Plugging into Eq.~\ref{eq:B1B2UPcase1} yields
$$\EXP{|\MAXSN \cap P(h)|}  = O(m)
=O\left( \frac  1 \sigma _c\right) = O\left( \frac {n^{1/3}} {\delta^{2/3}} \right).
$$

\medskip

\par\noindent\underline{(a)(iii) $  \frac {\sqrt n} {\log n} \le \delta  \le \sqrt n:$}

Note that the method used for case (a)(ii) fails in this interval because  Eq.~\ref{eq:B1B2UPcase2Ph} is no longer valid.  Moreover, $\mu(p(h))$ can be so small, that $\Pr(S_n \cap P(h) = \emptyset)$ is non-negligible and thus $LD$ might contain some maximal points.

\begin{figure}[t]
\begin{center}
\includegraphics[width=12cm]{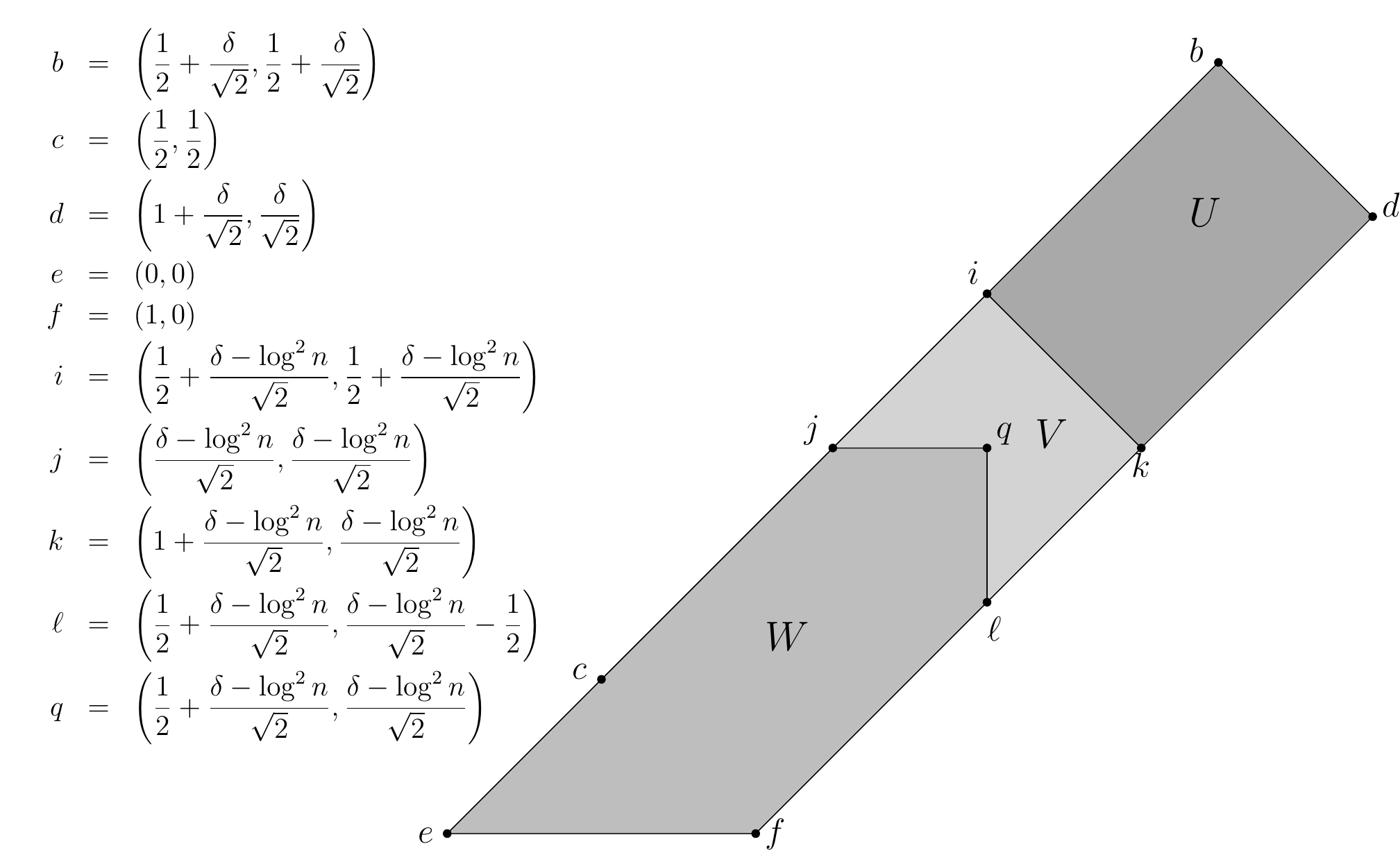}
\end{center}
\caption{Illustration of  the derivation of Lemma \ref{lem: B1B2UB} for case (a)(iii). 
 } %
\label{fig:B2 B2 f9}
\rule{5.5in}{0.5pt}
\end{figure}

Refer to Fig.~\ref{fig:B2 B2 f9} for definitions partitioning $A \cup B$ into three regions $U,V,W.$
Note that $\Area(U) = \log^2  n$ and more than half of the points $v$ in $U$ have $\sigma(v) > 1$.  For  these points $v$
Lemma \ref{lem:B1B2 measure} implies that 
$f(v)  = \Theta\left(  \frac 1 {\delta^2} \right)$ and thus  $\mu(A) =\Theta\left( \frac {\log^2 n} {\delta^2}\right)
= \Omega\left( \frac {\log^2 n} {n}\right)$ and  also $\mu(A) = 
O\left( \frac {\log^4 n} {n}\right)$.

Futher note that $\Area(V) = O(1)$ so $\mu(V) = O\left( \frac  1 {\delta^2}\right) =O\left( \frac {\log^2 n} {n}\right) .$

From Lemma \ref {lem: basic mu}(c)
\begin{equation*}
\label{eq:B1B2UPcase3PU}
 \Pr(  (S_n \cap U)= \emptyset) =  \left(1 - \mu(U)\right)^n
 \le  e^{- c \log^2 n},
\end{equation*}
for some constant $c >0.$  Since {\em any} point in $U$ dominates {\em all} points in $W,$
$(S_n \cap U))\not = \emptyset$ implies $\MAXSN \cap U= \emptyset.$  Thus
\begin{equation*}
\label{eq:B1B2UP5b}
\EXP{|\MAXSN\cap  W| }  \le n \Pr(  (S_n \cap U)= \emptyset) = o(1).
\end{equation*}
This implies
\begin{eqnarray*}
\EXP{|\MAXSN \cap (A \cup B)|} &=&\EXP{|\MAXSN \cap U|} + \EXP{|\MAXSN \cap V|} + \EXP{|\MAXSN \cap W|}\\
&\le& \EXP{|S_n \cap U|} + \EXP{|S_n \cap V|} +o(1)\\
&=& n \mu(U) + n \mu(V) + o(1) = O(\log^4 n).
\end{eqnarray*}

\par\noindent\underline{(b) $ v \in C:$} 

Before starting the main bound we first note how to restrict the analysis to $v \in C$ with  $\sigma(v) \le d  \delta$ for some constant $d  >0.$

\begin{figure}[t]
\begin{center}
\includegraphics[width=9cm]{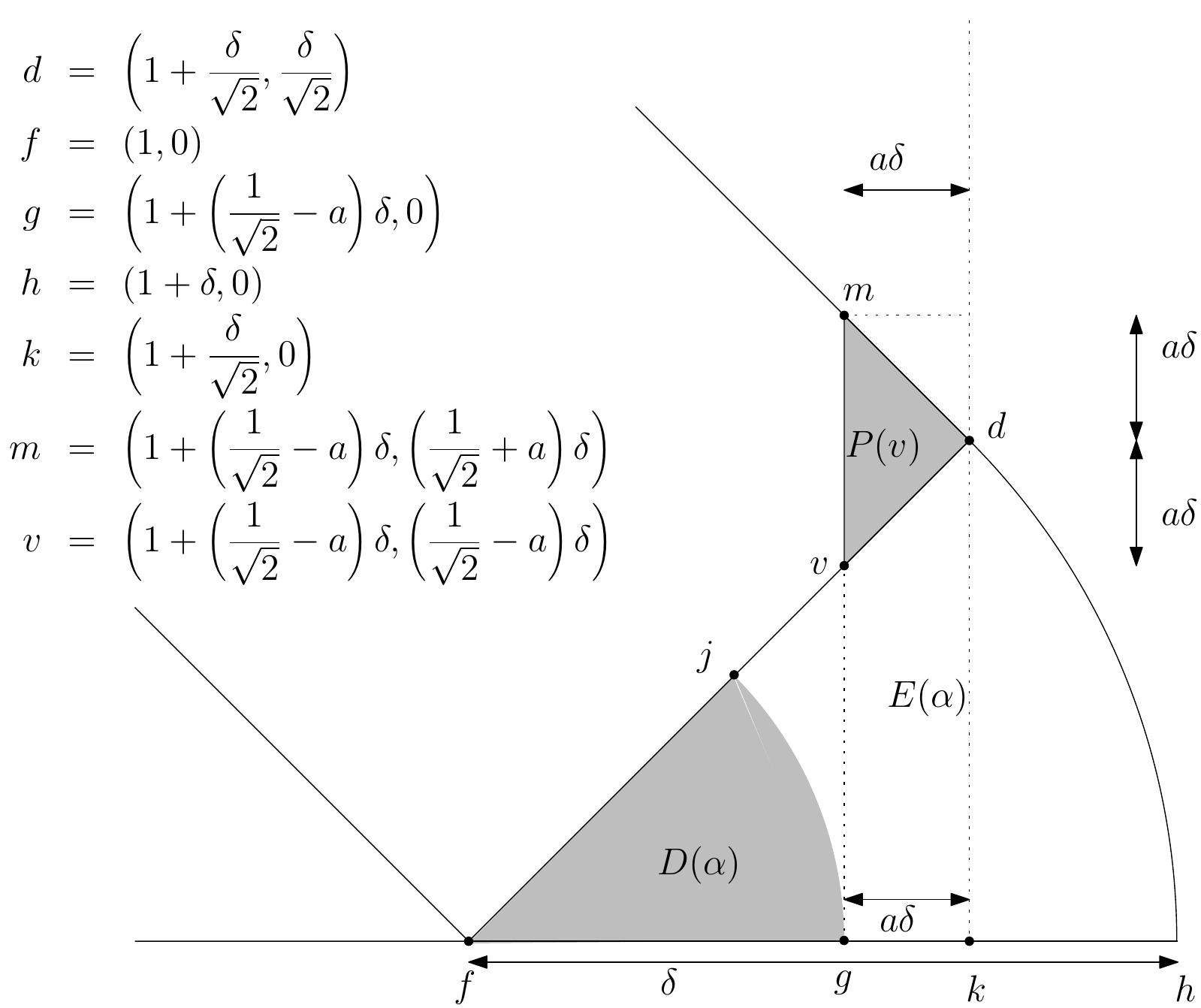}
\end{center}
\caption{ Region $D$. Note that all points in $P(v)$ dominate all points in $D(\alpha)$.
 } %
\label{fig:B2 B2 f4}
\rule{5.5in}{0.5pt}
\end{figure}

\begin{Definition}
Let $0 \le a \le \frac 1 {\sqrt 2}$ be any  fixed constant. Define
\begin{eqnarray*}
D(\alpha)&=&  \left\{ u \in C \,:\,  \sigma(u)  \ge \left(\frac 1 {\sqrt 2} - \alpha \right)  \delta\right\},\\
E(\alpha)&=&  \left\{ u \in C \,:\,  \sigma(u)  \le  \left(\frac 1 {\sqrt 2} - \alpha \right) \delta \right\}.
\end{eqnarray*}
\end{Definition}
We claim that 
\begin{equation}
\label{eq:B1B2UP10}
\EXP{|\MAXSN \cap D(\alpha)|} = O(\log^2 n).
\end{equation}
To see this, first consider the case $\delta \le \frac {\log n} {\sqrt n}$.  From Lemma \ref{lem:easy mu}
$$\mu(D(\alpha)) = O(\Area(D(\alpha))) = O(\delta^2) = O\left(\frac {\log^2 n} n \right).$$
Then
$$\EXP{|\MAXSN \cap D(\alpha)|}
\le \EXP{| S_n \cap D(\alpha)|} =n \mu(D(\alpha)) = O(\log^2 n).$$

For the case $ \delta > \frac {\log n} {\sqrt n}$,
see Figure \ref {fig:B2 B2 f4} for the definitions of points and regions. 
From Eqs.~\ref {eq:B1B2UP1} and \ref{eq:B1B2UP2} together with the fact that $\sigma(v) = \Theta(\delta),$ 
$$\mu(P(v)) = \Theta\left((\Area\Bigl(P(v)) f(v)\right)\Bigr)= \Theta(\delta^2 f(v)).
$$
Recall that $\mu(P(v)) = \Theta(\Area(P(v)) f(v))= \Theta(\delta^2 f(v)).$ 
From Eq.~\ref{eq: B1B2 measureB} in Lemma \ref {lem:B1B2 measure}:  
If  $ \delta \le 1$ then  
$f(v) = \Theta(1)$.  If $ \delta > 1$ then $f(v) = \Theta\left(\frac 1 {\delta^2}\right)$.  In both cases this implies $\mu(P(v)) = \Omega\left( \frac  {\log^2 n}  n \right).$

From Lemma \ref {lem: basic mu}(c)
\begin{equation}
\label{eq:B1B2UPcase4PU}
 \Pr(  (S_n \cap P(v))= \emptyset) =  \left(1 - \mu(P(v))\right)^n
 \le  e^{- c \log^2 n},
\end{equation}
for some constant $c >0.$  Since {\em any} point in $P(v)$ dominates {\em all} points in $D(\alpha),$
$(S_n \cap P(v)))\not = \emptyset$ implies $\MAXSN \cap D(\alpha)= \emptyset.$  Thus
\begin{equation}
\label{eq:B1B2UP5b}
\EXP{|\MAXSN\cap  D(\alpha)| }  \le n \Pr(  (S_n \cap P(v))= \emptyset) = o(1),
\end{equation}
and we have completed the proof of Eq.~\ref{eq:B1B2UP10}  for all cases.
This implies that, for any fixed $ \alpha \le \frac 1 {\sqrt 2},$
\begin{equation}
\label{eq:B1B2UP11}
\EXP{|\MAXSN\cap  C| } =   \EXP{|\MAXSN\cap  E(\alpha)| } + O(\log^2 n). 
\end{equation}

Fix $ \alpha \le \frac 1 {\sqrt 2}$ and set $d = 1 - \left(\frac 1 {\sqrt 2} - \alpha\right)  >0.$  This restricts the analysis to $v \in C$ with  $\sigma(v) \le d  \delta$.

Our analysis will require the following definition and Lemma which are generalizations of the the structures and proofs techniques introduced in Lemma \ref{lem: B2B2UB}.
\begin{Definition}
Let $0 \le \theta_1 \le \theta_2 \le \frac \pi 4.$  Define the $\bar  \theta_1,\, \bar \theta_2$ {\em wedge} as
$$\Wedge\left(\bar \theta_1, \bar \theta_2\right) =  \left\{ v \in E(\alpha) \,:\, \theta_1 <\theta \le \theta_2\right\}.
$$
See Fig.~\ref{fig:B2 B2 b f3} for an example.
\end{Definition}

\begin{Lemma}\label{lem:B1B2bWedge}
Let $W=\Wedge({\bar \theta}_1, {\bar \theta}_2)$ where $0 \le \bar \theta_1 \le \bar \theta_2 < \frac \pi 4.$

Given $m > 0$, set   $\hat\theta= (\bar \theta_2 - \bar \theta_1)/m$.
For $i\in [0,m],$
set $\theta_i =\bar  \theta_1 + i \hat \theta$ and, 
$$\forall i \in [0,m-1],\quad R_i = \Wedge(\theta_i,\theta_{i+1})
\quad\mbox{and}\quad
R_i(\sigma) = \{v \in R_i \,:\,  \sigma(v) \le \delta\}.
$$
Now set $\bar \sigma =  \hat \theta \delta$.  Given that  $\bar \sigma \le d \delta$ and 
$\forall i,\quad \mu(R_i(\bar \sigma)) = O(1/n)$,
then
$$\EXP{\left|MAX(S_n) \cap W\right|}  =  O(m) + O(\log^2 n). $$
\end{Lemma}
The proof of this lemma is deferred until the end of this section.

We now analyze $\EXP{|\MAXSN\cap  E(\alpha)| }$ by splitting into two cases:  (i)  $\delta \le 1$ and (ii)  $\delta \ge 1.$

\medskip

\par\noindent\underline{b(i) $  \delta  \le 1:$}

\begin{figure}[t]
\begin{center}
\includegraphics[width=9cm]{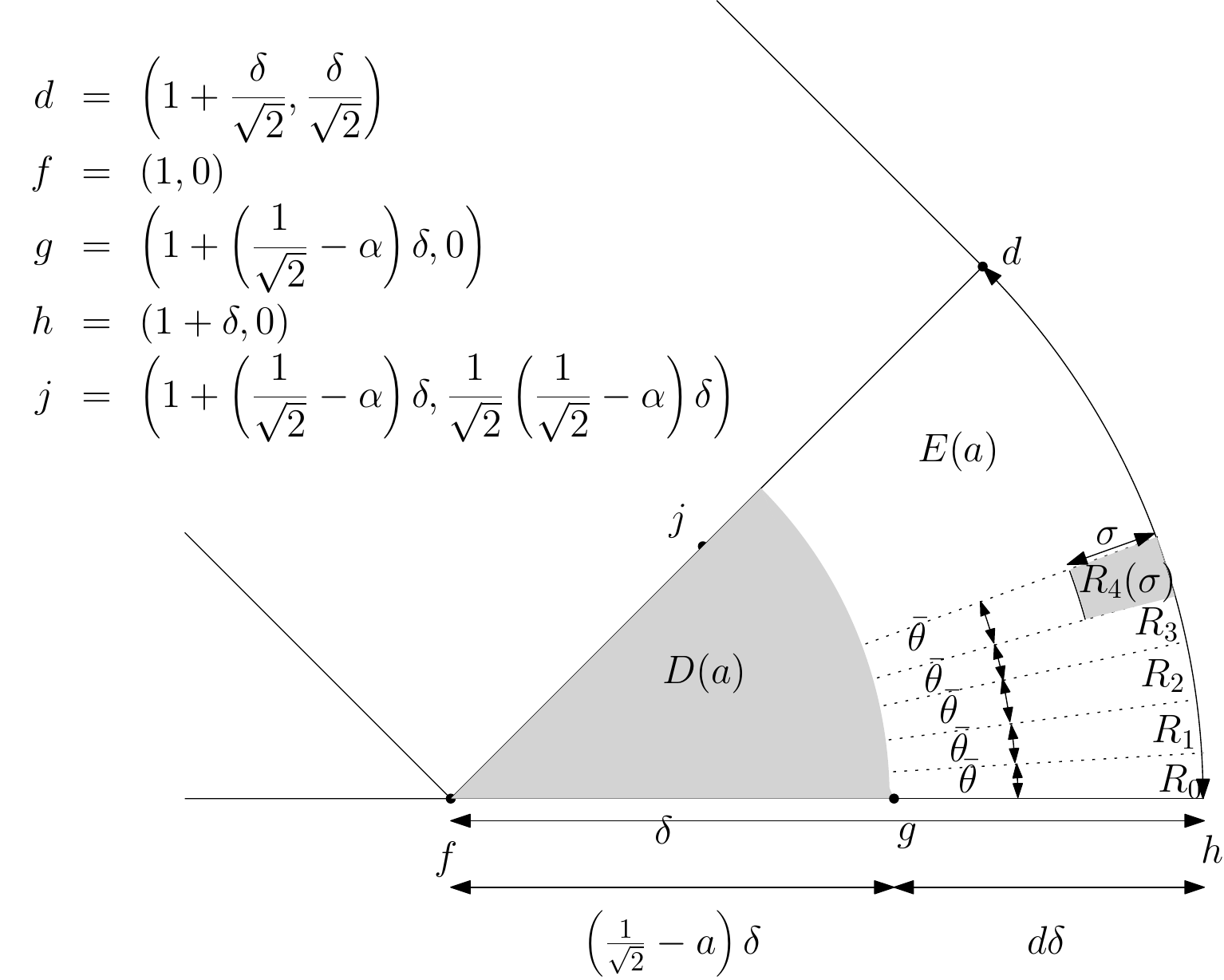}
\end{center}
\caption{ Illustration of the decomposition of $E(a)$ into the $R_i$ in the proof of b(i). 
 } %
\label{fig:B2 B2 f4b}
\rule{5.5in}{0.5pt}
\end{figure}

Set $\bar\theta_1 = 0,$  $\bar\theta_2=\frac \pi 4$ and 
 $m = \left\lfloor    \frac   {n^{2/7}}  {\delta^{3/7}}   \right\rfloor$.

  Let $R_i$ and $R_i(\theta)$ be as defined by Lemma \ref {lem:B1B2bWedge}.
Since for large enough $n,$   $\bar \sigma \le d \delta$, to use that lemma, we only need to show that  $\mu(R_i(\bar \sigma)) = \Theta (1/n).$

To see this, let $v\in B \cap  C$, i.e., $\theta(v) = \pi/4$.  Then, from Lemma \ref {lem:B1B2 measure} 
\begin{equation}
f(v) = O\left( 
                     \left( 
                          \frac  {\sigma(v) } {\delta}    
                      \right)^{3/2}
                \right).
\end{equation}
Corollary \ref {cor: B1B2theta} then implies that 
\begin{equation}
\forall v \in E(\alpha),\quad f(v) = O\left( 
                 \left(  
                          \frac  {\sigma(v) } {\delta}    
                      \right)^{3/2}
                \right).
\end{equation}

The same integration as in the proof of Lemma \ref{lem: B2B2UB} now yields
\begin{equation}
\label{eq:25fmuR}
\mu(R_i(\sigma))  = O
 \left(      \frac  1 m     
\int_{r=0} ^{\sigma} \frac {r^{3/2}} {\delta^{3/2}} d r
\right)
=
O
 \left( \frac {\sigma^{5/2}} {m \delta^{3/2}}
\right).
\end{equation}

In particular,  for $ \bar\sigma = \Theta(1/m)$ this evaluates out to 
\begin{equation}
\mu(R_i(\bar \sigma)) = \Theta (1/n).
\end{equation}

Applying Lemma  \ref {lem:B1B2bWedge} then proves that

$$\EXP{\left|MAX(S_n) \cap \Wedge\left(0, \frac \pi 4 -  \frac \pi  {4m} \right)\right|}  =  O(m) + O(\log^2 n) =
O\left(\frac   {n^{2/7}}  {\delta^{3/7}} \right). $$
\medskip

\par\noindent\underline{b(ii) $  \delta  \ge 1:$}

\begin{figure}[t]
\begin{center}
\includegraphics[width=12cm]{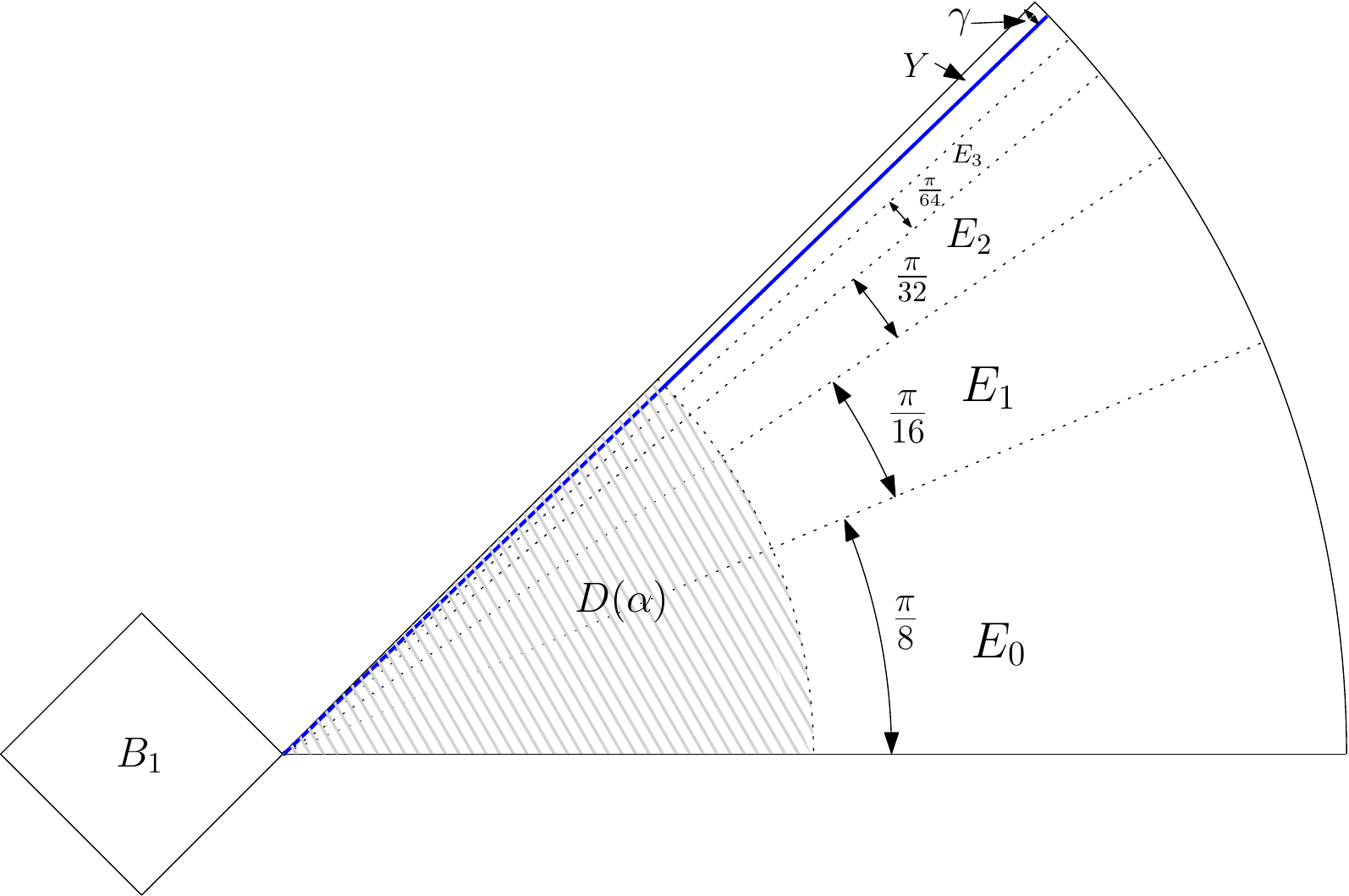}
\end{center}
\caption{Illustrates the further partition of $E(\alpha)$ into smaller wedges $Y$  and $E_0,E_1,E_2,E_3,\ldots$
where $E_i = \Wedge\left(\frac \pi 4- \frac \pi {2^i+2},\, \frac \pi 4- \frac \pi {2^i+3}\right).$
 } %
\label{fig:B2 B2 b f3}
\rule{5.5in}{0.5pt}
\end{figure}

Set $\gamma_0 =\delta^{-2/7} n^{-1/7}.$ 
Further define angles $\theta_i$ and wedges $E_i$ by
$$\theta_0=0,\, \theta_1=\frac \pi 8,\quad
\forall i \ge  1,\   \theta_{i+1} = \theta_i + \frac 1 {2^i} \cdot \frac \pi 8= \frac \pi 4 - \frac {\pi}{ 2^{i+2}},\quad  \theta'_i = \theta_i + \frac 1 {2^{i}} \frac \pi 8,$$
$$E_i = \Wedge(\theta_i,\theta_{i+1}). 
$$
Finally, set 
$$
 k = \max\left\{i \,:\,  \theta_i < \frac \pi 4 - \gamma_0\right\}
\quad \mbox{and} \quad
Y= \Wedge\left(\theta_{k+1}, \frac \pi 4\right).
$$
Then
\begin{equation}
\label{eq:B1B2UP13}
\EXP{\left|\MAXSN  \cap E(\alpha)\right|} \le \sum_{i=0}^{k} \EXP{|\MAXSN \cap E_i|} +  \EXP{|\MAXSN \cap Y|}.
\end{equation}
{\small \em Note:  The reason for the inequality instead of  an equality is that  $E_k$ and $Y$ might overlap.}

We now analyze each item on the right hand side separately and then add them all together to complete the proof.

\medskip

\par\noindent\underline{$E_0$}:
Apply Lemma \ref {lem:B1B2bWedge} to $E_0$ with 
$\bar\theta_1 = \theta_0,$ $\bar\theta_2 = \theta_1$ and $m = \left \lceil \sqrt \delta n^{1/4} \right \rceil $, for some constant $c >0.$
Note that $\hat \theta = \Theta\left( \frac  1 {\sqrt \delta n^{1/4}} \right)$ so 
$\bar \sigma = \hat \theta  \delta = \Theta\left(\frac {d \sqrt \delta} {n^{1/4}}\right).$ 

Note that $ \delta \le \sqrt n$ implies  $\bar \sigma \le 1$ and also, for later use, $ \bar \sigma \le  d \delta.$

For $ v \in E_1$,  $\gamma = \Omega(1)$.  Let  $\sigma = \sigma(v)$.  If $\sigma \le 1$ then
Eq.~\ref{eq: B1B2 measureC} of Lemma \ref{lem:B1B2 measure} implies 
$$f(v) = \Theta \left(  \frac { \sigma^2}  {\delta^2}\right).$$

The same type of integration as in the proof of Lemma \ref{lem: B2B2UB} now yields (assuming  $\sigma \le d \delta$)
\begin{equation}
\label{eq:B1B2UB17}
\mu(R_i(\sigma))  = O
 \left(      \frac  1 m     
\int_{r=0} ^{\sigma} (\delta -r) \frac {r^2} {\delta^{2}} d r
\right)
=
O
 \left( \frac {\sigma^{3}} {m \delta}
\right).
\end{equation}
This yields  $\mu(R_i(\bar \sigma)) = O(1/n)$.  Applying  Lemma \ref {lem:B1B2bWedge} to $E_1$ yields
\begin{equation}
\label{eq:B1B2UP16}
\EXP{|\MAXSN \cap E_i|} =  O(m)  + O(\log^2 n) = O\left( \sqrt \delta n^{1/4}\right).
\end{equation}

\medskip

\par\noindent\underline{$E_i,\, i >0$}:

Let 
$v \in E(\alpha)$.  For simplicity in the remainder of the proof, set  $\gamma = \gamma(v) = \pi /4 - \theta(v)$ and $\sigma= \sigma(v)$.

Let  $x_1 >0$ (with value to be specified later). Set
$$\sigma_\gamma= \frac {\sqrt \delta} {n^{1/4}}  (x_1\gamma)^{1/4}.$$

To ensure that the second case  in Eq.~\ref{eq: B1B2 measureC} of Lemma \ref{lem:B1B2 measure} holds for $\sigma_\gamma$ we need 
 $\sigma_\gamma  \delta = O((\delta \gamma)^2)$, i.e, $\sigma_\gamma \le x_2 \delta  \gamma^2$ for some fixed $x_2> 0.$ Note
\begin{eqnarray*}
\sigma_\gamma \le x_2 \delta \gamma^2  &\Leftrightarrow&  x_1^{1/4} \frac {\sqrt \delta} {n^{1/4}} \delta \gamma^{1/4} \le x_2 \delta  \gamma^2 \\
&\Leftrightarrow&  x_1^{1/4}  \frac 1 {\sqrt \delta n^{1/4}} \le x_2 \gamma^{7/4}\\
&\Leftrightarrow&  \left( \frac {x_1} {x_2^{4}} \right)^{1/7}
  \frac 1 { \delta^{2/7} n^{1/7}} \le \gamma.
\end{eqnarray*}
Assume further that $x_1,x_2$ satisfy $\frac {x_1} {x_2^{4}} <1$.
Now, if 
$v \in \Wedge\left(\frac \pi 8,\frac \pi 4 - \gamma_0\right)$ then
$$ \gamma_0 \le \gamma \le \frac \pi 4.$$ 
In particular, suppose $ \sigma<  \sigma_\gamma.$ Then the above implies
\begin{eqnarray*}
\gamma >  \left( \frac {x_1} {x_2^{4}} \right)^{1/7} \gamma_0  &   \Rightarrow&  \gamma \ge \left( \frac {x_1} {x_2^{4}} \right)^{1/7}
  \frac 1 { \delta^{2/7} n^{1/7}} \\
 & \Rightarrow & \sigma \le \sigma_\gamma \le x_2 \delta \gamma^2\\
  & \Rightarrow& \mbox{ the second case  in Eq.~\ref{eq: B1B2 measureC}  applies}\\
   & \Rightarrow& f(v) = \Theta\left(\frac  {\sigma^2}  {\delta^2} \frac 1 \gamma  \right).
\end{eqnarray*}

Recall that $E_i = \Wedge(\theta_i, \theta_{i+1})$ where 
$\theta'_{i+1} - \theta_i = \frac {\pi }{2^{i+3}}.$

Let $v \in E_i$ and set $\gamma = \frac \pi 4 - \theta(v)$.
By definition 
$$ \frac \pi {2^{i+3}} \le \gamma \le \frac \pi {2^{i+2}}.$$
Set
$$\sigma_i =\frac {\sqrt \delta} {n^{1/4}}  \left( \frac \pi {2^{i+2}} \right)^{1/4}.$$
Then 
$$\sigma_i=  \frac {\sqrt \delta} {n^{1/4}}  (x_1\gamma)^{1/4},$$
where $1 \le x_1 \le 2.$  

Now fix  $x_2 =2.$  By the argument above,  all  $v \in E_i$ with $\sigma \le \sigma_i$, satisfy the $v$ in the second case  of Eq.~\ref{eq: B1B2 measureC}.  So 
\begin{equation}
\label{eq:B1B2bCPf}
f(v) = \Theta \left(  \frac { \sigma^2}  {\delta^2}  \frac  1 \gamma \right)
= \Theta \left(  \frac { \sigma^2}  {\delta^2}   2^i  \right).
\end{equation}

Apply Lemma \ref {lem:B1B2bWedge} to $E_i$ with 
$\bar\theta_1 = \theta_i,$ $\bar\theta_2 = \theta_{i+1}$ and $m =m_i= \left \lceil c  \frac 1 {2^{3i/4}} \sqrt \delta n^{1/4} \right \rceil$,
for some constant $c >0.$
Note that 
$$\hat \theta = \Theta\left( \frac  {\theta_{i+1} - \theta_i} m \right)
= \Theta\left( \frac 1 {{2^{i/4}  \sqrt \delta n^{1/4}}}  \right),$$
therefore
$\bar \sigma = \hat \theta  \delta = \Theta\left(\frac { \sqrt \delta} {2^{1/4} n^{1/4}}\right).$  Choose $c$ large enough so that 
$\bar \sigma  \le  \frac {d \sqrt \delta} {n^{1/4}}$.

Following the same logic as in Eq.~\ref{eq:B1B2UB17}
\begin{equation}
\mu(R_i(\sigma))  = O
 \left(      \hat \theta   
\int_{r=0} ^{\sigma} (\delta -r) \frac {r^2} {\delta^{2}} 2^i d r
\right)
=
O
 \left( \hat \theta \frac {\sigma^{3}} { \delta} 2^i\right).
\end{equation}
Using $\bar \sigma = \hat \theta  \delta$ permits evaluating
$$\mu(R_i(\bar \sigma)) = O\left( \hat \theta \frac {\bar \sigma^{3}} { \delta} 2^i
\right) = O\left( \frac {\bar \sigma^{4}} { \delta^2} 2^i
\right) = O\left( \frac 1 n
\right). 
$$
Applying  Lemma \ref {lem:B1B2bWedge} to $E_i$ yields
\begin{equation}
\label{eq:B1B2UP18}
\EXP{|\MAXSN \cap E_i|} = O(m_i ) + \log^2 n) = O\left( 2^{-3i/4} \sqrt \delta n^{1/4}\right) + O(\log^2 n).
\end{equation}

\medskip

\par\noindent\underline{$Y$}:

Recall that $Y = \Wedge\left(\frac \pi 4 - \gamma_0,\frac \pi 4\right)$, where $\gamma_0 =\delta^{-2/7} n^{-1/7}.$   We apply Lemma \ref {lem:B1B2bWedge} to $Y$ with 
$\bar\theta_1 = \frac \pi 4 - \gamma_0 $ and  $\bar\theta_2= \frac \pi 4$. There are two cases to be analyzed separately:
(A)  $1 \le \delta \le  n^{1/5}$  and  (B) $ n^{1/5} \le \delta \le \sqrt n.$

\medskip

\par\noindent\underline{(A)  $1 \le \delta \le  n^{1/5}:$ }
Set 
$$m = \left\lceil c n^{1/7} \delta^{2/7} \right\rceil,\quad
\hat \theta = \frac {\theta_2 - \theta_1} m =  \frac {\gamma_0} m = \Theta\left(\delta^{-4/7} n^{-2/7}\right),\quad
\bar \sigma = \hat \theta \delta = \Theta\left(\frac {\delta^{3/7}} {n^{2/7}}\right),
$$
for some constant $c >0.$

First observe that because $\delta \le n^{-1/5}$, if $c$ is large enough, then 
$\bar \sigma <1.$
Next observe that
$$\delta \gamma_0^2 =\delta \delta^{-4/7} n^{-2/7} = \bar \sigma.$$
Therefore, if $v \in Y$ with $\sigma(v) = \Theta( \bar \sigma)$, then this is the second  case  in Eq.~\ref{eq: B1B2 measureC} of Lemma \ref{lem:B1B2 measure}. 
Since $\delta \le n^{1/5},$
$$\sqrt {\delta \bar \sigma}  = \sqrt \frac {\delta^{10/7}} {n^{2/7}} = \left(\frac {\delta^5} n \right)^{1/7} \le 1.$$
Plugging into Eq.~\ref{eq: B1B2 measureC} gives 
$$f(v) =\Theta\left( \frac { \bar \sigma} { \delta^2}  \sqrt {\delta \bar \sigma} \right)=\Theta\left( \frac {\bar \sigma^{3/2}} {\delta^{3/2}}\right).$$
Thus, for all $v \in Y$ with $\sigma(v) \le \bar \sigma,$  
$$f(v) =O\left( \frac {\bar \sigma^{3/2}} {\delta^{3/2}}\right).$$

Repeating the
standard integration gives
$$\mu(R_i(\bar \sigma))  = O
 \left(  \hat \theta
\int_{r=0} ^{\bar \sigma}  (\delta - r) \frac {\bar \sigma^{3/2}} {\delta^{3/2}} dr
\right)
=
O
 \left( \hat \theta \delta \bar \sigma  \frac {{\bar \sigma}^{3/2}} {\delta^{3/2}} 
\right)
=
O
 \left( \frac {{\bar \sigma}^{7/2}} {\delta^{3/2}} 
\right)
=  O (1/n).$$
Thus

$$\EXP{\left|MAX(S_n) \cap Y \right|}  =  O(m) + O(\log^2 n) =O\left(n^{1/7} \delta^{2/7}\right)=O\left(\sqrt \delta n^{1/4}\right). $$

\medskip

\par\noindent\underline{(B)  $ n^{1/5} \le \delta \le  \sqrt n:$ }
Set 
$$
m=   \left\lceil   \delta^{1/21} n^{4/21} \right \rceil,\quad 
\hat \theta  =\frac {\gamma_0} m =  \Theta\left(\delta^{-1/3} n^{-1/3}\right),\quad
\bar \sigma =  \hat \theta \delta =\Theta\left(  \frac {\delta^{2/3}} {n^{1/3}}\right),
$$
for some constant $c >0.$
Note that of $c$ is chosen large enough, then   $\bar \sigma \le 1$ and $\bar \sigma \le d \delta.$

Next observe that
\begin{eqnarray*}
\delta \gamma^2 = \frac {\delta^{3/7}} {n^{2/7}} \le  \frac {\delta^{2/3}} {n^{1/3}} = \bar \sigma
&\Leftrightarrow&  
n^{1/21} = n^{\frac 1 3 - \frac 2 7} \le  \delta^{\frac 2 3 -\frac 3 7} = 2\delta^{ 5/21}\\
&\Leftrightarrow&   n^{1/5} \le \delta.
\end{eqnarray*}

So, if $v \in Y$ with $\sigma(v) = \Theta( \bar \sigma)$, this is again
 the second  case  in Eq.~\ref{eq: B1B2 measureC} of Lemma \ref{lem:B1B2 measure}. 
Unlike in case (A)  above, though, $\delta \ge n^{1/5}$ implies that
$$\sqrt {\delta \sigma} = \sqrt \frac {\delta^{5/3}} {n^{1/3}}  = \left( \frac {\delta^5}  n \right)^{1/6} = \Omega(1).$$
Plugging into Eq.~\ref{eq: B1B2 measureC}, then gives
$$f(v) = \Theta \left( \frac { \bar \sigma} { \delta^2} \right).$$
Thus, for all $v \in Y$ with $\sigma(v) \le \bar \sigma,$  
$$f(v) =O\left( \frac {\bar \sigma} {\delta^{2}}\right).$$

Repeating the
standard integration gives
$$\mu(R_i(\bar \sigma))  = O
 \left(  \hat \theta
\int_{r=0} ^{\bar \sigma}  (\delta - r) \frac {\bar \sigma} {\delta^{2}} dr
\right)
=
O
 \left( \hat \theta \delta \bar \sigma  \frac {{\bar \sigma}} {\delta^{2}} 
\right)
=
O
 \left( \frac {{\bar \sigma}^{3}} {\delta^{2}} 
\right)
=  O (1/n).$$
Thus

$$\EXP{\left|MAX(S_n) \cap  Y \right|}  =  O(m) + O(\log^2 n) =O\left(\delta^{1/21} n^{4/21}\right)=O\left(\sqrt \delta n^{1/4}\right). $$

\par\noindent\underline{Combining the pieces:}

We now complete the analysis in part (b) (ii) by combining the pieces together, plugging into Eq.~\ref{eq:B1B2UP13} and using the fact that $k = O(\log n)$ to yield
\begin{eqnarray*}
EXP{\left|\MAXSN  \cap E(\alpha)\right|} &\le& \sum_{i=0}^{k} \EXP{|\MAXSN \cap E_i|} +  \EXP{|\MAXSN \cap Y|}\\
&=&  \sum_{i=0}^{k} \left(O\left( 2^{-3i/4} \sqrt \delta n^{1/4}\right) + O(\log^2 n)\right) +O\left(\sqrt \delta n^{1/4}\right)  \\
&=& O\left(\sqrt \delta n^{1/4}\right) + O(k \log^2 n) + O\left(\sqrt \delta n^{1/4}\right)  \\
&=& O\left(\sqrt \delta n^{1/4}\right).
\end{eqnarray*}
\end{proof}


\begin{proof} of Lemma \ref{lem:B1B2 measure}.

From Lemma \ref{lem: measure integral}, 
\begin{equation}
\label{eq:fB2B1}
f(v) = \Theta\left( \frac {\Area(B_2(v,\delta) \cap B_1)} {\delta^2}\right).
\end{equation}

For each region the proof performs a case-by-case analysis to derive the value of 
$\Area(B_2(v,\delta) \cap B_1).$

\begin{figure}[t]
\begin{center}
\includegraphics[width=7cm]{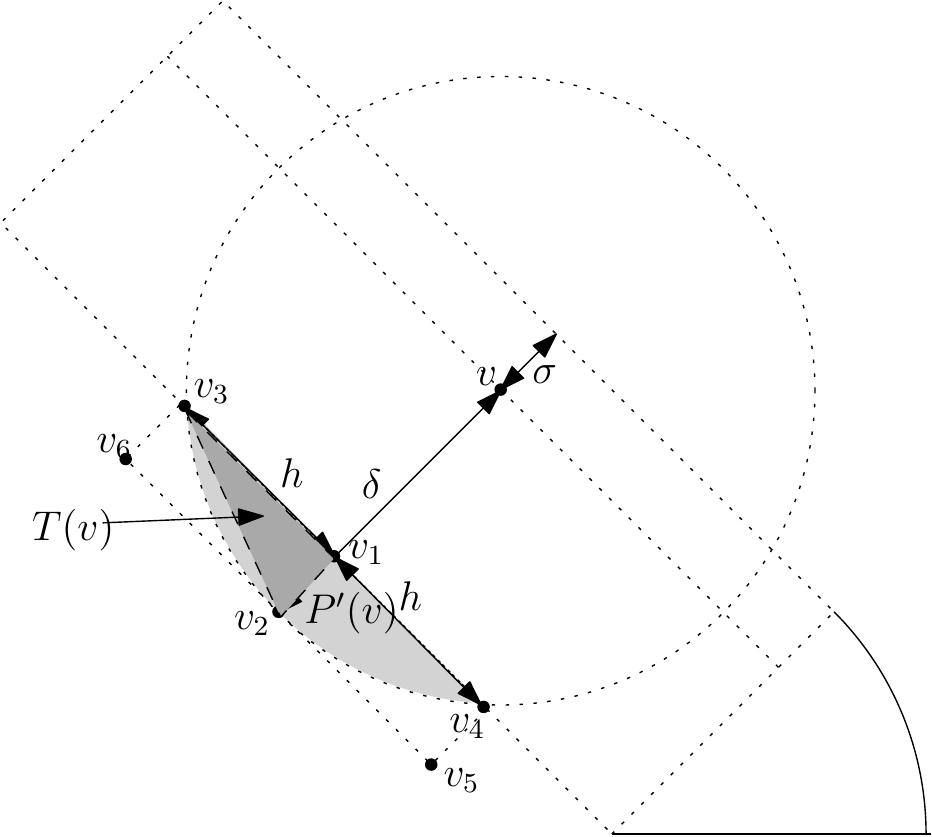}
\end{center}
\caption{Illustration of  the derivation of Lemma \ref{lem:B1B2 measure}  when $v \in B.$
 } %
\label{fig:B2 B2 f2}
\rule{5.5in}{0.5pt}
\end{figure}

\par\noindent\underline{$v \in A:$}

If $\delta \le 1$ then the result follows directly from Lemma \ref{lem:easy mu}(b).

If $\delta >1$  and $v \in A$, then $\Area(B_2(v,\delta) \cap B_1)= \Theta(1)$ and the result follows from
plugging into Eq.~\ref{eq:fB2B1}.

\medskip

\par\noindent\underline{$v \in B$}

Consider Figure \ref {fig:B2 B2 f2}.   
For simplicity, let $\sigma$ denote $\sigma(v).$
 Let $v_3$ and $v_4$ be the upper and lower intersection of $B_2(v,\delta)$ with the line $x+y=1$ (the upper-right boundary of $B_1$).  Now set $v_1= \frac {v_2 +v_3} 2$ to be their midpoint, and $h=||v_1-v_3||.$ 
  Draw the radius of $B_2(v,\delta)$ that passes through $v_1$. Label its other endpoint as $v_2.$
 Finally, let $T$ be the triangle with vertices $v_1,v_2,v_3.$ Straightforward geometric arguments show that
 $$\Area(B_2(v,\delta) \cap B_1)  =
 \Theta(\Area(T \cap B_1))
 = \Theta\Bigl(
 \min\left(h,\frac 1 {\sqrt 2} \right) \cdot  \min(\sigma,\sqrt 2)
 \Bigr).
 $$
 Next, note that
$$h^2 = \delta^2 - (\delta-\sigma)^2  = \delta^2 \left(1 - \left(1 - \frac \sigma \delta\right)^2 \right).$$
Since  $\sigma \le \delta,$
\begin{eqnarray*}
h &=&  \delta \sqrt {  1 - \left( 1 -  \frac \sigma \delta\right)^2 }\\
   &=&   \delta \sqrt {\frac  {2 \sigma   } \delta - \frac  {\sigma^2} {\delta^2}}\\
   &=&  \Theta\left(  \delta \sqrt {\frac \sigma \delta \left( 2 - \frac \sigma \delta \right)} \right) = \Theta( \sqrt {\delta \sigma}).
\end{eqnarray*}
Again since $\sigma \le \delta$,   if  $\sigma \ge \sqrt 2$, then $\delta = \Omega(1)$ and $h = \Omega(1).$  Thus, $\Area(B_2(v,\delta) \cap B_1) = \Theta(1).$
Working through the other cases similarly and plugging the derived values into Eq.~\ref{eq:fB2B1}  yields
$$
f(v) = \Theta\left( \frac {\Area(T \cap B_1)} {\delta^2}\right)
=\left\{
\begin{array}{ll}
\Theta\left(\frac 1 {\delta^2} \right) & \mbox{if  $\sigma = \Omega(1)$},\\
\Theta \left(\frac  \sigma {\delta^2} \right) & \mbox{if $\delta \sigma = \Omega(1)$ and  $\sigma = O(1)$},\\
\Theta  \left( \frac  { \sqrt {\delta \sigma}  \sigma} {\delta^2}  \right)=
\Theta  \left(  \left( \frac  {\sigma} {\delta}  \right)^{3/2} \right)& \mbox{if $\delta \sigma = O(1)$ and  $\sigma = O(1).$}
\end{array}
\right.
$$

\medskip

\par\noindent\underline{$v \in C:$}
Fix $v \in C.$ Set $\sigma = \sigma(v),$  $r = r(v)= \delta - \sigma(v),$ and $\theta=\theta(v).$ By definition,  $0 \le \theta(v) \le \pi/4.$ 
Also,  from the assumption that $ \sigma < d  \delta$ for some $ \delta < 1,$ we know that  $r = \Theta(\delta).$

\medskip
\begin{figure}[t]
\begin{center}
\includegraphics[width=9cm]{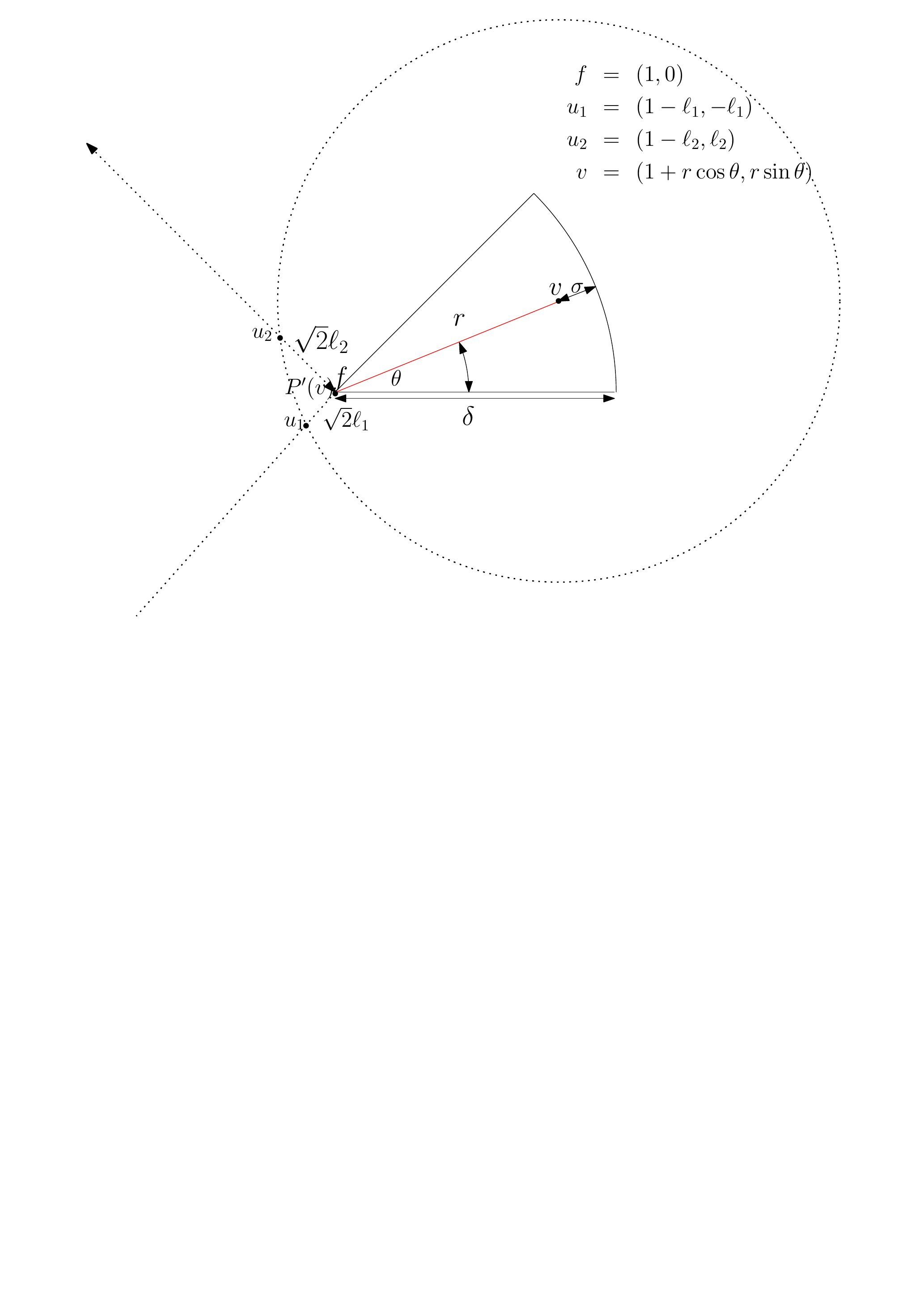}
\end{center}
\caption{Illustration of  the derivation of Lemma \ref{lem:B1B2 measure}  when $v \in C.$ This is the simple case in which $\ell_1, \ell_2 \le 1.$
 } %
\label{fig:B2 B2 f3}
\rule{5.5in}{0.5pt}
\end{figure}

See Fig.~\ref{fig:B2 B2 f3}. Let $u_1, u_2$ be the points of intersection of $B_2(v,\delta) $ with the boundary lines $x+y=1$ and $x-y=1$  of $B_1$ and $f=(1,0).$  Let $\ell_1 = \frac 1 {\sqrt 2}  ||u_1 -f||$ and $\ell_2 = \frac 1 {\sqrt 2} ||u_2-f||$.
 
 Set $P'(v) = B_2(v,\delta) \cap B_1$ and $T$ to be  the triangle formed by $u_1,u_2,f$.  First note that if $\ell_1, \ell_2 \le 1$,  then $T \subset B_1$ and, since 
 $$\Area(T) < \Area(P'(v)) < 2\Area (T),$$
 we find $\Area(P'(v)) = \Theta(\Area(T)) = \Theta( \ell_1 \ell_2).$
Note that if $\ell_1 >1$ or $\ell_2 >1$ then $T \not \subseteq B_1$.  This can be fixed by capping the lengths of the triangle to stay within $B_1,$ ,i.e, if $\ell_1 \ge 1,$ replacing $v_1$ by $(0,-\sqrt 2)$ and if $\ell_2 \ge 1$, replacing $v_2$ by $(0,\sqrt 2).$  Straightforward geometric arguments  show that 
 \begin{equation}
 \label{eq:B1B2fArea}
 f(v) =\Theta \left(\frac { \Area(P'(v))} {\delta^2}   \right)=  
 \Theta\left( \frac {\min(\ell_1,1) \cdot \min( \ell_2,1)} {\delta^2}\right).
 \end{equation}

   Now note that
\begin{eqnarray*}
\delta^2 &=&  ||v - u_1||^2 \\
               &=&  (r \cos \theta + \ell_1)^2  +  (r \sin \theta + \ell_1)^2 \\
               &=& r^2 \cos^2 \theta + 2 r \ell_1 \cos \theta + \ell_1^2 + r^2 \sin^2 \theta + 2 r \ell_1 \sin \theta + \ell_1^2 \\
		&=& r^2 + 2 r  \ell_1 (\cos \theta + \sin \theta) + 2\ell_1^2.
\end{eqnarray*}
Similarly 
\begin{eqnarray*}
\delta^2 &=&  ||v - u_2||^2 \\
               &=&  (r \cos \theta + \ell_2)^2  +  (r \sin \theta - \ell_2)^2 \\
               &=& r^2 \cos^2 \theta + 2 r \ell_2 \cos \theta + \ell_2^2 + r^2 \sin^2 \theta + - r \ell_2 \sin \theta + \ell_2^2 \\
		&=& r^2 + 2 r  \ell_2 (\cos \theta - \sin \theta) + 2\ell_2^2.
\end{eqnarray*}

These  can be rewritten as
\begin{eqnarray}
\Theta(\delta \sigma) &=&
\frac 1 2 (2 \delta - \sigma) \sigma =\frac { \delta^2 - r^2} 2 = r  \ell_1 \beta_\theta + \ell_1^2, \label{eq:B1B2l1}\\
\Theta(\delta \sigma) &=&
\frac 1 2 (2 \delta - \sigma) \sigma = \frac { \delta^2 - r^2} 2 = r  \ell_2 \alpha_\theta + \ell_2^2,\label{eq:B1B2l2}
\end{eqnarray}
where
$$\beta_\theta = \cos \theta + \sin \theta
\quad\mbox{and}\quad
\alpha_\theta=\cos \theta - \sin \theta.
$$

Since $\alpha_\theta, \beta_\theta \ge 0$ and $\sigma \le \delta,$   Eqs.~\ref{eq:B1B2l1} and \ref{eq:B1B2l2} immediately imply     $\ell_1,\ell_2 = O(\delta)$.
Since $0\le \theta(v) \le \pi/4,$  $\beta_\theta=\Theta(1)$ so Eq.~\ref{eq:B1B2l1} can be rewritten as
\begin{equation}
\label{eq:B1B2mul1eq}
\Theta(\delta \sigma) 
 =\Theta( r  \ell_1 + \ell_1^2),
 \end{equation}
which combined with  $r = \Theta(\delta)$, implies
$$\ell_1 = \Theta(\sigma).$$

The analysis of $\ell_2$ is more complicated.
Unlike $\beta_\theta$, $\alpha_\theta$ {\em decreases}  monotonically to $0$ as $\theta \uparrow \pi/4.$ In particular,  $\alpha_\theta\not=\Theta(1)$ so, for values of $\theta$ close enough to $\pi/4$,
 we have $\ell_2\not= \Theta(\sigma).$  Its  analysis requires a more careful derivation.

Set 
$$B(\sigma) = r(\sigma) \alpha_\theta,\quad  C(\sigma) = \frac 1 2  (2 \delta - \sigma) \sigma, \quad \gamma(\theta) = \pi/4 - \theta.$$
We will write $B,C,\gamma$ when  $\sigma,\theta$ values are fixed.

Set $\gamma= \pi/4 - \theta.$  Taking Taylor series around $\pi/4$ yields
$$\alpha_\theta = \cos \theta - \sin \theta = \sqrt 2 \gamma + \Theta(\gamma^2),$$
where the constant implicit in the $\Theta()$ is again only dependent upon $\hat \Theta$.
This implies
$$
B = \Theta(\delta \alpha_\theta) = \theta(\delta \gamma)
\quad\mbox{and}\quad 
C = \Theta(\delta \sigma).$$

From Eq.~\ref{eq:B1B2l2},  $\ell_2$ is the solution to the quadratic equation 
$$\ell_2^2 + B \ell_2 - C = 0,$$
 therefore
\begin{equation}
\label{eq:B1B2muquad}
 \ell_2 = \frac { -B + \sqrt {B^2 + 4C}} 2.
 \end{equation}

Fix any arbitrary $ c >0.$  There are now two possibilities.  
\begin{enumerate}
\item If    $\delta \sigma  \ge c (\delta \gamma)^2$:
 Eq.~\ref{eq:B1B2muquad}    then yields
$\ell_2 = \Theta (\sqrt {\delta \sigma}).$
\item If    $\delta \sigma  < c (\delta \gamma)^2$:
Then Eq.~\ref{eq:B1B2muquad}  yields   
\begin{eqnarray*}
\ell_2 &=& \frac 1 2 \left(  -B  + B \sqrt {1 + \frac  {4 C }{ B^2}} \right)\\
&=&\Theta\left(  -\delta \gamma  + \delta \gamma \left(1 + \Theta\left(\frac  {4 \delta \sigma} { (\delta \gamma)^2}\right)\right) \right)\\ 
&=&  \Theta \left( \frac  {\sigma}  {\gamma}   \right).
\end{eqnarray*}
\end{enumerate}
The above can be rewritten as 

$\sigma \le \delta$ so $\delta \sigma  <  \delta^2 \le  c \delta^2  \gamma^2$ where $c = 1/(\hat \theta)^2.$ Thus, for all cases of $\delta \sigma  \le c (\delta \gamma)^2$,  $\Theta \left( \frac  {\sigma}  {\gamma}   \right)$. This implies
$$
\ell_2 =
\left\{
\begin{array}{ll}
\Theta (\sqrt {\delta \sigma}) &  \mbox{if $\sigma  \ge c (\delta \gamma^2)$},\\
 \Theta \left( \frac  {\sigma}  {\gamma}   \right) &  \mbox{if $\sigma  < c (\delta \gamma^2)$}.
\end{array}
\right.
$$
Note that if this equation is correct for {\em any } $c >0$, it is correct for {\em all}  $c >0.$  Thus, using the fact that
$\ell_1 = \Theta(\sigma)$ and plugging into Eq.~\ref{eq:B1B2fArea}  we have proven

\begin{equation}
 f(v) = 
 \left\{
 \begin{array}{ll}
 \Theta\left( \frac {\min(\sigma,1) \cdot \min\left( \sqrt {\delta \sigma},1)\right)} {\delta^2}\right) & \mbox{if  $\sigma  = \Omega(\delta \gamma^2)$},\\
 \Theta\left( \frac {\min(\sigma,1) \cdot \min\left( \frac  {\sigma}  {\gamma},1\right)} {\delta^2}\right) & \mbox{if  $\sigma  = O(\delta \gamma^2)$}.\\
 \end{array}
 \right.
 \end{equation}

\end{proof}


\begin{proof} of Lemma \ref{lem:B1B2bWedge}.

Until otherwise stated,  $i < m-1$ is assumed

This part of the analysis will  completely mimic the upper bound derivation for $\Ballpq 2 2 $ in Lemma \ref{lem: B2B2UB}, but with the $\Theta(\,)$ term replaced by an $O(\,)$ term in many places. In particular, we   use the sweep lemma almost exactly the same way as was done in the proof of Lemma \ref{lem: B2B2UB}.
We therefore only state the main items and skip the details which are exactly the same as in Lemma \ref{lem: B2B2UB}.

Let $A_i(\sigma) = R_{i+1}(\sigma).$  Set 
$$v_i =( (1+(\delta-\sigma) \cos \theta_{i+2},\,  (\delta-\sigma)  \sin \theta_{i+2}).$$
This is the leftmost point of $A_i(\sigma)$ on the boundary line between $R_{i+1}$ and $R_{i+2}.$
Now drop a vertical line from $v_i$ to the $x$-axis and let $v'_i$ be the point at which it intersects  the boundary between $R_i$ and $R_{i-1}.$ Let $\sigma'= \sigma(v_i').$    By construction

$$(\delta-\sigma)  \cos \theta_{i+2} = v.x = v'.x = (\delta -\sigma') \cos \theta_{i}.$$

But
\begin{eqnarray*}
\cos \theta_{i+2} &=&  \cos \left(\theta_i + 2 \hat \theta \right)\\
			    &=&  \cos \theta_i \cos (2 \hat \theta) - \sin \theta_i \sin (2 \hat \theta)\\
                               &=&  (1  +O(\hat \theta)) \cos \theta_i  - O(\hat \theta)\\
                               & =&  \cos \theta_i    - O(\hat \theta).
\end{eqnarray*}
Then
$$
(\delta-\sigma)\left(  \cos \theta_i    - O(\hat \theta)\right) = (\delta -\sigma')\cos \theta_{i}.
$$
Because the $R_i$ are in the first octant, $\theta_i$ is bounded away from $\pi/2$  so $\cos \theta_i$ is bounded away from $0.$  Thus, 
$\sigma' - \sigma = O(\delta \hat \theta).$

Now define 
$$B'_i (\sigma)=  \{v \in R_i \,:\,  v.x \le v_i.x\},\quad    B_i(\sigma) = R_i \setminus B'_i(\sigma).$$

By construction
\begin{itemize}
\item  $\forall \sigma,$ every point in $A_i(\sigma)$ dominates every point in $B'_i(\sigma)$.
\item $\forall \sigma,$  $B_i(\sigma) \subset R_i(\sigma')$ $\Rightarrow$  $\mu(B_i(\sigma)) \le  \mu(R_i(\sigma'))$.
\item  From the arguments above,  $\bar \sigma = \delta \hat \theta$  $\Rightarrow$    $\bar\sigma' = \Theta(\delta \hat \theta) =\Theta(\bar \sigma)$.
\item From Corollary \ref{cor: B1B2theta} and the definition of the $R_i$,\\ $\bar\sigma'  =\Theta(\bar \sigma)$  $\Rightarrow$ 
$\mu(R_i(\bar\sigma')) = O(\mu(R_i(\bar\sigma))).$

\end{itemize}
  Combining these points yields
$$\Rightarrow \mu(B_i(\bar\sigma)) \le  \mu(R_i(\bar\sigma')) = O(\mu(R_i(\bar\sigma))) = O(1/n).$$

Note that  $R_i = B'_i(\sigma) \cup B_i(\sigma)$  and thus
\begin{eqnarray*}
\EXP{|MAX(S_n) \cap R_i|}  &\le&  \EXP{|MAX(S_n) \cap B'_i(\bar \sigma)|}   + \EXP{|MAX(S_n) \cap B_i(\bar \sigma)|}  \\
					 &\le&  \EXP{|MAX(S_n) \cap B'_i(\bar \sigma)|}   + \EXP{|S_n \cap B_i(\bar \sigma)|}  \\
				          &=&     \EXP{|MAX(S_n) \cap B'_i(\bar \sigma)|}   + n \mu(B_i(\bar \sigma)) \\
					&=&      \EXP{|MAX(S_n) \cap B'_i(\bar \sigma)|}   + O(1).		
\end{eqnarray*}

To show that $\EXP{|MAX(S_n) \cap R_i|} =O(1)$ it now suffices to prove that
$\EXP{|MAX(S_n) \cap B'_i(\bar \sigma)|} = O(1),$  which we will now do via the sweep lemma using $\sigma$ as the sweep parameter.

Set 
$$
\begin{array}{ccccccl}
A &=& R_{i+1}, &                   \quad & A(\sigma) &=& A_i(\sigma),\\
B &=& B'_{i}({\bar \sigma}), & \quad & B(\sigma) &=& B_i(\sigma) \cap B'_i(\bar \sigma).
\end{array}
$$
By the previous discussion,
\begin{itemize}
\item  Every point in $B \setminus B(\sigma) $ is dominated by every point in $A_i(\sigma).$
\item If   $\sigma \le \bar \sigma$ $\Rightarrow $  $B(\sigma) = \emptyset.$
\item If $\sigma > \bar \sigma $$ \Rightarrow$  $B(\sigma)  \subset B_i(\sigma) \subset R_i(\sigma').$
\end{itemize}
Thus,  if  $\sigma \le \bar \sigma$, then
$$ \mu(B(\sigma)) = 0 = O(\mu(A(\sigma)),$$
while if $ \sigma > \bar \sigma$ then, using the fact that $\sigma' - \sigma = O(\delta \hat \theta) = O(\bar \sigma).$
\begin{eqnarray*}
\mu(B(\sigma))  &= & O (\mu(R_i(\sigma'))) \\
                             &= & O (\mu(R_{i+1}(\sigma'))) \\
                           &=  & O (\mu(R_{i+1}(\sigma + O(\bar \sigma) )))  \\
                           &=  & O (\mu(R_{i+1}(\sigma) )),
  \end{eqnarray*}
where the last inequality comes from       the fact that $ \sigma > \bar \sigma$  so $\sigma + O(\bar \sigma) = O(\sigma).$

This explicitly satisfies the conditions of the sweep Lemma  and thus,
$\EXP{|MAX(S_n) \cap B'_i(\bar \sigma)|} = O(1)$  and thus
$$\EXP{|MAX(S_n) \cap R_i|} =O(1)$$
  
Since $R_i$, $i-0,1,\ldots, m-2$, partition $\Wedge\left(\theta_1,\theta_2-\hat \theta\right)$, this implies
 \begin{equation}\label{eq:B1B2UP20}
 \EXP{\left|MAX(S_n) \cap \Wedge\left(\theta_1,\theta_2-\hat \theta\right)\right|}  =  \sum_{i=0}^{m-2}\EXP{|MAX(S_n) \cap R_i|}  = O(m).
 \end{equation}
 
 We now examine $R_{m-1}.$  Define  $\hat R = R_{m-1}$ and $\hat R(\sigma) = R_{m-1}(\sigma).$
 
 To  complete the proof of the lemma,  it  only remains to prove that 
\begin{equation}\label{eq:B1B2UP21}
 \EXP{\left|MAX(S_n) \cap \hat R \right|}  =  O(\log^2 n). 
 \end{equation}
 
  Set 
$$ v=( (1+(\delta-\sigma) \cos \bar \theta_2,\,  (\delta-\sigma)  \sin \bar \theta_{2}).$$
See Figure \ref{fig:B2 B2 b f4}.
This is the leftmost point of $A_i(\sigma)$ on the top boundary line of $\Wedge(\bar \theta_1,\bar\theta_2).$
Now drop a vertical line from $v$ to the $x$-axis and let $v'$ be the point at which it intersects  the boundary between $R_{m-1}$ and $R_{m-2}.$ Let $\sigma'= \sigma(v').$    By construction  every point in $\hat R(\sigma)$ dominates every point in 
$\hat R \setminus \hat R(\sigma').$

$$(\delta-\sigma)  \cos \bar \theta_2 = v.x = v'.x = (\delta -\sigma') \cos \left(\bar \theta_2 - \hat \theta\right).$$

But
\begin{eqnarray*}
\cos \left(\bar \theta_2 - \hat \theta\right) 
			    &=&  \cos \bar \theta_2 \cos (\hat \theta) + \sin \bar \theta_2 \sin ( \hat \theta)\\
                               &=&  (1  +O(\hat \theta)) \cos \bar \theta_2  - O(\hat \theta)\\
                               & =&  \cos \bar \theta_2    - O(\hat \theta).
\end{eqnarray*}
 Then
$$
(\delta-\sigma)\cos \bar \theta_2   = (\delta -\sigma') \left(\cos \bar \theta_2    - O(\hat \theta) \right).
$$
Because the $R_i$ are in the first octant, $\theta_i$ is bounded away from $\pi/2$  so $\cos \theta_i$ is bounded away from $0.$  Thus, 
$\sigma' - \sigma = O(\delta \hat \theta) = O (\bar \sigma).$

If $\mu(\hat R) \le \frac{ \log^2 n} n $ then Eq.~\ref{eq:B1B2UP21} is trivially correct because
$$
 \EXP{\left|MAX(S_n) \cap \hat R \right|}  \le \EXP{\left|S_n \cap \hat R \right|} = n \mu \left(\hat R\right) =
 O(\log^2 n). 
 $$
We therefore assume that $\mu\left(\hat R\right)  >  \log^2 n$ and let $\hat \sigma$ be the unique value  such that 
$\mu\left(\hat R(\hat \sigma)\right) =  \frac{ \log^2 n} n.$ Since $\mu\left(\hat R(\bar \sigma) \right)= \Theta(1/n)$,  $\hat \sigma = \Omega (\bar \sigma).$ Thus
$\hat \sigma' = \Theta(\hat \sigma)$ and, from Corollary \ref {cor: B1B2theta},
$$ \mu\left(\hat R(\hat \sigma')\right)  = \Theta\left(\mu\left(\hat R(\hat \sigma\right)\right) = \Theta \left(\frac{ \log^2 n} n  \right).$$

Now
\begin{eqnarray*}
\EXP{\left|\MAXSN \cap \hat R \right|}
&=& \EXP{\left|\MAXSN \cap \hat R\left(\hat \sigma'\right) \right|} +
\EXP{\left|\MAXSN \cap \left( \hat R \setminus \hat R\left(\hat \sigma'\right)\right) \right|} \\
&\le& \EXP{\left|S_n\cap \hat R\left(\hat \sigma'\right)  \right|} 
+
\EXP{\left|\MAXSN \cap \left( \hat R \setminus \hat R\left(\hat \sigma'\right)\right) \right|}.
\end{eqnarray*}

 Now  
 $$\EXP{\left|S_n\cap \hat R\left(\hat \sigma'\right)  \right|}  = n \mu\left(\hat R(\hat \sigma')\right) = O(\log^2 n).$$

From Lemma \ref {lem: basic mu}(c)
$$
 \Pr(  (S_n \cap \hat R(\hat \sigma))= \emptyset) =  \left(1 - \mu\left(\hat R(\hat \sigma)\right)\right)^n
 \le  e^{- c \log^2 n},
$$
for some constant $c >0.$  Since, as noted, {\em any} point in $\hat R(\hat \sigma))$
 dominates {\em all} points in $\hat R \setminus \hat R\left(\hat \sigma'\right), $
 $(S_n \cap \hat R(\hat \sigma)))\not = \emptyset$ implies $\MAXSN \cap \left(\hat R \setminus \hat R\left(\hat \sigma'\right)\right)= \emptyset.$  Thus
$$
\EXP{\left|\MAXSN \cap \left( \hat R \setminus \hat R\left(\hat \sigma'\right)\right) \right|} \le n \Pr(  (S_n \cap \hat R(\hat \sigma))= \emptyset) = o(1).
$$
 This yields
 $$\EXP{\left|\MAXSN \cap \hat R \right|} \le O(\log^2 n) + o(1) = O(\log^2 n),$$
 proving Eq.\ref{eq:B1B2UP21} and thus completing the proof of the lemma.

 \end{proof}

\begin{figure}[t]
\begin{center}
\includegraphics[width=10cm]{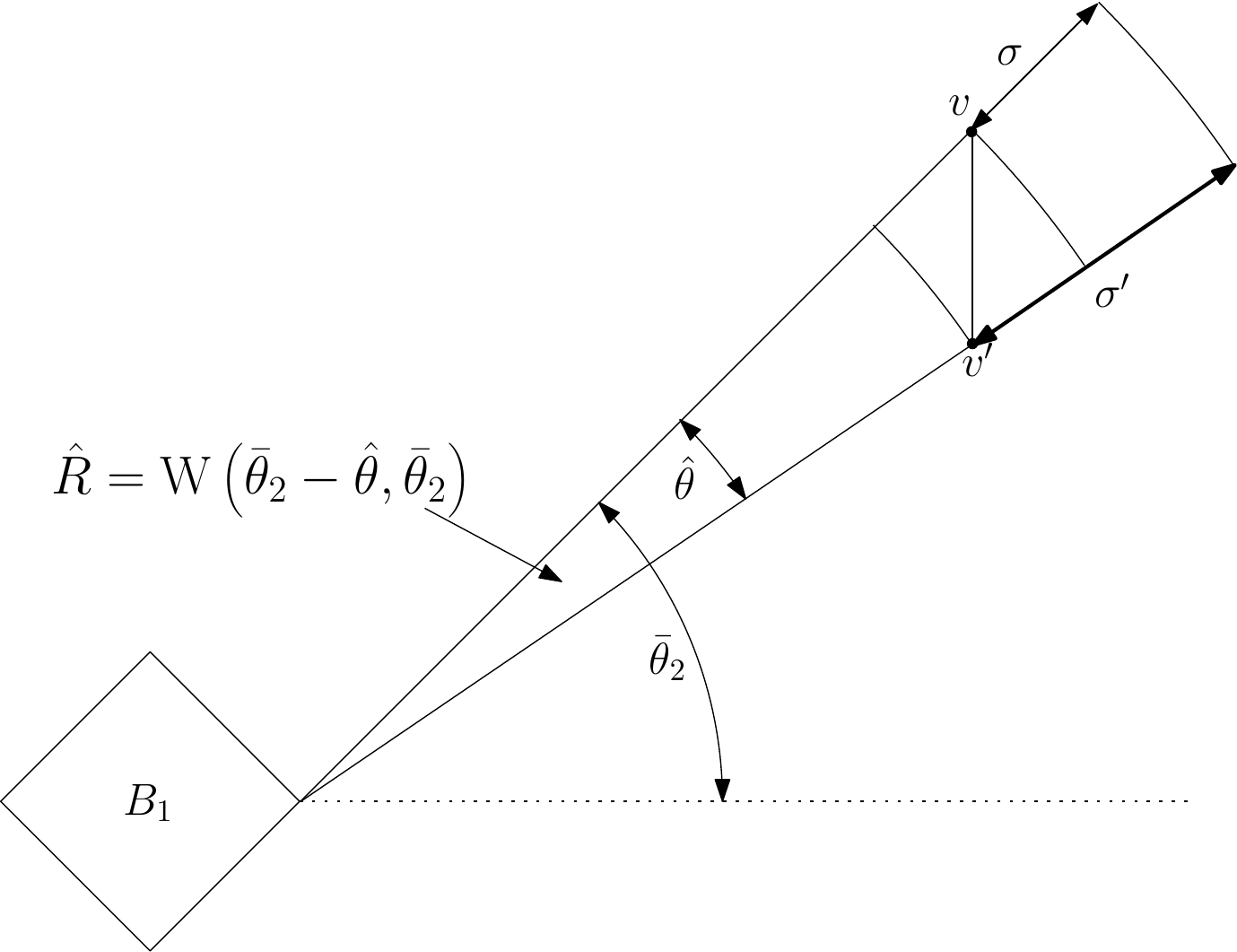}
\end{center}
\caption{Illustrates the last piece of the proof of  Lemma \ref{lem:B1B2bWedge}.  The diagram is not to scale.  $\hat \theta$ is actually  very small, and decreasing with $n.$  Note that any point in $\hat R(\sigma')$ will dominate {\em all} points in $\hat R \setminus \hat R (\sigma'+\sigma'').$
 } %
\label{fig:B2 B2 b f4}
\rule{5.5in}{0.5pt}
\end{figure}
\section{Analysis of $\Ballpq \infty q$} \label{sec: BiBq}
This section derives cells (iv)(c-d) in Theorem \ref{thm: main}, that is,  if $n$  points are chosen from $\bfd = \Ballpq \infty q$,  for any fixed $q \in [1,\infty)$  and  $\frac 1  {\sqrt n} \le \delta \le 1$, then 
$\EMN = \Theta \left(  \ln n + \sqrt \delta n^{1/4}  \right)$.
Note that this implies that $\EMN= \Theta(\ln n)$ for 
$\frac 1 {\sqrt n} \le \delta \le \frac {\ln^2 n} {\sqrt n}$ and it only starts increasing as $\delta > \frac {\ln^2 n} {\sqrt n}.$

Applying Lemma  \ref {lem: scaling} also provides  a full analysis for $\Ballpq q \infty.$

Corollary \ref {cor: Quadrants} states that 
$$\EMN =   \EXP {|Q_1 \cap \MAX(S_n)|}  + O(1),$$
so our analysis will be restricted to the upper-right  quadrant $Q_1$.
Our approach will be to
\begin{enumerate}
\item State a convenient expression for $\mu(u)$ (proof delayed until later).
\item Derive a lower bound using Lemma \ref {lem: lb} by defining an appropriate  pairwise disjoint collection of dominant regions. 
\item Derive an upper bound by partitioning $D$ into appropriate regions and applying the sweep Lemma.
\end{enumerate}

Note that this section differs slightly from the previous ones in that it allows $q$ to be {\em any} value in $[1,\infty]$. The Lemmas and Theorems are correct for all such $q,$ but the constants implicit in the $\Theta(),$ $O()$ and $\Omega()$ terms will depend explicitly upon $q.$  We caution the reader that the diagrams are all drawn for the case $q=2$ and other cases might look quite different.

\begin{Definition}
\lab{def:BiBq Regions}
Let $D_1= (B_\infty+ \delta B_q) \cap Q_1$ be the support in the first quadrant.  
Partition $D_1$ into the following four regions
\begin{align*}
A &= \left\{ u \in D_1\,:\, u.x \le 1,\, u.y \le 1\right\}, &   C &= \left\{ u \in D_1\,:\, u.x \ge 1,\, u.y \le 1\right\}, \\
B&= \left\{ u \in D_1\,:\, u.x \ge 1,\, u.y \ge 1\right\}, &   C &= \left\{ u \in D_1\,:\, u.x \le 1,\, u.y \ge 1\right\}. \\
\end{align*}
If  $v \in B$, define
$$ \alpha(v) = \left(\delta^q - (p.y)^q\right)^{1/q} - p.x
\quad\mbox{and}\quad
 \beta(v) = \left(\delta^q - (p.x)^q\right)^{1/q} - p.y.
$$
If $v \in C$, define
$$ \alpha(v) = 1+ \delta - p.x
\quad\mbox{and}\quad
 \beta(v) =  \left(\delta^q - (\delta -\alpha(v))^q\right)^{1/q}.
$$
$\alpha(v)$ is the
horizontal distance from $v$ to the right border of $D$.  If $v \in B,$ $\beta(v)$ is the distance from $v$ to the vertical border but  if $v \in C,$ $ \beta(v)$ is  just a function of $\alpha(v).$
\end{Definition}

\begin{figure}[t]
\begin{center}
\includegraphics[width=4in]{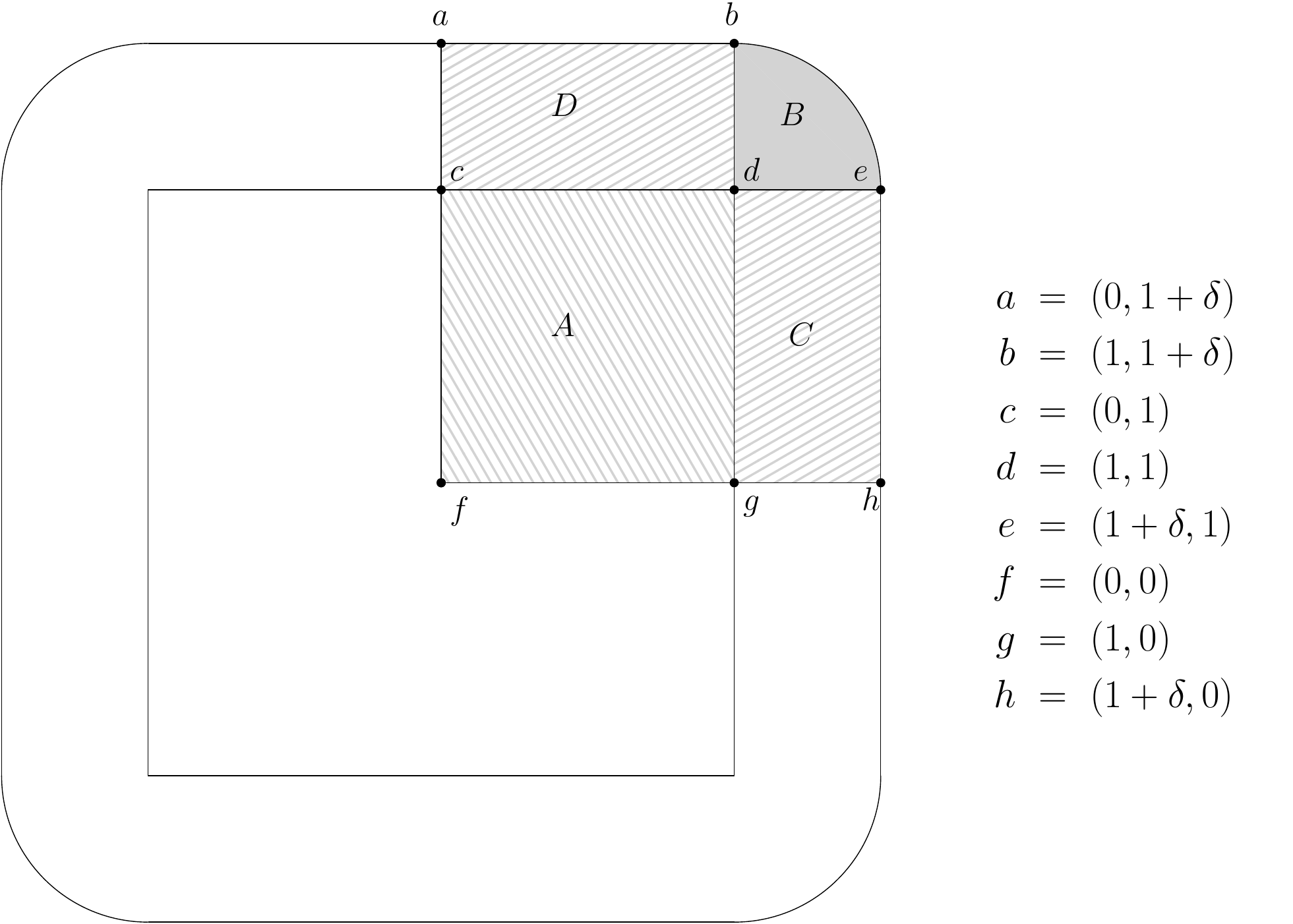}
\end{center}

\caption{ The regions introduced  in Definition \ref{def:BiBq Regions}. }
\lab{fig:LB-infty_Scema}
\rule{5.5in}{0.5pt}
\end{figure}

\begin{Lemma}
\lab{lem:BiBq measure}
Let $\bfd = \Ballpq \infty q$ 
and $v \in D_1.$
\begin{itemize}
\item     
 If $v \in A$, then 
\begin{equation}
\label{eq:BiBqm4}
f(v) =\left\{
\begin{array}{ll}
\Theta (1) & \mbox{if $ \delta \le 1$},\\
\Theta\left( \frac  1 {\delta^2} \right)& \mbox{if $ \delta > 1$}.
\end{array}
\right.
\end{equation}
\item  If $v \in B$, then 
\begin{eqnarray}
f(v) &=&\Theta\left(
\frac    {  \min\Bigl(\alpha(v),\, 1\Bigr) \cdot  \min\Bigl(\beta(v),\, 1\Bigr)} {\delta^2}   \label{eq:BiBqm5}\right).\\[0.1in]
\mu\left(P(v)\right) &=& \Theta\left(\alpha(v) \beta(v) f(v)  \label{eq:BiBqm6} \right).
\end{eqnarray}
\item If $v \in C$, then 
\begin{equation}
\label{eq:BiBqm7}
f(v) =\Theta\left(
\frac    {\min\Bigl(\alpha(v),\, 1\Bigr) \cdot  \min\Bigl(\beta(v),\, 1\Bigr)} {\delta^2}
\right).
\end{equation}
\end{itemize}
\end{Lemma}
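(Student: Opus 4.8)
\textbf{Proof proposal for Lemma \ref{lem:BiBq measure}.}

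The plan is to compute $f(v)$ in each of the three regions $A$, $B$, $C$ directly from the integral representation in Lemma \ref{lem: measure integral}, which states $f(v) = \Theta\!\left(\Area\bigl(B_q(v,\delta)\cap B_\infty\bigr)/\delta^2\right)$. So the entire task reduces to estimating, up to constant factors, the area of the intersection of the $L_q$-ball of radius $\delta$ centered at $v$ with the unit $L_\infty$-square $B_\infty = [-1,1]^2$. The case $v\in A$ is the easiest: when $\delta\le 1$ the point $v$ lies in $B_\infty$ and a constant fraction of $v+\delta B_q$ stays inside $B_\infty$ (this is exactly Lemma \ref{lem:easy mu}(b) with $\kappa=1$), giving $f(v)=\Theta(1)$; when $\delta>1$ the intersection region has area $\Theta(1)$ because $B_\infty$ itself has area $2$ and a constant fraction of it is within $L_q$-distance $\delta$ of $v$, so $f(v)=\Theta(1/\delta^2)$. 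Plugging into Eq.~\ref{eq: fdef} yields Eq.~\ref{eq:BiBqm4}.

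For $v\in B$ (so $v.x\ge 1$, $v.y\ge 1$), the point lies beyond the top-right corner of $B_\infty$, and the relevant intersection $B_q(v,\delta)\cap B_\infty$ is localized near that corner. The quantities $\alpha(v)$ and $\beta(v)$ from Definition \ref{def:BiBq Regions} are precisely the horizontal and vertical ``reach'' of the ball into $B_\infty$ along the two boundary edges $x=1$ and $y=1$. I would argue, by a direct geometric comparison, that the intersection region is sandwiched between two axis-aligned rectangles of dimensions $\Theta(\min(\alpha(v),1))\times\Theta(\min(\beta(v),1))$ — the inner one inscribed in the lens-shaped intersection and the outer one circumscribing it — where the capping at $1$ accounts for the case where $\alpha(v)$ or $\beta(v)$ exceeds the side length $2$ of $B_\infty$ (this is where convexity of the $L_q$ ball is used, to ensure the intersection is ``rectangle-like''). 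This gives $\Area\bigl(B_q(v,\delta)\cap B_\infty\bigr) = \Theta\bigl(\min(\alpha(v),1)\cdot\min(\beta(v),1)\bigr)$, hence Eq.~\ref{eq:BiBqm5}. For Eq.~\ref{eq:BiBqm6}, note that $P(v)$ is contained in $B$ (shifted by the corner), and $P(v)$ is, up to constant factors, a rectangle of dimensions $\alpha(v)\times\beta(v)$; moreover, for every $u\in P(v)$ one has $\alpha(u)\le\alpha(v)$ and $\beta(u)\le\beta(v)$ so $f(u)=O(f(v))$, while on a constant-fraction subrectangle (say $\alpha(u)\ge\alpha(v)/2$, $\beta(u)\ge\beta(v)/2$) one has $f(u)=\Theta(f(v))$; integrating over $P(v)$ then gives $\mu(P(v))=\Theta(\Area(P(v))\,f(v))=\Theta(\alpha(v)\beta(v)f(v))$.

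For $v\in C$ (so $v.x\le 1$, $v.y\ge 1$), the point lies directly above the top edge $y=1$ of $B_\infty$, and again $B_q(v,\delta)\cap B_\infty$ is a lens-shaped region hugging that edge. Here $\alpha(v)=1+\delta-v.x$ is the horizontal extent measured from the right side, $\beta(v)$ is the vertical penetration depth below $y=1$, and the same inner/outer rectangle sandwich argument as in case $B$ applies, with the min-with-$1$ clipping when the region would overflow the width or reach below $y=-1$; this yields Eq.~\ref{eq:BiBqm7}. The main obstacle — and the step requiring the most care — is the case $v\in B$: one must handle all the sub-configurations of how $v+\delta B_q$ meets the corner of $B_\infty$ (e.g.\ whether the ball reaches the far edges $x=-1$ or $y=-1$, whether $\alpha(v)$ or $\beta(v)$ is the larger, whether the corner $(1,1)$ itself is inside the ball), and verify in each that the area is captured by $\min(\alpha,1)\cdot\min(\beta,1)$ up to a constant depending only on $q$; the convexity and the bounded aspect ratio of $L_q$ balls for fixed $q$ are what make these constants uniform, and I would organize the argument around the inscribed/circumscribed rectangle bounds to avoid an unwieldy case split.
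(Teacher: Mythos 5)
Your proposal is correct and follows essentially the same approach as the paper: express $f(v)$ via $\Area(P'(v))/\delta^2$ from Lemma~\ref{lem: measure integral}, then estimate $\Area(P'(v))$ by sandwiching it between an inscribed triangle and a circumscribed rectangle of dimensions $\Theta(\min(\alpha(v),1))\times\Theta(\min(\beta(v),1))$, and obtain Eq.~\ref{eq:BiBqm6} by noting $f(u)=O(f(v))$ throughout $P(v)$ together with $f(u)=\Omega(f(v))$ on a constant-area subregion where $\alpha(u)\ge\alpha(v)/2$, $\beta(u)\ge\beta(v)/2$. The only cosmetic difference is in the $v\in B$ case: the paper first shows via a point reflection through $(1,1)$ that $B_q(v,\delta)\cap\{u.x\le 1,\,u.y\le 1\}$ is congruent to $P(v)$, which makes the identification of $\alpha(v),\beta(v)$ with the side lengths of the intersection exact before the sandwich step, whereas you read $\alpha,\beta$ directly as the ball's horizontal and vertical reach past the corner of $B_\infty$; both routes give the same bounds and handle the min-with-$1$ clipping by the same case split.
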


\begin{figure}[t]
\begin{center}
\includegraphics[width=4in]{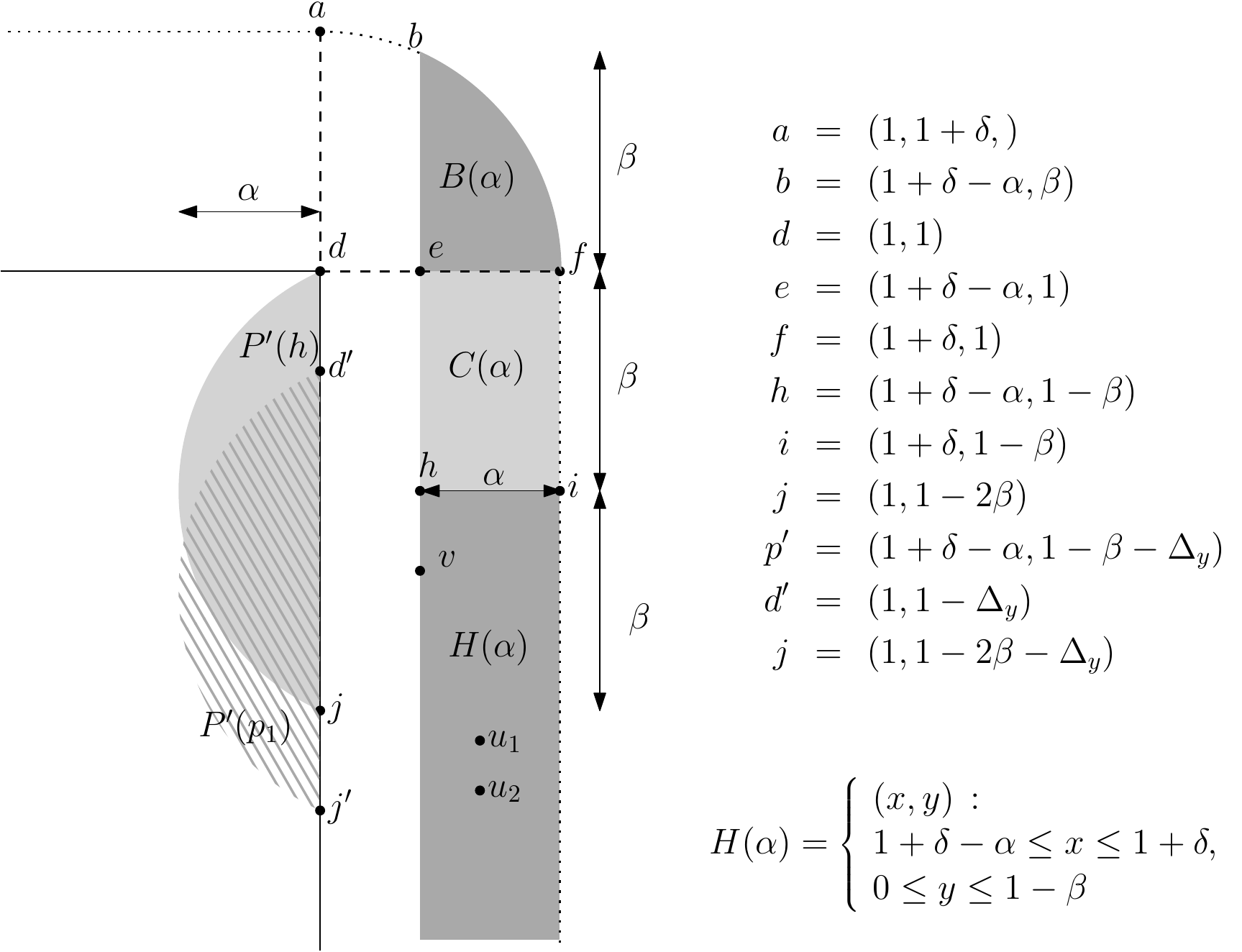}
\end{center}
\caption{ Illustration of Lemma~\ref{lem:LB-infty-BQ-side} and its proof.   Note that only the top of $H(\alpha)$ is shown.  This illustrates the case $\delta \le 1;$ the case $\delta > 1$ looks different.
}
\lab{fig:LB-infty_Vb}
\rule{5.5in}{0.5pt}
\end{figure}

\begin{Lemma}
\lab{lem:LB-infty-BQ-side}
Fix  $\alpha \le \delta$. Define $\beta = \left(\delta^q - (\delta -\alpha)^q\right)^{1/q}$. See Fig.~\ref{fig:LB-infty_Vb}. 

Set $x=1 + \delta - \alpha$  and points  $e=(x,1)$, $h = (x, \max(0,1-\beta))$. Further define regions
\begin{eqnarray*}
B(\alpha) & =& P(e),\\
C(\alpha) & =& P(h) \setminus P(e),\\
H(\alpha) & =& \left\{ u \in C \,:\, \alpha(u)\le \alpha(x),\  1+\delta - \alpha \le u.x \le 1 + \delta, \quad  u.y \le \max(1 - \beta,0)\right\}.
\end{eqnarray*}
Note that if $\beta \ge 1$, then $H(\alpha) = \emptyset.$

\par\noindent (A)  if $\beta \le 1,$  then 
\begin{enumerate}
\item   If $v=(x,v.y)$ where  $ -1 + \beta \le p.y \le 1 + \beta$
then 
$f(h) = f(v)$.
\item  If $u_1,u_2 \in H(\alpha)$ with  $u_1.x = u_2.x,$ then  $f(u_1) = f(u_2).$
\item  $\mu(C(\alpha)) = \Theta(\mu(B(\alpha))) = \Theta\left( \frac {\alpha^2 \beta^2} {\delta^2} \right).$
\end{enumerate}

\par\noindent  (B)  if $\beta > 1,$  then 
$$\mu(C(\alpha)) = O(\mu(B(\alpha))). $$
\end{Lemma}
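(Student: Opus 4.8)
\medskip
\noindent\textbf{Proof strategy.}
The whole statement is planar geometry resting on the identity $f(v) = \Theta\!\left(\Area(P'(v))/\delta^2\right)$ from Eq.~\ref{eq: fdef} of Lemma~\ref{lem: measure integral} (with $P'(v) = (v+\delta B_q)\cap B_\infty$ as in Definition~\ref{def:Pprime}), together with the explicit densities in regions $B$ and $C$ already supplied by Lemma~\ref{lem:BiBq measure}. The relevant geometry of $D = B_\infty + \delta B_q$ is: for $u.x \ge 1$, $0\le u.y\le 1$ its boundary is the vertical line $u.x = 1+\delta$, while for $u.x, u.y \ge 1$ it is the arc $(u.x-1)^q+(u.y-1)^q = \delta^q$; correspondingly $B(\alpha) = P(e)$ lies in region $B$ and $C(\alpha)$ lies in region $C$. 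The plan is to handle (A)(1)--(A)(2), then (A)(3), then (B); all hidden constants may depend on $q$, which the paper already allows.

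\noindent\emph{(A)(1) and (A)(2).} These are translation–invariance facts about the preimage. Fix the $x$-coordinate $x = 1+\delta-\alpha$. For $v = (x, v.y)$ with $v.y$ in the stated range, $P'(v)$ equals the ``left cap'' $\{(\xi,\eta)\in v+\delta B_q : \xi\le 1\}$: the constraint on $v.y$ is exactly what keeps the horizontal edges $\eta=\pm1$ of $B_\infty$ from cutting this cap (its extreme heights over $\xi=1$ being $v.y\pm\beta$, with $h$ the topmost such point), so the only clipping is along vertical lines and is independent of $v.y$. Hence the caps for two such points differ by a pure vertical translation, so have equal area, and $f(v)=f(h)$ by Eq.~\ref{eq: fdef}. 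Part (A)(2) is the same argument for $u_1,u_2\in H(\alpha)$ with $u_1.x=u_2.x$: the conditions $\alpha(u)\le\alpha$ and $u.y\le\max(1-\beta,0)$ in the definition of $H(\alpha)$ are precisely those making $P'(u)$ an un-clipped cap, so $P'(u_1)$ and $P'(u_2)$ are vertical translates of one another.

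\noindent\emph{(A)(3).} Since $\beta\le1$ here, $\alpha\le1$ as well (if $\delta>1$ then $\beta\le1$ already forces $\alpha<1$), so the ``$\min(\cdot,1)$'' truncations of Lemma~\ref{lem:BiBq measure} are inactive near $e,h$. For $B(\alpha)=P(e)$ with $e\in B$, Eqs.~\ref{eq:BiBqm5}--\ref{eq:BiBqm6} give directly $\mu(B(\alpha))=\Theta(\alpha(e)\,\beta(e)\,f(e))=\Theta(\alpha\cdot\beta\cdot\alpha\beta/\delta^2)=\Theta(\alpha^2\beta^2/\delta^2)$, using $\alpha(e)=\alpha$, $\beta(e)=\beta$ from Definition~\ref{def:BiBq Regions}. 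For $C(\alpha)$: since $e$ dominates $h$, $P(e)\subseteq P(h)$ and $C(\alpha)=P(h)\setminus P(e)$ is, up to measure zero, the axis-parallel rectangle $[x,1+\delta]\times[1-\beta,1]$ (the part of $P(h)$ below $y=1$, lying entirely in region $C$, where the boundary of $D$ is the line $u.x=1+\delta$); its area is $\alpha\beta$. On this rectangle Eq.~\ref{eq:BiBqm7} makes $f(u)=\Theta(\alpha_C(u)\,\beta_C(u)/\delta^2)$ a function of $u.x$ alone, with $\alpha_C(u)=1+\delta-u.x\in[0,\alpha]$; so $\mu(C(\alpha))\le \alpha\beta\cdot f|_{u.x=x}=O(\alpha^2\beta^2/\delta^2)$, and restricting to $u.x\in[x,x+\alpha/2]$ — where $\alpha_C(u)=\Theta(\alpha)$ and, by concavity of $a\mapsto\delta^q-(\delta-a)^q$ (value $0$ at $a=0$), also $\beta_C(u)=\Theta(\beta)$ — yields the matching $\Omega(\alpha^2\beta^2/\delta^2)$. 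Hence $\mu(C(\alpha))=\Theta(\mu(B(\alpha)))$.

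\noindent\emph{(B) and the main obstacle.} Now $\beta>1$ (so $\delta>1$) forces $h=(x,0)$, whence $C(\alpha)=P(h)\setminus P(e)=\{x\le u.x\le1+\delta,\ 0\le u.y<1\}\subseteq C$, a strip of area $\alpha$ on which $f$ again depends only on $u.x$; integrating and bounding $\min(\beta_C(u),1)\le1$ gives $\mu(C(\alpha))=\Theta\!\left(\delta^{-2}\int_0^\alpha\min(a,1)\min(\beta_C(a),1)\,da\right)=O(\alpha\min(\alpha,1)/\delta^2)$. On the other hand $\mu(B(\alpha))=\Theta(\alpha\beta\,f(e))=\Theta(\alpha\beta\min(\alpha,1)/\delta^2)$ from Eqs.~\ref{eq:BiBqm5}--\ref{eq:BiBqm6}, which is $\Omega(\alpha\min(\alpha,1)/\delta^2)$ because $\beta>1$; hence $\mu(C(\alpha))=O(\mu(B(\alpha)))$. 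The conceptual content is slight; the step I expect to take the most care is the bookkeeping in (A)(3) and (B) — checking the elementary geometric identities (that $C(\alpha)$ is exactly the claimed rectangle, resp.\ strip, and that the relevant caps are clipped only along vertical lines), tracking which ``$\min(\cdot,1)$'' is active, and disposing cleanly of the borderline parameter cases ($\alpha=\delta$, $\delta=1$, and $\delta>1$ with $\beta$ near $1$).
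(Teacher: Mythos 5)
Your proposal is correct and follows the paper's proof essentially step for step: the same vertical-translation argument for the unclipped preimage caps in (A)(1)--(A)(2), the same ``area $\times$ max/min density'' computation on the rectangle $C(\alpha)$ (with the same concavity-of-$\beta(\alpha')$ observation, since the paper's auxiliary region $C'(\alpha)=\{\alpha(u)\ge\alpha/2\}$ is precisely your $u.x\in[x,x+\alpha/2]$), and the same reliance on Eqs.~\ref{eq:BiBqm6}--\ref{eq:BiBqm7} for $\mu(B(\alpha))$. The only place you genuinely exceed the paper is part (B), which the paper dispatches with ``follows directly from Eqs.~\ref{eq:BiBqm6} and~\ref{eq:BiBqm7}''; your explicit integral bound $\mu(C(\alpha))=\Theta\bigl(\delta^{-2}\int_0^\alpha\min(a,1)\min(\beta_C(a),1)\,da\bigr)=O(\alpha\min(\alpha,1)/\delta^2)$ and comparison with $\mu(B(\alpha))=\Theta(\alpha\beta\min(\alpha,1)/\delta^2)$ spells out the reasoning the paper leaves implicit.
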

\begin{proof}
We first assume $\beta \le 1$ and prove (A).

Recall from Lemma \ref{lem: measure integral} that 
$$f(u)  =  \frac {\Area(P'(u))} {a_\infty\,  a_q \, \delta^2},
\quad\mbox{where}\quad P'(u) = B_q(u,\delta) \cap B_\infty.$$
By the definition of $h$ and the fact that $\beta \le 1,$
$$P'(h)  =
\left\{ u\,:\,  ||u-h||_q \le \delta,\,  |u.x| \le 1
\right\},
$$
which is the light gray sector in Fig. \ref{fig:LB-infty_Vb}. By basic geometric arguments, $P'(v) = P'(h) - (0,h.y -v.y),$ i.e., $P'(h)$ is shifted down appropriately. 
Thus, $ \Area(P'(h)) = \Area(P'(v))$ and  $f(h) = f(v)$, proving 1.\

Next set $\alpha' = 1 + \delta - u_1.x$ and  $\beta' = \left(\delta^q - (\alpha')^q\right)^{1/q}.$  Since $\alpha' \le \alpha,$  $\beta' \le \beta$, we can define $h'$ associated  with $\alpha',\beta'$ and then apply   part 1  twice to get $f(u_1) = f(h') = f(u_2)$, proving part 2.

For part 3 first  note that from Eq.~\ref{eq:BiBqm6} in Lemma \ref{lem:BiBq measure},  
$\mu(B(\alpha)) = \mu(P(e)) = \Theta\left( \frac {\alpha^2 \beta^2} {\delta^2} \right).$

As $\Area (C(\alpha)) = \alpha \beta$ and 
$\forall u  \in C(\alpha),   \alpha(u) \le \alpha$ then, from Eq.~\ref {eq:BiBqm7}  
$f(u) = O \left( \frac {\alpha \beta} {\delta^2} \right).$
Thus
$$\mu(C(\alpha))) = \int_{u \in C(\alpha)} f(u) du \le \Area (C(\alpha) ) \cdot \max_{u \in C(\alpha)} f(u) = O \left( \frac {\alpha^2 \beta^2} {\delta^2} \right).$$
For the other direction,  set
$$C'(\alpha) =\left\{u \in C(\alpha)\,:\, \alpha(u) \ge \alpha/2\right\}.$$
For $\alpha' \in [0,\delta]$,  $\beta(\alpha')$ is a monotonically increasing concave function.  Since $\beta(0) =0$, 
$\beta(\alpha/2)  \ge \frac 1 2 \beta(\alpha).$ Thus,  
$\forall \alpha' \in C'(\alpha),$  $\beta(\alpha' )=\ge  \beta/2.$ 
 Then
\begin{eqnarray*}
\mu(C(\alpha))) &=& \int_{u \in C(\alpha)} f(u) du
                           = \int_{u \in C'(\alpha)} f(u) du\\
                           &\ge&  \Area(C'(\alpha)   \cdot  \min\{f(u) \,:\, u \in C'(\alpha\}
                           \ge  \frac {\alpha \beta} 2 \frac {\alpha  \beta} 4 = \Omega\left( \frac {\alpha^2 \beta^2} {\delta^2} \right),
\end{eqnarray*}
and the proof of (A) is complete.

Part (B) follows  directly from Eqs.~\ref{eq:BiBqm6} and \ref{eq:BiBqm7}.
\end{proof}

\begin{Lemma}
\label{lem: BiBqLB}
Let $S_n$ be $n$ points chosen from the distribution $\bfd = \Ballpq \infty q$ with $\frac 1 {\sqrt n} \le \delta \le \sqrt n$. Then
$$\EMN = \Omega \left(  \ln n + \sqrt \delta n^{1/4}  \right).$$ 
\end{Lemma}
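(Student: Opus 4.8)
\textbf{Proof proposal for Lemma \ref{lem: BiBqLB}.}

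The plan is to prove the two terms of the lower bound separately, since $\EMN = \Omega(\ln n + \sqrt\delta\, n^{1/4})$ is equivalent to $\EMN = \Omega(\ln n)$ and $\EMN = \Omega(\sqrt\delta\, n^{1/4})$ holding simultaneously. For the $\Omega(\ln n)$ term I would invoke Lemma \ref{lem: side rec}: pick $\Delta = \frac{1}{\sqrt n} + 2\delta$ (which is legitimate since $\delta \le \sqrt n$, and for small $\delta$ lies in $(2\delta,1/2)$; when $\delta$ is large one instead appeals directly to Lemma \ref{lem: limiting} via scaling, or notes $\sqrt\delta\, n^{1/4}$ already dominates $\ln n$). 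With $T(\Delta), R_1(\Delta)$ as there, Lemma \ref{lem:easy mu}(c,d) gives $\mu(T(\Delta)) = O(1/n)$ and $\mu(R_1(\Delta)) = \Theta(1/\sqrt n)$, so by Lemma \ref{lem: basic mu}(c), $\Pr(S_n \cap T(\Delta) = \emptyset) = \Omega(1)$. Since the only points outside $R_1(\Delta)$ that can dominate a point of $R_1(\Delta)$ lie in $T(\Delta)$, conditioning on $S_n \cap T(\Delta) = \emptyset$ makes $\MAX(S_n) \cap R_1(\Delta) = \MAX(S_n \cap R_1(\Delta))$, and Lemma \ref{lem: side rec}(b) yields $\Omega(\ln n)$ expected maxima there. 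This is exactly the argument already used in the lower-bound half of Lemma \ref{lem: limiting}, case $p=\infty$.

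For the $\Omega(\sqrt\delta\, n^{1/4})$ term I would build a packing of $\Omega(\sqrt\delta\, n^{1/4})$ pairwise disjoint dominant regions each of measure $\Omega(1/n)$ and apply Lemma \ref{lem: lb}. The natural place to put them is in region $C$ (or its reflection in $B$) near the right end of the support, where the density behaves like the $B_2$-ball case; this mirrors the construction in part (b) of the proof of Lemma \ref{lem: B1B2LB}. Concretely, set $\sigma = \frac{\sqrt\delta}{2 n^{1/4}}$ and place points $v_i \in C$ with $\sigma(v_i) = \Theta(\sigma)$ and $\theta(v_i)$ confined to a fixed angular range bounded away from both ends (say $\theta(v_i) \le \pi/8$), spaced horizontally by $\Theta(\sigma)$. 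By Lemma \ref{lem:LB-infty-BQ-side}(A)(3) — or directly from Eq.~\ref{eq:BiBqm6} and Eq.~\ref{eq:BiBqm7} of Lemma \ref{lem:BiBq measure} — one gets $\mu(P(v_i)) = \Theta(\alpha(v_i)^2\beta(v_i)^2/\delta^2)$, and with $\alpha(v_i),\beta(v_i) = \Theta(\sigma)$ this is $\Theta(\sigma^4/\delta^2) = \Theta(1/n)$. The horizontal extent available in $C$ is $\Theta(\delta)$, so the number of disjoint triangles $P(v_i)$ we fit is $\Theta(\delta/\sigma) = \Theta(\sqrt\delta\, n^{1/4})$, and each $P(v_i)$ is dominant. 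Lemma \ref{lem: lb} then gives the bound. One must check $\sigma \le \delta$ (equivalently $\delta \ge 1/\sqrt n$) so that $\sigma(v_i) < \delta$ and the $C$-region density formulas apply; this is exactly the hypothesis.

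I expect the main obstacle to be the bookkeeping in the $C$-region construction: verifying that for points $v_i$ with bounded $\theta(v_i)$ one genuinely has $\alpha(v_i) = \Theta(\sigma(v_i))$ and $\beta(v_i) = \Theta(\sigma(v_i))$ (the relation $\beta = (\delta^q - (\delta-\alpha)^q)^{1/q}$ must be expanded for small $\alpha/\delta$ to see $\beta = \Theta((\delta\alpha)^{1/2})$ in general, which is why $\sigma(v_i)<\delta$ matters), and confirming that $P(v_i)$ — the true dominance region in the curved support, not just the triangle — still contains a triangle of the right size and is contained in a rectangle of the right size, so that $\mu(P(v_i)) = \Theta(1/n)$ with both-sided bounds. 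All of this is routine geometry of the $L_q$ arc, entirely parallel to the $\Ballpq 2 2$ lower bound, so the writeup can cite Lemma \ref{lem: B2B2LB} and Lemma \ref{lem:LB-infty-BQ-side} rather than redo it. Finally, combining: $\EMN \ge \max(\Omega(\ln n), \Omega(\sqrt\delta\,n^{1/4})) = \Omega(\ln n + \sqrt\delta\,n^{1/4})$, since $\max(a,b) = \Theta(a+b)$ for nonnegative $a,b$.
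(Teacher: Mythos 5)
Your $\Omega(\sqrt\delta\, n^{1/4})$ argument is essentially the paper's part (a): place $\Theta(\delta/\sigma)$ dominant triangles along the curved boundary, each of measure $\Theta(\sigma^4/\delta^2) = \Theta(1/n)$, and invoke Lemma~\ref{lem: lb}. Two cosmetic caveats: the curved part of the boundary for $\Ballpq{\infty}{q}$ is the arc of $\delta B_q$ centered at $(1,1)$, so the construction lives in the corner region $B$ of Definition~\ref{def:BiBq Regions}, not the side strip $C$ (whose right boundary is the vertical line $x=1+\delta$, where the analogous packing degenerates); and the polar-style notation $\sigma(v),\theta(v)$ you borrow from Section~\ref{sec: B1B2} is not set up in Section~\ref{sec: BiBq} — the paper instead parametrizes by $\alpha(v),\beta(v)$ and controls $\beta(p_i)=\Theta(\alpha(p_i))$ via the mean-value theorem on $g(x)=1+(\delta^q-(x-1)^q)^{1/q}$ restricted to a fixed middle portion of the arc. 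With those fixes the argument goes through.

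The $\Omega(\ln n)$ half has a genuine gap in the range $\frac{1}{\sqrt n}\ll\delta\ll\frac{\log^2 n}{\sqrt n}$, and this is exactly the range where the $\ln n$ term of the statement is the binding one. Lemma~\ref{lem: side rec}(b) feeds into a conditional-expectation argument that needs $\Pr(S_n\cap T(\Delta)=\emptyset)=\Omega(1)$, i.e.\ $\mu(T(\Delta))=O(1/n)$. But the hypothesis $\Delta\in(2\delta,1/2)$ is essential to Lemma~\ref{lem: side rec} (it is what makes the density on $R_1(\Delta)$ depend only on the $x$-coordinate, so the $B_q$-balls around points of $R_1(\Delta)$ stay clear of the horizontal edges of $B_\infty$), and it forces the sub-square $T(\Delta)\cap B_\infty$, of side $\Delta-\delta\ge\delta$, to carry measure $\Theta((\Delta-\delta)^2)=\Omega(\delta^2)$ by Lemma~\ref{lem:easy mu}(d). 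Hence $\mu(T(\Delta))=\omega(1/n)$ as soon as $\delta=\omega(1/\sqrt n)$, the conditioning event has vanishing probability, and no useful lower bound emerges. Your fallback ``$\sqrt\delta\, n^{1/4}$ already dominates $\ln n$'' only kicks in at $\delta=\Omega(\log^2 n/\sqrt n)$, and ``Lemma~\ref{lem: limiting} via scaling'' only addresses $\delta=\Omega(\sqrt n)$; neither covers the intermediate window. The paper's part (b) avoids this by replacing the square-shaped $T(\Delta)$ with the ``thin'' corner region $T(\alpha)=B(\alpha)\cup C(\alpha)$ of Lemma~\ref{lem:LB-infty-BQ-side}, anchored at the point $e=(1+\delta-\alpha,1)$ on the $B$--$C$ interface. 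Its measure is $\Theta(\alpha^2\beta^2/\delta^2)$ with $\beta=\beta(\alpha)$ itself vanishing as $\alpha\to 0$, so one can always pick $\alpha$ (no longer forced to exceed $2\delta$) making $\mu(T(\alpha))=\Theta(1/n)$; the lower rectangle $H(\alpha)$ then plays the role of $R_1(\Delta)$, with $\mu(H(\alpha))=\Omega(1/(\sqrt n\log^2 n))$ for $\delta\le\log^2 n/\sqrt n$, enough to run the $\Omega(\ln n)$ binomial argument. You would need to replace your invocation of Lemma~\ref{lem: side rec} by this construction (or something with the same shape-flexibility) to close the gap.
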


\begin{proof}
The proof splits into two parts;
\begin{itemize}
\item[(a)]  $\forall \delta$ satisfying   $ \frac 1 {\sqrt n} \le \delta \le \sqrt n$, \quad  $\EXP{|\MAXSN \cap B|} = \Omega\left(\sqrt \delta n^{1/4}\right).$ 
\item[(b)]  $\forall \delta$ satisfying   $ \frac 1 {\sqrt n} \le \delta \le \frac {\log^2} {\sqrt n}$, \quad  $\EXP{|\MAXSN \cap C|} = \Omega\left(\ln n\right).$ 
\end{itemize}
By symmetry,   $\EXP{|\MAXSN \cap D|} =\EXP{|\MAXSN \cap B|} $.  Combining this with 
(a) and  (b)  proves the lemma. 

\bigskip

\par\noindent\underline{(a)  $\EXP{|\MAXSN \cap B|}$ for  $ \frac 1 {\sqrt n} \le \delta \le \sqrt n$:}

\begin{figure}[t]
\begin{center}
\includegraphics[width=3in]{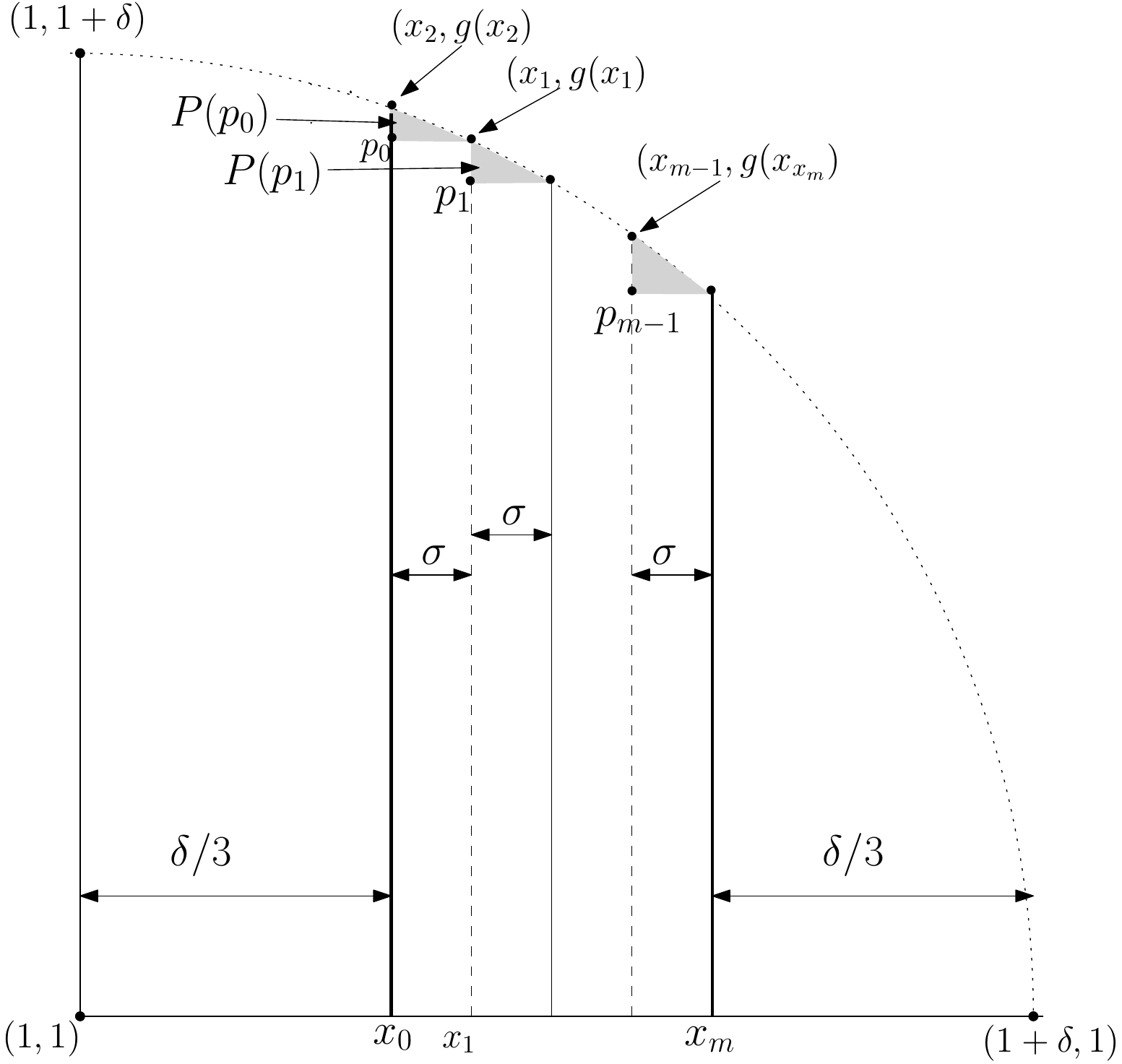}
\end{center}
\caption{ Illustration of  case (a) of the upper bound.     
}
\label{fig:BiBqUPa}
\rule{5.5in}{0.5pt}
\end{figure}
Fix $\sigma$  as  a  value to be determined later. Set $m = \lfloor \delta/(3 \sigma) \rfloor$ and, 
for $ i \ge 0,$ 
$$ x_i =1 + \frac \delta 3 +  i \sigma,
\quad \quad
y_i = g(x_{i+1}),
\quad\quad  p_i = (x_i,y_i).
$$
Then
$$g(x) =  1 + \left(\delta^q - (x-1)^q\right)^{1/q}$$
is the equation of the top boundary of $B.$
We will show that for $0 \le i < j \le m-1,$  with  an appropriate choice of $\sigma$,   then $P(p_i) \cap P(p_j)= \emptyset $ and $\mu(P(p_i)) = \Omega(1/n)$.
Lemma \ref {lem: lb}
 would then imply  that $\EXP{M_n} = \Omega(m)$.

By construction, 
$$\alpha(p_i)  = x_{i+1} - x_i = \sigma,\quad \beta(p_i)  = y_i - y_{i+1} = g(x_i) - g(x_{i+1}),$$
and $P(p_i) \cap P(p_j)= \emptyset $ .

The Theorem of the Mean states that,  for $ 1 < x_i < x_{i+1} <1 + \delta,$ 
\begin{equation}
\label{eq:asigma relation}
\beta(p_i)= |g(x_{i+1}) - g(x_i)| = |g'(z) | |x_{i+1} - x_i| = |g'(z) | \sigma,
\end{equation}
for some $x_i \le z \le x_{i+1}$. Direct differentiation yields
\begin{equation}
\label{eq:gderiv}
g'(x)=\frac{q(x-1)^{q-1}}{q}(\delta^q-(x-1)^q)^{\frac{1}{q}-1}.
\end{equation}

Observe that $\forall x \in  [ 1+ \delta/3, 1 + 2 \delta/3]$,  then
$$|g'x| = \Theta\left(   \delta^{q-1} \left(\delta^q\right)^{\frac 1 q -1}\right)  = \Theta(1),$$ 
and thus,  by
Eq.~\ref{eq:asigma relation},   $\beta(p_i)= \Theta (\sigma)$,
where the constant in the $\Theta(\, )$ depends only upon $q.$

Set $\sigma = c'\sqrt \delta/ n^{1/4}$ for some small $c' > 0.$
For all $i \in [0,m-1]$,  Eq.~\ref{eq:BiBqm6} in Lemma \ref{lem:BiBq measure} yields\footnote{This  implicitly requires that $\beta < \min(1,\delta/3)$, but this is guaranteed by $\delta > 1 /\sqrt n$ and appropriate choice of $c'.$}
\begin{eqnarray*}
\mu(P(p_i)) &=&
\Theta\left(\alpha(v) \beta(v) f(v)\right)\\
&=& \Theta\left( \frac {\alpha^2(v) \beta^2(v) } {\delta^2}\right)\\
&=&  \Theta\left( \frac {\sigma^4} {\delta^2}\right) = \Theta\left( \frac 1 n \right).
\end{eqnarray*}
Lemma \ref {lem: lb}
then yields   
 $$\EXP{M_n} = \Omega(m) = \Omega\left( \frac \delta  \sigma  \right) =\Omega\left( \sqrt \delta n^{1/4}\right).$$

\medskip

\par\noindent\underline{(b)  $\EXP{|\MAXSN \cap C|}$ for  $ \frac 1 {\sqrt n} \le \delta \le \frac {\log^2} {\sqrt n}$:}

For fixed   $\alpha' \le \delta$ define $\beta_{\alpha'}  = \left(\delta^q - (\delta - \alpha')^q\right)^{1/q}$ and point $e(\alpha') = (1 + \delta - \alpha,1)$.  This is illustrated in  Fig.~\ref{fig:LB-infty_Vb}.

 Note that if $\alpha'= \delta$ then $\beta_{\alpha'} =\delta$ and
$\frac {(\alpha')^2 (\beta_{\alpha'})^2} {\delta^2} = \Omega \left( \frac 1 n \right)$.  Since $\beta_{\alpha'}$ increases monotonically with $\alpha'$ we can, for suitably small but fixed  $c$ always  find $\alpha \le \delta$ such that $\frac {\alpha^2 \beta^2} {\delta^2} = \frac c n$.  Fix this $\alpha$ and set $\beta = \beta_\alpha,$ $e = e(\alpha).$

\medskip

Set   $B(\alpha) = P(e),$  and let $C(\alpha)$ and $H(\alpha)$ be  as defined in Lemma \ref{lem:LB-infty-BQ-side}.

  To simplify the  remainder of the proof, set  $T(\alpha) =  B(\alpha) \cup C(\alpha).$ 
  Points in $H(\alpha)$ can only be dominated by points in  $T(\alpha)$. So, if  $S_n \cap T(\alpha)$  is empty then
$$\MAXSN \cap  H(\alpha)  =  \MAX\left(S_n  \cap  H(\alpha)\right).$$
 This implies 
\begin{eqnarray*}
\EXP{|\MAX(S_n)|}  &\ge & \EXP{|\MAX(S_n \cap H(\alpha))|}  \\
			&\ge & \EXP{|\MAX(S_n \cap H(\alpha)) \,\Bigm|\,   S_n \cap T(\alpha) = \emptyset}  \cdot \PR{S_n \cap T(\alpha) = \emptyset}.
\end{eqnarray*}
  
  From  Lemma \ref{lem:LB-infty-BQ-side}.
$$\mu(T(\alpha)) = \Theta\left(\mu(B(\alpha) + \mu(C(\alpha)\right)  =  \Theta\left(\frac {\alpha^2 \beta^2} {\delta^2} \right)= \Theta\left(\frac 1 n\right).$$
Thus,  from Lemma \ref{lem: basic mu}, 
$$\Pr(T(\alpha)  \cap S_n = \emptyset) =   \left(1 - \mu(T(\alpha))\right)^n = \Theta(1).$$
Let 
$$
\hat f(v) =
\left\{
\begin{array}{ll}
\frac {f(v)} {1 - \mu(T(\alpha))} & \mbox{ if $v \in (B_\infty \cap \delta B_q)\setminus T(\alpha)$},\\
0 & \mbox{ otherwise}.
\end{array}
\right.
$$
Note that $\hat f(v)$ is the density of the original distribution $\bfd$ 
{\em  conditioned on  the point chosen not being in $T(\alpha)$}.

Thus 
{\em  Conditioned on $S_n \cap T(\alpha)=\emptyset$}, the distribution of 
$S_n \cap H(\alpha)$,   is equivalent to   the one generated by the following procedure:
\begin{enumerate}
\item Choosing a random variable $X$ from a binomial distribution 
$B\left(n, \hat \mu(H(\alpha)\right).$
\item  For $v \in  H(\alpha)$, setting
$ \bar f(v) = \frac { f(v)} {\mu(H(\alpha))}$, where   $\bar f(v)$ is the conditional  probability density function for choosing a point $v$ from $\Ballpq \infty q$ conditioned on knowing that $v \in H(\alpha)$. 
\item  Choosing $X$ points (in $H(\alpha)$)   from the distribution defined by $\bar f(p).$
\end{enumerate}

In  particular,  point (2) combined with  Lemma \ref {lem:LB-infty-BQ-side}  (3),  implies  that the distribution on $H(\alpha)$ defined by $\bar f(v)$ is only dependent upon the $x$-coordinate of $v,$ i.e., $\bar f(v)$ denotes a distribution in which the $x$ and $y$ coordinates  are independent of each other.

 As stated in the introduction, the number of maxima for $X$ points chosen from such a distribution behaves exactly as if the points are chosen from $\Ball \infty$.  Thus, if $X$ points are chosen using $\bar f(v)$,  the expected number of maxima among them will be $\Theta(\ln X).$   This implies 
 $$\EXP{|\MAX(S_n \cap H(\alpha))| \,\Bigm|\,   S_n \cap T(\alpha) = \emptyset} =  \Theta(\EXP{\log X}).$$ 
If $u \in C$,  Eq.~\ref{eq:BiBqm7}  states that $f(u)= \Theta(g(\alpha(u),\beta(u))$, for some function $g().$ Since $\beta(u)$ is a function of $\alpha(u)$, this implies that if $u_1,u_2 \in C$ with $u_1.x = u_2.x$, then  $f(u_1) = \Theta(f(u_2)).$   Since $C(\alpha)$ is an $\alpha \times \beta$ rectangle and $H(\alpha)$ is an $\alpha \times (1 - \beta)$ rectangle this yields
$$\mu(H(\alpha))  =  \Theta\left( \frac  {1-\beta}  \beta \mu(C(\alpha) \right)
=   \Theta\left( \frac  1  \beta \cdot \frac 1 n \right)
=\Omega \left( \frac  1  {\sqrt n \log^2 n} \right),
 $$
where the fact that $\beta \le \delta \le \log^2 n / {\sqrt n}$ is explicitly used.
Thus
$$\hat \mu(H(\alpha)) = \frac {\mu(H(\alpha)) }  {1 - \mu(T(\alpha))} = \Omega \left( \frac 1   {\sqrt n \log^2 n} \right).$$
Recall that $X$ was chosen  from a binomial distribution 
$B\left(n, \hat \mu(H(\alpha)\right).$  Using the  Chernoff bounds applied to Binomial random variables,  $X = \Omega\left(n^{1/3}\right)$ with high probability. Since $X \le n,$  this implies $\EXP{\log X} = \Theta(\ln n).$
Thus
$$\EXP{|\MAX(S_n \cap H(\alpha)) \,\Bigm|\,   S_n \cap T(\alpha) = \emptyset} = \Theta(\log n).$$
Then
\begin{eqnarray*}
\EXP{|\MAX(S_n)|}  
			&\ge & \EXP{|\MAX(S_n \cap H(\alpha)) \,\Bigm|\,   S_n \cap T(\alpha) = \emptyset}  \cdot \PR{S_n \cap T(\alpha) = \emptyset}\\
			&=&  \Theta(\log n) \Theta(1) = \Theta(\log n).
\end{eqnarray*}
completing the proof.
\end{proof}

\begin{Lemma}
\label{lem: BiBqUB}
Let $S_n$ be $n$ points chosen from the distribution $\bfd = \Ballpq \infty q$ with $\frac 1 {\sqrt n} \le \delta \le \sqrt n$. Then
$$\EMN = O \left(  \ln n + \sqrt \delta n^{1/4}  \right).$$ 
\end{Lemma}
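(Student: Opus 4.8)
The plan is to follow the five–step template used elsewhere in the paper: restrict to the first quadrant, invoke the density estimates of Lemma~\ref{lem:BiBq measure}, and bound the maxima region by region. By Corollary~\ref{cor: Quadrants} it suffices to bound $\EXP{|\MAXSN\cap Q_1|}$, and $Q_1\cap D$ is partitioned by Definition~\ref{def:BiBq Regions} into $A$, $B$, $C$ and the reflection $C'$ of $C$ across $y=x$; the $x\leftrightarrow y$ symmetry of $\bfd=\Ballpq\infty q$ gives $\EXP{|\MAXSN\cap C'|}=\EXP{|\MAXSN\cap C|}$. I would also use the trivial inclusion $\MAXSN\cap R\subseteq\MAX(S_n\cap R)$ for any region $R$, together with the observation that a point of $B$ (both coordinates $\ge1$) can only be dominated by another point of $B$, so in fact $\MAXSN\cap B=\MAX(S_n\cap B)$; thus no cross–region domination has to be tracked.

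First I would dispatch $A$ and $C$ (hence $C'$). By Lemma~\ref{lem:BiBq measure}, on $A$ the density $f$ equals $\Theta(1)$ (resp.\ $\Theta(1/\delta^2)$), and on the \emph{rectangle} $C=[1,1+\delta]\times[0,1]$ it is $\Theta(h(u.x))$ for the one–variable function $h(x)=\min(1+\delta-x,1)\cdot\min(\beta(x),1)$ (here $\beta$ is a function of $\alpha=1+\delta-x$). The routine ingredient is an auxiliary claim: \emph{if $k$ points are drawn i.i.d.\ from a density on a rectangle that is pointwise within a constant factor of a function of a single coordinate, then the expected number of maximal points among them is $O(\ln k)$.} This is proved as the classical ``independent coordinates'' estimate: write the expected count as $k\int f(v)\bigl(1-\mu(P(v))\bigr)^{k-1}dv$, change variables so that $\mu(P(v))\ge\lambda s(1-t)$ with $s,t\in[0,1]$ and $\lambda=\Theta(\mu(\text{rectangle}))$, and use $\int_0^1\!\!\int_0^1(1-\lambda st)^{k-1}\,ds\,dt=O\!\left(\tfrac{\ln(k\lambda)}{k\lambda}\right)$. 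Applying this with $k=|S_n\cap A|\le n$ and $k=|S_n\cap C|\le n$ and taking expectations gives $\EXP{|\MAXSN\cap A|}=\EXP{|\MAXSN\cap C|}=O(\ln n)$, which also covers $C'$.

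The real work is region $B$, a quarter $L_q$–ball of radius $\delta$ based at $(1,1)$, on which $f(v)=\Theta\bigl(\min(\alpha(v),1)\min(\beta(v),1)/\delta^2\bigr)$ and $\mu(P(v))=\Theta(\alpha(v)\beta(v)f(v))$. Write $\sigma(v)$ for the distance from $v$ to the boundary arc of $B$. I would first peel off the interior $B_{\mathrm{int}}=\{v\in B:\sigma(v)>d\delta\}$ for a small constant $d$: there $\alpha(v),\beta(v)=\Theta(\delta)$, so $\mu(P(v))=\Theta(\min(\delta^2,1))$, whence $\EXP{|\MAXSN\cap B_{\mathrm{int}}|}\le\int_{B_{\mathrm{int}}}nf(v)e^{-(n-1)\mu(P(v))}dv=n\,\mu(B_{\mathrm{int}})\,e^{-\Omega(\min(n\delta^2,n))}=O(1)$, using $ze^{-cz}=O(1)$ and $n\delta^2\ge1$. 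It then remains to bound the maxima in the thin near–arc layer $B'=B\setminus B_{\mathrm{int}}$, which I would handle exactly as the sweep–lemma arguments already carried out for $\Ballpq 2 2$ (Lemma~\ref{lem: B2B2UB}) and $\Ballpq 1 2$ (Lemmas~\ref{lem:B1B2bWedge}, \ref{lem: B1B2UB}): partition $B'$ into angular wedges emanating from $(1,1)$, taking $\Theta(\sqrt\delta\,n^{1/4})$ wedges of equal width in the \emph{bulk} (angles bounded away from $0,\pi/2$, where $\alpha\asymp\beta\asymp\sigma$ and $f\asymp\sigma^2/\delta^2$, so the $\Theta(\alpha\beta f)$–measures of corresponding sub–rectangles in adjacent wedges stay within a constant factor and each wedge yields $O(1)$ maxima via the Sweep Lemma), and a dyadic family of shrinking wedges near each \emph{tip} (where the local boundary is $L_q$–smooth, $\beta$ degenerates, and $f\asymp(\sigma/\delta)^{(q+1)/q}$), whose contributions form a geometric series summing to $O(\sqrt\delta\,n^{1/4})$ plus a remainder from the innermost wedge. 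Adding the contributions of $A,B,C,C'$ then gives $\EMN=O(\ln n+\sqrt\delta\,n^{1/4})$.

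The main obstacle is precisely the wedge analysis of $B'$ near its two tips: one must choose the dyadic angular widths so that the $\Theta(\alpha(v)\beta(v)f(v))$–measures of corresponding sub–rectangles in consecutive wedges stay comparable — the hypothesis of the Sweep Lemma — even as $\beta$ transitions from $\asymp\sigma$ to $\asymp(\delta^{q-1}\sigma)^{1/q}$; and one must keep the innermost–wedge remainder at $O(\ln n)$ rather than $O(\mathrm{polylog}\,n)$ (for instance by taking the measure threshold there to be $\Theta(\ln n/n)$, so the corresponding sub–region is non‑empty except with probability $O(1/n)$ and the dominated remainder contributes $O(1)$), since for $\delta=O(\ln^2 n/\sqrt n)$ the target is only $O(\ln n)$; in that smallest range one may alternatively re‑use the $T(\Delta)$–$R_1(\Delta)$–$R_2(\Delta)$–$U(\Delta)$ decomposition from the proof of Lemma~\ref{lem: limiting}(a) with $\Delta=\max\bigl(c\sqrt{\ln n/n},\,3\delta\bigr)\in(2\delta,1/2)$, combining Lemma~\ref{lem: side rec}(a) with the auxiliary claim above applied to the near–uniform core of $T(\Delta)$. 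All of this closely parallels, with heavier bookkeeping, the argument already given for $\Ballpq 1 2$.
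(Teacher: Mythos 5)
Your overall framework — restrict to $Q_1$, split by the $A,B,C$ regions of Definition~\ref{def:BiBq Regions}, and bound each piece — matches the paper, but you take a genuinely different route on every piece. For $A$ and $C$ you replace the paper's sweep‑lemma/conditioning machinery by a direct integral estimate: since $\MAXSN\cap R\subseteq\MAX(S_n\cap R)$ and, conditioned on $|S_n\cap R|=k$, the expected number of maxima is $k\int f(v)(1-\mu(P(v)))^{k-1}dv$, a density pointwise $\Theta$ of a one‑variable function on an axis‑aligned rectangle gives $O(\ln k)$ after the change of variables $s=G(x),\,t=(d-y)/(d-c)$ (with $G$ the antiderivative of the $x$‑profile); this is correct and in fact cleaner than the paper's argument, which goes via $B(\alpha),C(\alpha),H(\alpha),V(\alpha)$ and a random threshold $\bar\alpha$. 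Your $B_{\mathrm{int}}$ peel‑off is also correct — the $ze^{-cz}=O(1)$ bound works for both $\delta\le 1$ and $\delta\ge1$ — although the paper doesn't need it (its strip decomposition of $B$ already covers the interior). So for $A$, $C$, and the interior of $B$, your proposal is a valid and arguably simpler alternative.

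The gap, which you correctly flag, is the near‑arc layer of $B$. The paper handles all of $B$ with a single decomposition that \emph{never goes near the tips}: it partitions $\{v\in B: v.x\le 1+\gamma\delta\}$ into $m=\Theta(\sqrt\delta\,n^{1/4})$ vertical strips of width $\sigma=\gamma\delta/m$, where $\gamma\in(2^{-1/q},1)$ is a fixed constant, applies the Sweep Lemma using the bounded slope of the boundary cap $g(x)=1+(\delta^q-(x-1)^q)^{1/q}$ on $x\le 1+\gamma'\delta$, and then covers $\{v\in B:v.x>1+\gamma\delta\}\subset\{v:v.y\le 1+\gamma\delta\}$ by a horizontal‑strip mirror image of the same argument. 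There is no dyadic decomposition and no tip to treat. Your angular‑wedge plan with dyadic shrinking wedges near each tip is modeled on Lemma~\ref{lem:B1B2bWedge}, but that lemma gives a per‑wedge remainder of $O(\log^2 n)$, so summing over $\Theta(\log n)$ dyadic wedges yields $O(\log^3 n)$ — acceptable for $\Ballpq 1 2$ where $\delta\ge 1$ forces $\sqrt\delta\,n^{1/4}\ge n^{1/4}$, but \emph{not} here when $\delta$ is as small as $\Theta(\log^2 n/\sqrt n)$ and the target is $O(\ln n)$. Your fallback of re‑using the $T(\Delta),R_1(\Delta),R_2(\Delta)$ decomposition with $\Delta=\max(c\sqrt{\ln n/n},3\delta)$ doesn't close the gap either: for $\delta$ near $\log^2 n/\sqrt n$, $\Delta=3\delta$ makes $n\mu(T(\Delta))=\Theta(\delta^2 n)=\Theta(\log^4 n)$, so the $T(\Delta)$‑piece alone blows the $O(\ln n)$ budget; while taking $\Delta=c\sqrt{\ln n/n}$ violates $\Delta>2\delta$ in that range. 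So as written your argument leaves a window of $\delta$, roughly $\bigl(c\sqrt{\ln n/n},\,\log^2 n/\sqrt n\bigr)$, where neither branch establishes the claimed bound. The fix is to abandon the wedges entirely and use the paper's vertical/horizontal strip decomposition of $B$, which sidesteps the tips by stopping at $\gamma\delta$ and invoking $x\leftrightarrow y$ symmetry.
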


\begin{proof}
The proof splits into three parts that show
\begin{itemize}
\item[(a)]  $\EXP{|\MAXSN \cap B|} =O\left(  \ln n + \sqrt \delta n^{1/4}  \right).$
\item[(b)]  $\EXP{|\MAXSN \cap C|} =O\left(  \ln n \right).$
\item[(b)]  $\EXP{|\MAXSN \cap A|} =O\left(  \ln n \right).$
\end{itemize}
By symmetry,   $\EXP{|\MAXSN \cap D|} =\EXP{|\MAXSN \cap B|} $.  Combining this with 
(a), (b) and (c)  proves the lemma.

\bigskip

\par\noindent\underline{(a)  $\EXP{|\MAXSN \cap B|}$:}

\begin{figure}[t]
\begin{center}
\includegraphics[height=2in]{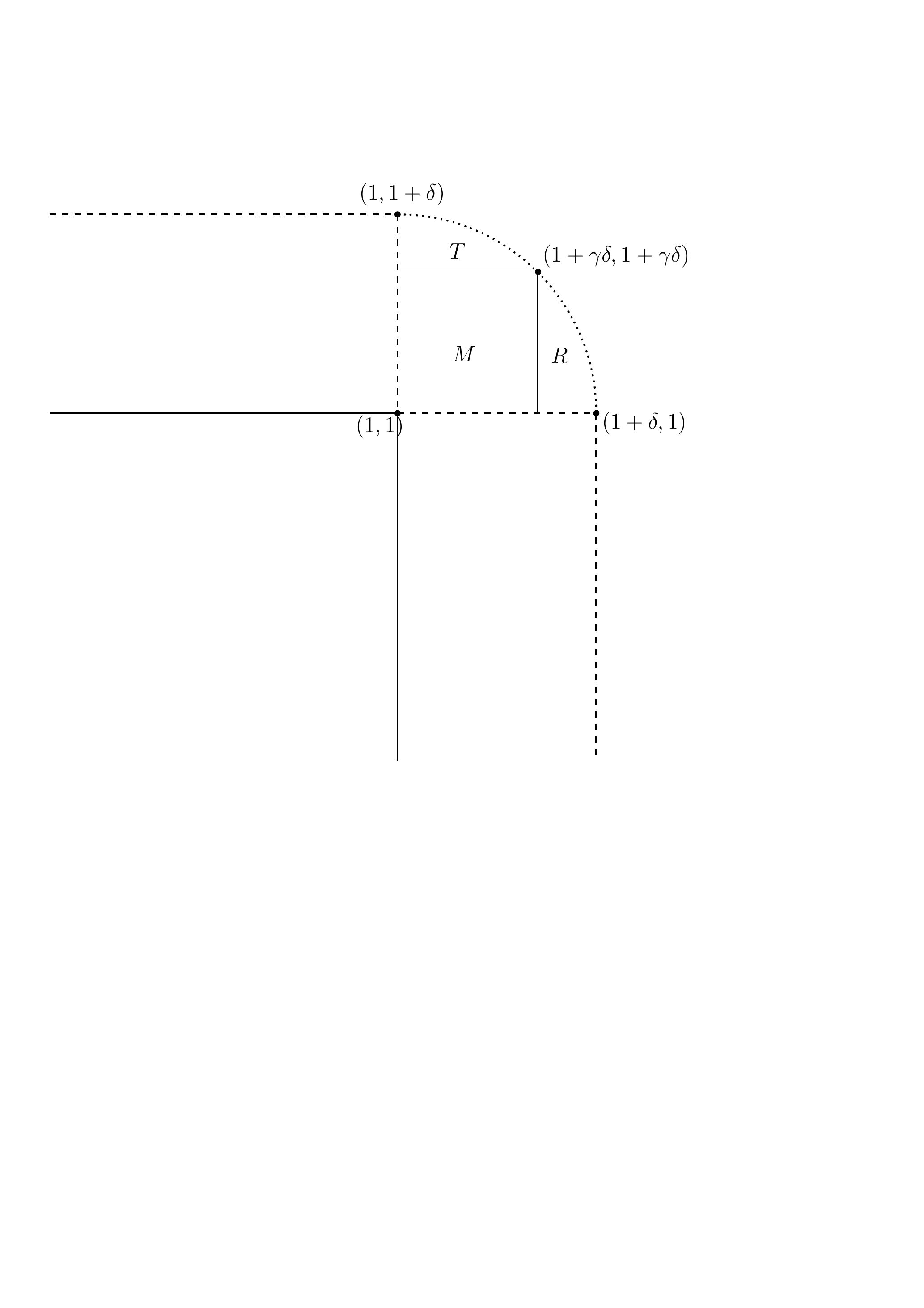}
\end{center}
\caption{ Illustration of  Case (a) of  Lemma \ref{lem: BiBqUB}
that partitions $B$ into
 three regions $M$, $T$ and $R.$ 
The figure illustrates the $q=2$ case.}
\lab{fig:UB-infty_I}
\rule{5.5in}{0.5pt}
\end{figure}

The proof decomposes $B$ into  $M$,  $T,$ and $R$ as illustrated in Figure \ref{fig:UB-infty_I}.  We prove 
$$\EXP{|\MAX(S_n) \cap  (T \cup M)|} = O\left(\sqrt \delta n^{1/4}\right).$$
A symmetrical argument will prove that  
$$\EXP{|\MAX(S_n) \cap  (T \cup R)|} = O\left(\sqrt \delta n^{1/4}\right),$$ and thus
$\EXP{|\MAXSN \cap B|}=O\left(\sqrt \delta n^{1/4}\right)$.

Set $m = \left\lfloor \sqrt  \delta n^{1/4}\right\rfloor$ for an arbitrary small constant $c >0$ and  $\sigma = \frac  {\gamma \delta} m$, where
$\gamma$ is the solution to $\gamma^q + (1 - \gamma)^q = \delta^q.$ We may assume that $ c_1 \frac 1 {\sqrt n} \le \delta  c_2 \le \sqrt n$ for any constance $c_1, c_2 >0$
($\delta$ outside that interval are treated using Lemma \ref {lem: limiting}) so we may assume, for any fixed $c>0$ that 
$ \frac 1 {\sqrt n} \le \delta \le \sqrt n,$  $\sigma \le \min(c', c' \delta).$
  In particular, this implies that 
$\sigma < (1+\gamma) \delta$.  Thus,  $\forall i \in [1,m],$  $x_i \in [1, 1+\gamma']$ where $\gamma' = (1 + \gamma)/2$.
Finally,   note that the definitions implies  $\frac  {\sigma^4} {\delta^2} = \Theta \left(\frac  1 n \right).$

Set $g(x) = 1+\left( \delta^q - (x-1)^q\right)^{1/q}$,  the equation of the upper cap of $B$.

For $ i  \ge 0$ define
 $$x_i  = 1 + i \sigma, $$
\begin{align*}
  y_i 		 &= g(x_i + 2 \sigma),  & p_i         &= (x_i,y_i),          & p_i(t)&= (x_i,y_i -t),\\
                   &                                   & p'_{i} &= (x_{i+1},y_i),    & p'_{i}(t)&= (x_{i+1},y_i -t),\\
\end{align*}
and
\begin{align*}
\Strip_i & = \left\{  u \in B\,:\,   x_i  \le u.x  \le x_{i+1}\right\},  & \Stripp_i& =\Strip_i \setminus P(p_i),\\
 &    &\Stripp_i(t) &= \left\{u \in \Stripp_i \,:\,  u.y \ge y_i -t\right\}.
 \end{align*}

Note that
$$T \cup M = \bigcup_{i=0}^{m-1} \Strip_i
\quad\mbox{and}\quad
\bigcup_{i=0}^{m} \Strip_i \subset C \cap \left\{u \in \Re^2\,:\, 1 \le u.x < (1+\gamma')\delta\right\}.
$$
Thus

\begin{eqnarray*}
\EXP{ \cap (T \cup M|)}&\le & \sum_{i=0}^{m-1}\EXP{|\MAXSN\cap \Stripp_i|} + \sum_{i=0}^{m-1}\EXP{|\MAXSN \cap P(p_i)|}\\
						&\le & \sum_{i=0}^{m-1}\EXP{|\MAXSN\cap \Stripp_i|} + \sum_{i=0}^{m-1}\EXP{|S_n \cap P(p_i)|}.\\
\end{eqnarray*}

 As in the derivations of  Eqs.~\ref {eq:asigma relation} and \ref{eq:gderiv}, we have from the Theorem of the Mean that
$$\forall i,\quad g(x_{i+1}) - g(x_i) = g'(z) (x_{i+1} - x_i)| = g'(z) \sigma,$$
where $z \in [x_i,x_{i+1}].$  
In particular, since $g'(z)$ is bounded for $z \in [ 1, 1 +\delta  \gamma']$, then 
$$\forall i \in [0,m],\quad g(x_{i}) - g(x_i+1) =O(\sigma).$$
In addition, because $g(x)$ is concave in the interval $[1, 1+\delta]$, 
$$g(x_{i+1}) - g(x_i) \le g(x_{i+2}) - g(x_{i+1}).$$
From this and the definitions,  $\forall i$ and $\forall t \ge 0,$ 
\begin{align*}
\alpha(p_i) &= x_{i+2} - x_i = 2 \sigma,  &  \alpha(p'_i) &= x_{i+1} - x_i = \sigma\\
\alpha(p_i)(t) &= \alpha(p'_i)(t) + \sigma,   &  & \Rightarrow \alpha(p_i)(t) = \Theta\left(\alpha(p'_i)(t)\right), \\[0.1in]
\beta(p_i) &= g(x_i) - g(x_{i+2}) = O(\sigma),  &  \beta(p'_i) &= g(x_{i+1}) - g(x_{i+2}) = O(\sigma) \\
\beta(p_i)(t) &= \beta(p'_i)(t) +g(x_i) - g(x_{i+1}),  & & \Rightarrow \beta(p_i)(t) = \Theta\left(\beta(p'_i)(t)\right).  \\
\end{align*}

\begin{figure}[t]
\begin{center}
\includegraphics[height=2.3in]{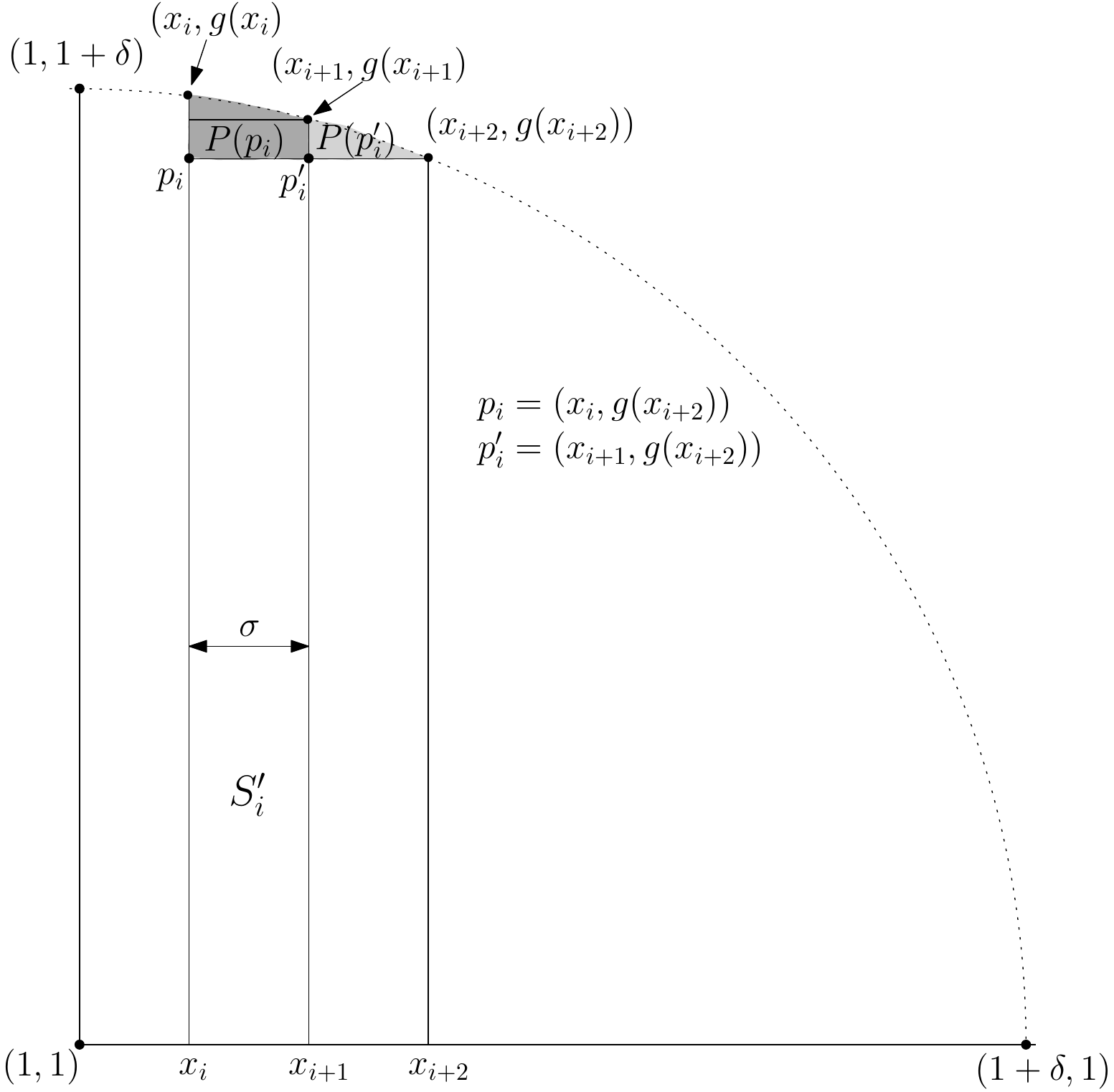}  \hspace*{.1in} \includegraphics[height=2.3in]{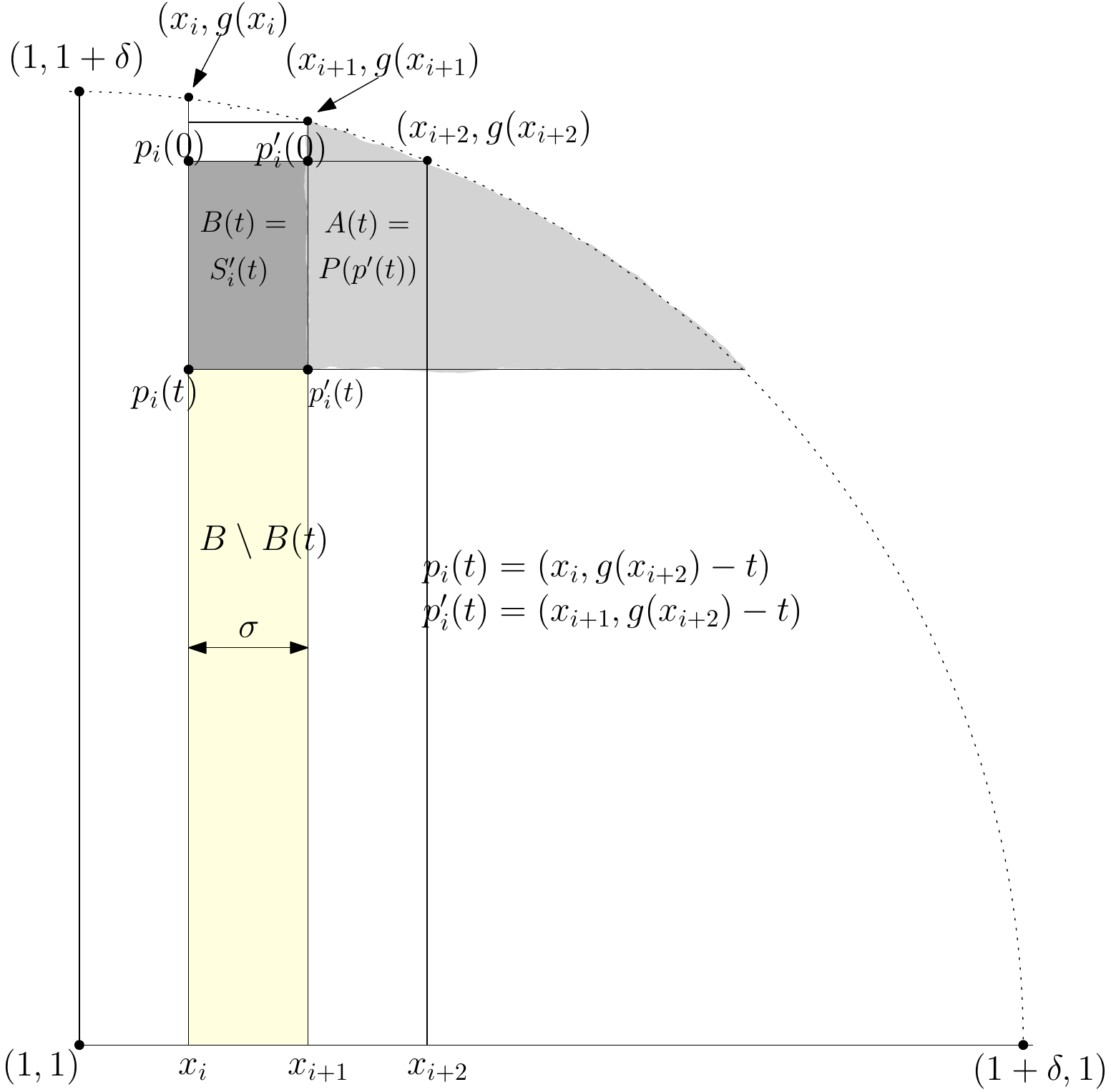}
\end{center}
\caption{ Illustration of  case  (a) of the Upper  Bound.   $g(x) = 1+\left( \delta^q - (x-1)^q\right)^{1/q}$ is the equation of the upper cap. $S'_i$ denotes $\Stripp_i.$
  The figure illustrates the $q=2$ case.}
\lab{fig:UB-infty_II}
\rule{5.5in}{0.5pt}
\end{figure}

Thus from Eqs.~\ref{eq:BiBqm6} and  \ref{eq:BiBqm7} $\forall t > 0,$
\begin{eqnarray*}
f(P(p_i(t)) &=&\Theta\left( \frac    {  \min\Bigl(\alpha(P(p_i(t)),\, 1\Bigr) \cdot  \min\Bigl(\beta(P(p_i(t)),\, 1\Bigr)} {\delta^2}\right)\\
                &=&\Theta\left( \frac    {  \min\Bigl(\alpha(P(p'_i(t)),\, 1\Bigr) \cdot  \min\Bigl(\beta(P(p'_i(t)),\, 1\Bigr)} {\delta^2}\right)\\
                &=& f(P(p'_i(t)).
\end{eqnarray*}
and
\begin{equation}
\label{eq:BiBqUB6}
\mu(P(p_i(t))) = \Theta\mu(P(p'_i(t))).
\end{equation}
Furthermore,
since $\alpha(p_i) = 2 \sigma$ and $\beta(p_i) = O(\sigma)$,
from Eq.~\ref{eq:BiBqm5}
$$ 
 \mu(P(p_i)) = \Theta\left(\frac  {\alpha^2 (v) \beta^2(v)} {\delta^2} \right) = O\left( \frac {\sigma^4} {\delta^2} \right)  
= O\left( \frac  1 n \right).
$$
and thus
\begin{equation}
\label{eq:BiBqUP3}
\sum_{i=0}^{m-1}\EXP{|S_n \cap P(p_i)|} = \sum_{i=0}^{m-1} n \mu(P(p_i)) = \sum_{i=0}^{m-1} 1 = O(m).
\end{equation}
Now, in the parameters of   the Sweep Lemma  set 
\begin{align*}
A &= P(p'_i)(y_i) = P\bigl((x_i,1)\bigr),  &  B &= \Strip'_i,\\
A(t) &= P(p'_i)(t),  &  B &= \Strip'_i(t).\\
\end{align*}
\medskip
Every point in $B\setminus B(t)$ is dominated by every point in $A(t).$  Since
$$
\Stripp_i(t) \subset P(p_i(t)),
$$
Eq.~\ref{eq:BiBqUB6}  implies
$$
\mu(B(t)) =  \mu(\Stripp_i(t)) \le \mu(P(p_i(t))) = \Theta\left(\mu(P(p_i(t)))\right) = \Theta(\mu(A(t))).$$
The sweep lemma then shows that, $ \forall i \in [0,m-1]$,
$$EXP{|\MAXSN\cap \Stripp_i|} = O(1).$$
Thus
$$\EXP{ \MAXSN \cap (T \cup M|)}\le  \sum_{i=0}^{m-1}\EXP{|\MAXSN\cap \Stripp_i|} + \sum_{i=0}^{m-1}\EXP{|S_n \cap P(p_i)|} = O(m),$$
and the proof of this section is complete.

\medskip

\par\noindent\underline{(b)  $\EXP{|\MAXSN \cap C|}$:}
 Consider the regions   $B(\alpha),$  $C(\alpha),$ $H(\alpha),$ defined in 
Lemma \ref{lem:LB-infty-BQ-side}. Set
$$V(\alpha) = C \setminus\Bigl( (C\alpha) \cup H(\alpha)\Bigr).$$
Note that if $\delta >1$ then, for large enough values of $\alpha$, $H(\alpha) = V(\alpha) = \emptyset.$

By the decomposition, for every $ \alpha$,
$$| \MAXSN \cap C| = | \MAXSN \cap C(\alpha)|  +| \MAXSN \cap H(\alpha)|  +| \MAXSN \cap V(\alpha)|.$$
Now define the random variable
$$\bar \alpha =
\left\{
\begin{array}{ll}
\max\Bigl(  1+\delta - u.x \,:\, u \in \Bigr)   & \mbox{ if $S_n \cap B \not = \emptyset$},\\
\delta & \mbox{ if $S_n \cap B  = \emptyset$.}
\end{array}
\right.
$$
From the definitions, $\forall \alpha\le \delta,$  any point in $B(\alpha)$ dominates any point in $V(\alpha)$.
If   $S_n \cap B(\alpha)  \not = \emptyset$,  then such a point exists and thus 
$|\MAX(S_n \cap V(\bar \alpha))| = 0.$   If not, then $V(\alpha) = \emptyset$, so trivially $|\MAX(S_n \cap V(\bar \alpha))| = 0.$  Thus, in all cases
$$|\MAX(S_n \cap V(\bar \alpha))| = 0.$$

Using a similar argument to that in part (a) of Lemma \ref{lem: BiBqLB} (the upper bound) if we let $X(\alpha)  = |S_n \cap H(\alpha)|$, then, conditioning on  {$\bar \alpha = \alpha$} 
$$\EXP{|\MAX(S_n \cap H(\bar \alpha) \,\Bigm| \bar \alpha = \alpha}=
\EXP {H_{X_{\bar \alpha}}  \,\Bigm| \bar \alpha = \alpha}=
\Theta\left( \EXP { \ln (X(\alpha))  \,\Bigm| \bar \alpha = \alpha}  \right).
$$
But, since $X(\alpha) \le n$ this immediately yields
$$\EXP{|\MAX(S_n \cap H(\bar \alpha) }= O(\ln n).$$

The last piece is to apply the Sweep Lemma to bound   $\EXP{|   \MAXSN \cap C(\alpha)|}.$
Set
\begin{align*}
\bar A &=   B = B(\delta),       &  \bar B &= C = C(\delta),\\
\bar A(\alpha) & = B(\alpha),  &  \bar B(\alpha) &= C(\alpha).
\end{align*}
By definition every point in $\bar A(\alpha)$ dominates every point in $\bar B \setminus \bar B(\alpha)$ and
from  Lemma \ref {lem:LB-infty-BQ-side}, $\forall \alpha \le \delta,$   $\mu(\bar B(\alpha)) = O\mu(\bar A(\alpha)).$ 
Applying the Sweep  Lemma with $\bar A(t)$,  $\bar B (t)$ and $\alpha$ as the sweep parameter yields
$$\EXP{| \MAXSN \cap C|} = O(1),$$ so
$$\EXP{| \MAXSN \cap C(\bar \alpha)|} \le \EXP{| \MAXSN \cap C|} = O(1).$$

Combining the items proved above yields
\begin{eqnarray*}
\EXP{| \MAXSN \cap C| } &=&   \EXP{ | \MAXSN \cap C(\bar \alpha)|} \\ 
					&=& O(1) + O(\ln n) = O(\ln n).
\end{eqnarray*}
\medskip

\par\noindent\underline{(c)  $\EXP{|\MAXSN \cap A|}$:}

\begin{figure}[t]
\begin{center}
\includegraphics[height=3in]{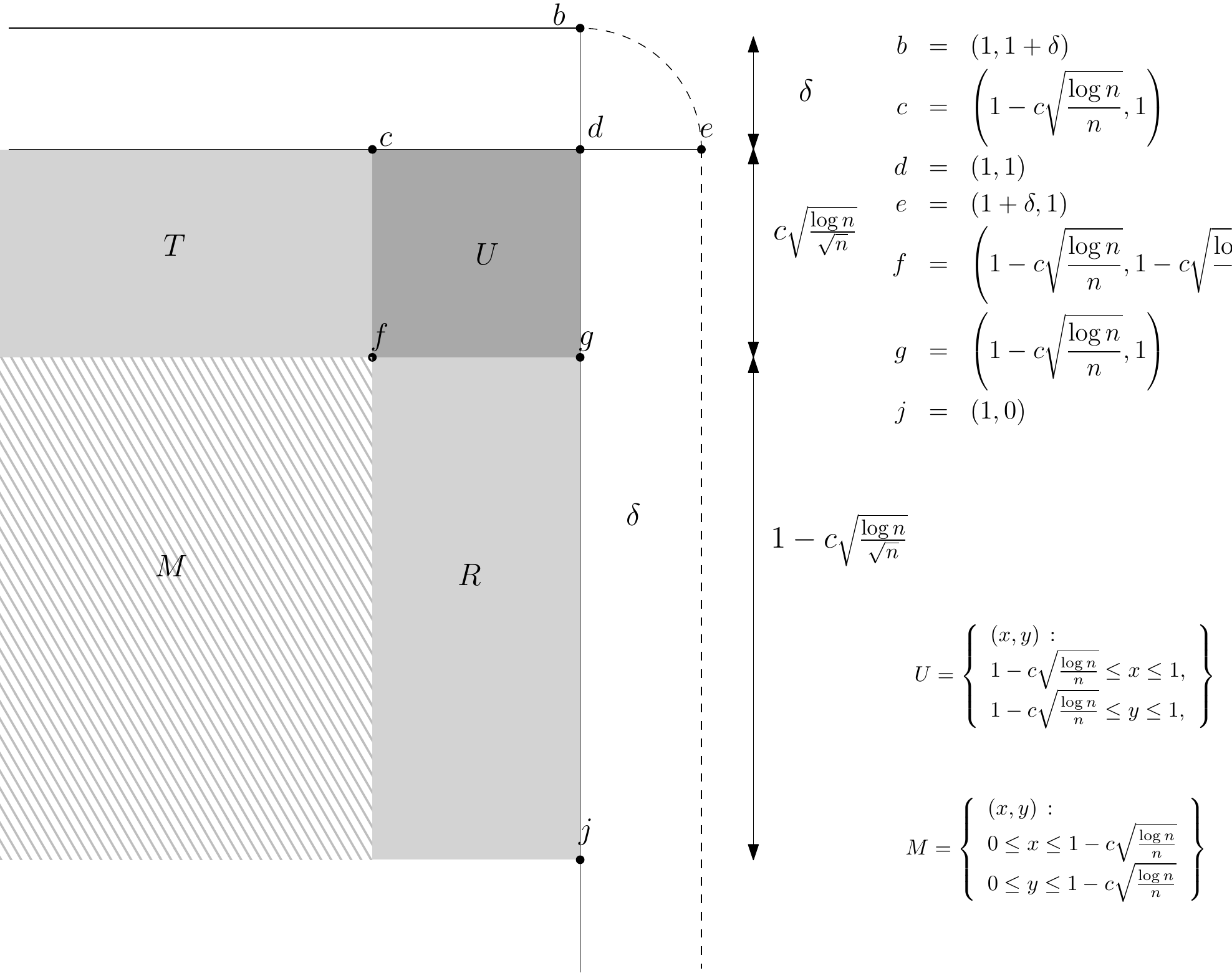}
\end{center}
\caption{ Illustration of Upper Bound case (c).  
}
\lab{fig:UB-infty-IX}
\rule{5.5in}{0.5pt}
\end{figure}
\medskip

Let $d= (1,1)$.  Then  $B = P(d).$ See Fig.~\ref {fig:UB-infty-IX}.  Note that $\alpha(d) = \beta(d) = \delta$.

If  $\delta > 1$ then Eqs.~\ref {eq:BiBqm5}  and \ref {eq:BiBqm6} give
$\mu(B) = \Theta(1).$
If $ c' \sqrt \frac {\log n} n \le \delta  \le 1$  for large enough $c'$, then Eqs.~\ref {eq:BiBqm5}  and \ref {eq:BiBqm6} give
 $\mu (B) > 3 \frac {\log n} {n}.$
So,  in both of these cases, for large enough $n,$   $\mu (B) > 3 \frac {\ln n} {n}.$
Thus,  from Lemma \ref{lem: basic mu}, 
$$\Pr(B \cap S_n = \emptyset) =   \left(1 - \mu(B)\right)^n \le    \frac 1 n.$$
Since every point in $B$ dominates every point in $A,$ if $\Pr(B \cap S_n = \emptyset),$ then  
$\MAX(S_n) \cap A = \emptyset.$  Thus, if  $ c \sqrt \frac {\log n} n \le \delta,$
$$\EXP{|\MAXSN \cap B|} \le  n Pr(B \cap S_n = \emptyset) = O(1).$$

We therefore now assume that $ \frac 1 {\sqrt n} \le \delta \le c \sqrt \frac {\log n} n.$  Eq.~\ref{eq:BiBqm4} implies that, if $V \subseteq A,$ then $\mu(V) = \Theta(\Area(V)).$

Partition $A$ into four subregions,  $U,T,R,M$  as illustrated in Figure \ref{fig:UB-infty-IX} where $U$ is a square with side-length  
$c \sqrt {\frac { \ln n} n}$ for some $c>0$.

Then 
$$ \Area(U) = \frac  {c^2 \log n} n  \quad \Rightarrow  \quad \mu(U) = \Theta\left( \frac  {c^2 \log n} n \right).
$$
Choose $c> c'$ large enough so that   $\mu(U) \ge \frac {3 \ln n} n.$  
Since every point in $U$ dominates every point in $M$ exactly the same type of analysis as performed above shows that
$$\EXP{|\MAXSN \cap M|} \le  n \Pr(U \cap S_n = \emptyset) = O(1).$$
Also
$$\EXP{|\MAXSN \cap U|} \le \EXP{|S_n \cap U|}  = n \mu(U) = O(\log n).$$
From the symmetry between $T$ and $R$ this yields that
\begin{equation}
\label{eq:BqBiUP8}
\EXP{|\MAXSN \cap A|} \le O(1)  + O(\log n)  + 2 \EXP{|\MAXSN \cap R|}.
\end{equation}

Because the side-length of $U$ is greater than $\delta$ we find that for all $u_1,u_2 \in R$ with $u_1.x = u_2.x,$  
$$\Area\left( B_q(u_1,\delta) \cap D\right) = \Area\left( B_q(u_2,\delta) \cap D\right),$$
and thus
$$f(u_1) = f(u_2).$$
Using a very similar analysis to that performed in part (a) this shows that
$$\EXP{|MAX(S_n \cap R)|} = \EXP{ H_X} = \Theta\left(\EXP{ \ln X}\right) = \Theta(\ln n),$$
where $X= |S_n \cap R|$ and the last equality come from the fact that $X \le n.$
Since 
$$\EXP{|MAX(S_n) \cap R|}   \le \EXP{|MAX(S_n \cap R)|},$$
combining this with
Eq.~\ref{eq:BqBiUP8}
proves
$$\EXP{|MAX(S_n \cap R)|} = O(\ln n).$$
\end{proof}

\bigskip
\begin{proof} {\bf of Lemma \ref{lem:BiBq measure}.}
\begin{figure}[t]
\begin{center}
\includegraphics[width=11cm]{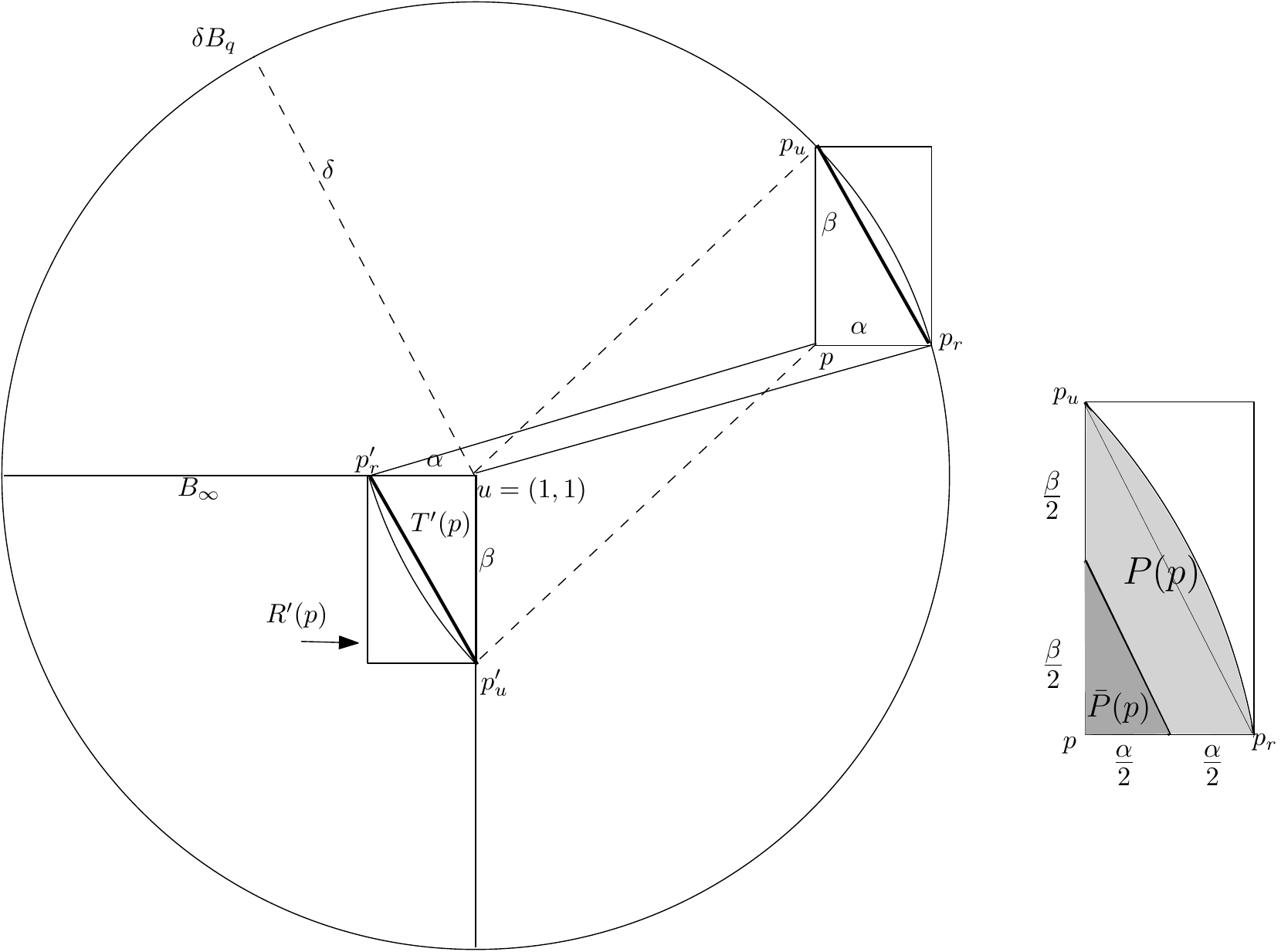} 
\end{center}
\caption{Illustration of the proof of Lemma  \ref{lem:BiBq measure},  Eqs.~\ref{eq:BiBqm5} and  \ref{eq:BiBqm6} when $p \in B.$  The right hand side shows a blown-up version of $P(p)$ with $\bar P(p)$ sitting inside of it.} 
\lab{fig:BiBqf}
\rule{5.5in}{0.5pt}
\end{figure}

Recall from Lemma \ref {lem: measure integral}  that
$
f(v) =
\frac {\Area(P'(v))} {\delta^2}.$
In what follows we use $\alpha, \beta$ to denote $\alpha(v)$ and $\beta(v).$
\medskip

\par\noindent\underline{$p \in A:$}\\
 If $\delta \le 1$ then the result follows directly from Lemma \ref{lem:easy mu}(b).

If $\delta >1$  and $v \in A$ then $\Area(P(v))= \Theta(1)$ and the result follows from
plugging into Lemma \ref {lem: measure integral}.

\medskip
\par\noindent\underline{$p \in B:$}\\

Define 
$$U = \{u \in \Re^2\,:\, u.x \le 1,\, u.y \le 1\}.$$
The main observation is that, from the definition of the $L_q$ metric, 
\begin{eqnarray*}
\hspace*{-.4in}
u \in P(p)  &\Leftrightarrow&  u.x \ge p.x,\, u.y \ge p.y   \quad \mbox {and } \quad   \left|u.x - 1\right|^q +  \left|u.y - 1\right|^q \le \delta^q\\
               &\Leftrightarrow&   u.x \ge p.x,\, u.y \ge p.y  \\
               && \quad  \mbox {and } \quad    \left| \left( 1 - (u.x-p.x)\right) - p.x\right|^q   +  \left| \left( 1 - (u.y-upy)\right) - p.y\right|^q\le \delta^q\\
                              &\quad   \Leftrightarrow   \quad&  u.x \ge p.x,\, u.y \ge p.y  \quad \mbox {and } \quad  \Bigl(1 - (u.x-p.x),\,  1 - (u.y-p.y)  \Bigr)  \in B_q(p,\delta)\\
                              &\quad   \Leftrightarrow   \quad& v \in B_q(p,\delta) \cup  U \ \mbox{where} \quad v.x = 1 - (u.x - p.x),\, v.y = 1 - (u.y - p.y).
\end{eqnarray*}
Physically, this mean that $B_q(p,\delta) \cap U $ is the mirror image of $P(p)$ flipped across the line $x= -y$ and then placed with the image of $p$ at $(1,1).$  
See Fig.~\ref{fig:BiBqf}.

Let $T'(p)$ be the triangle with with corners  $(1-\alpha,0)$, $(0,1 - \beta)$ and $(1,1)$ and $R(p)$ the rectangle with 
 lower left corner 
$(1-\alpha, 1 - \beta)$ and upper-right corner $(1,1).$   The convexity of $ B_q(p,\delta) \cap U$ implies
$$ T'(p) \subseteq  B_q(p,\delta) \cap U \subseteq R'(p), $$
therefore
$$\frac 1 2 \alpha \beta  = \Area(T'(p))  \le  \Area\left( B_q(p,\delta) \cap U  \right) \le \Area(R'(p)) \le \alpha \beta.$$
Note that $B_\infty$ is the upper right hand corner of   $U.$

If $\alpha, \beta \le 1$ then $R'(p) \subset B_\infty$, therefore  
$$P'(p) = B_q(p,\delta) \cap  B_\infty = B_q(p,\delta) \cap U,$$ yielding
$\Area(P'(p)) = \Theta(\alpha\beta).$

If $ \alpha, \beta \ge 1$, then $\Area\Bigl(T(p) \cap B _1\Bigr) \ge 1/2$ so $\Area(P'(p)) = \Theta(1).$

If $ \alpha \ge 1$ and  $\beta \le 1$, then $\Area\Bigl(T(p) \cap B _1\Bigr) = \Theta(\beta/2).$

f $ \alpha \le 1$ and  $\beta \ge 1$, then $\Area\Bigl(T(p) \cap B _1\Bigr) = \Theta(\alpha/2).$

Combining the four cases  proves Eq.~\ref{eq:BiBqm5}.

\medskip

To prove Eq.~\ref{eq:BiBqm6} recall that $\Area(P(v)) = \Theta(\alpha \beta)$ and 
$$\mu(P(v)) = \int_{u \in P(v)} f(u) du.$$
Also note that $\forall u \in P(v),$  $\alpha(u) \le \alpha(v)= \alpha$ and $\beta(u) \le \beta(v)=\beta.$   Eq.~\ref{eq:BiBqm5} then implies that  $\forall u \in P(u),$  $f(u) = O(f(v)).$  Thus
$$\mu(P(v)) = \int_{u \in P(v)} f(u) du \le \Area(P(v)) \cdot \max_{u \in P(v)} f(u) = O (\alpha \beta f(v)).$$
For the other direction 
define $\bar P(p)$ to be the upper right triangle with lower left corner $p$ and horizontal, vertical side lengths $\alpha/2,\beta/2$. 
See  the right hand side of Fig.~\ref{fig:BiBqf}.     Note that 
$\forall u \in \bar P(v),$  $\alpha(u) \ge \alpha(v)/2= \alpha/2$ and $\beta(u) \ge \beta(v)/2=\beta/2.$
 Thus, $\forall u \in \bar P(u),$  $f(u) = \Omega(f(v)).$  
Finish by noting that because $\bar P(p) \subset P(p)$, 
\begin{eqnarray*}
\mu(P(v)) = &\int_{u \in P(v)} f(u) du \ge \int_{u \in \bar P(v)} f(u) du \\
\ge&
\Area(\bar P(v)) \cdot \min_{u \in \bar P(v)} f(u) = \Omega (\alpha \beta f(v)).
\end{eqnarray*}
We have thus shown  $\mu(P(v)) = \Theta(\alpha \beta f(v)).$

\medskip
\par\noindent\underline{$p \in C:$}\\
By straightforward geometric arguments  (see Fig.~\ref{fig:BiBqm3})
$$\Area(P'(v))  = \Theta\left(
{  \min\Bigl(\alpha(v),\, 1\Bigr) \cdot  \min\Bigl(\beta(v),\, 1\Bigr)} 
\right).$$

 \begin{figure}[t]  
\begin{center}
\includegraphics[width=8cm]{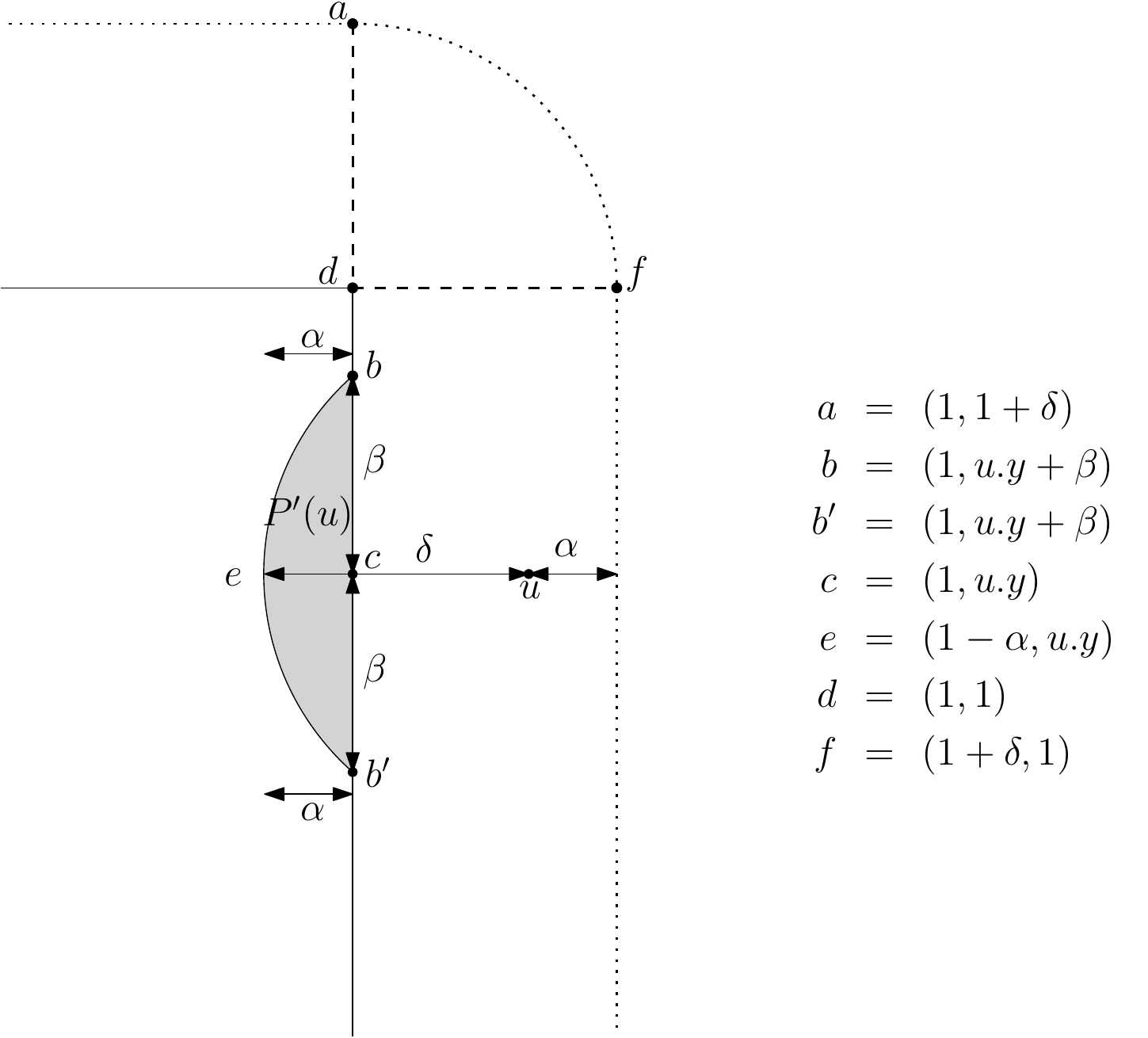} 
\end{center}
\caption{Illustration of the proof of Lemma  \ref{lem:BiBq measure} Eq.~\ref{eq:BiBqm7} when $p \in C.$.} 
\label{fig:BiBqm3}
\rule{5.5in}{0.5pt}
\end{figure}

\end{proof}
\section{Conclusion}
\label{sec: Conclusion}
This paper developed a suite of tools for deriving $\EMN,$  the expected number of maximal points in a set of $n$ points chosen IID from  $\Ballpq p q$,  which is the convolution of two distributions;  the first is  $\Ball p$, the  uniform distribution over the $L_p$ ball  and the second $\delta \Ball q$, the uniform distribution over a $\delta$-scaled $L_q$ ball.  For small $\delta$, $\Ballpq p q$ could be considered as a {\em smoothed} version of $\Ball p$ with $\delta \Ball q$ error.  This result seems to be the first analysis of $\EMN$ for non-uniform and non-Gaussian distributions.

This paper is only a  first step.  Obvious next steps are

\begin{itemize}
\item The results in the paper were only proven for $p,q \in \{1,2,\infty\}$  and  $p=\infty, q \in [1,\infty.]$ 
The next step would be to attempt to extend the results to {\em all} pairs $p,q, \in [1,\infty]$.

\item The results in this paper only derive first-order asymptotics.   Another obvious direction would be to try to prove limit-theorems, e.g., paralleling the results in \cite{bai2001limit} for 2-dimensional uniform samples.

\item There is a rich literature stretching back more than fifty years on the average number of points on the {\em convex hull} of points chosen IID from a uniform distribution in a planar region or a Gaussian distribution, e.g.,  \cite{Dwyer1990,renyi1963konvexe}. It would be interesting to see how the convex hull evolves in these convoluted distributions.

\item  Finally, we note that the results on $\EMN$ for $n$ points chosen IID from a uniform distribution over an $L_p$ ball have analogues in higher dimensions, i.e.,  $\Theta\left( \log^{d-1} n \right)$ if $p = \infty$ and
$\Theta\left( n^{1-\frac 1 d} n \right)$ if $p\in [1,\infty)$ \cite{Dwyer1990,baryshnikov2007expected}.  The next step would be to attempt to extend the results in this paper to higher distributions. 

\end{itemize}

\bibliographystyle{plain}
\bibliography{smoothed.bib}
\end{document}